\DeclareMathAlphabet\mathbfcal{OMS}{cmsy}{b}{n}
\newcommand{\mbf}{\mathbf}
\newcommand{\mc}{\mathcal}
\newcommand{\bmx}{\begin{bmatrix}}
\newcommand{\emx}{\end{bmatrix}}
\newcommand{\vep}{\varepsilon}
\renewcommand{\l}{\left}
\renewcommand{\r}{\right}
\def\wh{\widehat}
\def\wt{\widetilde}
\def\wc{\widetilde}
\newcommand{\cp}{\theta}
\newcommand{\Cp}{\Theta}
\newcommand{\E}[0]{\mathsf{E}}
\newcommand{\Var}[0]{\mathsf{Var}}
\newcommand{\Cov}[0]{\mathsf{Cov}}
\newcommand{\tr}[0]{\mathsf{tr}}
\newcommand{\p}{\mathsf{P}}
\newcommand{\R}{\mathbb{R}}
\newcommand{\Z}{\mathbb{Z}}
\newcommand{\N}{\mathbb{N}}
\newcommand{\iid}{\text{\upshape iid}}
\newcommand{\nn}{\nonumber}
\newcommand{\crsc}{C_{\text{\tiny \upshape RE}}}
\newcommand{\tE}{\text{E}}
\newcommand{\tO}{\text{O}}
\theoremstyle{definition}
\newtheorem{thm}{Theorem}
\theoremstyle{definition}
\newtheorem{cor}[thm]{Corollary}
\theoremstyle{definition}
\newtheorem{lem}[thm]{Lemma}
\theoremstyle{definition}
\newtheorem{prop}[thm]{Proposition}
\theoremstyle{definition}
\newtheorem{assum}{Assumption}
\theoremstyle{remark}
\newtheorem{rem}{Remark}[section]
\theoremstyle{definition}
\newtheorem{defn}{Definition}
\theoremstyle{definition}
\theoremstyle{definition}
\newenvironment{massum}[1]{%
  \massuminner
}{\endmassuminner}
\title{Detection and inference of changes in high-dimensional linear regression with non-sparse structures}
\author{Haeran Cho\textsuperscript{1} \and Tobias Kley\textsuperscript{2} \and Housen Li\textsuperscript{2,*}}
\date{{\footnotesize\textsuperscript{1}School of Mathematics, University of Bristol, UK \\
 \textsuperscript{2}Institute
for Mathematical Stochastics, University of Göttingen, Germany\\
\textsuperscript{*}Corresponding author. {housen.li@mathematik.uni-goettingen.de}}}
\begin{document}

\maketitle

\begin{abstract}
For data segmentation in high-dimensional linear regression settings, the regression parameters are often assumed to be {exactly} sparse segment-wise, which enables many existing methods to estimate the parameters locally via $\ell_1$-regularised maximum likelihood-type estimation and then contrast them for change point detection. Contrary to this common practice, we show that the {exact} sparsity of neither regression parameters nor their differences, a.k.a.\ differential parameters, is necessary for consistency in multiple change point detection. In fact, both statistically and computationally, better efficiency is attained by a simple strategy that scans for large discrepancies in local covariance between the regressors and the response. We go a step further and propose a suite of tools for directly inferring about the differential parameters post-segmentation, which are applicable even when the regression parameters themselves are non-sparse. Theoretical investigations are conducted under general conditions permitting non-Gaussianity, temporal dependence and ultra-high dimensionality. Numerical results from simulated and macroeconomic datasets demonstrate the competitiveness and efficacy of the proposed methods. Implementation of all methods is provided in the R package \texttt{inferchange} on GitHub.
\end{abstract}

\noindent%
{\it Keywords:} data segmentation, covariance scanning, simultaneous confidence interval, differential parameter, post-segmentation inference

\section{Introduction}

With rapid technological advancements, modern datasets are high-dimensional and massive in volume, which calls for novel statistical and computational tools.
As a prominent example, regression modelling in high dimensions 
has found numerous applications in a wide range of scientific fields, including genomics, signal processing, finance and economics, to name a few, see \citet{buhlmann2011statistics} for an overview. 
Another notable feature of modern data is the underlying heterogeneity \citep{fan2014challenges}, especially when datasets are collected in temporal (or other meaningful) order in nonstationary environments. 

We address this heterogeneity in high-dimensional regression settings by considering the following model for  observations $(Y_t, \mbf x_t), \, t \in \{1, \ldots, n\}$, with $\mbf x_t = (X_{1t}, \ldots, X_{pt})^\top \in \R^p$:
\begin{align} 
\label{eq:model}
Y_t = \l\{\begin{array}{ll}
\mbf{x}_t^\top\bm{\beta}_0 + \vep_t & \text{for } \cp_0 = 0 < t \le \cp_1, \\
\mbf{x}_t^\top\bm{\beta}_1 + \vep_t & \text{for } \cp_1 < t \le \cp_2, \\
\vdots \\
\mbf{x}_t^\top\bm{\beta}_q + \vep_t & \text{for } \cp_q < t \le n = \cp_{q+1}.
\end{array} \r.	 
\end{align}
We assume that $\E(\vep_t) = 0$ and $\E(\vep_t^2) = \sigma_\vep^2 \in (0, \infty)$ for all~$t$. 
Under this model, the joint distribution of $(Y_t, \mbf x_t)$ undergoes multiple shifts at the change points $\cp_j \in \N, \, j \in \{1, \ldots, q\}$, which are attributed to the changes in the regression parameters $\bm{\beta}_j$. We refer to the differences between the regression parameters from the adjacent segments as {\it differential parameters} and denote them by $\bm\delta_j = \bm\beta_j - \bm\beta_{j - 1} \ne \mbf 0$.

The data segmentation problem under~\eqref{eq:model} has been investigated both in multivariate (i.e.\ for fixed $p$, see e.g.\ \citealp{csorgo1997limit} and \citealp{bai1998estimating}) and, more recently, in high-dimensional settings.
In the latter, most of the contributions require estimating the (linear mixtures of) regression coefficients $\bm\beta_j$ over local intervals via Lasso-type estimators assuming that $\bm\beta_j$'s are sparse. For a non-exhaustive list of references, we refer to \cite{lee2016lasso}, \cite{leonardi2016computationally}, \cite{kaul2019detection}, \cite{wang2021statistically}, \cite{rinaldo2020localizing}, \cite{bai2022unified}, \cite{cho2022high}, \cite{xu2022change} and \cite{liu2022change, liu2024simultaneous}. 
One exception is \cite{gao2022sparse} where the sparsity is imposed directly on $\bm\delta_j$, along with a rather strong requirement that $p$ is strictly smaller than $n$. 

The high dimensionality poses computational and statistical challenges on the change point problem.
For instance, searching for a single change point over the full grid requires $O(n)$ Lasso fits \citep{lee2016lasso, leonardi2016computationally}, which becomes impractical for large $n$ and~$p$. 
To mitigate this, some methods reduce the number of expensive Lasso-type fits by searching on coarse grids \citep{cho2022high, li2023divide}, or selecting intervals systematically \citep{qian2023reliever} or adaptively \citep{kovacs2020optimistic}.  However, all these approaches still rely on locally estimating regression parameters via $\ell_1$-regularised methods, which remain computationally demanding for large datasets.

Beyond the detection of multiple change points, another important task is to infer the variables undergoing the changes, e.g.\ by constructing simultaneous confidence intervals about the differential parameters, which is particularly relevant for large $p$.
While there are procedures for testing for a change \citep{wang2022optimal, liu2024simultaneous} or deriving confidence intervals about the change point location \citep{xu2022change} under the model~\eqref{eq:model}, little effort has been made in inferring about $\bm\delta_j$ without assuming the \emph{exact} sparsity of $\bm\beta_j$. {Here, exact sparsity refers to an explicit assumption on the number of non-zero entries in the parameter vector. }

In short, there still remain fundamental challenges for the change point problem in~\eqref{eq:model} in both statistical and computational regards, which we summarise below:
\begin{enumerate}[label = (C\arabic*)]
	\item \label{q:one}
	Is the {exact} sparsity of either $\bm{\beta}_j$ or $\bm\delta_j$ necessary for achieving consistency in multiple change point detection under~\eqref{eq:model} in high dimensions? 	On a related note, is it possible to estimate~$\cp_j$ without the computationally costly estimation of either~$\bm{\beta}_j$ or~$\bm\delta_j$?
	\item \label{q:two} If $\bm\delta_j$ is sparse, is it possible to recover the differential parameters $\bm\delta_j$ with confidence statements even when the regression parameters $\bm{\beta}_j$ are possibly non-sparse?
\end{enumerate}

Motivated by these questions, we develop a suite of methods for estimation and inference under the change point model. 
On the change point detection front, we propose a statistically and computationally efficient method that requires neither the exact sparsity of $\bm\beta_j$ nor $\bm\delta_j$ and permits ultra-high dimensionality ($p = o(\exp(n))$), addressing the questions posed in \ref{q:one}. 
Addressing~\ref{q:two}, we introduce a novel procedure for directly estimating and inferring $\bm\delta_j$'s which, to the best of our knowledge, is a first such contribution. 
Below we further elaborate on the statistical challenges and introduce the proposed methods along with their novelty.

\subsection{Multiscale covariance scanning}
\label{sec:size:change}

The first question in~\ref{q:one} concerns the formulation of the detection boundary under the change point model in~\eqref{eq:model}. 
Towards this, we first consider the at-most-one-change situation.

\begin{lem}
	\label{lem:tv}
	Assume that $\mbf x_t \sim_{\iid} \mc N_p(\mbf 0, \bm\Sigma)$ with positive semi-definite $\bm\Sigma$ and $\vep_t \sim_{\iid} \mc N(0, \sigma_\vep^2)$ are independent. 
	Let $\p_0(\bm\beta)$ denote the joint distribution of $\{(\mbf x_t, Y_t): \,  t = 1, \ldots, n\}$ such that $Y_t = \mbf x_t^\top\bm\beta + \vep_t$ for all $t$. 
	We also denote by $\p_{\cp_1}(\bm\beta_0, \bm\beta_1)$ the joint distribution of $\{(\mbf x_t, Y_t): \, t = 1, \ldots, n\}$ under~\eqref{eq:model} when $q = 1$. 
	Then the total variation distance between $\p_0(\bm\beta)$ and $\p_{\cp_1}(\bm\beta_0, \bm\beta_1)$, denoted by $\mathrm{TV}\bigl( \p_0(\bm\beta),\, \p_{\cp_1}(\bm\beta_0, \bm\beta_1) \bigr)$, satisfies
	$$
	\frac{1}{100} \le \frac{\min_{\bm\beta, \bm\beta_0, \bm\beta_1: \, \bm\beta_1 - \bm\beta_0 = \bm\delta} \mathrm{TV} \bigl( \p_0(\bm\beta),\, \p_{\cp_1}(\bm\beta_0, \bm\beta_1) \bigr)}{\min\left\{ 1, \, \sqrt{\frac{\cp_1 (n - \cp_1)}{n \sigma_\vep^2}\bm\delta^\top \bm\Sigma \bm\delta}\right\}} \le \frac{3\sqrt{3}}{2}. 
	$$
\end{lem}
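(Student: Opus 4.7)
My plan is to factor out the common marginal of $\mbf x_{1:n}$, reducing $\mathrm{TV}$ to an expectation of the explicit TV between two Gaussian conditionals on $Y_{1:n}$, and then bound the resulting quantity from above via Pinsker and from below via Paley--Zygmund. The bijection $(Y_t,\mbf x_t)\mapsto(Y_t-\mbf x_t^\top\bm\beta,\mbf x_t)$ preserves TV and lets me assume $\bm\beta=\mbf 0$ without loss of generality while keeping $\bm\beta_1-\bm\beta_0=\bm\delta$. Since the two distributions then share the marginal of $\mbf x_{1:n}$,
\[
\mathrm{TV}\bigl(\mc P_0(\mbf 0),\mc P_{\cp_1}(\bm\beta_0,\bm\beta_1)\bigr)=\E\bigl[2\Phi(D/2)-1\bigr],
\]
where conditionally on $\mbf x_{1:n}$ the two laws of $Y_{1:n}$ are Gaussian with common covariance $\sigma_\vep^2\mbf I_n$ and squared mean-difference $D^2:=\sigma_\vep^{-2}\bigl(\sum_{t\le\cp_1}(\mbf x_t^\top\bm\beta_0)^2+\sum_{t>\cp_1}(\mbf x_t^\top\bm\beta_1)^2\bigr)$. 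Marginally, $D^2\stackrel{d}{=}a\,\chi^2_{\cp_1}+b\,\chi^2_{n-\cp_1}$ with independent chi-squareds, where $a:=\bm\beta_0^\top\bm\Sigma\bm\beta_0/\sigma_\vep^2$ and $b:=\bm\beta_1^\top\bm\Sigma\bm\beta_1/\sigma_\vep^2$, so $\E[D^2]=a\cp_1+b(n-\cp_1)$.

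For the upper bound, the chain rule will give $\mathrm{KL}(\mc P_{\cp_1}\Vert\mc P_0(\mbf 0))=\E[D^2]/2$. As a quadratic in $(\bm\beta_0,\bm\beta_1)$ with $\bm\beta_1-\bm\beta_0=\bm\delta$, this is minimised at $\bm\beta_0^\star=-(n-\cp_1)\bm\delta/n$, with minimum $R^2/2$ where $R^2:=\cp_1(n-\cp_1)(\bm\delta^\top\bm\Sigma\bm\delta)/(n\sigma_\vep^2)$. Pinsker's inequality at this pair gives $\mathrm{TV}\le R/2$, and combined with $\mathrm{TV}\le 1$ the infimum is at most $\min(1,R/2)$, which a three-case comparison ($R\le 1$, $1<R\le 2$, $R>2$) bounds by $(3\sqrt{3}/2)\min(1,R)$.

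For the lower bound, the concavity of $\Phi$ on $[0,\infty)$ yields the pointwise inequality $2\Phi(z/2)-1\ge c_1\min(z,1)$ for $z\ge 0$ with $c_1:=2\Phi(1/2)-1>0.38$, hence $\mathrm{TV}\ge c_1\,\E[\min(D,1)]$. The same quadratic minimisation shows $\E[D^2]\ge R^2$ uniformly in every admissible $(\bm\beta_0,\bm\beta_1)$, and a direct fourth-moment computation for $D^2=a\chi^2_{\cp_1}+b\chi^2_{n-\cp_1}$ yields
\[
\frac{\E[D^4]}{\E[D^2]^2}=1+\frac{2(a^2\cp_1+b^2(n-\cp_1))}{(a\cp_1+b(n-\cp_1))^2}\le 1+\frac{2}{\min(\cp_1,n-\cp_1)}\le 3.
\]
Paley--Zygmund applied to $D^2$ at level $1/4$ then gives $\p(D^2\ge\E[D^2]/4)\ge(3/4)^2/3=3/16$, and since $\sqrt{\E[D^2]}/2\ge R/2$ this forces $\p(D\ge R/2)\ge 3/16$. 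Consequently $\E[\min(D,1)]\ge(3/16)\min(R/2,1)\ge(3/32)\min(1,R)$, and therefore $\mathrm{TV}\ge(3c_1/32)\min(1,R)>(1/100)\min(1,R)$.

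The main obstacle will be the lower bound: unlike the upper bound, which need only evaluate TV at the single KL-minimising pair, the lower bound must hold \emph{uniformly} over every admissible $(\bm\beta_0,\bm\beta_1)$. The critical observation that enables a single Paley--Zygmund estimate to work uniformly is that $D^2$ is always a sum of only two independent weighted chi-squareds, so its squared coefficient of variation is at most $2/\min(\cp_1,n-\cp_1)$ regardless of how the weights $(a,b)$ are distributed, giving uniform concentration of $D$ at the scale $R$.
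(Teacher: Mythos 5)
Your proof is correct and takes a genuinely different route from the paper's. The paper views each $(\mbf x_t, Y_t)$ jointly as a centered $(p+1)$-dimensional Gaussian, so that $\mc P_0(\bm\beta)$ and $\mc P_{\cp_1}(\bm\beta_0,\bm\beta_1)$ differ only in their (block-diagonal) covariance matrices; it then invokes Theorem~1.1 of Devroye, Mehrabian and Reddad (2018) --- a two-sided sandwich bound on the total variation distance between centered Gaussians with different covariances, expressed in terms of $\tr\bigl((\bm\Sigma_1^{-1/2}\bm\Sigma_2\bm\Sigma_1^{-1/2}-\mbf I)^2\bigr)$ --- and simplifies. You instead quotient out $\bm\beta$ by the bijection $(Y_t,\mbf x_t)\mapsto(Y_t-\mbf x_t^\top\bm\beta,\mbf x_t)$, then condition on $\mbf x_{1:n}$ to turn the problem into a random \emph{mean}-shift between Gaussians sharing the covariance $\sigma_\vep^2\mbf I_n$, so that $\mathrm{TV}=\E[2\Phi(D/2)-1]$ with $D^2\stackrel{d}{=}a\chi^2_{\cp_1}+b\chi^2_{n-\cp_1}$; the upper bound comes from Pinsker at the KL-minimising pair $\bm\beta_0=-(n-\cp_1)\bm\delta/n$, while the lower bound combines the pointwise estimate $2\Phi(z/2)-1\ge(2\Phi(1/2)-1)\min(z,1)$ with a Paley--Zygmund anti-concentration bound for $D^2$. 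What each approach buys: the paper's is short granted the Devroye et al.\ result, which is itself a non-trivial Gaussian comparison theorem; yours is fully elementary (Pinsker, Paley--Zygmund, explicit chi-squared moments) and yields a better upper-bound constant ($1$ instead of $3\sqrt{3}/2$), at the cost of the moment computation and the uniformity observation you rightly flagged --- that $D^2$ is always a mixture of exactly \emph{two} independent weighted $\chi^2$'s, so $\E[D^4]/(\E[D^2])^2\le 1+2/\min(\cp_1,n-\cp_1)\le 3$ regardless of $(a,b)$, which is what makes a single Paley--Zygmund estimate valid uniformly over the admissible parameter set.
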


\cref{lem:tv} suggests that the detection boundary of the change point problem is determined by the spacing between the change points, $\Delta_j = \min(\cp_j - \cp_{j - 1}, \cp_{j + 1} - \cp_j)$,\label{ref:deltaj} and the magnitude of the changes measured by $\sigma_\vep^{-2} \bm\delta_j^\top \bm\Sigma \bm\delta_j$. 
For the latter quantity, which is closely related to the explained variance (\citealp{VerGas18, tony2020semisupervised}), a.k.a.\ heritability in genetics \citep{Mah08},
its sample analogue is not easily accessible unless additional structural assumptions are imposed on $\bm\delta_j$ and/or $\bm\Sigma$. 
We propose to circumvent this difficulty by considering the {\it covariance-weighted} differential parameter $\bm\Sigma \bm\delta_j$ instead. 

Specifically, an empirical surrogate of $\bm\Sigma \bm\delta_j$ can be obtained by screening the sequence $\{\mbf x_t Y_t \}_{t = 1}^n$ without estimating $\bm\beta_j$, $\bm\delta_j$ or $\bm\Sigma$. 
Built upon this crucial observation, we propose the {\it \underline{m}ultiscale \underline{c}ovariance \underline{scan}ning} (McScan) algorithm for the detection of multiple change points under the model~\eqref{eq:model}, which involves scanning the weighted averages of $\{\mbf x_t Y_t\}_{t = 1}^n$ over carefully chosen deterministic intervals. Notably, McScan avoids costly computations involved in locally estimating the regression or differential parameters that typically require cross validation, and thus enjoys numerical stability as well as computational efficiency with the worst case run time of order $np\log(n)$. 

\begin{figure}[h!t!]
\centering
\begin{tabular}{c c}
\includegraphics[width = .45\textwidth]{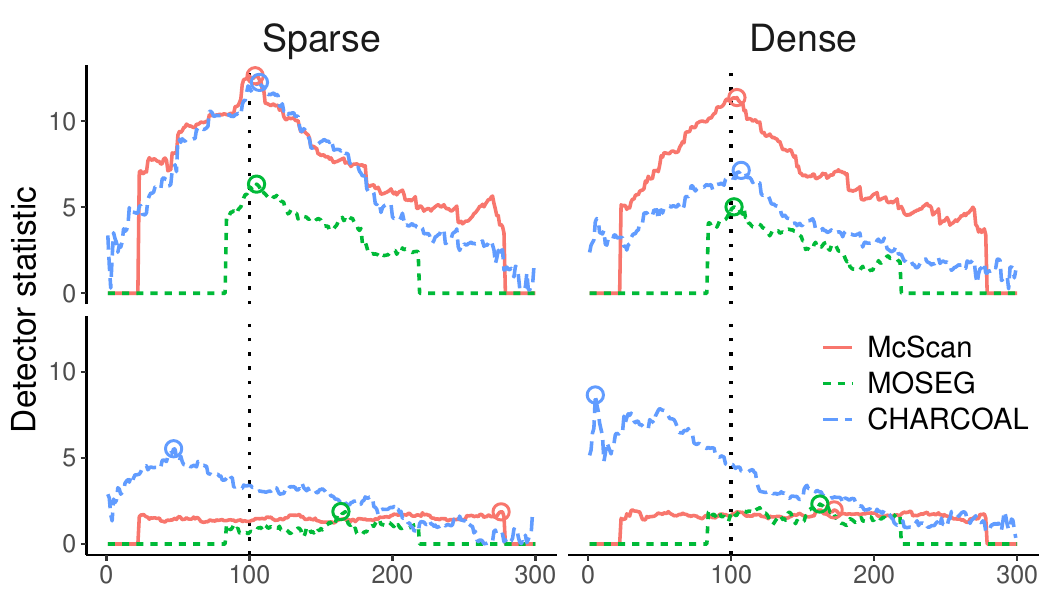}
&
\includegraphics[width = .45\textwidth]{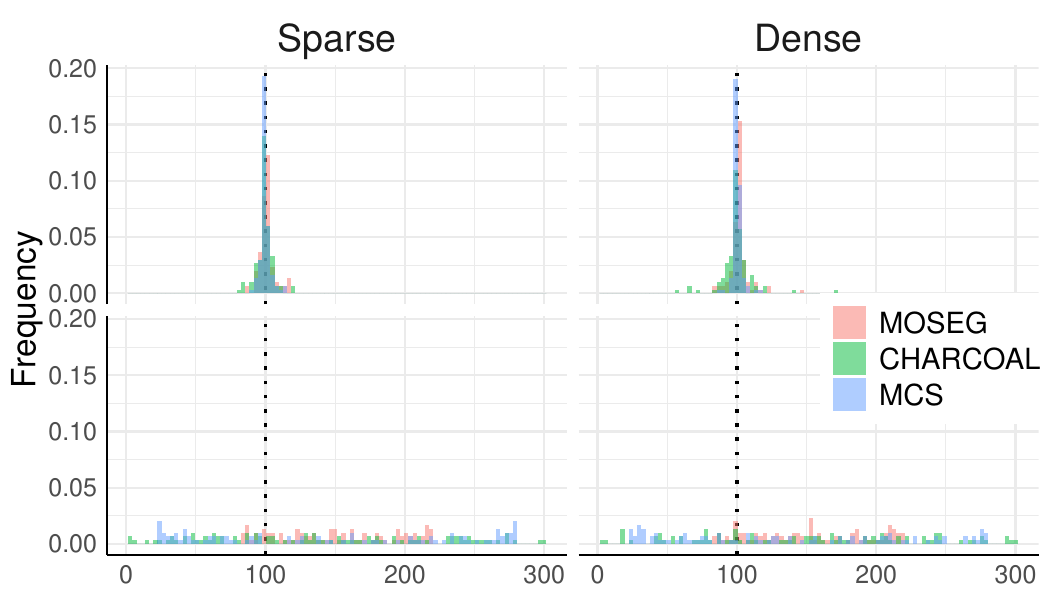}
\end{tabular}
\caption{Consider sparse ($\mathfrak{s} = 5$) and dense ($\mathfrak{s} = p = 200$) scenarios under model~\eqref{eq:model} with a single change ($q = 1$), where $\bm\beta_0 = \bm\delta_1/2$, $\bm\beta_1 = -\bm\delta_1/2$ and $\vert \bm\delta_1 \vert_0 = \mathfrak{s}$.
We vary $\vert \bm\Sigma \bm\delta_1 \vert_\infty$ from $1.41$ (top) to $0.09$ (bottom), while keeping $\vert \bm\delta_1 \vert_2 = 2$ unchanged in all scenarios. 
Left: We plot the detector statistics of McScan, MOSEG \citep{cho2022high} and CHARCOAL \citep{gao2022sparse} where for each method, the change point location $\cp_1$ is estimated by the maximiser of its detector statistic (marked by a circle). 
Right: We plot the estimated change points over $100$ realisations.
In all plots, the true $\cp_1 = 100$ is marked by the vertical dotted lines. 
Further details of the simulation setup are given in \cref{sec:sim:intro}.}
\label{fig:intro}
\end{figure}

In terms of statistical guarantees, most existing papers measure the size of changes via the sparsity-adjusted $\ell_2$-norm, $\vert \bm\delta_j \vert_0^{-1/2} \vert \bm\delta_j \vert_2$ (where $\vert\cdot\vert_d$ is the $\ell_d$ (pseudo-)norm of a vector), and the investigation into the minimax optimality is also conducted with a parameter space defined according to this measure \citep{rinaldo2020localizing}, see \cref{tab:comp} in Appendix~\ref{sec:comparison} for a comprehensive overview of the literature.
In sharp contrast, we show that McScan achieves consistency in multiple change point detection for a strictly broader class of problems with no worse rate of estimation (\cref{thm:one}). 
Our theoretical results are derived under general conditions permitting temporal dependence and non-Gaussianity and, remarkably, do not require the {exact} sparsity of either $\bm\beta_j$ or~$\bm\delta_j$. 
This follows from that McScan measures the size of change by $\vert \bm\Sigma \bm\delta_j \vert_\infty$ which, by repeated applications of H\"{o}lder's inequality, satisfies
\begin{align}
\label{eq:size:change}
\vert \bm\Sigma \bm\delta_j \vert_\infty \ge \frac{\bm\delta_j^\top \bm\Sigma \bm\delta_j}{\vert \bm\delta_j \vert_1} \ge \frac{\Lambda_{\min}(\bm\Sigma) \vert \bm\delta_j \vert_2}{\sqrt{\vert \bm\delta_j \vert_0}},
\end{align}
where $\Lambda_{\min}(\bm\Sigma)$ denotes the smallest eigenvalue of~$\bm\Sigma$. 
When $\Lambda_{\min}(\bm\Sigma)$ is bounded away from zero as commonly assumed in the literature, it indicates that measuring the size of change via $\vert \bm\Sigma \bm\delta_j \vert_\infty$ leads to a gain in statistical efficiency compared to adopting $\vert \bm\delta_j \vert_0^{-1/2} \vert \bm\delta_j \vert_2$, and the gain may become more significant in higher dimensions. 

\cref{fig:intro} empirically illustrates that indeed, the intrinsic difficulty in detecting a change is determined by $\vert \bm\Sigma \bm\delta_j \vert_\infty$ rather than by $\vert \bm\delta_j \vert_0$ and/or $\vert \bm\delta_j \vert_2$. 
The proposed McScan and two recent proposals (MOSEG, \citealp{cho2022high}, and CHARCOAL, \citealp{gao2022sparse}) 
all perform well, regardless of the sparsity $\vert \bm\delta_j \vert_0$ (kept constant in each column), when $\vert \bm\Sigma \bm\delta_j \vert_\infty$ is large, while their performance deteriorates when $\vert \bm\Sigma \bm\delta_j \vert_\infty$ is small even though $\vert \bm\delta_j \vert_2$ remains constant across all scenarios. 
Their good performance in the dense scenario is particularly surprising since such situations have explicitly been excluded from consideration in the existing literature. 
We explore these hitherto unexplored aspects of the change point problem, namely the factors determining the detectability of changes in the high-dimensional regression setting, and provide a first theoretical guarantee without imposing strict sparsity on $\bm\beta_j$ or $\bm\delta_j$.

\subsection{Post-segmentation inference}

In real-world applications, the coefficients in linear regression are often non-sparse \citep{bradic2022testability}.
Thus, it is more plausible to impose sparsity on the components of the regression parameters undergoing the shifts at each change point \citep{gao2022sparse}, while allowing for each $\bm\beta_j$ to be non-sparse.
Then, the interest lies in directly estimating and inferring about the differential parameters $\bm\delta_j$, as put forward in \ref{q:two}.

To this end, we propose two estimators for $\bm\delta_j$ which 
share connections with the literature on direct estimation of differential networks \citep{zhao2014direct, yuan2017differential} and sparse linear discriminant analysis \citep{cai2011direct}. 
However, unlike these papers, our setting brings in the additional uncertainty stemming from the change point detection step, which is fully accounted for in our theoretical investigation. 
As a representative example, under Gaussianity, the proposed estimator satisfies $\max_{1 \le j \le q} \sqrt{\Delta_j} \vert \wh{\bm\delta}_j - \bm\delta_j \vert_2 = O_P(\sqrt{\mathfrak{s} \log(p \vee n)})$ with $\Delta_j = \min(\cp_j - \cp_{j - 1}, \cp_{j + 1} - \cp_j)$ and $\mathfrak{s} = \max_{1 \le j \le q} \vert \bm\delta_j \vert_0$ (\cref{prop:dp:est}), a result comparable to those derived for high-dimensional linear regression in stationary settings (i.e.\ no changes). 
Empirically, this direct estimation approach outperforms the naive alternative of taking the difference of regression parameter estimators separately obtained from neighbouring segments (\cref{sec:sim:dpe}).

Further, we address the problem of inferring about $\bm\delta_j$ across its $p$ coordinates, by proposing a de-sparsified estimator $\wc{\bm\delta}_j$. Built on high-dimensional central limit theorems \citep{CCK23}, we derive a non-asymptotic bound on the Gaussian approximation of $\sqrt{\Delta_j} \vert \wc{\bm\delta}_j - \bm\delta_j \vert_\infty$ (\cref{thm:two}).
Accompanied by a bootstrap procedure, this enables simultaneous inference about $\delta_{ij}, \, 1 \le i \le p$, 
and provides a first solution to the thus-far unaddressed, yet important, inferential problem under the change point setting. . 

\smallskip
\textbf{Role of (non-)sparsity.} To summarise, we shed light on the different roles played by the sparsity of the differential parameters $\bm\delta_j$'s in detecting and inferring about changes under the model~\eqref{eq:model}, without requiring the exact sparsity of $\bm\beta_j$'s. For the detection of the multiple change points, the proposed McScan only requires $\vert \bm\Sigma \bm\delta_j \vert_\infty$ to be large enough without imposing any condition on $\vert \bm\delta_j \vert_0$, yet achieves consistency in multiple change point detection. 
For the estimation of changes in high dimensions, we propose to directly estimate $\bm\delta_j$ via $\ell_1$-regularisation and demonstrate that the sparsity of $\bm\delta_j$ plays a role similar to its counterpart in the standard regression setting.
Finally, for (simultaneous) inference about the coordinates of $\bm\delta_j$, we make a stronger assumption on the sparsity of $\bm\delta_j$ and additionally, (approximate) sparsity of $\Cov(\mbf x_t)^{-1}$ (\cref{rem:inference}). This is in line with the emerging literature on statistical inference under non-sparsity in high dimensions.

\smallskip
\textbf{Organisation of the paper.}
\cref{sec:cp} introduces the McScan method for the multiple change point detection problem and establishes its theoretical consistency. \cref{sec:diff} presents the post-segmentation procedures for direct estimation and inference about the differential parameters. Numerical experiments (\cref{sec:num}) and an application to a macroeconomic dataset (\cref{sec:data}) demonstrate the competitiveness of the proposed suite of tools, and \cref{sec:conc} concludes the paper. 
Further discussions on the change point problem including a literature overview, proofs of all theoretical results and additional numerical results are provided in the Appendix. An implementation of the proposed methods is provided in the R package \texttt{inferchange} available at \url{https://github.com/tobiaskley/inferchange}.

\smallskip
\textbf{Notation.}
For a positive integer $m$, we write $[m] = \{1, \ldots, m\}$. 
For some $\nu > 0$ and a random variable $X$, we write $\Vert X \Vert_\nu = \bigl(\E(\vert X \vert^\nu)\bigr)^{1/\nu}$.
For a matrix $\mbf A = [a_{ij}] \in \R^{m \times n}$, we write $\vert \mbf A \vert_0 = \sum_{i \in [m]} \sum_{j \in [n]} \mathbb{I}_{\{a_{ij} \ne 0\}}$, $\vert \mbf A \vert_2 = \sqrt{ \sum_{i \in [m]} \sum_{j \in [n]} \vert a_{ij} \vert^2}$, $\vert \mbf A \vert_1 = \sum_{i \in [m]} \sum_{j \in [n]} \vert a_{ij} \vert$ and $\vert \mbf A \vert_\infty = \max_{i \in [m]} \max_{j \in [n]} \vert a_{ij} \vert$. Also, we define $\Vert \mbf A \Vert_1 = \max_{j \in [n]} \sum_{i \in [m]} \vert a_{ij} \vert$.  
For sequences $\{a_m\}$ and $\{b_m\}$ of positive numbers, we write $a_m \lesssim b_m$ or equivalently $a_m = O(b_m)$, if $a_m \le C b_m$ for some finite constant $C > 0$. If $a_m \lesssim b_m$ and $b_m \lesssim a_m$, we write $a_m \asymp b_m$. We write $a \vee b = \max(a, b)$ for $a, b \in \R$.
By $\mbf 0$ and $\mbf I$, we denote the vector of zeros and the identity matrix, respectively, whose dimensions depend on the context.

\section{Multiscale covariance scanning for data segmentation}
\label{sec:cp}
\subsection{Methodology}
\label{sec:not}

\cref{sec:size:change} demonstrates that the sparsity of either $\bm\beta_j$ or $\bm\delta_j = \bm\beta_j - \bm\beta_{j - 1}$, is not necessary for the detection of change points, and argues that statistical and computational efficiency can be gained by accessing the covariance weighted differential parameter $\bm\Sigma \bm\delta_j$. 
For this, we observe that for $\cp_{j-1} < k < \cp_{j+1}$,
\begin{align}
\label{eq:cov:diff}
\bm\gamma_{k, \cp_{j + 1}} - \bm\gamma_{\cp_{j - 1}, k} = \min\l\{\frac{\cp_{j}-\cp_{j-1}}{k - \cp_{j-1}},\,\frac{\cp_{j+1} - \cp_{j}}{\cp_{j+1} - k}\r\}\bm\Sigma \bm\delta_j,
\end{align}
with $\bm\gamma_{a, b} = (b - a)^{- 1} \sum_{t = a + 1}^b \Cov(\mbf x_t, Y_t)$.
The magnitude of the difference in local covariances is maximised at $k = \cp_j$, which suggests that change points are detectable by examining the changes in $\Cov(\mbf x_t, Y_t)$. Thus motivated, we consider the detector statistic
\begin{align}
\label{eq:detector}
T_{s, k, e} = \sqrt{\frac{(k - s)(e - k)}{e - s}} \l\vert \wh{\bm\gamma}_{k, e} - \wh{\bm\gamma}_{s, k} \r\vert_\infty \text{ \ with \ } \wh{\bm\gamma}_{a, b} = \frac{1}{b - a} \sum_{t = a + 1}^b \mbf x_t Y_t
\end{align}
for $s < k < e$, over some interval $(s, e]$ with $0 \le s < e \le n$.
The statistic $T_{s, k, e}$ is an empirical counterpart of $\vert\bm\gamma_{k, e} - \bm\gamma_{s, k}\vert_{\infty}$ with a location-based scaling, and scanning for the maximiser of $T_{s, k, e}$ over $s < k < e$, identifies any potential change point in $(s, e]$.

\begin{rem}[Choice of $\ell_\infty$-norm]
\label{rem:norm}
The important role played by the choice of the norm aggregating information across the variables, is well-studied in the high-dimensional change point literature \citep{cho2021data}: Typically, $\ell_\infty$-norm is well-adapted to detecting a sparse change, while $\ell_2$-norm is suited for detecting a dense one. For the mean change point detection, \cite{liu2021minimax} identify a phase transition with respect to the sparsity level in the minimax detection rate and propose an accompanying optimal test, all under the assumption of spatial independence. On the other hand, in the presence of spatial dependence, they noted the challenge in deriving the minimax rate for general degree of sparsity. We also refer to \cite{horvath2022detecting} where both the asymptotic null distribution of an $\ell_2$-norm-based test statistic and its detection boundary, are shown to depend on the degree of dependence.
In the current setting under~\eqref{eq:model}, the coordinates of $\mbf x_t Y_t$ are dependent even when $X_{it}, \, i \in [p]$, are independent, with the covariance determined by (possibly) non-sparse $\bm\beta_j$. We adopt $T_{s, k, e}$ measuring the difference in local covariances via $\ell_\infty$-norm, to approximate the (appropriately scaled) $\vert \bm\Sigma \bm\delta_j \vert_\infty$. It in turn serves as a proxy for $\bm\delta_j^\top \bm\Sigma \bm\delta_j$ that determines the detectability of the corresponding change point (\cref{lem:tv}).
\end{rem}
For the detection and estimation of multiple, possibly heterogeneously spaced change points, we propose the \emph{\underline{m}ultiscale \underline{c}ovariance \underline{scan}ning} (McScan) method that searches for large discrepancies in local sample covariances between $Y_t$ and $\mbf x_t$ over a set of strategically selected intervals.
Specifically, for statistical and computational advantages, the \emph{seeded intervals} \citep{kovacs2020seeded} defined below are chosen for the purpose.

\begin{defn}[Seeded intervals]
\label{def:seed}
	The collection of seeded intervals is defined as 
	\begin{align*}
	\mathbb{M} \;=\;\bigcup_{k = 1}^{\lceil\log_2(n)\rceil}\l\{\bigl(\lfloor(i-1)r_k\rfloor, \, \lceil({i+1})r_k\rceil\bigr]: \; i = 1, \ldots, \lceil n/r_k \rceil - 1,\; r_k = n2^{-k}\r\}.
	\end{align*}
\end{defn}
By construction, the cardinality of $\mathbb{M}$ is $O(n \log(n))$.
Exploiting the deterministic and multiscale construction of $\mathbb{M}$, we can systematically zoom in the neighbourhoods of individual change points. 
Over each of the thus-generated seeded intervals $(a_\ell, b_\ell]$ in $\mathbb{M}$, McScan computes the series of detector statistics in~\eqref{eq:detector} and identifies a candidate estimator of a change point as $\mathop{\arg\max}_{a_\ell < k < b_\ell} T_{a_\ell, k, b_\ell}$. To obtain the final estimators, we adopt the narrowest-over-threshold (NOT) selection rule of \cite{baranowski2019narrowest}, originally proposed for univariate mean change detection. The NOT rule iteratively selects the shortest interval over which the local covariance difference measured by $\mathop{\max}_{a_\ell < k < b_\ell} T_{a_\ell, k, b_\ell}$,
exceeds a given threshold.
In doing so, McScan locates seeded intervals that are likely to contain one and only one change point, and thus \enquote{translates} the problem of multiple change point detection into multiple problems of single change point detection.

Specifically, the McScan algorithm proceeds in the following steps.
\begin{enumerate}[wide, label = {\bf Step~\arabic*:}]
	\setcounter{enumi}{-1}
	\item Take in the trimming parameter $\varpi_{n, p} \ge 0$ and the threshold $\pi_{n, p}$ as input arguments. 
	Set $\wh{\Cp} = \emptyset$, $\mathbb{M} = \{(a_\ell, b_\ell]\}$ the seeded intervals in \cref{def:seed} and $\mathbb{L} = [ \vert \mathbb{M} \vert ]$.
	
	\item For each $\ell \in \mathbb{L}$, if $b_\ell - a_\ell \ge 2\varpi_{n, p} + 1$, identify $k_\ell = \mathop{\arg\max}_{a_\ell + \varpi_{n, p} < k < b_\ell - \varpi_{n, p}} T_{a_\ell, k, b_\ell}$ and set $T_\ell = T_{a_\ell, k_\ell, b_\ell}$.
	If $b_\ell - a_\ell \le 2\varpi_{n, p}$, remove such $\ell$ from $\mathbb{L}$.
	
	\item Identify $\ell^\circ = \mathop{\arg\min}_{\ell \in \mathbb{L}: \, T_\ell > \pi_{n, p} } (b_\ell - a_\ell)$, set $\wh\cp = k_{\ell^\circ}$ and update $\wh{\Cp} \leftarrow \wh{\Cp} \cup \{ \wh\cp \}$. 
	
	\item Update $\mathbb{L} \leftarrow \mathbb{L} \setminus \{ \ell \in \mathbb{L}: \, \wh\cp \in (a_\ell, b_\ell] \}$. 
    
        \item Repeat Steps~2--3 until $\mathbb{L} = \emptyset$.
\end{enumerate}

\begin{rem}[Random intervals]
	It is possible to replace the seeded intervals with randomly generated intervals considered in \citet{fryzlewicz2014wild} and \citet{baranowski2019narrowest}. 
	This would still lead to consistency in multiple change point detection as in \cref{thm:one} below since, with a sufficiently large number of randomly generated intervals, we will have an interval well-suited for detecting each change point with probability tending to one (for the precise description, see~\eqref{eq:cond:M} in Appendix~\ref{sec:pf:thm:one}). However, with the deterministic seeded intervals, we can utilise their regular and recursive structure to enhance computational efficiency, which is not possible with random intervals, see \citet{kovacs2020seeded}. 
\end{rem}

\begin{rem}[Computational complexity]
	\label{rem:comp}
	In the implementation of McScan, we first compute the partial sums $\{\sum_{t= 1}^s\mbf x_t Y_t : \,  s = 1, \ldots, n\}$, which requires an $O(pn)$ runtime. Then each evaluation of $T_{a_\ell, k, b_\ell}$ in Step~1 takes an $O(p)$ runtime. Thus, combined with that the total length of seeded intervals is $O(n\log(n))$, the (worst-case) runtime of McScan is $O(pn\log(n))$. This can be slightly improved to $O(pn)$ if we replace the full grid search of McScan on each seeded interval with the optimistic search strategy proposed in \citet{kovacs2020optimistic}. In fact, in doing so, we can further achieve a runtime of order $O\big( pn \min(\Delta_{\min}^{-1} \log(n), 1) \big)$, provided that the partial sums of data are pre-computed and a lower bound $\Delta_{\min}$ on the minimum spacing between the change points is a priori known.
\end{rem}

We argue that covariance scanning is preferable to directly searching for changes in the local estimators of $\bm\beta_j$.
Firstly, this approach bypasses locally estimating the regression parameters, and thus alleviates the necessity to impose exact sparsity on $\bm\beta_j$.
Also, as discussed in Remark~\ref{rem:comp}, McScan is considerably cheaper computationally, compared to performing $\ell_1$-regularised estimation $O(n^2)$ times as in \cite{xu2022change} or $O(n)$ times as in \cite{cho2022high}, for example. 
Besides, McScan does not require the selection of regularisation parameters which adds numerical stability to its output. 
In addition to these computational benefits, 
we show the statistical efficiency of McScan in the next section.

\subsection{Theoretical properties}
\label{ss:th:pt}

We establish the consistency of $\wh{\Cp} = \bigl\{\wh\cp_j, \, j \in [\wh q]: \, \wh\cp_1 < \cdots < \wh\cp_{\wh q} \bigr\}$ returned by McScan, in estimating $\Cp = \bigl\{\cp_j, \, j \in [q]: \, \cp_1 < \cdots < \cp_q \bigr\}$ under general conditions permitting serial dependence and non-Gaussianity as well as ultra-high dimensionality (i.e.\ $p = o(\exp(n))$).
Firstly, we make the following assumption on the distribution of $\mbf Z_t = (\mbf x_t^\top, \vep_t)^\top$ which is commonly found in the relevant literature.
\begin{assum}[Distribution of $\mbf Z_t$]
	\label{assum:xe}
	\begin{enumerate}[wide, label = (\roman*)]
		\item  
		$\E(\mbf x_t) = \mbf 0$ and $\Cov(\mbf x_t) = \bm\Sigma$ for all $t$.
		\item 
		$\E(\vep_t) = 0$, $\Var(\vep_t) = \sigma_\vep^2 \in (0, \infty)$ and $\Cov(\mbf x_t, \vep_t) = \mbf 0$ for all $t$.
	\end{enumerate}
\end{assum}

In quantifying dependence in $\{\mbf Z_t\}_{t \in \Z}$, we adopt the framework of functional dependence; first introduced by \cite{Wu05} and extended to high-dimensional settings in \cite{zhang2017gaussian}, this general framework incorporates a plethora of stochastic processes, both linear and non-linear.
Let $\{\bm\xi_0^\prime, \bm\xi_t, t\in \Z\}$ be a sequence of independent and identically distributed (i.i.d.)~random elements. 
We assume that $\mbf Z_t$ admits a representation $\mbf Z_t = \mc G(\mc F_t) \in \R^{p + 1}$ with an $\R^{p+1}$-valued measurable function $\mc G$ and $\mc F_t = (\ldots, \bm\xi_{t-1}, \bm\xi_{t})$. Introduce $\mbf Z_{t, \{0\}} = \mc G(\mc F_{t, \{0\}})$ with $\mc F_{t, \{0\}} = (\ldots, \bm\xi_{-1}, \, \bm\xi_0^\prime,\, \bm\xi_{1}, \ldots, \bm\xi_{t})$ being a coupled version of~$\mc F_t$.  Then, we measure the degree of temporal and spatial dependence in $\{\mbf Z_t\}_{t \in \Z}$ by
$\Vert U_{\cdot}(\mbf a) \Vert_{\nu} = \sum_{t = 0}^\infty \zeta_{t, \nu}(\mbf a)$ with $\zeta_{t, \nu}(\mbf a) = \l\Vert \mbf a^\top \mbf Z_t - \mbf a^\top \mbf Z_{t, \{0\}} \r\Vert_\nu$,  and $\Vert{ W_{\cdot}(\mbf a, \mbf b) }\Vert_{\nu} = \sum_{t = 0}^\infty \zeta_{t, \nu}(\mbf a, \mbf b)$  with $\zeta_{t, \nu}(\mbf a, \mbf b) = \l\Vert \mbf a^\top \mbf Z_t \mbf Z_t^\top \mbf b - \mbf a^\top \mbf Z_{t, \{0\}} \mbf Z_{t, \{0\}}^\top \mbf b \r\Vert_\nu$, for $\mbf a, \mbf b \in \mathbb{B}_2(1) = \{\mbf a: \, \vert \mbf a \vert_2 \le 1\}$.
Further, we denote the dependence adjusted sub-exponential norm
of $W_t(\mbf a, \mbf b)$ by
$\Vert W_{\cdot}(\mbf a, \mbf b) \Vert_{\psi_\kappa}
= \sup_{\nu \ge 2} \nu^{-\kappa} \Vert{ W_{\cdot}(\mbf a, \mbf b) }\Vert_{\nu}$ for some $\kappa \ge 0$. With these definitions, we assume the following. 

\begin{assum}[Functional dependence]
	\label{assum:func:dep}
	There exists a constant $\Xi \in (0, \infty)$ such that {\bf one} of the following three conditions is met: 
	\begin{enumerate}[wide, label = (\roman*)]
		\item \label{assum:fd:exp} $\sup_{\mbf a, \mbf b \in \mathbb{B}_2(1)} \Vert W_{\cdot}(\mbf a, \mbf b) \Vert_{\psi_\kappa} \le \Xi$ with some $\kappa \ge 0$, 
  
		\item \label{assum:fd:gauss} $\{\mbf Z_t\}_{t \in \Z}$ is Gaussian and $\sup_{\mbf a \in \mathbb{B}_2(1)} \Vert U_{\cdot}(\mbf a) \Vert_2 \le \Xi^{1/2}$, or 
  \item \label{assum:fd:ind}
$\{\mbf Z_t\}_{t \in \Z}$ is a sequence of independent sub-Weibull random vectors of order $r \in (0, 2]$, i.e.\ $\sup_{\mbf a \in \mathbb{B}_2(1)} \sup_{\nu \ge 2} \nu^{-1/r} \Vert \mbf a^\top \mbf Z_t \Vert_\nu \le \Xi^{1/2}$. 
\end{enumerate}
\end{assum}

\cref{assum:func:dep}~\ref{assum:fd:exp}--\ref{assum:fd:gauss} permit short-range dependence \citep{wu2016performance}, with \ref{assum:fd:exp} placing a weaker assumption on the tail behaviour of $\mbf Z_t$.
They are fulfilled e.g.\ for linear processes with algebraically decaying coefficients (see Lemma~B.3 of \citealp{cho2022high}).
Condition~\ref{assum:fd:ind} assumes temporal independence while still allowing for non-Gaussianity \citep{KuCh22}. 
Let us define
\begin{align}
\psi_{n, p} &= \l\{\begin{array}{ll}
\log^{\kappa + 1/2}(p \vee n) & \text{under Assumption~\ref{assum:func:dep}~\ref{assum:fd:exp},}  \\
\sqrt{\log(p \vee n)} & \text{under Assumption~\ref{assum:func:dep}~\ref{assum:fd:gauss},}  \\
\log^{1/r}(p \vee n) & \text{under Assumption~\ref{assum:func:dep}~\ref{assum:fd:ind}.}  \\
\end{array}
\r. 
\label{eq:psi}
\end{align}
An important consequence of \cref{assum:func:dep} is that uniformly over intervals of length sufficiently large, the deviation of a quadratic form of $\mbf Z_t$ is bounded by $C_0 \psi_{n, p}$ for some constant $C_0 \in (0, \infty)$ (see \cref{lem:exp}).
We refer to \citet[Remark~4]{wu2016performance} for the optimality of the choice of $\psi_{n, p}$ in certain serially dependent cases.

Denoting the largest (resp.\ smallest) eigenvalue of $\bm\Sigma$ by $\Lambda_{\max}(\bm\Sigma)$ (resp.\  $\Lambda_{\min}(\bm\Sigma)$), \cref{assum:func:dep} implicitly places an upper bound on $\Lambda_{\max}(\bm\Sigma)$. However, in investigating the consistency of McScan, we do not require $\Lambda_{\min}(\bm\Sigma)$ to be bounded away from zero, which is distinguished from the relevant literature; see Table~\ref{tab:comp} in Appendix~\ref{sec:comparison}. Instead, \cref{assum:size} below places a lower bound directly on $\vert \bm\Sigma \bm\delta_j \vert_\infty$ such that even when $\Lambda_{\min}(\bm\Sigma)$ is close to or exactly zero, change points are detectable by McScan. 

Let us write $\bm\mu_j = \bm\beta_j + \bm\beta_{j - 1}$ for all $j \in [q]$, such that $\bm\beta_j = (\bm\mu_j + \bm\delta_j)/2$ and $\bm\beta_{j - 1} = (\bm\mu_j - \bm\delta_j)/2$.
Further, let $\Psi = \max_{j \in [q]} \Psi_j$ with $\Psi_j = 1 + \vert \bm\delta_j \vert_2 + \vert \bm\mu_j \vert_2$.
The following assumption specifies the detection boundary for McScan in terms of the size of covariance-weighted differential parameters $\bm\Sigma \bm\delta_j$, and the spacing between the change points $\Delta_j = \min(\cp_j - \cp_{j - 1}, \cp_{j + 1} - \cp_j)$.
\begin{assum}[Detection boundary]
	\label{assum:size} 
	There is a large enough constant $c_0 > 0$ depending on $\Xi$ and $\kappa$ only, such that with $\psi_{n, p}$ defined in~\eqref{eq:psi}, $\vert \bm\Sigma \bm\delta_j \vert_\infty^{2} \Delta_j \, \ge\, c_0 \Psi^2 \psi_{n, p}^2$ for all $j \in [q]$.
\end{assum}
Assumptions~\ref{assum:func:dep} and~\ref{assum:size} jointly imply a lower bound on the segment length
$\min_{j \in [p]} \Delta_j \gtrsim \psi_{n, p}^2$, from that $\vert \bm\Sigma \bm\delta_j \vert_{\infty} \le \max_{i \in [p]} \vert \bm\Sigma_{i \cdot}\vert_2 \vert\bm\delta_j\vert_2 \le \Lambda_{\max}(\bm\Sigma) \Psi_j$. 

\begin{rem}[Size of $\Psi$]
\label{rem:Psi}
Noting that $\Var(Y_t) = \sigma_\vep^2 + \bm\beta_j^\top \bm\Sigma \bm\beta_j \ge \sigma_\vep^2 + \Lambda_{\min}(\bm\Sigma) \vert \bm\beta_j \vert_2^2$ for $t \in \{\cp_j + 1, \ldots, \cp_{j + 1}\}$,
in the relevant literature, $\vert \bm\beta_j \vert_2$ and/or $\vert \bm\delta_j \vert_2$ are often assumed to be bounded (hence also $\Psi_j$), see Appendix~\ref{sec:comparison}. 
However, these quantities may grow with $p$ in high dimensions, unless $\bm\beta_j$'s and $\bm\delta_j$'s are approximately sparse.
	We thus make the dependence on $\Psi_j$ or $\Psi$ explicit in our theoretical investigation where as expected, the change point problem becomes more difficult as their values increase.
\end{rem}

\begin{thm}[Consistency of McScan]
	\label{thm:one}
	Let Assumptions~\ref{assum:xe}, \ref{assum:func:dep} and~\ref{assum:size} hold. 
	Set $\pi_{n, p}$ to satisfy $c^\prime \Psi \psi_{n, p} < \pi_{n, p} < c^{\prime\prime} \min_{j \in [q]} \vert \bm\Sigma \bm\delta_j \vert_\infty \sqrt{\Delta_j}$ for some constants $c^\prime, c^{\prime\prime} > 0$ fulfilling $(c^{\prime\prime})^{-1} c^\prime < c_0$, and also set $\varpi_{n, p} = C_1 \psi_{n, p}^2$ for some constant $C_1 > 0$, for $\psi_{n, p}$ in~\eqref{eq:psi}.
	Then, there exist some constants $c_1, c_2, c_3 \in (0, \infty)$ such that 
	$\p(\mc S_{n, p}) \ge 1 - c_2 (p \vee n)^{-c_3}$, where
	\begin{align*}
	\mc S_{n, p} = \l\{ \wh q = q \text{ \ and \ } \vert \wh\cp_j - \cp_j \vert \le 
	c_1 \vert \bm\Sigma\bm\delta_j\vert_\infty^{-2}\Psi_j^2 \psi_{n, p}^2 \text{ for every }j \in [q]\r\}.
	\end{align*}
\end{thm}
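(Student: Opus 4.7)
The plan is to establish the result on a single high-probability event on which empirical local covariances concentrate uniformly around their population counterparts, and then to carry out a deterministic analysis of McScan on the seeded intervals.

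First I would introduce the \emph{good event}
$$\mc G_{n,p} = \Bigl\{\max_{0 \le a < b \le n,\, b-a > \varpi_{n,p}}\sqrt{b-a}\,\bigl|\wh{\bm\gamma}_{a,b} - \bm\gamma_{a,b}\bigr|_\infty \le C\Psi\psi_{n,p}\Bigr\},$$
and prove $\p(\mc G_{n,p}) \ge 1 - c_2(p \vee n)^{-c_3}$. Writing $Y_t = \mbf x_t^\top\bm\beta_j + \vep_t$ on each segment, every coordinate of $\wh{\bm\gamma}_{a,b} - \bm\gamma_{a,b}$ becomes a centred average of bilinear forms in $\mbf Z_t$; under Assumption~\ref{assum:func:dep} these are controlled by a Nagaev/Bernstein-type inequality for dependence-adjusted functionals, and a union bound over the $O(n^2)$ endpoint pairs and $p$ coordinates, weighted by $\sqrt{b-a}$, delivers the advertised rate, with the factor $\Psi$ absorbing dependence on $|\bm\beta_j|_2$. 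This uniform concentration is \textbf{the main obstacle}: controlling bilinear forms in a non-stationary, serially dependent time series at rate $\psi_{n,p}/\sqrt{b-a}$ simultaneously across all intervals, while handling both the sub-exponential regime~\ref{assum:fd:exp} and the Gaussian regime~\ref{assum:fd:gauss}, requires a careful interplay between the functional dependence norms and sharp tail bounds.

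On the population side, for any $s < k < e$, $\bm\gamma_{k,e} - \bm\gamma_{s,k}$ is a linear combination of $\{\bm\Sigma\bm\delta_j : \cp_j \in (s,e]\}$ with explicit deterministic weights. In particular, when $(s,e]$ contains exactly one change point $\cp_j$, the map $k \mapsto \sqrt{(k-s)(e-k)/(e-s)}\,\bigl|\bm\gamma_{k,e} - \bm\gamma_{s,k}\bigr|_\infty$ is maximised at $k = \cp_j$ with value proportional to $|\bm\Sigma\bm\delta_j|_\infty\sqrt{(\cp_j-s)(e-\cp_j)/(e-s)}$. The seeded construction of Definition~\ref{def:seed} guarantees, for each $j \in [q]$, an interval $(a_{\ell^\star_j},b_{\ell^\star_j}]$ containing only $\cp_j$ with both $\cp_j - a_{\ell^\star_j}$ and $b_{\ell^\star_j} - \cp_j$ of order $\Delta_j$; combined with Assumption~\ref{assum:size} and the choice of $\pi_{n,p}$, on $\mc G_{n,p}$ the empirical detector on this interval exceeds $\pi_{n,p}$ by a factor of order $\sqrt{c_0}$.

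Finally I would run the narrowest-over-threshold recursion by induction on the change points identified. On $\mc G_{n,p}$, concentration plus the triangle inequality shows that $\max_{\ell \in \mathbb{L}_{s,e}} T_\ell > \pi_{n,p}$ if and only if $(s,e]$ contains an undetected change point. The minimality rule forces the selected interval $(a_{\ell^\circ},b_{\ell^\circ}]$ to contain exactly one change point: an interval containing none has zero population signal and empirical signal below $\pi_{n,p}$, while an interval containing two change points cannot be minimal because the shorter seeded interval from the previous paragraph around each of them already exceeds the threshold. Once $(a_{\ell^\circ},b_{\ell^\circ}]$ isolates a single $\cp_j$, the empirical curve $k \mapsto T_{a_{\ell^\circ},k,b_{\ell^\circ}}$ lies within $O(\Psi\psi_{n,p})$ of its piecewise-linear population version, whose slope on each side of $\cp_j$ is of order $|\bm\Sigma\bm\delta_j|_\infty/\sqrt{b_{\ell^\circ} - a_{\ell^\circ}}$; inverting this pointwise perturbation yields $|\wh\cp_j - \cp_j| \le c_1|\bm\Sigma\bm\delta_j|_\infty^{-2}\Psi^2\psi_{n,p}^2$. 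The trimming $\varpi_{n,p} = C_1\psi_{n,p}^2$ ensures the recursion never examines intervals too short to support reliable detection, and termination with $\wh q = q$ follows since, once every $\cp_j$ has been detected, any remaining candidate interval has population signal zero and hence empirical detector below $\pi_{n,p}$ on $\mc G_{n,p}$.
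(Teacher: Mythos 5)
Your overall architecture — uniform concentration of the local covariance estimators, maximisation of the population detector at the change point, and a narrowest-over-threshold induction — matches the skeleton of the paper's proof (the paper's good event $\mc E_{n,p}$ together with Lemmas~\ref{lem:gamma}--\ref{lem:sqrt:diff} plays the role of your $\mc G_{n,p}$, and claims (R1)--(R2) implement the recursion you describe). However, there is a genuine gap in the localisation step.

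Your localisation argument (``the empirical curve lies within $O(\Psi\psi_{n,p})$ of the population curve, whose slope near $\cp_j$ is of order $\vert\bm\Sigma\bm\delta_j\vert_\infty/\sqrt{\Delta_j}$; invert'') yields the rate $\vert\wh\cp_j - \cp_j\vert \lesssim \Psi\psi_{n,p}\sqrt{\Delta_j}/\vert\bm\Sigma\bm\delta_j\vert_\infty$, not the rate $\vert\wh\cp_j - \cp_j\vert \lesssim \Psi^2\psi_{n,p}^2/\vert\bm\Sigma\bm\delta_j\vert_\infty^2$ claimed in the theorem. Under Assumption~\ref{assum:size}, $\sqrt{\Delta_j}\vert\bm\Sigma\bm\delta_j\vert_\infty \gtrsim \Psi\psi_{n,p}$, so your bound is strictly weaker; moreover it scales with $\Delta_j$ (which can be as large as $n$), whereas the theorem's rate does not. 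The missing idea is that the relevant stochastic fluctuation is not the global sup-norm deviation but the \emph{increment} of the detector between $k = \wh\cp_j$ and $k = \cp_j$, which involves a partial sum over only $\vert\wh\cp_j - \cp_j\vert$ observations and hence concentrates at rate $\psi_{n,p}\sqrt{\vert\wh\cp_j - \cp_j\vert}$. Comparing $T_{a_{\ell^\circ},\wh\cp_j,b_{\ell^\circ}} \ge T_{a_{\ell^\circ},\cp_j,b_{\ell^\circ}}$, the paper decomposes the difference into a signal-loss term of order $\vert\wh\cp_j - \cp_j\vert\,\vert\bm\Sigma\bm\delta_j\vert_\infty/\sqrt{\Delta^\circ}$ and noise terms of order $\Psi\psi_{n,p}\sqrt{\vert\wh\cp_j - \cp_j\vert/\Delta^\circ}$ (its $T_1,T_2,T_3$ in the proof of (R2), using Lemma~\ref{lem:sqrt:diff} and the $\ell_2$-direction restricted concentration in Lemma~\ref{lem:bound}); balancing these gives $\sqrt{\vert\wh\cp_j - \cp_j\vert} \lesssim \Psi\psi_{n,p}/\vert\bm\Sigma\bm\delta_j\vert_\infty$, hence the stated quadratic-in-$\psi_{n,p}$ rate. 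Without this ``local increment'' refinement your argument does not prove the theorem as stated.

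A secondary remark: your good event $\mc G_{n,p}$ imposes uniform concentration of $\wh{\bm\gamma}_{a,b}-\bm\gamma_{a,b}$ over \emph{all} intervals, but the bilinear forms involved depend on the segment-wise $\bm\beta_j$ (through $Y_t = \mbf x_t^\top\bm\beta_j + \vep_t$), and the paper deliberately restricts Lemma~\ref{lem:gamma} to intervals containing at most one change point and controls quadratic forms only along the $O(p+q)$ directions in $\mc A$ of~\eqref{eq:set:e}. This is not merely cosmetic: the careful (S1)--(S2) induction guarantees the recursion only ever asks for concentration on such intervals, and avoids having to control quadratic forms uniformly over a set of directions of size growing faster than $p+q$.
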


\begin{rem}[Comparison of detection boundaries and rates of estimation]
\label{rem:dlb}
For the data segmentation problem in~\eqref{eq:model}, detectability of change points is jointly determined by $\bm\delta_j$, $\Delta_j$ and~$\bm\Sigma$.
To facilitate comparison, consider the Gaussian setting and set $\Psi = O(1)$.
The state-of-the-art procedures which assume the segment-wise sparsity of $\bm\beta_j$, achieve detection consistency with probability tending to one provided that 
$\min_{j \in [q]} \underline{\sigma}^2 \mathfrak{s}_\beta^{-1} \vert \bm\delta_j \vert_2^2 \Delta_j \gtrsim \log(p \vee n)$,
with $\underline{\sigma} = \Lambda_{\min}(\bm\Sigma)$ and $\mathfrak{s}_\beta = \max_{j \in [q]} \vert \bm\beta_j \vert_0$. \cite{gao2022sparse} impose a similar condition, $\min_{j \in [q]} \vert \bm\delta_j \vert_0^{-1} \vert \bm\delta_j \vert_2^2 \Delta_j \gtrsim \log^2(p)$, requiring the sparsity of the differential parameters, in a more restrictive setting where $X_{it}, \vep_t \sim_{\iid} \mc N(0, 1)$, $\Delta_j \asymp n$, $p < n$ and $n - p \asymp n$. In contrast, thanks to~\eqref{eq:size:change}, 
Assumption~\ref{assum:size} accommodates a broader parameter space than those permitted by the detection boundaries of the existing methods, all without assuming the exact sparsity of regression or differential parameters. This demonstrates the greater generality of our results, see also Appendix~\ref{sec:comparison} for a detailed comparison and an example illustrating a case of an approximately sparse change where the gap in detection boundaries diverges with~$p$.

Similar conclusions apply to the rate of localisation attained by McScan. Our rate captures the difficulty of detecting individual changes with the multiplicative factor of $\vert \bm\Sigma \bm\delta_j \vert_\infty^{-2}$, and thus generally improves upon the rates attained by the existing methods. In fact, we establish that McScan is \emph{minimax (near-)optimal} in both the detection boundary and the estimation rate, matching the lower bounds (possibly up to log factors) for Gaussian models (see Lemmas~\ref{lem:lb} and \ref{lem:lb:id} in \cref{ss:lb}). Notably, our results also confirm that in the case of sparse changes, the logarithmic factors in the detection boundary and the estimation rate are necessary.

Finally, in the simpler situation where the covariance matrix $\bm\Sigma$ is known, one might be able to impose a slightly weaker detection boundary than \cref{assum:size}. This would suggest the possibility of accommodating broader scenarios where regression parameters are non-sparse. For related discussions in standard linear regression settings, see \citet{verzelen2018adaptive}. This opens an interesting direction for future research.  
\end{rem}

Regarding the model~\eqref{eq:model} with $q = 1$ from the two-sample testing point of view, \cite[Proposition~9]{gao2022two} showed that testing with a dense nuisance parameter is not feasible (in the minimax sense) when $p \ge \min(\cp_1, n - \cp_1)$.  Our detection consistency result in \cref{thm:one}, which allows $p \gg n$, does not contradict their claim, as their result holds conditionally on the design matrix such that when $p \ge \min(\cp_1, n - \cp_1)$, there exists some $\bm\delta_1$ that avoids detection due to singularity. In contrast, we approach the problem by considering the changes as those in the joint distribution of $(\mbf x_t, Y_t)$, and impose \cref{assum:size} which excludes the case where $\bm\Sigma \bm\delta_j \approx \mbf 0$, see also Lemma~\ref{lem:tv}.

\section{Post-segmentation estimation and inference}
\label{sec:diff}

\subsection{Direct estimation of differential parameters}
\label{sec:diff:est}

A naive approach to estimating the differential parameter $\bm\delta_j$ is to take the difference of the separately obtained estimators of $\bm\beta_{j - 1}$ and $\bm\beta_j$ which, however, requires the assumption of segment-wise sparsity. 
Instead, motivated by the observation that 
$\bm\Sigma \bm\delta_j = \bm\gamma_{\cp_j, b} - \bm\gamma_{a, \cp_j}$ for any $a$ and $b$ satisfying $\cp_{j - 1} \le a < \cp_j < b \le \cp_{j + 1}$ (see also~\eqref{eq:cov:diff}),
we propose to directly estimate $\bm\delta_j$ via $\ell_1$-regularisation. 
In doing so, we assume the sparsity of the differential parameters while $\bm\beta_j$'s may be non-sparse. 
We introduce two such estimators, both of which can be efficiently computed by convex optimisation algorithms and share the same theoretical guarantees. 
On a generic interval $(s, e]$ with $0 \le s < e \le n$, let us write $\bm\delta_{s, e}(k) = \bm\Sigma^{-1} (\bm\gamma_{k, e} - \bm\gamma_{s, k})$.
\smallskip

\textbf{LOPE: $\ell_1$-penalised estimator.} 
Consider the following quadratic function of $\mbf a \in \R^p$
\begin{align*}
\mc L(\mbf a; \bm\Sigma, \bm\gamma_{s, k}, \bm\gamma_{k, e})
=  \frac{1}{2} \mbf a^\top \bm\Sigma \mbf a - \mbf a^\top (\bm\gamma_{k, e} - \bm\gamma_{s, k})
=  \frac{1}{2}\left\vert\bm\Sigma^{1/2}\bigl( \mbf a - \bm\delta_{s, e}(k)\bigr)\right\vert_2^2 - \frac{1}{2}\bm\delta_{s, e}(k)^\top\bm{\Sigma}\bm\delta_{s, e}(k),
\end{align*}
which measures the closeness between $\mbf a$ and $\bm\delta_{s, e}(k)$, and is minimised when $\mbf a = \bm\delta_{s, e}(k)$. 
Replacing $\bm\gamma_{a, b}$ by $\wh{\bm\gamma}_{a, b}$ defined in~\eqref{eq:detector}, and $\bm\Sigma$ by $\wh{\bm\Sigma}_{s, e} = (e - s)^{-1} \sum_{t = s + 1}^e \mbf x_t \mbf x_t^\top$, 
we propose the \emph{$\underline{\ell}$-\underline{o}ne-\underline{pe}nalised} (LOPE) estimator of $\bm\delta_{s, e}(k)$, as
\begin{align}
\label{eq:lasso:est}
\wh{\bm\delta}_{s, e}(k) \;\in\; \mathop{\arg\min}_{\mbf a \in \R^p} \, \mc L\l( \mbf a; \wh{\bm\Sigma}_{s, e}, \wh{\bm\gamma}_{s, k}, \wh{\bm\gamma}_{k, e} \r) + 
\lambda \sqrt{\frac{e-s}{(k-s)(e-k)}} \vert \mbf a \vert_1
\end{align}
for some $\lambda > 0$. The problem in~\eqref{eq:lasso:est} can be re-formulated into a modified Lasso problem~as
\begin{align*}
\wh{\bm\delta}_{s, e}(k) \;\in\; \mathop{\arg\min}_{\mbf a \in \R^p} \, \frac{1}{2(e-s)}\l\vert
\bmx
\frac{e-s}{k-s}\bm Y_{s,k}\\
\frac{e-s}{e-k}\bm Y_{k,e}
\emx
- \bmx
-\mbf X_{s,k} \\
\mbf X_{k,e}
\emx
\mbf a
\r\vert_2^2  + 
\lambda \sqrt{\frac{e-s}{(k-s)(e-k)}} \vert \mbf a \vert_1,
\end{align*}
with $\bm Y_{a, b} = (Y_{a+1}, \ldots, Y_b)^{\top}$ and $\mbf X_{a, b} = [\mbf x_{a+1}, \ldots, \mbf x_b]^\top$. 
\smallskip

\textbf{CLOM: constrained $\ell_1$-minimisation estimator.} Alternatively, we set out to minimise the $\ell_1$-norm of the estimator under a constraint that the covariance weighted estimator closely approximates the difference in local covariances of $\mbf x_t$ and $Y_t$.
With the plug-in estimators as above, we propose the \emph{\underline{c}onstrained $\underline{\ell}$-\underline{o}ne \underline{m}inimisation} (CLOM) estimator 
\begin{align}
\wh{\bm\delta}_{s, e}(k) \;\in\; \mathop{\arg\min}_{\mbf a \in \R^p} \vert \mbf a \vert_1 \;\text{ \ subject to \ }\; \sqrt{\frac{(k - s)(e - k)}{e - s}} \l\vert \wh{\bm\Sigma}_{s, e} \mbf a - \wh{\bm\gamma}_{k, e} + \wh{\bm\gamma}_{s, k} \r\vert_\infty \le \lambda, 
\label{eq:direct:est}
\end{align}
where $\lambda > 0$ is a tuning parameter.
\smallskip

We propose to estimate $\bm\delta_j, \, j \in [q]$, by $\wh{\bm\delta}_j = \wh{\bm\delta}_{a_j, b_j}(\wh\cp_j)$ obtained by either LOPE in~\eqref{eq:lasso:est} or CLOM in~\eqref{eq:direct:est}, where
\begin{align}
\label{eq:ab}
& a_j = \wh\cp_j - \wh{\Delta}_j \text{ \ and \ } b_j = \wh\cp_j + \wh{\Delta}_j, \quad \text{with}
\\
& \wh{\Delta}_j = \min\l(\wh\cp_j - \l\lfloor \frac{2}{3} \wh\cp_{j - 1} + \frac{1}{3} \wh\cp_j \r\rfloor, \;\l\lceil \frac{1}{3} \wh\cp_j + \frac{2}{3} \wh\cp_{j + 1} \r\rceil - \wh\cp_j \r), \nn
\end{align}
which are chosen to isolate each change point $\cp_j$ within the interval $(a_j, b_j]$. By convention, $\wh\cp_0 = 0$ and $\wh\cp_{q+1} = n$.
To investigate the properties of $\wh{\bm\delta}_j$, we make an additional assumption on the eigenvalues of $\bm\Sigma = \Cov(\mbf x_t)$.
\begin{assum}[Spectrum of $\bm\Sigma$]
	\label{assum:xe:iii} 
	$\bm\Sigma$ is positive definite with its smallest and largest eigenvalues satisfying $\underline{\sigma} \le \Lambda_{\min}(\bm\Sigma) \le \Lambda_{\max}(\bm\Sigma) \le \bar{\sigma}$ for some $\underline{\sigma}, \bar{\sigma} \in (0, \infty)$.  
\end{assum}

We denote the sparsity of $\bm\delta_j$ by $\mathfrak{s}_j = \vert \mc S_j \vert_0$ where $\mc S_j$ is the support of $\bm\delta_j = (\delta_{1j}, \ldots, \delta_{pj})^\top$, i.e.\ $\mc S_j = \{i \in [p]: \, \delta_{ij} \ne 0\}$. 

\begin{prop}[Consistency in differential parameter estimation]
	\label{prop:dp:est}
	Suppose that all the conditions made in Theorem~\ref{thm:one} hold in addition to \cref{assum:xe:iii}. Also, let $c_1 c_0^{-1} < 1/6$ and for all $j \in [q]$, assume that
	\begin{align}
	\label{eq:prop:lasso:two}
	\Delta_j \ge 3 (2\underline{\sigma})^{-2} \bigl(64  \mathfrak{s}_j \crsc \log(p) \bigr)^{1 + 2\kappa}
	\end{align}
	for a constant $\crsc$ depending on $\kappa$ (with $\kappa = 0$ under Assumption~\ref{assum:func:dep}~\ref{assum:fd:gauss} and $\kappa = 1/r - 1/2$ under~\ref{assum:fd:ind}).
    Then, setting $\lambda = C_\lambda \Psi_j \psi_{n, p}$ with some $C_\lambda$ that depends on $\kappa$ and $\Xi$ only,
	\begin{align*}
	\l\vert \wh{\bm\delta}_j - \bm\delta_j \r\vert_2 \lesssim \frac{\Psi_j \sqrt{\mathfrak{s}_j} \psi_{n, p}}{\underline{\sigma} \sqrt{\Delta_j}} \text{ \ and \ }
	\l\vert \wh{\bm\delta}_j - \bm\delta_j \r\vert_1 \lesssim \frac{\Psi_j \mathfrak{s}_j \psi_{n, p}}{\underline{\sigma} \sqrt{\Delta_j}}
	\end{align*} 
	for all $j \in [q]$, with probability at least $1 - 2c_2 (p \vee n)^{-c_3}$ with the constants $c_2, c_3 \in (0, \infty)$ as in Theorem~\ref{thm:one}.
	All unspecified constants depend only on $\Xi$ and $\kappa$.
\end{prop}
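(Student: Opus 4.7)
The approach is to work on the high-probability event $\mc S_{n, p}$ from \cref{thm:one} and decompose the error as $\wh{\bm\delta}_j - \bm\delta_j = (\wh{\bm\delta}_j - \bm\delta_j^\star) + (\bm\delta_j^\star - \bm\delta_j)$, where $\bm\delta_j^\star := \bm\delta_{a_j, b_j}(\wh\cp_j) = \bm\Sigma^{-1}(\bm\gamma_{\wh\cp_j, b_j} - \bm\gamma_{a_j, \wh\cp_j})$ is the population target of the direct estimator on $(a_j, b_j]$. The localisation rate of \cref{thm:one} combined with \cref{assum:size} and the hypothesis $c_1 c_0^{-1} < 1/6$ yields $|\wh\cp_j - \cp_j| \le \Delta_j/6$, ensuring that $(a_j, b_j]$ isolates a single change point, that $\wh\Delta_j \asymp \Delta_j$, and (handling the two cases $\wh\cp_j \lessgtr \cp_j$ separately) that a direct calculation gives $\bm\delta_j^\star = \omega_j \bm\delta_j$ with $|\omega_j - 1| \le |\wh\cp_j - \cp_j|/\wh\Delta_j$. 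In particular $\bm\delta_j^\star$ inherits the support $\mc S_j$ of $\bm\delta_j$, and combining the localisation rate with $|\bm\delta_j|_2 \le \sqrt{\mathfrak{s}_j}\,|\bm\Sigma\bm\delta_j|_\infty/\underline\sigma$ from~\eqref{eq:size:change} (and \cref{assum:size} for the lower bound on $|\bm\Sigma\bm\delta_j|_\infty$) shows that the population bias $|\bm\delta_j^\star - \bm\delta_j|_2$ is already at the target rate $\Psi\sqrt{\mathfrak{s}}\,\psi_{n,p}/(\underline\sigma\sqrt{\Delta_j})$, with the analogous calculation in $\ell_1$.

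The estimation error $\wh{\bm\delta}_j - \bm\delta_j^\star$ is handled by a high-dimensional Lasso/Dantzig argument relying on two tools. First, uniform $\ell_\infty$ concentration bounds $|\wh{\bm\gamma}_{a, b} - \bm\gamma_{a, b}|_\infty \lesssim \Psi\psi_{n,p}/\sqrt{b - a}$ and $|\wh{\bm\Sigma}_{a_j, b_j} - \bm\Sigma|_\infty \lesssim \psi_{n,p}/\sqrt{\wh\Delta_j}$, obtained from the same functional-dependence maximal inequalities used in the proof of \cref{thm:one} together with a union bound over the $O(n^2)$ possible pairs of endpoints. Second, a restricted eigenvalue (RE) condition $\mathbf{u}^\top \wh{\bm\Sigma}_{a_j, b_j}\mathbf{u} \ge (\underline\sigma/2)|\mathbf{u}|_2^2$ holding uniformly over the cone $\{\mathbf{u}: |\mathbf{u}_{\mc S_j^c}|_1 \le 3|\mathbf{u}_{\mc S_j}|_1\}$: this is the main technical obstacle, and the sample-size lower bound~\eqref{eq:prop:lasso:two} is calibrated precisely to secure it, via standard RE-type results for sample covariance matrices under functional dependence (cf.\ the arguments in \citealp{cho2022high}), in combination with \cref{assum:xe:iii}.

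With these ingredients in place, LOPE is analysed by the usual basic inequality: with $\mathbf{u} = \wh{\bm\delta}_j - \bm\delta_j^\star$ and $\lambda_j^\star = \lambda/\sqrt{\min(\wh\cp_j - a_j,\, b_j - \wh\cp_j)} \asymp \Psi\psi_{n,p}/\sqrt{\wh\Delta_j}$, optimality of $\wh{\bm\delta}_j$ gives
\begin{align*}
\tfrac{1}{2}\mathbf{u}^\top \wh{\bm\Sigma}_{a_j, b_j}\mathbf{u} \;\le\; \mathbf{u}^\top \bigl[(\wh{\bm\gamma}_{\wh\cp_j, b_j} - \wh{\bm\gamma}_{a_j, \wh\cp_j}) - \wh{\bm\Sigma}_{a_j, b_j}\bm\delta_j^\star\bigr] + \lambda_j^\star\bigl(|\bm\delta_j^\star|_1 - |\wh{\bm\delta}_j|_1\bigr).
\end{align*}
Since $\bm\Sigma\bm\delta_j^\star = \bm\gamma_{\wh\cp_j, b_j} - \bm\gamma_{a_j, \wh\cp_j}$ by construction, the bracket is dominated in $\ell_\infty$ by the concentration bounds of the previous step, and the choice $C_\lambda \ge 5\sqrt{2}\,C_0$ makes this no larger than $\lambda_j^\star/2$; the standard cone decomposition then gives $|\mathbf{u}_{\mc S_j^c}|_1 \le 3|\mathbf{u}_{\mc S_j}|_1$, and the RE condition converts the quadratic form into $|\mathbf{u}|_2^2$, yielding $|\mathbf{u}|_2 \lesssim \lambda_j^\star\sqrt{\mathfrak{s}_j}/\underline\sigma$ and $|\mathbf{u}|_1 \lesssim \lambda_j^\star\mathfrak{s}_j/\underline\sigma$. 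For CLOM, the same rates follow from feasibility of $\bm\delta_j^\star$ on the concentration event together with the standard Dantzig cone argument. Combining with the population bias by the triangle inequality and taking a union bound over $j \in [q]$ (absorbed into the $\log(p \vee n)$ in $\psi_{n,p}$) delivers the claim uniformly in $j$.
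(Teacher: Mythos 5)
Your proposal is correct and follows essentially the same route as the paper: decompose $\wh{\bm\delta}_j - \bm\delta_j$ into the estimation error relative to the population target $\bm\delta_{a_j,b_j}(\wh\cp_j)$ and the bias of that target, bound the former via the basic (resp.\ Dantzig) inequality on the feasibility/concentration event together with a restricted eigenvalue condition secured by \eqref{eq:prop:lasso:two}, and bound the latter using the localisation rate of \cref{thm:one} together with $\vert\bm\delta_j\vert_\star \lesssim \mathfrak{s}^{1/2\text{ or }1}\vert\bm\Sigma\bm\delta_j\vert_\infty/\underline\sigma$ and \cref{assum:size}. The only presentational difference is that the paper first proves a general-purpose lemma (\cref{prop:lasso}) for arbitrary $(s,k,e)$ and then specialises, whereas you analyse the estimator at $(a_j,\wh\cp_j,b_j)$ directly — the underlying arguments are the same.
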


The above results bear close resemblance to the rates attained by $\ell_1$-regularised estimators of the regression parameter in the standard high-dimensional linear model, on their dependence on the effective sample size $\Delta_j$ and the sparsity $\mathfrak{s}_j$.
In particular, under (sub-)Gaussianity, the rates match those derived for the Lasso and Dantzig selector estimators, see e.g.\ \cite{bickel2009simultaneous}.
As a by-product, papers proposing data segmentation methods under the strict sparsity of $\bm\beta_j$, report the consistency of local Lasso estimators; see e.g.\ Lemma~B.2 of \cite{cho2022high} for a comparable result.

\begin{rem}[Recovery of $\mc S_j$]
\label{rem:linf}
It is well-known that $\ell_1$-regularised estimators do not achieve consistency in the estimation of $\mc S_j$.
Using the arguments in the proof of \cref{prop:dp:est} (see \cref{pf:prop:lasso}), we obtain that with probability tending to one,
\begin{align*}
\sqrt{\Delta_j} \l\vert \wh{\bm\delta}_j - \bm\delta_j \r\vert_\infty \lesssim \psi_{n, p} \min \l( \frac{\Psi_j \sqrt{\mathfrak{s}_j}}{\underline{\sigma}}, \Vert \bm\Omega \Vert_1 \vert \bm\delta_j \vert_1 \r),
\end{align*}
where $\bm\Omega = \bm\Sigma^{-1}$ is the precision matrix of $\mbf x_t$.
Then, the support of $\bm\delta_j$ can be estimated consistently by thresholding the elements of $\wh{\bm\delta}_j$ with some threshold $\mathfrak{t}$, provided that
\begin{align*}
\min_{i \in \mc S_j} \vert \delta_{ij} \vert \, \gtrsim \, \mathfrak{t} \, \gtrsim  \, \frac{\psi_{n, p}}{\sqrt{\Delta_j}} \min\l( \frac{\Psi_j \sqrt{\mathfrak{s}_j}}{\underline{\sigma}}, \Vert \bm\Omega \Vert_1 \vert \bm\delta_j \vert_1 \r).
\end{align*}
This thresholding-based approach, which has been explored in high-dimensional linear regression \citep{van2011adaptive}, however relies on selecting $\mathfrak{t}$ that depends on many unknown parameters, and does not provide any guarantee at a given confidence level.
Towards a practical method with statistical guarantees, in what follows, we propose a procedure for the recovery of $\mc S_j$ via de-sparsification of $\wh{\bm\delta}_j$, which allows for controlling the family-wise error rate across $p$ dimensions.
\end{rem}

\subsection{Simultaneous confidence intervals for differential parameters}
\label{sec:diff:inf}

\subsubsection{De-sparsified estimator for differential parameters}
\label{sec:debiased}

Both LOPE~\eqref{eq:lasso:est} and CLOM~\eqref{eq:direct:est} estimators are biased due to $\ell_1$-regularisation, which calls for a de-sparsification step before their large sample distributions can be derived.
We extend the de-sparsified estimator, originally proposed in the context of high-dimensional linear regression \citep{zhang2014confidence, javanmard2014confidence, van2014asymptotically}, to the change point setting for simultaneous inference about the differential parameters $\bm\delta_j$.
We mention that, assuming that $\bm\beta_j$'s are sparse, \cite{liu2024simultaneous} propose an estimator of $\bm\delta_j$ formed by differencing the de-sparsified estimators of $\bm\beta_j$ and $\bm\beta_{j - 1}$.

As shown below, due to the direct estimation of $\bm\delta_j$, the distribution of its de-sparsified estimator is ``driven'' by a non-standard quadratic form of $\mbf Z_t = (\mbf x_t^\top, \vep_t)^\top$, not to mention that it depends on $a_j$, $b_j$ and $\wh\cp_j$ which are random.
To handle this,
we adopt a sample splitting strategy and partition the data into observations with even and odd indices, which we denote by $\mc D^{\tE} = \{(Y_t^{\tE}, \mbf x_t^{\tE}), \, t \in [n_0]\}$ and $\mc D^{\tO} = \{(Y_t^{\tO}, \mbf x_t^{\tO}), \, t \in [n_0]\}$, respectively, with $n_0 = \lfloor n/2 \rfloor$.
Throughout, the superscripts `E' and `O' denote that the relevant estimators are obtained using the datasets $\mc D^{\tE}$ and $\mc D^{\tO}$, respectively.
For notational simplicity, we assume that the change points $\cp_j$ are in the scale of the index sets for $\mc D^{\ell}$, $\ell \in \{\tE, \tO\}$, i.e.\ the joint distribution of $(Y^\ell_t, \mbf x^\ell_t)$ undergoes changes at $\cp_j, \, j \in [q]$. 

Applying the McScan algorithm to $\mc D^{\tE} $, we obtain a set of change point estimators $\wh{\Cp}^{\tE} = \bigl\{\wh\cp^{\tE}_j, \, j \in [\wh q]: \, \wh\cp^{\tE}_1 < \ldots < \wh\cp^{\tE}_{\wh q} \bigr\}$.
Modifying~\eqref{eq:ab} to accommodate the use of sample splitting, we identify $\wh{\Delta}^{\tE}_j$, $a^{\tE}_j$ and $b^{\tE}_j$ for each $j \in [q]$,
and generate an estimator $\wh{\bm\delta}^{\tE}_j$ of $\bm\delta_j$ by LOPE in~\eqref{eq:lasso:est} or CLOM in~\eqref{eq:direct:est} from~$\mc D^{\tE}$.
Then, we construct a bias-corrected version of $\wh{\bm\delta}^{\tE}_j$ with the adjustment motivated by the Karush--Kuhn--Tucker condition of \eqref{eq:lasso:est}, as 
\begin{align}
\label{eq:debiased}
\wc{\bm\delta}_j &= \wh{\bm\delta}^{\tE}_j - \wh{\bm\Omega}^{\tE} \l( \wh{\bm\Sigma}^{\tO}_{a^{\tE}_j, b^{\tE}_j} \wh{\bm\delta}^{\tE}_j - \wh{\bm\gamma}^{\tO}_{\wh\cp^{\tE}_j, b^{\tE}_j} + \wh{\bm\gamma}^{\tO}_{a^{\tE}_j, \wh\cp^{\tE}_j} \r)
\end{align}
where $\wh{\bm\Sigma}^{\tO}_{a^{\tE}_j, b^{\tE}_j}$, $\wh{\bm\gamma}^{\tO}_{\wh\cp^{\tE}_j, b^{\tE}_j}$ and $\wh{\bm\gamma}^{\tO}_{a^{\tE}_j, \wh\cp^{\tE}_j}$ are obtained from $\mc D^{\tO}$, and $\wh{\bm\Omega}^{\tE}$  an estimator of the precision matrix $\bm\Omega = \bm\Sigma^{-1}$ from $\mc D^{\tE}$ whose choice will be given shortly.
We can rewrite $\wc{\bm\delta}_j$ as 
\begin{align}
\label{eq:debias:decomp}
& \sqrt{\frac{\wh\Delta^{\tE}_j}{2}} ( \wc{\bm\delta}_j - \bm\delta_j ) 
= \frac{1}{\sqrt{b^{\tE}_j - a^{\tE}_j}} \wh{\bm\Omega}^{\tE} 
\sum_{t = a^{\tE}_j + 1}^{b^{\tE}_j} w^{\tE}_{j, t} \mbf x^{\tO}_t \l( \vep^{\tO}_t+ (\mbf x^{\tO}_t)^\top \bar{\bm\mu}^{\tE}_j \r) + o_P(1),
\end{align}
with location-dependent weights $w^{\tE}_{j, t}$ (see~\eqref{eq:weights}) and $\bar{\bm\mu}^{\tE}_j$ defined in Theorem~\ref{thm:two} below.
The leading term in~\eqref{eq:debias:decomp}, which determines the limiting distribution of~$\wc{\bm\delta}_j$, is non-standard compared to those appearing in the de-sparsified Lasso literature that involve the sum of $\mbf x_t \vep_t$ only.
This motivates the proposed sample splitting step for a rigorous, non-asymptotic treatment of the Gaussian approximation error.

\subsubsection{Distribution of the de-sparsified estimator}

In studying the distribution of $\wc{\bm\delta}_j$, we require stronger assumptions than those for consistency in estimating the multiple change points and differential parameters. 

\begin{massum}{\ref{assum:func:dep}$^\prime$}[Independence, sub-Gaussianity]
\label{assum:func:dep:two}
We assume that \cref{assum:func:dep}~\ref{assum:fd:ind} holds with $r = 2$ and additionally, that $\{\mbf x_t\}$ and $\{\vep_t\}$ are independent.
\end{massum}

Under Assumption~\ref{assum:func:dep:two}, we set $\psi_{n, p}  = \sqrt{\log(p \vee n)}$ (see~\eqref{eq:psi}).
Next, we make an assumption on the estimator of the precision matrix.
\begin{assum}[Precision matrix estimator]
\label{assum:omega:w}
Consider $\mbf x_t \in \R^p$, $t \in [n]$, with the common precision matrix $\bm\Omega = \Cov(\mbf x_t)^{-1}$. Define $\mathfrak{s}_{\varrho} = \max_{i \in [p]} \vert \bm\Omega_{i\cdot} \vert_{\varrho}^{\varrho} = \max_{i \in [p]} \sum_{i' \in [p]} \vert \Omega_{ii'} \vert^{\varrho}$ with $\varrho \in [0, 1)$, and let $\wh{\bm\Sigma}$ be the sample covariance of $\mbf x_t$'s. Then, there is an estimator $\wh{\bm\Omega}$, possibly asymmetric, so that on an event $\mc{O}_{n, p}$ with $\p(\mc O_{n, p}) \to 1$ as $n \to \infty$, the following~holds:
\begin{enumerate}[label = (\roman*)]
    \item \label{assum:omega:w:one} $\Vert \wh{\bm\Omega} \Vert_\infty \lesssim \Vert \bm\Omega \Vert_\infty = \Vert \bm\Omega \Vert_1$, 
    \item \label{assum:omega:w:two} $\vert \mbf I_p - \wh{\bm\Omega} \wh{\bm\Sigma} \vert_\infty  \lesssim {\Vert \bm\Omega \Vert_1 \sqrt{\log(p \vee n)/n}}$, and
    \item \label{assum:omega:w:three} $\Vert \wh{\bm\Omega} - \bm\Omega \Vert \lesssim \mathfrak{s}_\varrho \bigl({\Vert \bm\Omega \Vert_1^{\omega}\sqrt{\log(p \vee n)/n}}\bigr)^{1 - \varrho}\le 1/(2\bar{\sigma})$ for some $\varrho \in [0, 1)$ and $\omega \ge 1$.
\end{enumerate}
\end{assum}

The error rate in~\ref{assum:omega:w:three} is minimax optimal when $\omega = 1$ \citep{cai2016estimating}. 
Later in Section~\ref{sec:ci}, we consider the constrained $\ell_1$-minimisation estimator (CLIME; \citealp{cai2011constrained}) applied to the data $\mc{D}^{\tE}$ as $\wh{\bm\Omega}^{\tE}$ and verify Assumption~\ref{assum:omega:w}, 
see \cref{prop:clime:dataE} and also \cref{rem:alter} that discusses alternative estimators. 
We emphasise that the non-asymptotic result on the distribution of $\wc{\bm\delta}_j$ derived in \cref{thm:two} below, continues to hold for any estimator of $\bm\Omega$ satisfying Assumption~\ref{assum:omega:w}.

The next assumption strengthens Assumption~\ref{assum:size} on the size of changes and segment lengths~$\Delta_j$.
\begin{massum}{\ref{assum:size}$^\prime$}[Sample size and sparsity]
	\label{assum:size:two} For all $j \in [q]$, we have
	\begin{align}
	\frac{\Psi_j \max\bigl( \Vert \bm\Omega \Vert_1 \mathfrak{s}_j, \; \Vert \bm\Omega \Vert_1 \Psi \vert \bm\Sigma \bm\delta_j \vert_\infty^{-1}, \; \Psi_j^3 \log(p \vee n) \bigr) } {\sqrt{\Delta_j}} = o\l(\frac{1}{\log^{3/2}(p \vee n)}\r). 
\label{eq:assum:size:two:two} 
\end{align}
\end{massum}

\begin{rem}[Strengthened conditions]
	\label{rem:inference}
	\begin{enumerate}[label = (\roman*)]
		\item 
Under Assumption~\ref{assum:func:dep:two}, the first part of~\eqref{eq:assum:size:two:two} implies the condition~\eqref{eq:prop:lasso:two} (with $\kappa = 0$) as the former requires that $\Vert \bm\Omega \Vert_1 \cdot \mathfrak{s}_j \log(p \vee n) = o(\sqrt{\Delta_j})$.
In the case of $\Vert \bm\Omega \Vert_1 = O(1)$, this requirement parallels the assumption found in 
\cite{liu2024simultaneous} where, directly assuming that $\bm\beta_0$ and $\bm\beta_1$ are sparse, the problem of testing whether $\bm\delta_1 = \mbf 0$ or not, is studied in a single change point setting.
		
		\item 
The second part of the condition in~\eqref{eq:assum:size:two:two} places a stronger requirement on the size of changes compared to Assumption~\ref{assum:size}, namely that $\vert \bm\Sigma \bm\delta_j \vert_\infty \sqrt{\Delta_j} \gg \Vert \bm\Omega \Vert_1 \Psi_j^2 \log^{3/2}(p \vee n)$.
This condition, together with Assumption~\ref{assum:omega:w}~\ref{assum:omega:w:three}, effectively imposes (approximate) sparsity on $\bm\Omega$.
\cite{bradic2022testability} study the problem of testing for a single regression coefficient, say $\mc H_0: \, \beta_1 = b$ against $\mc H_1: \, \beta_1 = b + h$, without assuming sparsity on the vector of coefficients $\bm\beta = (\beta_1, \ldots, \beta_p)^\top$. 
Their Theorem~2 establishes the uniform non-testability when $h \lesssim \mathfrak{s}_0 \log(p) / n$, 
where $\mathfrak{s}_0 = \max_{i \in [p]} \sum_{i' \in [p]} \mathbb{I}_{\{\vert \omega_{ii'} \vert > 0\}}$ denotes the row-wise sparsity of $\bm\Omega = [\omega_{ii'}]$, which ties in with our condition on (approximate) sparsity of $\bm\Omega$.
We also note that in addressing a related but distinct problem of deriving the limit distribution of the change point estimator under~\eqref{eq:model}, \cite{xu2022change} make a condition stronger than that for consistency in change point estimation, namely $\vert \bm\delta_j \vert_2 \sqrt{\Delta_j} \gg \mathfrak{s}_\beta \log^{3/2}(p \vee n)$.
\end{enumerate}
\end{rem}

\begin{thm}[Distribution of $\wc{\bm\delta}_j$]
	\label{thm:two}
	Suppose that Assumptions~\ref{assum:xe}, \ref{assum:func:dep:two}, \ref{assum:size:two},\ref{assum:xe:iii} and \ref{assum:omega:w} hold, and set
	$\lambda = C_\lambda \Psi \sqrt{\log(p \vee n)}$ with $C_\lambda$ that depends only on $\kappa$ and $\Xi$.
	For each $j \in [q]$, let $\mbf V_j$ denote a $p$-variate random vector satisfying 
	\begin{align*}
	& \mbf V_j \vert \mc D^{\tE} \sim \mc N_p\l(\mbf 0, \frac{(\wh\cp^{\tE}_j - a^{\tE}_j)(b^{\tE}_j - \wh\cp^{\tE}_j)}{(\cp_j - a^{\tE}_j)(b^{\tE}_j - \cp_j)} \wh{\bm\Omega}^{\tE} \bm\Gamma_j \l(\wh{\bm\Omega}^{\tE}\r)^\top \r), \text{ \ where}
	\\
	& \bm\Gamma_j = \Cov\l(\mbf x^{\tO}_t\bigl(\vep^{\tO}_t + (\mbf x^{\tO}_t)^\top\bar{\bm\mu}^{\tE}_j\bigr) \big\vert \mc D^{\tE} \r) \text{ \ and \ } \bar{\bm\mu}^{\tE}_j = \frac{(b^{\tE}_j - \cp_j) \bm\beta_{j - 1} + (\cp_j - a^{\tE}_j) \bm\beta_j}{b^{\tE}_j - a^{\tE}_j}.
	\end{align*}
	Then there exists some constant $C$ that depends on $\bar{\sigma}$, $\underline{\sigma}$, $\sigma_\vep$ and $\Xi$ only, 
	such that
	\begin{multline*}
	\sup_{z \in \R}\; \l\vert \p\l( 
	\sqrt{\frac{\wh\Delta^{\tE}_j}{2}}
	\l\vert \wc{\bm\delta}_j - \bm\delta_j \r\vert_\infty \le z \Biggm\vert \mc D^{\tE} \r) - \p\l(\vert \mbf V_j \vert_\infty \le z \Big\vert \mc D^{\tE} \r) \r\vert 
	\\
	\le \; \frac{C \Psi_j \log(p \vee n) \sqrt{\log(p)}}{\sqrt{\Delta_j}} \Bigl( \Psi_j^3 \log(p) 	+ \Vert \bm\Omega \Vert_1 \max( \mathfrak{s}_j, \vert \bm\Sigma \bm\delta_j \vert_\infty^{-1} \Psi_j) \Bigr) 
	+  \p\l( (\mc O_{n, p}^{\tE})^c \r).
	\end{multline*}
\end{thm}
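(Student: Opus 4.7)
The plan is to decompose $\wc{\bm\delta}_j - \bm\delta_j$ into a leading stochastic term whose conditional law approximates that of $\mbf V^{(j)}$, plus a Lasso-type bias and a misalignment residual that will absorb the $\Vert\bm\Omega\Vert_1\mathfrak{s}$ and $\Vert\bm\Omega\Vert_1 \vert\bm\Sigma\bm\delta_j\vert_\infty^{-1}\Psi$ pieces of the stated rate. Writing $a = a^{\tE}_j$, $b = b^{\tE}_j$, $k = \wh\cp^{\tE}_j$, $\theta = \cp_j$, and $\mbf Z_j = \wh{\bm\gamma}^{\tO}_{k,b} - \wh{\bm\gamma}^{\tO}_{a,k} - \wh{\bm\Sigma}^{\tO}_{a,b}\bm\delta_j$, the definition of $\wc{\bm\delta}_j$ rearranges to
\begin{align*}
\wc{\bm\delta}_j - \bm\delta_j = (\mbf I_p - \wh{\bm\Omega}^{\tE}_j \wh{\bm\Sigma}^{\tO}_{a,b})(\wh{\bm\delta}^{\tE}_j - \bm\delta_j) + \wh{\bm\Omega}^{\tE}_j \mbf Z_j.
\end{align*}
The crucial step is to recognise an oracle representation of $\mbf Z_j$ centred at the \emph{true} change point. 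Using the segment-wise identities $\bm\beta_{j-1} - \bar{\bm\mu}^{\tE}_j = -\frac{\theta-a}{b-a}\bm\delta_j$ and $\bm\beta_j - \bar{\bm\mu}^{\tE}_j = \frac{b-\theta}{b-a}\bm\delta_j$, one may write $Y^{\tO}_t = (\mbf x^{\tO}_t)^\top\bar{\bm\mu}^{\tE}_j + \alpha_t(\mbf x^{\tO}_t)^\top\bm\delta_j + \vep^{\tO}_t$, with $\alpha_t = -(\theta-a)/(b-a)$ if $t \le \theta$ and $(b-\theta)/(b-a)$ otherwise. The $\theta$-centred weights $w^*_t = -(\theta-a)^{-1}\mathbb{I}_{\{t \le \theta\}} + (b-\theta)^{-1}\mathbb{I}_{\{t > \theta\}}$ then satisfy $w^*_t \alpha_t \equiv 1/(b-a)$, producing the clean identity
\begin{align*}
\mbf Z^*_j := \sum_{t=a+1}^b w^*_t \mbf x^{\tO}_t Y^{\tO}_t - \wh{\bm\Sigma}^{\tO}_{a,b}\bm\delta_j = \sum_{t=a+1}^b w^*_t \mbf U_t,
\end{align*}
with $\mbf U_t = \mbf x^{\tO}_t[(\mbf x^{\tO}_t)^\top\bar{\bm\mu}^{\tE}_j + \vep^{\tO}_t]$ conditionally independent given $\mc D^{\tE}$ with common covariance $\bm\Gamma_j$. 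A direct calculation using $\sum_t (w^*_t)^2 = (b-a)/[(\theta-a)(b-\theta)]$ and the symmetric construction $(k-a)(b-k) = (\wh\Delta^{\tE}_j)^2$ then shows that the conditional covariance of $\sqrt{\wh\Delta^{\tE}_j/2}\,\wh{\bm\Omega}^{\tE}_j \mbf Z^*_j$ matches precisely that of $\mbf V^{(j)}$.

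For the Lasso bias $(\mbf I_p - \wh{\bm\Omega}^{\tE}_j\wh{\bm\Sigma}^{\tO}_{a,b})(\wh{\bm\delta}^{\tE}_j - \bm\delta_j)$, I would combine the $\ell_1$ rate of \cref{prop:dp:est} for $\vert\wh{\bm\delta}^{\tE}_j - \bm\delta_j\vert_1$ with \cref{assum:omega} on $\wh{\bm\Omega}^{\tE}_j$, together with a Bernstein-type concentration of $\vert\wh{\bm\Sigma}^{\tO}_{a,b} - \bm\Sigma\vert_\infty$ under \cref{assum:func:dep:two} to transfer the near-orthogonality bound from $\wh{\bm\Sigma}^{\tE}_{a,b}$ to $\wh{\bm\Sigma}^{\tO}_{a,b}$; after scaling by $\sqrt{\wh\Delta^{\tE}_j/2}$, this yields the $\Vert\bm\Omega\Vert_1\mathfrak{s}$ contribution. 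For the misalignment residual $\mbf Z_j - \mbf Z^*_j = \sum_t(w_t - w^*_t)\mbf x^{\tO}_t Y^{\tO}_t$, a direct computation shows that $w_t - w^*_t$ has magnitude of order $1/\wh\Delta^{\tE}_j$ on the (short) interval between $k$ and $\theta$, and of order $|k-\theta|/(\wh\Delta^{\tE}_j)^2$ elsewhere; combining this with $|\wh\cp^{\tE}_j - \cp_j| \lesssim \Psi^2 \vert\bm\Sigma\bm\delta_j\vert_\infty^{-2}\log(p\vee n)$ from \cref{thm:one} and $\Vert\wh{\bm\Omega}^{\tE}_j\Vert_\infty \lesssim \Vert\bm\Omega\Vert_1$ gives the $\Vert\bm\Omega\Vert_1 \vert\bm\Sigma\bm\delta_j\vert_\infty^{-1}\Psi$ piece.

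For the leading stochastic term, conditionally on $\mc D^{\tE}$ the vector $\sqrt{\wh\Delta^{\tE}_j/2}\,\wh{\bm\Omega}^{\tE}_j \mbf Z^*_j$ is a sum of $b-a$ independent mean-zero sub-Gaussian random vectors with the target covariance, so I would invoke the high-dimensional Gaussian approximation for maxima of independent sums from \citet{CCKK23} conditionally on $\mc D^{\tE}$. The row-wise sub-exponential norms $\Vert\bm\omega_i^\top \mbf U_t\Vert_{\psi_1}$ can be controlled uniformly using $|\bm\omega_i|_1 \le \Vert\wh{\bm\Omega}^{\tE}_j\Vert_\infty \lesssim \Vert\bm\Omega\Vert_1$ from \cref{assum:omega} together with the sub-Gaussianity of $\mbf x^{\tO}_t$ and $\vep^{\tO}_t$; the resulting Gaussian approximation error matches the $\Psi^3\log(p)$ factor in the bound, while the $\p((\mc O^{\tE}_{n,p})^c)$ term absorbs the event on which \cref{assum:omega} fails.

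\textbf{Main obstacle.} The subtlest step is identifying the oracle decomposition $\mbf Z_j = \mbf Z^*_j + (\mbf Z_j - \mbf Z^*_j)$ centred at the \emph{unknown} true change point $\theta$ rather than its estimator $k$: the cancellation $w^*_t \alpha_t \equiv 1/(b-a)$ is specific to this choice, and is what forces the limiting covariance to be a clean scalar multiple of $\wh{\bm\Omega}^{\tE}_j \bm\Gamma_j (\wh{\bm\Omega}^{\tE}_j)^\top$ with the factor $(k-a)(b-k)/[(\theta-a)(b-\theta)]$. Without this algebraic structure, the leading term would retain an $O(\vert\bm\Sigma\bm\delta_j\vert_\infty)$ cross-segment mismatch that would dominate the targeted rate. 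A secondary difficulty is quantitatively carrying out the high-dimensional CLT with a random yet $\mc D^{\tE}$-measurable preconditioner $\wh{\bm\Omega}^{\tE}_j$, which is rendered tractable precisely by the sample-splitting device that makes the $\mc D^{\tO}$-based randomness conditionally independent of $\wh{\bm\Omega}^{\tE}_j$.
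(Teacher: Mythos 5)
Your decomposition $\wc{\bm\delta}_j - \bm\delta_j = (\mbf I_p - \wh{\bm\Omega}^{\tE}_j\wh{\bm\Sigma}^{\tO}_{a,b})(\wh{\bm\delta}^{\tE}_j - \bm\delta_j) + \wh{\bm\Omega}^{\tE}_j\mbf Z_j$, the oracle recentring $\mbf Z^*_j$ at the true $\cp_j$, the $\theta$-centred weights $w^*_t$, the cancellation $w^*_t\alpha_t \equiv 1/(b-a)$, and the covariance-matching calculation are precisely the $T_{j,1}, T_{j,2}, T_{j,3}$ decomposition and the weighted-sum representation via $w^{\tE}_{jt}$ used in the paper, and the bounds on the bias and misalignment pieces invoke the same lemmas (estimation rates for $\wh{\bm\delta}^{\tE}_j$, the precision-matrix assumption, a concentration bound on the weighted covariance difference, and the localisation rate from \cref{thm:one}), with the leading term handled by the same high-dimensional Gaussian approximation conditional on $\mc D^{\tE}$. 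This is essentially the paper's proof; the only step you leave implicit is the standard combination of the Gaussian approximation for $T_3$ with Nazarov's anti-concentration inequality to absorb the $T_1, T_2$ bias into the final rate, which is how the extra $\sqrt{\log p}$ factor enters.
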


Theorem~\ref{thm:two} makes use of a Gaussian approximation result in high dimensions \citep{CCK23}, as well as taking into account the errors arising from the estimation of $\cp_j$, $\bm\delta_j$ and $\bm\Omega$; under Assumptions~\ref{assum:size:two} and~\ref{assum:omega:w}, the approximation error is controlled as $o(1)$.
Theorem~\ref{thm:two} allows for conducting inference about $\delta_{ij}$ with family-wise error control across $i \in [p]$, without resorting to the Bonferroni correction that does not utilise the dependence among the coordinates, and enables identifying the set $\mc S_j$ 
of the coordinates undergoing the changes.
This useful result comes 
at the price of the stronger Assumptions~\ref{assum:func:dep:two} and~\ref{assum:size:two}, in place of Assumptions~\ref{assum:func:dep} and~\ref{assum:size} that are sufficient for the estimation consistency derived in Proposition~\ref{prop:dp:est}.

The width of the simultaneous confidence intervals constructed based on Theorem~\ref{thm:two} is of the parametric rate up to a logarithmic factor.
To see this, note that $\Vert\bm\Gamma_j \Vert \lesssim \Psi_j^2$ and $\Vert \wh{\bm\Omega}^{\tE} \Vert \lesssim \underline{\sigma}^{-1}$, the latter with probability tending to one from Assumption~\ref{assum:omega:w}. Combining this with the consistency of change point estimators (Theorem~\ref{thm:one}), we have $\max_{i \in [p]} \Var(V_{j, i} \vert \mc D^{\tE}) = O_P(\Psi_j^2)$ which leads to the width of simultaneous confidence intervals being bounded as $O_P(\Psi_j \sqrt{\log(p)/\Delta_j})$.

\subsubsection{Construction of simultaneous confidence intervals}
\label{sec:ci}

For the estimation of $\bm\Omega$, we consider the CLIME \citep{cai2011constrained}, i.e.\
\begin{align}\label{eq:mod:clime}
\wh{\bm\Omega}^{\tE} \;\in\; \mathop{\arg\min}_{\mbf M \in \R^{p \times p}} \vert \mbf M \vert_1 
\text{ \ subject to \ } \sqrt{n_0} \l\vert \mbf M \wh{\bm\Sigma}^{\tE} - \mbf I_p \r\vert_\infty \le \eta
\end{align}
with some tuning parameter $\eta > 0$, where the convex optimisation problem in~\eqref{eq:mod:clime} can be solved in parallel for each row of $\wh{\bm\Omega}^{\tE}$.
The following Proposition~\ref{prop:clime:dataE} establishes that it satisfies Assumption~\ref{assum:omega:w} with $\omega = 2$ on the data $\mc D^{\tE}$.

\begin{prop}[CLIME]
\label{prop:clime:dataE}
Suppose that Assumptions~\ref{assum:xe}, \ref{assum:func:dep} and~\ref{assum:xe:iii} hold, and set $\eta = C_\eta \Vert \bm\Omega \Vert_1 \psi_{n, p}$ for some $C_\eta$ that depends only on $\kappa$ and $\Xi$. 
Then by its construction, we have
$\vert \mbf I_p - \wh{\bm\Omega}^{\tE} \wh{\bm\Sigma}^{\tE} \vert_\infty \le C_\eta \Vert \bm\Omega \Vert_1 \psi_{n, p}/\sqrt{n_0}$.
Additionally, with probability at least $1 - c_2 (p \vee n)^{-c_3}$ with the constants $c_2, c_3 \in (0, \infty)$ as in Theorem~\ref{thm:one}, we have
\begin{align*}
    \Vert \wh{\bm\Omega}^{\tE} \Vert_\infty \le \Vert \bm\Omega \Vert_1 \text{ \ and \ } \l\Vert \wh{\bm\Omega}^{\tE} - \bm\Omega \r\Vert \lesssim \mathfrak{s}_{\varrho}\,
		\l(\frac{C_\eta \Vert \bm\Omega \Vert_1^2 \psi_{n, p} }{\sqrt{n}}\r)^{1 - \varrho} \text{ \ for all \ } \varrho \in [0, 1).
\end{align*}
\end{prop}

\begin{rem}[Alternative estimators of $\bm\Omega$]
\label{rem:alter}
	Besides CLIME, we may consider its adaptive version (ACLIME, \citealp{cai2016estimating}) 
	or the one based on Gaussian graphical modelling (\citealp{liu2017tiger}, extending the nodewise regression of \citealp{MeBu06}).
 They are shown to attain the minimax optimal error rate, namely $\mathfrak{s}_\varrho (\Vert \bm\Omega \Vert_1 \log(p)/n)^{(1 - \varrho)/2}$, in matrix $L_r$-norm for $r \in [1, \infty]$ under independence and (sub-)Gaussianity.
Adopting these methods in place of CLIME does not alter the rate derived in Theorem~\ref{thm:two}, but it weakens the requirement on the (approximate) sparsity of $\bm\Omega$ in Assumption~\ref{assum:omega:w}~\ref{assum:omega:w:three} by the factor of $\Vert \bm\Omega \Vert_1^{1 - \varrho}$ with $\omega = 1$. 
At the same time, Proposition~\ref{prop:clime:dataE} shows that CLIME meets Assumption~\ref{assum:omega:w} under the more general Assumption~\ref{assum:func:dep} that permits temporal dependence and tail behaviour heavier than sub-Gaussianity, while \cite{liu2017tiger} assume Gaussianity and \cite{cai2016estimating} sub-Gaussianity.
Also, ACLIME utilises a perturbed version of the sample covariance matrix $\wh{\bm\Sigma}^{\tE}$, and thus it does not readily meet Assumption~\ref{assum:omega:w}~\ref{assum:omega:w:two}.
\end{rem}

For the estimation of $\bm\Gamma_j$, we define
\begin{align*}
\wh{\mbf U}_{j, t} = \l\{\begin{array}{ll}
\mbf x^{\tO}_t \l(Y^{\tO}_t + \frac{1}{2} (\mbf x_t^{\tO})^\top \wh{\bm\delta}^{\tE}_j \r),
& t \in \{a^{\tE}_j + 1, \ldots, \wh\cp^{\tE}_j\},
\\
\mbf x^{\tO}_t \l(Y^{\tO}_t - \frac{1}{2} (\mbf x_t^{\tO})^\top \wh{\bm\delta}^{\tE}_j \r), 
& t \in \{\wh\cp^{\tE}_j + 1, \ldots, b^{\tE}_j\},
\end{array}\r.
\end{align*}
$\bar{a}^{\tE}_j = b^{\tE}_j - \lfloor (1 - \epsilon) \wh{\Delta}^{\tE}_j \rfloor$ and $\bar{b}^{\tE}_j = a^{\tE}_j + \lfloor (1 - \epsilon) \wh{\Delta}^{\tE}_j \rfloor$, for some small constant $\epsilon \in (0, 1)$. Then, denoting by $\wh{\Cov}_{a, b}(\cdot)$ the empirical covariance operator, we estimate $\bm\Gamma_j$ by
\begin{align*}
\wh{\bm\Gamma}_j = \frac{1}{2} \l( \wh{\Cov}_{a^{\tE}_j, \bar{b}^{\tE}_j}( \wh{\mbf U}_{j, t} ) + \wh{\Cov}_{\bar{a}^{\tE}_j, b^{\tE}_j}( \wh{\mbf U}_{j, t} ) \r).
\end{align*}

\begin{prop}
	\label{prop:G}
	Suppose that Assumptions~\ref{assum:xe}, \ref{assum:func:dep:two}, \ref{assum:size:two} and \ref{assum:xe:iii} hold.
	Then with probability at least $1 - 4 c_2 (p \vee n)^{-c_3}$ with the constants $c_2, c_3 \in (0, \infty)$ as in Theorem~\ref{thm:one}, we have 
	\begin{align*}
	\sqrt{\Delta_j} \vert \wh{\bm\Gamma}_j - \bm\Gamma_j \vert_\infty 
	\lesssim \Psi_j^2 \sqrt{\mathfrak{s}_j \log(p \vee n)} \text{ \ for all \ } j \in [q].
	\end{align*}
\end{prop}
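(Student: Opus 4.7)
The plan is to work conditionally on $\mc D^{\tE}$; since $\mc D^{\tE}$ and $\mc D^{\tO}$ are independent under \cref{assum:func:dep:two}, the quantities $\wh\cp^{\tE}_j, \wh{\bm\delta}^{\tE}_j, a^{\tE}_j, b^{\tE}_j, \bar a^{\tE}_j, \bar b^{\tE}_j$ and $\bar{\bm\mu}^{\tE}_j$ all become deterministic. Applying \cref{thm:one} and \cref{prop:dp:est} to $\mc D^{\tE}$, on an event of probability at least $1 - 2c_2(p \vee n)^{-c_3}$ we have $\wh q = q$, $\vert \wh\cp^{\tE}_j - \cp_j \vert \lesssim \vert\bm\Sigma\bm\delta_j\vert_\infty^{-2}\Psi^2\log(p\vee n)$, and $\vert \wh{\bm\delta}^{\tE}_j - \bm\delta_j \vert_2 \lesssim \Psi \sqrt{\mathfrak{s}\log(p\vee n)/\Delta_j}/\underline\sigma$. \cref{assum:size:two} then implies $\vert \wh\cp^{\tE}_j - \cp_j\vert = o(\wh\Delta^{\tE}_j)$, so for the fixed trimming constant $\epsilon$ the two sub-intervals $(a^{\tE}_j, \bar b^{\tE}_j]$ and $(\bar a^{\tE}_j, b^{\tE}_j]$ lie strictly to the left and right of $\cp_j$ respectively; on the former, $Y^{\tO}_t = (\mbf x^{\tO}_t)^\top \bm\beta_{j-1} + \vep^{\tO}_t$, and on the latter, $Y^{\tO}_t = (\mbf x^{\tO}_t)^\top \bm\beta_j + \vep^{\tO}_t$.

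Introducing the oracle $\mbf U^*_{j,t} = \mbf x^{\tO}_t(\vep^{\tO}_t + (\mbf x^{\tO}_t)^\top\bm\mu^*_j)$ with $\bm\mu^*_j = (\bm\beta_{j-1} + \bm\beta_j)/2$ and writing $\bm\zeta_j = \tfrac{1}{2}(\wh{\bm\delta}^{\tE}_j - \bm\delta_j)$, substitution into the true model gives
\begin{align*}
\wh{\mbf U}_{j, t} \;=\; \mbf U^*_{j, t} + \sigma_t \mbf x^{\tO}_t(\mbf x^{\tO}_t)^\top \bm\zeta_j
\end{align*}
with $\sigma_t = +1$ on $(a^{\tE}_j, \bar b^{\tE}_j]$ and $\sigma_t = -1$ on $(\bar a^{\tE}_j, b^{\tE}_j]$. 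Because the two trimmed sub-intervals have equal length, the cross-covariance contributions $\Cov(\mbf U^*, \mbf x\mbf x^\top\bm\zeta_j)$ cancel upon averaging, so a direct computation gives $\E[\wh{\bm\Gamma}_j \vert \mc D^{\tE}] = \bm\Gamma^*_j + \Cov(\mbf x^{\tO}_t(\mbf x^{\tO}_t)^\top \bm\zeta_j \vert \mc D^{\tE}) + O(1/\Delta_j)$, where $\bm\Gamma^*_j = \Cov(\mbf U^*_{j,t})$. This motivates the decomposition
\begin{align*}
\wh{\bm\Gamma}_j - \bm\Gamma_j \;=\; \underbrace{\bigl(\wh{\bm\Gamma}_j - \E[\wh{\bm\Gamma}_j \vert \mc D^{\tE}]\bigr)}_{=: I_1} \;+\; \underbrace{\Cov\bigl(\mbf x^{\tO}_t(\mbf x^{\tO}_t)^\top \bm\zeta_j \bigm\vert \mc D^{\tE}\bigr)}_{=: I_3} \;+\; \underbrace{\bigl(\bm\Gamma^*_j - \bm\Gamma_j\bigr)}_{=: I_2}.
\end{align*}

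I bound each piece separately. For $I_1$, each coordinate of $\wh{\mbf U}_{j,t}$ is a product of sub-Gaussian factors hence sub-exponential with Orlicz norm of order $\Psi$, so Bernstein's inequality applied entry-wise together with a union bound over $(i, i') \in [p]^2$ gives $\vert I_1\vert_\infty \lesssim \Psi^2 \sqrt{\log(p \vee n)/\Delta_j}$. For $I_3$, a Gaussian-type fourth-moment calculation yields $\vert I_3\vert_\infty \lesssim \bar\sigma^2 \vert\bm\zeta_j\vert_2^2 \lesssim \Psi^2 \mathfrak{s}\log(p\vee n)/\Delta_j$ via \cref{prop:dp:est}. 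For $I_2$, we have $\bar{\bm\mu}^{\tE}_j - \bm\mu^*_j = \tfrac{\cp_j - \wh\cp^{\tE}_j}{2 \wh\Delta^{\tE}_j}\bm\delta_j =: c_j \bm\delta_j$, and expanding the covariance difference entry-wise produces leading contributions of orders $\vert c_j\vert \bar\sigma^2 \Psi^2$ and $\vert c_j\vert\vert\bm\Sigma\bm\delta_j\vert_\infty \bar\sigma\Psi$; plugging in the localisation rate of \cref{thm:one} and invoking both parts of the $\max$ in \cref{assum:size:two} shows that $\sqrt{\Delta_j}\vert I_2\vert_\infty$ is dominated by $\Psi^2\sqrt{\mathfrak{s}\log(p \vee n)}$. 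A union bound over $j \in [q] \subset [n]$ (absorbed into $c_3$) then yields the stated uniform bound. The main obstacle is the population-level bias $I_2$: its leading entries inherit $\bar\sigma^2\Psi^2$-scale terms that cannot be tightened by standard moment inequalities, and verifying that the product with $\sqrt{\Delta_j}$ remains below the target rate requires a careful interplay between the localisation bound of \cref{thm:one} and both components of the $\max$-condition in \cref{assum:size:two}.
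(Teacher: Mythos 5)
Your three-term decomposition $\wh{\bm\Gamma}_j - \bm\Gamma_j = I_1 + I_3 + I_2$ is a genuinely different organisation from the paper's. The paper splits each half-interval covariance $\wh{\bm\Gamma}_{j,(\ell)} - \bm\Gamma_j$ into a fluctuation term, a sample-mean-squared term, and a single \emph{combined} bias term $\wt{\bm\Gamma}_{j,(\ell)} - \bm\Gamma_j$; that last term is controlled by bounding the $\ell_2$-norm of the effective nuisance vector $\frac{1}{2}\wh{\bm\delta}^{\tE}_j - \frac{\cp_j - a^{\tE}_j}{b^{\tE}_j - a^{\tE}_j}\bm\delta_j$, which simultaneously captures both the $\wh{\bm\delta}^{\tE}_j$-error and the $\wh\cp^{\tE}_j$-misalignment, and then applying a simple H\"older cross-moment bound. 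Your approach splits that nuisance into $\bm\zeta_j$ (carried by $I_3$) and $c_j\bm\delta_j$ (carried by $I_2$), and it exploits a nice sign-cancellation of the $\Cov(\mbf U^*, \mbf x\mbf x^\top\bm\zeta_j)$ cross-covariance between the two trimmed sub-intervals. That cancellation is correct and makes $I_3$ second-order in $\bm\zeta_j$, which the paper's bound does not exploit.

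There is, however, a real gap in your handling of $I_2$, and it is not one that the ``careful interplay'' you allude to can repair. You bound the leading entry of $I_2$ by $\vert c_j\vert\,\bar\sigma^2\Psi^2$. With $\vert c_j\vert \lesssim \Psi^2\log(p\vee n)/(\vert\bm\Sigma\bm\delta_j\vert_\infty^2\Delta_j)$ from Theorem~\ref{thm:one}, this gives $\sqrt{\Delta_j}\vert I_2\vert_\infty \lesssim \Psi^4\log(p\vee n)/(\vert\bm\Sigma\bm\delta_j\vert_\infty^2\sqrt{\Delta_j})$. Dividing by the target $\Psi^2\sqrt{\mathfrak{s}\log(p\vee n)}$ and then invoking Assumption~\ref{assum:size} (i.e.\ $\vert\bm\Sigma\bm\delta_j\vert_\infty\sqrt{\Delta_j}\gtrsim\Psi\sqrt{\log(p\vee n)}$) leaves you with a stray factor $\Psi/(\sqrt{\mathfrak{s}}\,\vert\bm\Sigma\bm\delta_j\vert_\infty)$, which is \emph{not} bounded in general — the second arm of the max in Assumption~\ref{assum:size:two} controls $\Psi^2/(\vert\bm\Sigma\bm\delta_j\vert_\infty\sqrt{\Delta_j})$, not $\Psi/\vert\bm\Sigma\bm\delta_j\vert_\infty$. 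The fix is precisely what you dismissed: the Gaussian-type expansion of $\bm\Gamma^*_j-\bm\Gamma_j$ in fact yields leading entries of order $\vert c_j\vert\,\bar\sigma^2\,\Psi\,\vert\bm\delta_j\vert_2$ rather than $\vert c_j\vert\,\bar\sigma^2\,\Psi^2$; keeping the factor $\vert\bm\delta_j\vert_2$ and passing through $\vert\bm\delta_j\vert_2 \le \underline\sigma^{-1}\sqrt{\mathfrak{s}}\,\vert\bm\Sigma\bm\delta_j\vert_\infty$ from~\eqref{eq:size:change} converts the stray $\vert\bm\Sigma\bm\delta_j\vert_\infty^{-2}$ into $\vert\bm\Sigma\bm\delta_j\vert_\infty^{-1}$, which Assumption~\ref{assum:size} then neutralises. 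This is exactly the cancellation the paper obtains implicitly through the combined $\ell_2$ bound on the nuisance vector.

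Two further issues. First, for $I_1$ you invoke Bernstein's inequality, but the entries $\wh U_{j,it}\wh U_{j,i't}$ are products of sub-exponential variables and have $\psi_{1/2}$ (sub-Weibull) tails, not sub-exponential ones; a genuine Bernstein bound is not available, and you need a tail inequality of the kind in Theorem~4 of \citet{xu2022change} (or Lemma~5 of \citet{wong2020lasso}), which in turn uses the condition $\Delta_j \gtrsim \log^3(p \vee n)$ in Assumption~\ref{assum:size:two} — a condition your argument never invokes. Second, you claim a probability of $1 - 2c_2(p\vee n)^{-c_3}$, but the statement requires $1 - 4c_2(p\vee n)^{-c_3}$, and you have not tracked the separate events for $\mc D^{\tO}$ (in particular the concentration event underlying $I_1$ and the auxiliary event for $\mathcal{D}^{\tE}$) needed to carry Theorems~\ref{thm:one} and Proposition~\ref{prop:dp:est} alongside the $\mc D^{\tO}$-concentration.
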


Combining Theorem~\ref{thm:two} and Propositions~\ref{prop:clime:dataE} and~\ref{prop:G}, we obtain: 
\begin{cor}	\label{cor:thm:two} Suppose that the conditions of Theorem~\ref{thm:two} hold. 
Then for $\wh{\mbf V}_j \vert \mc D^{\tE} \cup \mc D^{\tO} \sim \mc N_p(\mbf 0, \wh{\bm\Omega}^{\tE} \wh{\bm\Gamma}_j (\wh{\bm\Omega}^{\tE})^\top)$, we have 
\begin{multline*}
\sup_{z \in \R} \l\vert \p\l( \sqrt{\frac{\wh\Delta^{\tE}_j}{2}} \l\vert \wc{\bm\delta}_j - \bm\delta_j \r\vert_\infty \le z \;\bigg\vert\; \mc D^{\tE} \r) - \p\l(\vert \wh{\mbf V}_j \vert_\infty \le z \;\bigg\vert\; \mc D^{\tE} \cup \mc D^{\tO} \r) \r\vert 
\lesssim \\\frac{\Psi_j \log(p \vee n) \sqrt{\log(p)}}{\sqrt{\Delta_j}} \Bigl( \Psi_j^3 \log(p) + \Vert \bm\Omega \Vert_1 \max(\mathfrak{s}_j, \vert \bm\Sigma \bm\delta_j \vert_\infty^{-1} \Psi_j) \Bigr) 
+ \Vert \bm\Omega \Vert_1^2 \Psi_j^2 \log(p)\log(n) \sqrt{\frac{\mathfrak{s}_j \log(p \vee n)}{\Delta_j}}.
\end{multline*}
\end{cor}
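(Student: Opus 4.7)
The plan is to combine Theorem~\ref{thm:two}, which provides a Gaussian approximation with the oracle covariance $\alpha_j \wh{\bm\Omega}^{\tE}_j \bm\Gamma_j (\wh{\bm\Omega}^{\tE}_j)^\top$ (where $\alpha_j = (\wh\cp^{\tE}_j - a^{\tE}_j)(b^{\tE}_j - \wh\cp^{\tE}_j)/\{(\cp_j - a^{\tE}_j)(b^{\tE}_j - \cp_j)\}$), with a Gaussian comparison argument that replaces this with the fully data-driven $\wh{\bm\Omega}^{\tE}_j \wh{\bm\Gamma}_j (\wh{\bm\Omega}^{\tE}_j)^\top$. Denote these two covariance matrices by $\bm\Xi_V$ and $\bm\Xi_{\wh V}$. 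The first line of the stated bound is precisely what Theorem~\ref{thm:two} yields after integrating out $\mc D^{\tE}$ (with the remainder $\p((\mc O^{\tE}_{n, p})^c)$ absorbed via Proposition~\ref{prop:clime:dataE}), so everything else is about controlling the pointwise Gaussian comparison on the good event $\mc S_{n, p}$ of Theorem~\ref{thm:one} intersected with those of Propositions~\ref{prop:clime:dataE} and~\ref{prop:G}.

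On this event I would apply the high-dimensional Gaussian comparison inequality of \cite{CCK23} for $\ell_\infty$-functionals of mean-zero Gaussian vectors, which yields
\[
\sup_z \l| \p(|\mbf V^{(j)}|_\infty \le z \mid \mc D^{\tE}) - \p(|\wh{\mbf V}^{(j)}|_\infty \le z \mid \mc D^{\tE} \cup \mc D^{\tO}) \r| \; \lesssim \; \frac{\log(p)\log(n)}{\min_i \{(\bm\Xi_V)_{ii} \wedge (\bm\Xi_{\wh V})_{ii}\}} \, \l| \bm\Xi_V - \bm\Xi_{\wh V} \r|_\infty.
\]
A lower bound $\min_i (\bm\Xi_V)_{ii} \wedge (\bm\Xi_{\wh V})_{ii} \gtrsim \sigma_\vep^2 \underline{\sigma}/\bar\sigma^2$ follows from Assumptions~\ref{assum:xe:iii} and~\ref{assum:func:dep:two} together with $\|\wh{\bm\Omega}^{\tE}_j - \bm\Omega\|$ being small in operator norm (Proposition~\ref{prop:clime:dataE}\ref{prop:clime:two}), and this produces the $\bar\sigma^2/(\sigma_\vep^2 \underline{\sigma})$ prefactor in the second line of the asserted bound.

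It then remains to bound $|\bm\Xi_V - \bm\Xi_{\wh V}|_\infty$ via the decomposition
\[
\bm\Xi_V - \bm\Xi_{\wh V} \;=\; (\alpha_j - 1)\, \wh{\bm\Omega}^{\tE}_j \bm\Gamma_j (\wh{\bm\Omega}^{\tE}_j)^\top \,-\, \wh{\bm\Omega}^{\tE}_j (\wh{\bm\Gamma}_j - \bm\Gamma_j) (\wh{\bm\Omega}^{\tE}_j)^\top.
\]
For the second summand, $|A B A^\top|_\infty \le \|A\|_\infty^2 |B|_\infty$ combined with $\|\wh{\bm\Omega}^{\tE}_j\|_\infty \lesssim \|\bm\Omega\|_1$ (Proposition~\ref{prop:clime:dataE}\ref{prop:clime:one}) and $|\wh{\bm\Gamma}_j - \bm\Gamma_j|_\infty \lesssim \Psi^2 \sqrt{\mathfrak{s}\log(p \vee n)/\Delta_j}$ (Proposition~\ref{prop:G}) delivers the $\sqrt{\mathfrak{s}\log(p \vee n)/\Delta_j}$ contribution. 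For the first summand, because $(a^{\tE}_j, b^{\tE}_j]$ is symmetric about $\wh\cp^{\tE}_j$ with radius $\wh\Delta^{\tE}_j$, a direct expansion gives the quadratic identity $|1 - \alpha_j| \lesssim (|\wh\cp^{\tE}_j - \cp_j|/\Delta_j)^2$; plugging in the localisation rate $|\wh\cp^{\tE}_j - \cp_j| \lesssim \Psi^2 \log(p \vee n)/|\bm\Sigma\bm\delta_j|_\infty^2$ from Theorem~\ref{thm:one} and the crude bound $|\bm\Gamma_j|_\infty \lesssim \Psi^2$ gives the quartic term $(\Psi^2 \log(p \vee n)/(|\bm\Sigma\bm\delta_j|_\infty^2 \Delta_j))^2$. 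Multiplying both by $\|\bm\Omega\|_1^2 \Psi^2$ and the Gaussian comparison prefactor produces the second line of the claim. The main technical obstacle is tracking the logarithmic factors sharply enough in the Gaussian comparison to obtain $\log(p)\log(n)$ rather than a larger power, which hinges on the anti-concentration step and on the uniform lower bound on the diagonals of $\bm\Xi_V$ and $\bm\Xi_{\wh V}$ across $j \in [q]$ on the good event.
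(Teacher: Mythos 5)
Your proposal reproduces the paper's proof of Corollary~\ref{cor:thm:two}: Theorem~\ref{thm:two} gives the first line, and the second follows from a high-dimensional Gaussian comparison inequality applied to the covariance difference, decomposed as you describe into the $(\alpha_j-1)$-scaling term — controlled by the exact identity $\alpha_j - 1 = (\wh\cp^{\tE}_j-\cp_j)^2/\{(\cp_j-a^{\tE}_j)(b^{\tE}_j-\cp_j)\}$ arising from the symmetric construction of $(a^{\tE}_j, b^{\tE}_j]$ around $\wh\cp^{\tE}_j$ — and the $\wh{\bm\Gamma}_j - \bm\Gamma_j$ term via Proposition~\ref{prop:G} and $\Vert\wh{\bm\Omega}^{\tE}_j\Vert_\infty\lesssim\Vert\bm\Omega\Vert_1$. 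The one cosmetic difference is the source of the comparison inequality: the paper invokes Lemma~\ref{lem:gausscmp} (a Stein-method comparison due to \citet{FaKo21}), which requires the smallest \emph{eigenvalue} $\sigma_*$ in the denominator rather than the minimum diagonal entry as you wrote under the banner of \cite{CCKK23}; since \eqref{eq:maxmin} bounds $\sigma_*$ from below by a constant times $\sigma_\vep^2\underline{\sigma}/\bar{\sigma}^2$, the resulting prefactor is identical.
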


Corollary~\ref{cor:thm:two} specifies the price to pay when handling the uncertainty stemming from estimating  $\bm\Gamma_j$ and $\cp_j$;
for the approximation error to be controlled as $o(1)$, Corollary~\ref{cor:thm:two} imposes an extra constraint on $\Delta_j$ compared to Theorem~\ref{thm:two} if $\Vert \bm\Omega \Vert_1^2 \sqrt{\mathfrak{s_j}} \gg \Psi_j^2$ and $\Psi_j \Vert \bm\Omega \Vert_1 \log(p) \gg \sqrt{\mathfrak{s}_j}$.
In a representative example, when $\Vert \bm\Omega \Vert_1 \lesssim 1$, $p \asymp n^\gamma$ for some $\gamma \in (0, \infty)$, and $\Psi_j \lesssim 1$, we have the rate in \cref{cor:thm:two} translates to
\begin{align*}
\sqrt{\frac{\mathfrak{s}_j \log(p)}{\Delta_j}} \max\l( \sqrt{\mathfrak{s}_j} \log(p), \log^2(p) \r) + \frac{\log^{3/2}(p)}{\vert \bm\Sigma \bm\delta_j \vert_\infty \sqrt{\Delta_j}}.
\end{align*}

Based on this, we construct a simultaneous $100(1 - \alpha)\%$ confidence interval about $\delta_{ij}$ for all $i \in [p]$ at a given confidence level $\alpha \in (0, 1)$, as
\begin{align}
\label{eq:ci}
\mc C_{ij}(\alpha) = 
\l( \wc{\delta}_{ij} - \sqrt{\frac{2}{\wh\Delta^{\tE}_j}} C_{\alpha/2, \infty}(\wh{\mbf V}_j),
\, 
\wc{\delta}_{ij} + \sqrt{\frac{2}{\wh\Delta^{\tE}_j}} C_{\alpha/2, \infty}(\wh{\mbf V}_j) \r),
\end{align}
where $C_{\alpha/2, \infty}(\wh{\mbf V}_j)$ is the upper $\alpha/2$-quantile of $\vert \wh{\mbf V}_j  \vert_\infty$ which can be approximated by sampling directly from the limit distribution in Corollary~\ref{cor:thm:two}.
Alternatively, we consider a multiplier bootstrap procedure which, involving randomly weighted sums of (centered) $\wh{\mbf U}_{j, t}$, mimics the distribution of the leading term in~\eqref{eq:debias:decomp}, see Appendix~\ref{sec:boot} for its full description.
Our numerical experiments show that this approach leads to simultaneous confidence intervals that achieve better coverage in general.
We conjecture that with arguments similar to those adopted in the proofs of \cref{prop:G} and \cref{cor:thm:two}, and by Theorem~3.1 of \cite{CCK23}, the validity of the bootstrap procedure can be established.

\section{Numerical experiments}
\label{sec:num}

\subsection{Change point estimation}
\label{sec:sim:cp}

In this section, we examine the empirical performance of the proposed McScan, and compare it with MOSEG \citep{cho2022high}, CHARCOAL \citep{gao2022sparse}, DPDU \citep{xu2022change} and VPWBS \citep{wang2021statistically}. The details on implementation as well as choice of tuning parameters for these methods are discussed in \cref{sec:add:cmp:detail}.

\subsubsection{Single change point scenarios}
\label{sec:sim:single}

We start with the estimation of a single change point (i.e.\ $q = 1$) where, to avoid the issue of model selection, it is assumed to be known that there is a single change. In this simple setup, it is sufficient to apply the covariance scanning by computing $T_{0, k, n}$ as in~\eqref{eq:detector} on the whole interval $(0, n]$ and identify the maximiser, which we continue to refer to as McScan. For DPDU and VPWBS, their implementations does not allow for setting the number of change point estimates. 
For fair comparison, we select among the estimates return by DPDU and VPWBS (including $\{0, n\}$ when no change point is detected) the one closest to $\cp_1$ as the final estimate, which leads to their localisation error truncated at $\min(\cp_1, n - \cp_1)$.
We generate the data according to the following scenarios with $\mbf x_t \sim_{\iid} \mc N_p(\mbf 0, \bm\Sigma)$ and independently $\varepsilon_t \sim_{\iid} \mathcal{N}(0, 1)$.
Fixing $n = 300$, in the main text, we consider the case where $\cp_1 = 75$; see Appendix~\ref{sec:add:cp:m1} for the results when $\cp_1 = 150$.

\paragraph{(M1) Isotropic Gaussian design with sparse coefficients.} 
We set $\bm\Sigma = \mbf I_p$, $p = 200$ 
and $\bm{\beta}_0 = -\bm{\beta}_1 = \rho \bm\delta$ with $\rho \in \{1, 2\}$, where $\bm\delta = (\delta_1, \ldots, \delta_p)^\top$ with $\delta_i = 0$ if $i \notin \mc S \subseteq [p]$.
We sample the set $\mc S$ with $\mathfrak{s} = \vert \mc S \vert \in \{5, 10, 50\}$ uniformly from $[p]$, and draw $(\delta_i, \, i \in \mc S)$ uniformly from a unit sphere of dimension $\mathfrak{s}$. 
This is a canonical setup that is commonly investigated in the literature. 
\cref{fig:can_err} shows that the performance of all methods becomes better for larger change size $\rho = \vert \bm\Sigma \bm\delta \vert_\infty = \vert \bm\delta \vert_\infty$ while worse for larger $\mathfrak{s}$, since the effective size of change for McScan is measured by $\Psi_1^{-1} \vert\bm\Sigma \bm\delta_1 \vert_{\infty} \ge  \mathfrak{s}^{-1/2} (2\rho)/(2\rho + 1)$; similar observation applies to the performance of the competitors also.
Overall, it is clear to see that McScan outperforms the other methods in nearly all cases in terms of estimation accuracy. 
Computationally, McScan is the fastest, followed by CHARCOAL, MOSEG, DPDU and then VPWBS, see \cref{fig:can_tm}.

\begin{figure}[h!t!b!]
	\centering
	\includegraphics[width = .8\textwidth]{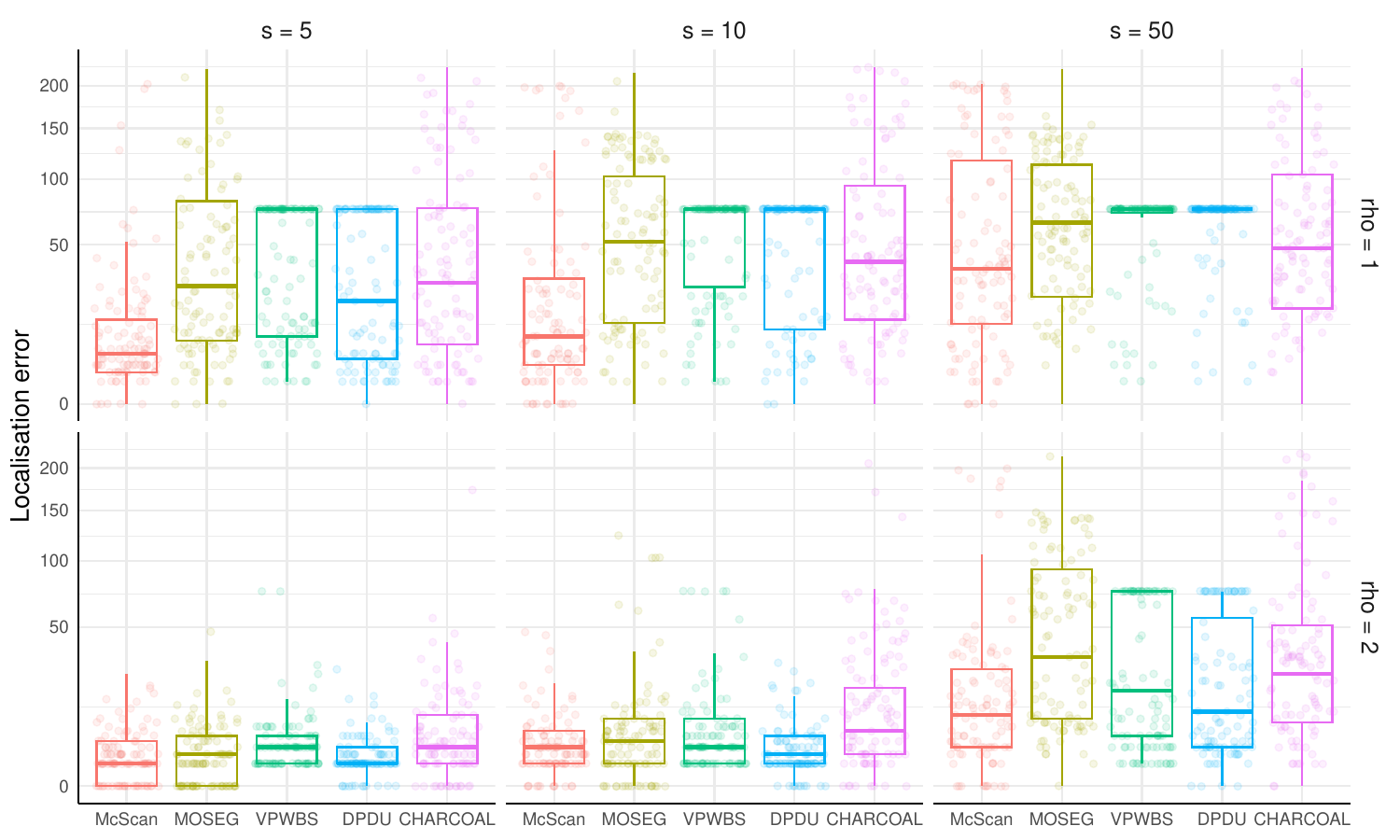}
	\caption{Localisation errors in (M1). For each method, the localisation errors $\vert\wh{\cp}_1 - \cp_1\vert$ over 100 repetitions are jittered in dots with a low intensity, and the overall performance is summarised as a boxplot (obtained from 1000 repetitions, outliers not shown), for $\cp_1 = 75$ and varying $\rho \in \{1, 2\}$. The $y$-axis is in the square root scale.}
	\label{fig:can_err}
\end{figure}

\begin{figure}[h!t!b!]
	\centering
	\includegraphics[width = .8\textwidth]{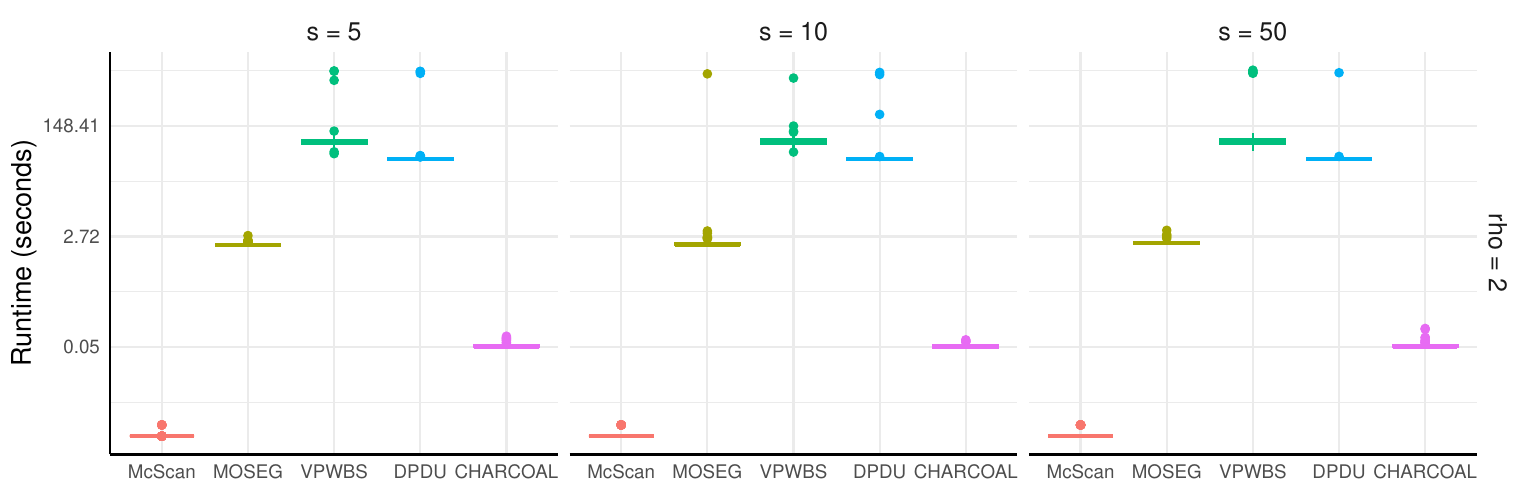}
	\caption{Runtimes in (M1) as boxplots (over 100 repetitions) when $\rho = 2$, recorded on a laptop with Apple M1 Pro chip. Runtimes for $\rho = 1$ are similar and thus omitted. The $y$-axis is in the logarithm scale.}
	\label{fig:can_tm}
\end{figure}

\paragraph{(M2) Toeplitz Gaussian design with dense coefficients.} 
We set $\bm\Sigma = [\gamma^{\vert i - j \vert}]_{i, j = 1}^p$ with varying $\gamma$. For each realisation, we generate $\bm\beta_0 = \bm\mu - \bm\delta/2$ and $\bm\beta_1 = \bm\mu + \bm\delta/2$, where $\bm\delta$ has $\mathfrak{s} = 5$ non-zero elements taking values from $\{1, - 1\}$ at random locations, and $\bm\mu = \nu \cdot \bm\mu_\circ / \sqrt{p}$ for $\bm\mu_\circ \sim \mc N_p(\mbf 0, \mbf I_p)$. 
The parameter $\nu$ governs the size of $\Psi_1$. 
\cref{fig:toep_err} shows the results obtained with $p \in \{200,\, 400\}$, $\gamma \in \{0.6,\, 0.9\}$ and $\nu \in \{0.5, 1, 2\}$. All considered methods tend to perform worse as $p$ and $\nu$ increases.
An increase in the parameter $\gamma$, which determines the degree of cross-correlations in $\mbf x_t$, does not have an adverse influence on McScan.
This may be accounted for by that $\bm\Sigma$ enters into the detection boundary of McScan in \cref{assum:size}, both in measuring the size of change (via $\vert \bm\Sigma \bm\delta_j \vert_\infty$) and the level of noise (via $\Xi$ and $\kappa$ defined in \cref{assum:func:dep}), an observation that has not been made in the existing literature. 
In all settings, McScan is among the best, and its edge over the other methods becomes more significant when $p$ is larger. VPWBS and DPDU have relatively similar performance (though DPDU being slightly better), and both outperform MOSEG and CHARCOAL except for when $\nu = 2$ and $\gamma = 0.6$. In that case (top right panel), CHARCOAL performs the best,  
but it is only applicable when $p < n$ unlike the other methods. 

\begin{figure}[h!t!b!]
	\centering
	\includegraphics[width = .8\textwidth]{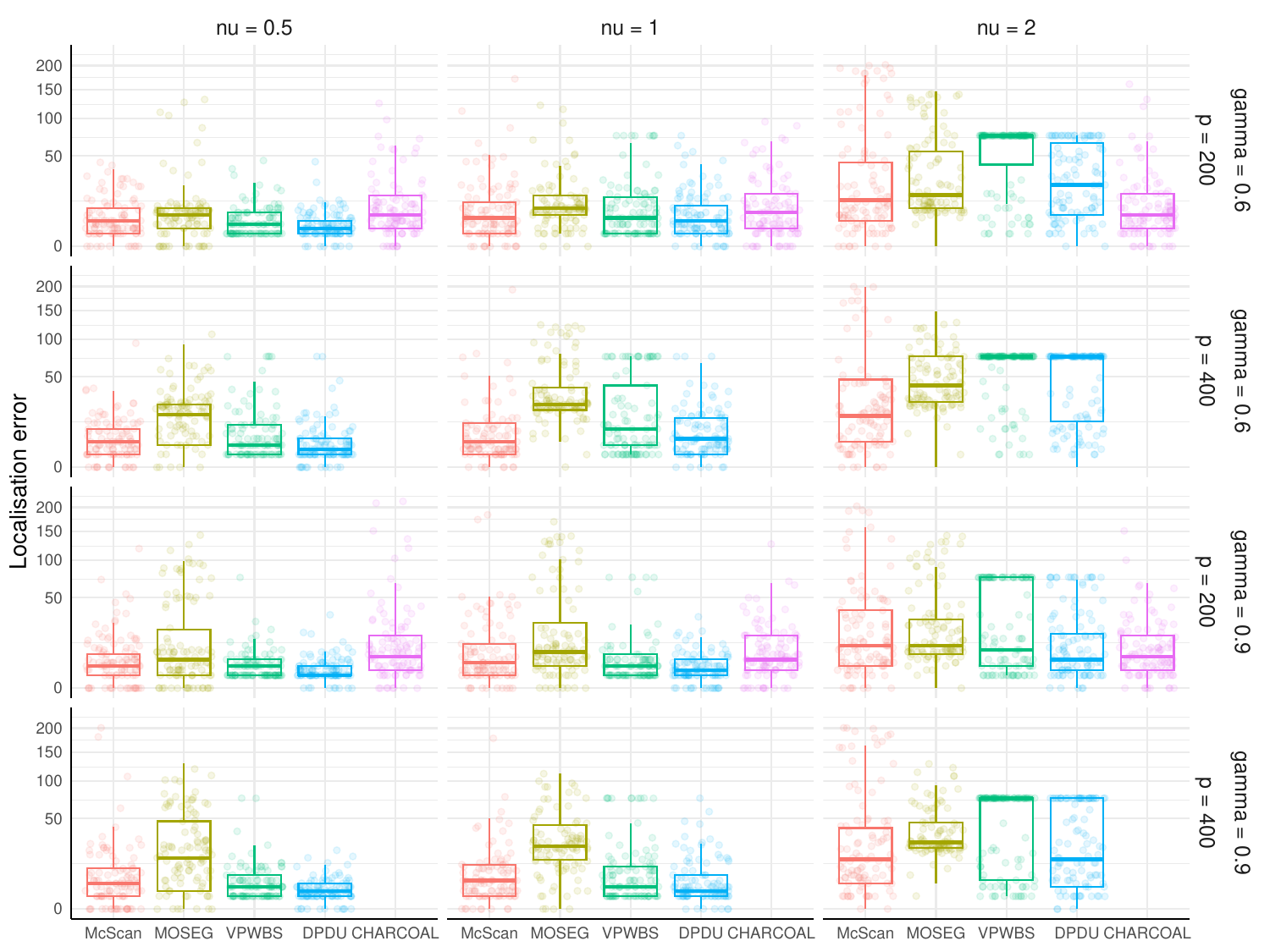}
	\caption{Localisation errors in (M2).  For each method, the localisation errors $\vert\wh{\cp}_1 - \cp_1\vert$ over 100 repetitions are jittered in dots with a low intensity, and the overall performance is summarised as a boxplot (obtained from 1000 repetitions, outliers not shown), for $\cp_1 = 75$ and varying $p \in \{200, 400\}$ and $\gamma \in \{0.6, 0.9\}$. The $y$-axis is in the square root scale. There is no result from CHARCOAL for $p = 400$ as it is not applicable when $p > n = 300$. }
	\label{fig:toep_err}
\end{figure}

{We further evaluate the performance of McScan in the presence of model mis-specification when $\Cov(\mbf x_t)$ varies over time, in Appendix~\ref{sec:sim:mask}. The results show that McScan remains competitive in detecting changes in regression parameters compared to other methods. Notably, when variations in \(\bm\beta_t\) and \(\Cov(\mbf x_t)\) offset each other so that \(\Cov(\mbf x_t) \bm\beta_t\) is constant, all methods struggle to detect any changes.}

\subsubsection{Multiple change point scenarios}
\label{sec:sim:multi}

Consider now the problem of detecting multiple change points under the model~\eqref{eq:model} with $q > 1$. We explore two approaches for setting the threshold $\pi_{n, p}$ for McScan.

\begin{description}
\item \emph{Fixed threshold.} As shown in \cref{thm:one}, McScan attains the consistency with a fixed threshold $\pi_{n, p} = c_\pi \psi_{n, p}$ for a properly chosen constant~$c_\pi$ which, however, depends on unknown quantities such as $\Psi$. 
To remedy this, we first standardise the data so that $\{X_{it}Y_t\,:\, t \in [n]\}$ has a unit \enquote{variability} for every $i \in [p]$. Towards this, we divide the $i$-th variable $\mbf X_{i\cdot}$ by the median absolute deviation of $\{(Y_{t+1}X_{i(t+1)} - Y_{t}X_{it})/\sqrt{2}\,:\,t\in[n-1]\}$. 
We performed extensive numerical experiments with $\pi_{n, p} = c_\pi \sqrt{\log(np)}$ for a grid of values for $c_\pi$, which suggests the choice of $c_\pi = 1.9$; see Appendix~\ref{sec:add:cp:m3} for an excerpt from such experiments where we consider $c_\pi \in \{1.7, 1.9, 2.1\}$.
	
\item \emph{Automatic threshold.} We compute the solution path of McScan, namely, the collection of all distinct sets of change point estimators obtained with different values for $\pi_{n,p}$. Next, we order the distinct solutions such that the number of detected change points increases, and assign to each individual solution a score defined as the maximal value of our detector statistic \eqref{eq:detector} over the seeded intervals containing no change point estimators thus far detected, which measures the evidence of any undetected change points. We then select the \enquote{elbow} point on the solution path as the final solution: The elbow point is defined using the average of the slopes of the two connecting line segments at each point, as where the average slop starts to decrease in absolute value for the first time, see \cref{fig:solution_path} for an illustration. The computation of solution path as well as the elbow point does not impose additional burden. To see this, McScan with the automatic threshold selection requires an additional runtime of $O\bigl(n^2\log (n)\bigr)$ in the worst case (cf.\ \citealp{kovacs2020seeded}), which matches that of McScan with a fixed threshold if $p \gtrsim n$, see \cref{rem:comp}.
\end{description}

\begin{figure}[h!t!b!]
	\centering
 	\includegraphics[width = .8\textwidth]{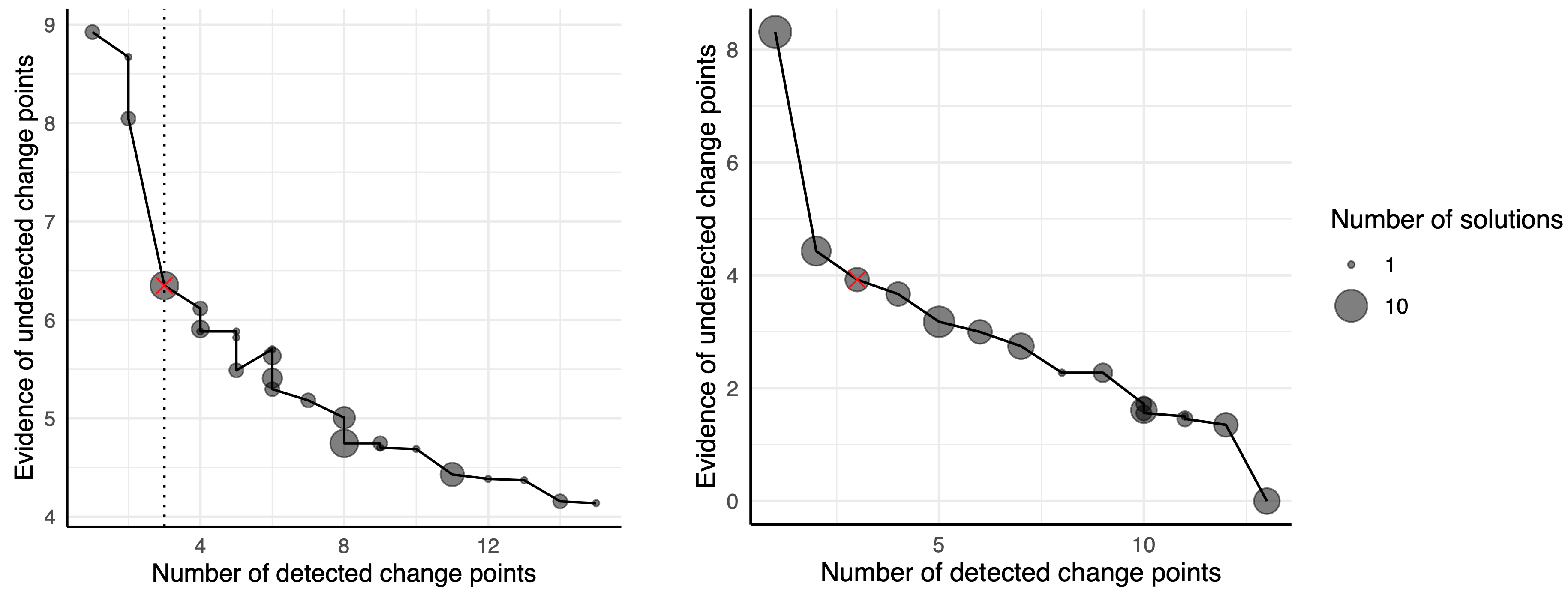}
	\caption{Solution paths of McScan. Left is for  a realisation from the model (M3) with $n=800$, where the true number of change points is marked by the vertical dotted line. Right is for the FRED-MD data (\cref{sec:data}). Each solution path consists of the sets of change point estimators ordered according to their cardinality. Multiple distinct solutions can have the same number of detected change points and the same score (measuring the evidence of undetected change points as described in the main text), which are visualised by a bubble plot. This score is not always monotone with respect to the number of detected change points. Our defined \enquote{elbow} point is marked by a red cross.}
	\label{fig:solution_path}
\end{figure}

As for the trimming parameter $\varpi_{n, p}$, we observe that the performance of McScan varies little for a range of values empirically.
As rule of thumb, we recommend $\varpi_{n, p} = 2 \log(np)$ which is used throughout our numerical experiments.
To examine the performance of McScan with both fixed and automatic thresholds, we adopt a simulation setup from \citet[(M1)]{cho2022high} but with $p = 900$ instead of $p = 100$. Specifically: 

\paragraph{(M3) Multiple change points with sparse coefficients.} We generate the data according to the model~\eqref{eq:model} with $\mbf x_t \sim_{\iid} \mc N_p(\mbf 0, \mbf I_p )$, independently $\varepsilon_t \sim_{\iid} \mathcal{N}(0, 1)$, $p = 900$, $\{\cp_j = jn/4\,:\,j \in [3] \}$ (i.e.\ $q = 3$) and $n \in \{480,560,640,720,800\}$. We fix $\mathcal{S} = [\mathfrak s]$ with $\mathfrak s = 4$, and set $\beta_{0,i} = 0.4 \cdot (-1)^{i-1}$ if $i \in \mathcal{S}$ and $\beta_{0,i} = 0$ otherwise, and $\bm\beta_j = (-1)^j\cdot \bm\beta_0$.
\medskip

Among the methods previously considered, we only include MOSEG as a competitor due to the high computational complexity of VPWBS and DPDU (see also \cref{fig:can_tm}); CHARCOAL is not applicable as $p > n$.
We refer to Table~1 of \citet{cho2022high} for the comparison between MOSEG, DPDU and VPWBS when $p = 100$, where MOSEG performs as well as, or even better than the two other methods in estimating both the total number of the change points and their locations.  
We also include McScan combined with an \enquote{oracle} threshold which is selected as the largest threshold value that leads to the correct number of change points. 
The results are summarised in \cref{fig:mcp_location} and \cref{fig:mcp_hausdorff,fig:mcp_vmeas,fig:mcp_numcp} in the Appendix. For McScan, the data-driven selection of the threshold performs similarly well as the oracle choice, and both slightly outperform the best fixed threshold $\pi_{n,p} = 1.9\sqrt{\log(np)}$. MOSEG is relatively worse when the sample size is smaller, in particular, for $n \le 640$. 

\begin{figure}[h!t!b!]
	\centering
	\includegraphics[width = .8\textwidth]{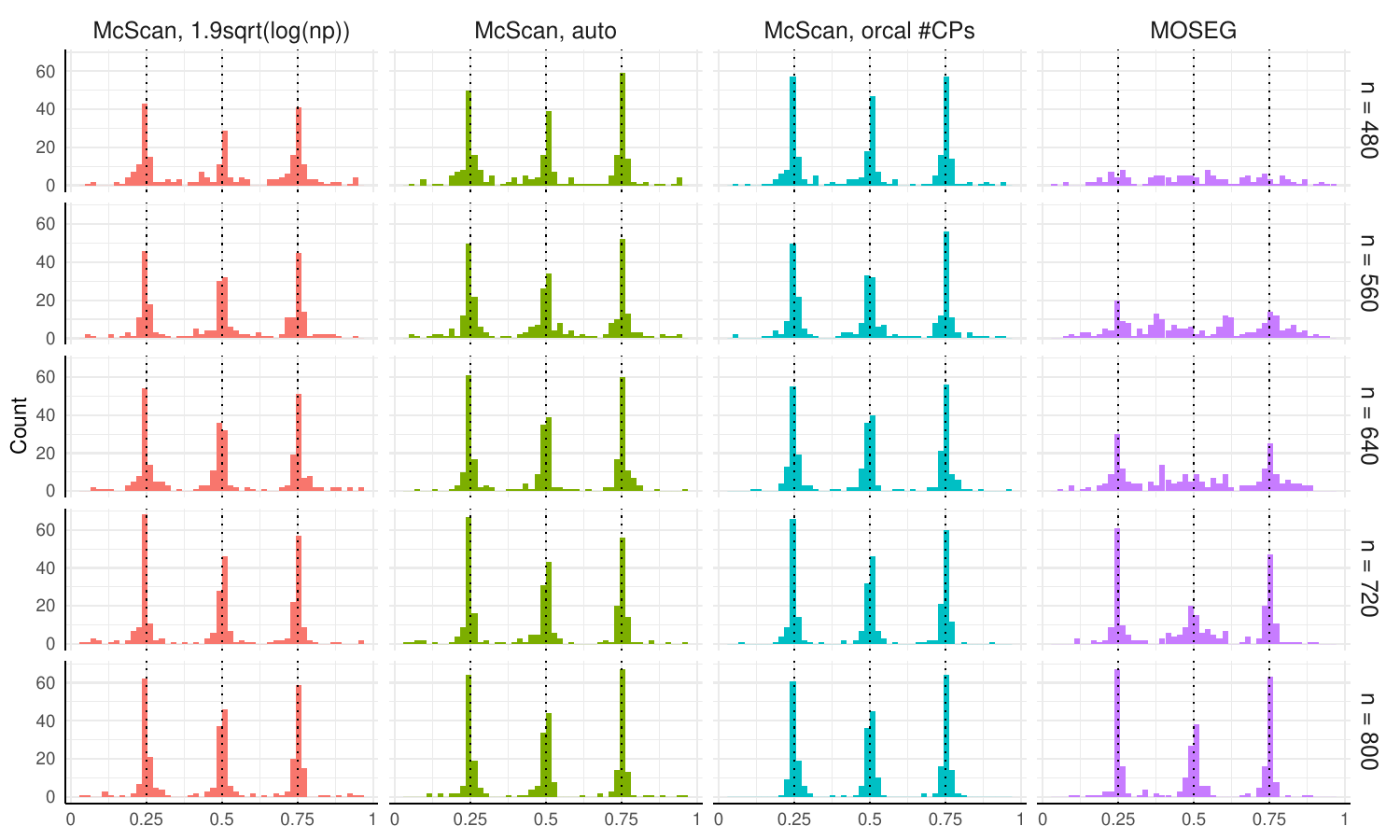}
	\caption{Scaled estimated change points, $\wh\cp_j/n$, in (M3) over 100 repetitions. The three true change points after scaling, $\cp_j/n$, are marked by vertical dotted lines. The first column is McScan with the fixed threshold $\pi_{n,p} = 1.9\sqrt{\log(np)}$ using the standardised data; the second column is McScan with the automatic threshold selection; the third column is McScan with the oracle threshold; the last column is MOSEG \citep{cho2022high} that uses multiple bandwidths and selects the number of change points via sample splitting and cross validation.}
	\label{fig:mcp_location}
\end{figure}

\subsection{Post-segmentation estimation and inference}
\label{sec:sim:post}

In this section, we investigate the performance of post-segmentation estimator of the differential parameters and the simultaneous confidence intervals proposed in \cref{sec:diff}.
In the main text, we continue to consider the scenario (M2) in \cref{sec:sim:single} with the single change point $\cp_1$ regarded as known, to separate the issues arising from change point estimation.
Additionally, we consider the multiple change point scenario~(M3) in Appendix~\ref{sec:add:ci} where we fully take into account the detection step.
 
\subsubsection{Differential parameter estimation}
\label{sec:sim:dpe}

We consider the two estimators $\wh{\bm\delta}_1 = \wh{\bm\delta}_{0, n}(\cp_1)$ obtained as in~\eqref{eq:lasso:est} and \eqref{eq:direct:est}, which we refer to as LOPE and CLOM, respectively.
As a competitor (referred to as NAIVE), we consider the naive estimator obtained by taking the difference of $\wh{\bm\beta}_j, \, j \in \{0, 1\}$, which are separately obtained Lasso estimators from $(Y_t, \mbf x_t), \, 1 \le t \le \cp_1$, and $(Y_t, \mbf x_t), \, \cp_1 + 1 \le t \le n$, respectively.
We compare their performance using the estimation errors in $\ell_\star$-norm, $\vert \wh{\bm\delta}_1 - \bm\delta_1 |_\star$, for $\star \in \{1, 2\}$.
All the estimators in consideration depend on the tuning parameter $\lambda$ and for its selection, we implement two approaches.
First, we select it via cross-validation (referred to as CV).
Secondly, we fix a grid of length $100$ for the possible $\lambda$ values and for each $\lambda$, we obtain $\wh{\bm\delta}_1(\lambda)$, compute the estimation error and present the minimum across the $100$ values of $\lambda$, which may be regarded as the \enquote{oracle} error that serves as a benchmark.
In the case of NAIVE, we separately minimise $\vert \wh{\bm\beta}_j(\lambda) - \bm\beta_j \vert_\star$ and take the difference of the thus-obtained oracle estimators of $\bm\beta_j, \, j \in \{0, 1\}$.

Varying the model parameters as $p \in \{100, 200, 400\}$, $\gamma \in \{0, 0.6, 0.9\}$, $\mathfrak{s} \in \{5, 10, 20\}$ and $\nu \in \{0.5, 1, 2\}$, we report the estimation errors averaged over $1000$ realisations, see Figure~\ref{fig:MSE_ell} as well as \cref{fig:MSE_ell1_g0,fig:MSE_ell1_g00} in the Appendix.
LOPE tends to outperform CLOM in most scenarios, by a greater margin when CV is employed compared to the oracle case. 
We may attribute this to that CLOM is more sensitive to the choice of the grid of~$\lambda$ values.
Comparing the estimation errors in $\ell_1$-norm, we observe the clear advantage of the proposed direct estimators over the NAIVE one in almost all of the cases. This is also the case in terms of the $\ell_2$-norm estimation errors except when $\mathfrak{s}$ is large and $p$ is small. 
In such a scenario, we observe that NAIVE produces a dense estimator with many small non-zero coefficients, which is penalised more heavily by the $\ell_1$-norm than the $\ell_2$-norm.
LOPE and CLOM estimators tend to be far sparser with the estimates of the non-zero coefficients shrunk more towards (although not exactly) zero.

\begin{figure}[h!t!b!p!]
	\centering
\begin{tabular}{cc}
\includegraphics[height = .31\textwidth, trim = 0.1cm 0 2.8cm 0, clip]{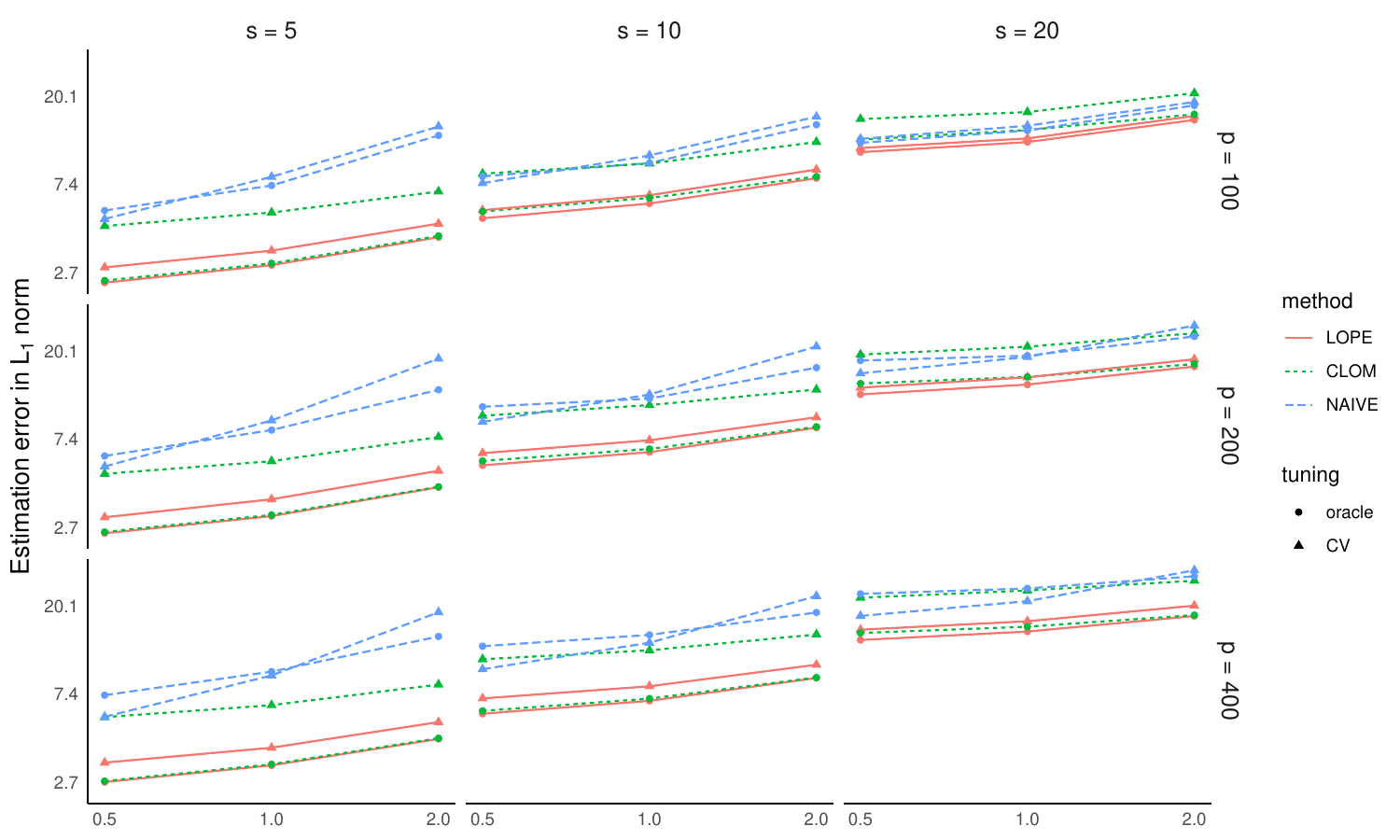}  &
\includegraphics[height = .31\textwidth,trim = 0.1cm 0 0 0.1cm, clip]{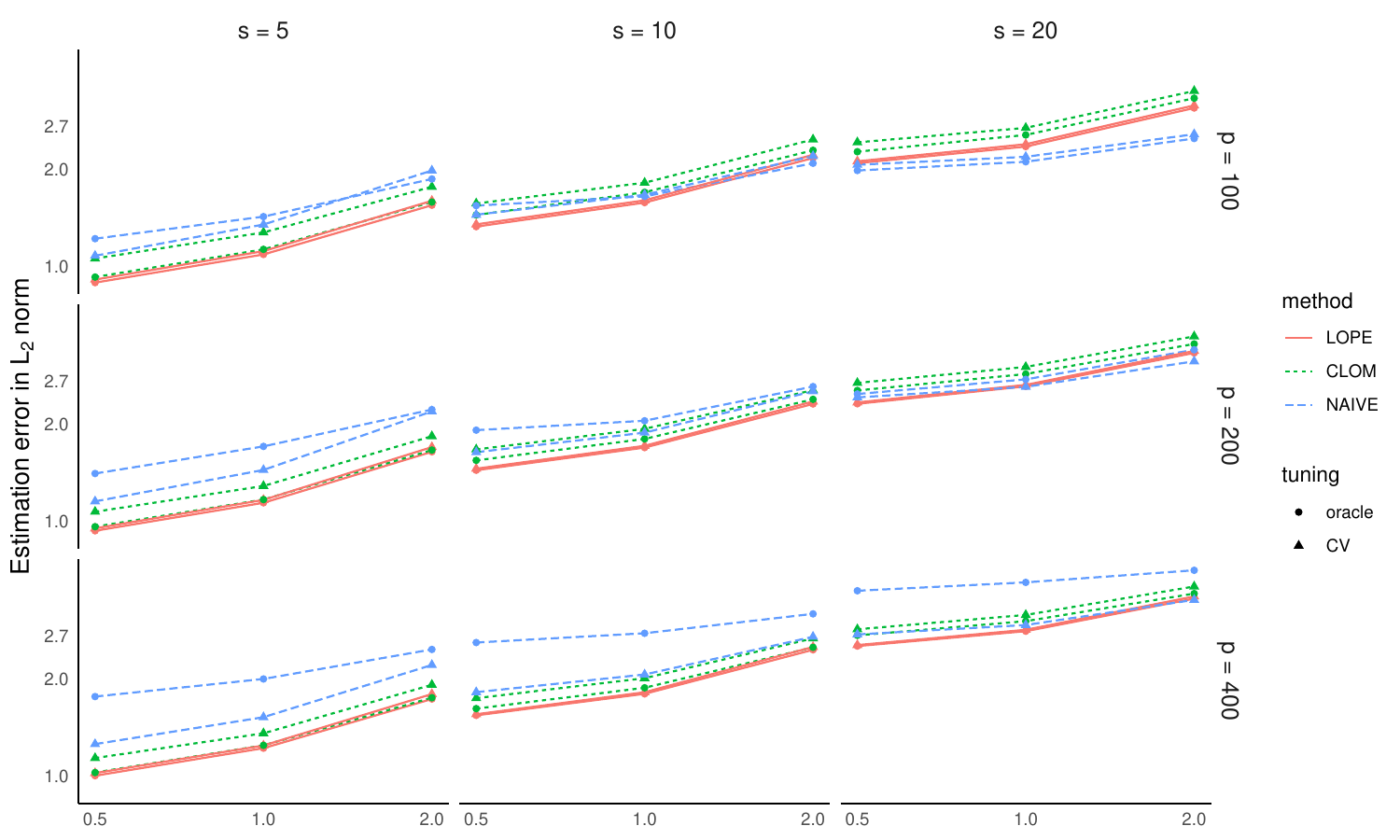}
\end{tabular}
\caption{Estimation errors in $\ell_1$- (left) and $\ell_2$- (right) norms against $\nu \in \{0.5, 1, 2\}$ ($x$-axis), from LOPE~\eqref{eq:lasso:est}, CLOM~\eqref{eq:direct:est} and NAIVE combined with the tuning parameter selected via cross-validation (CV) and the oracle one when $\gamma = 0.6$, averaged over $1000$ realisations. The $y$-axis is in the logarithm scale.}
\label{fig:MSE_ell}
\end{figure}

\subsubsection{Simultaneous confidence intervals}
\label{sec:sim:ci}

We examine the performance of the simultaneous confidence intervals constructed as in~\eqref{eq:ci}.
We investigate the two approaches, one directly sampling from the limit distribution in Corollary~\ref{cor:thm:two}, and the other based on the multiplier bootstrap described in Appendix~\ref{sec:boot}; for both approaches, the (bootstrap) sample size is set to be $B = 999$.
Throughout, we use the LOPE estimator of $\bm\delta_1$, and the CLIME estimator of $\bm\Omega$ obtained with~$\eta$ chosen via cross-validation.
For better finite sample performance, we do not to perform sample splitting and maximally make use of all observations between pairs of adjacent change point estimators, which amounts to setting $a_j = \wh\cp_{j - 1} + 1$, $b_j = \wh\cp_{j + 1}$ and $\epsilon = 0$ with suitable modifications to the construction of $\wc{\bm\delta}_j$ and its limit distribution.
We consider the scenario (M2) in \cref{sec:sim:single} with $(n, \cp_1) \in \{(300, 75), (600, 150)\}$, $p \in \{100, 200, 400\}$ and $\mathfrak{s} \in \{5, 10, 20\}$, and report the following measures of performance:
\begin{align*}
& \mathrm{Coverage} = \prod_{i \in [p]} \mathbb{I}_{\{ \delta_{i1} \in \mc C_{i1}(\alpha) \}}, \quad
\mathrm{Proportion} = \frac{1}{p} \sum_{i \in [p]} \mathbb{I}_{\{ \delta_{i1} \in \mc C_{i1}(\alpha) \}}, \\
& \mathrm{TPR} = \frac{\sum_{i \in \mc S_1} \mathbb{I}_{\{0 \notin \mc C_{i1}(\alpha)\}}}{\vert \mc S_1 \vert} \text{ \ and \ } \mathrm{FDR} = \frac{\sum_{i \notin \mc S_1} \mathbb{I}_{\{0 \notin \mc C_{i1}(\alpha)\}}}{\l(\sum_{i \in [p]} \mathbb{I}_{\{0 \notin \mc C_{i1}(\alpha)\}}\r) \vee 1},
\end{align*}
see Figure~\ref{fig:ci:m2:gamma:two:n:300} in the moderately correlated case with $\gamma = 0.6$ and $\alpha = 0.5$.
Appendix~\ref{sec:add:ci:m2} provides the complete results  with varying $\alpha \in \{0.01, 0.05, 0.1\}$, including Figures~\ref{fig:ci:width:one}--\ref{fig:ci:width:three} reporting the half-width of the confidence intervals. 

Overall, the coverage improves as $n$ increases and $\mathfrak{s}$ decrease and so do TPR and FDR, which confirms \cref{thm:two}. 
As expected, the increase in $n$ also reduces the width of confidence intervals.
While growing $\nu$ and $\gamma$ tends to increase the coverage, particularly in the challenging situations with large $\mathfrak{s}$, this has an adverse effect on TPR due to the increasing width of the confidence intervals.
Within the range in consideration, the dimensionality has little influence on the overall performance.
The multiplier bootstrap improves the coverage, particularly in the more challenging situations with smaller sample size and large $\gamma$ and $\mathfrak{s}$, at the price of marginally increased width.
The coverage tends to be below the nominal level for larger $\alpha$ although on most realisations, nearly all $\delta_{i1}, \, i \in [p]$, are covered by the respective confidence intervals as evidenced by `Proportion' close to one.

\begin{figure}[h!t!b!p!]
	\centering
	\includegraphics[width = .8\textwidth]{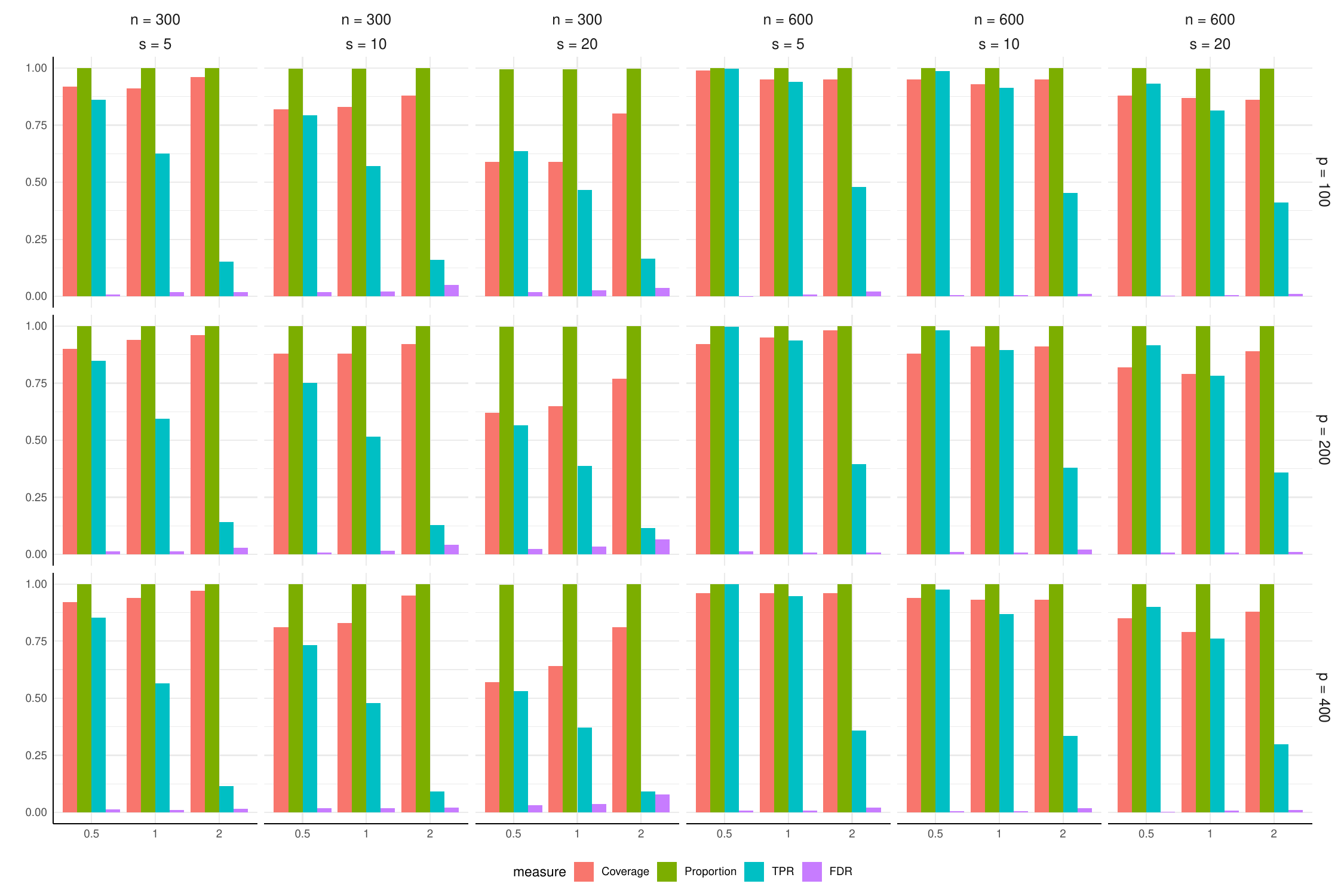} \\
 \includegraphics[width = .8\textwidth]{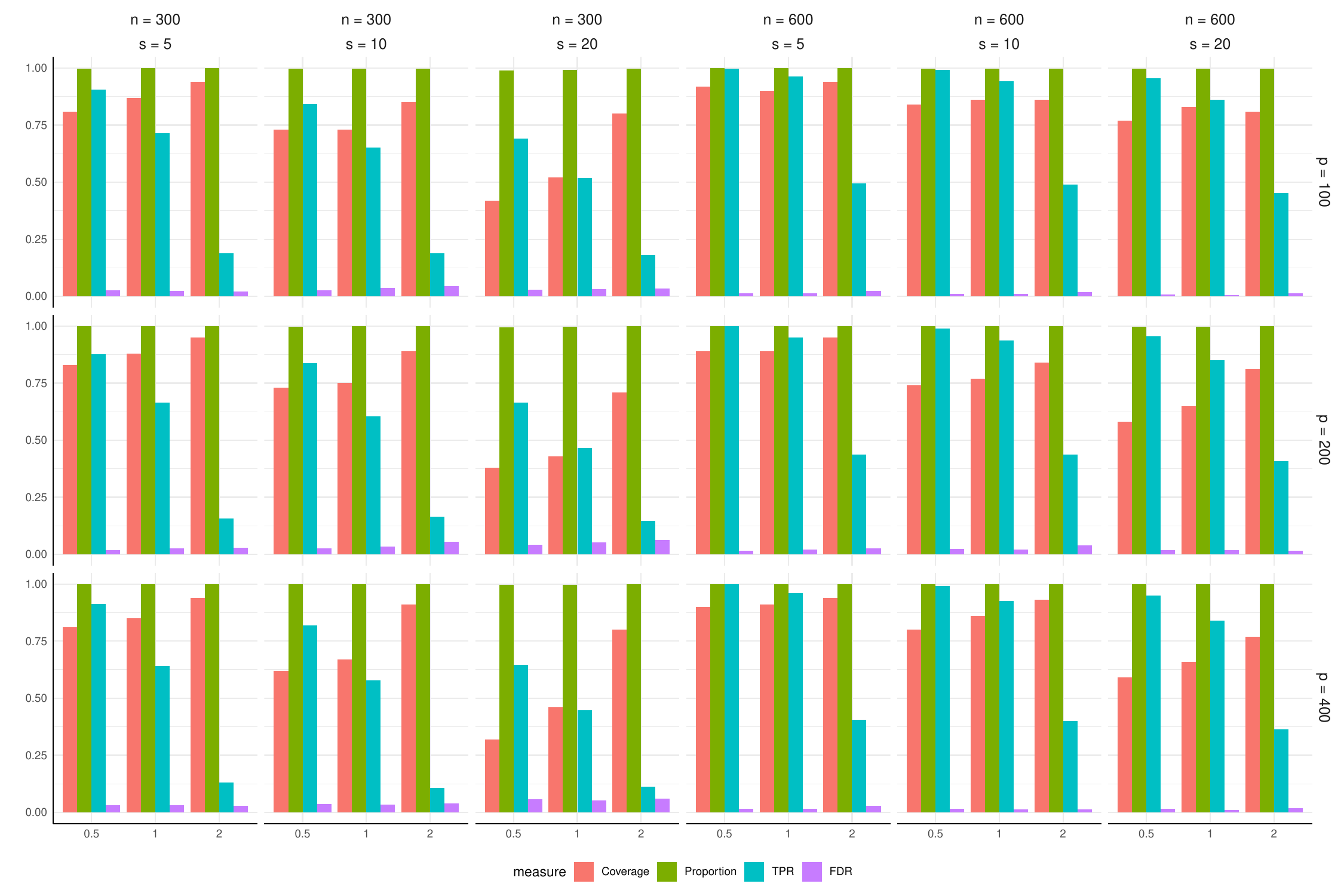}
	\caption{Coverage, Proportion, TPR and FDR of simultaneous $95\%$-confidence intervals with (top) and without (bottom) bootstrapping against $\nu \in \{0.5, 1, 2\}$ ($x$-axis) when $\gamma = 0.6$, averaged over $100$ realisations.
  }
	\label{fig:ci:m2:gamma:two:n:300}
\end{figure}

\section{Application to FRED-MD data}
\label{sec:data}

FRED-MD is a large, monthly frequency, macroeconomic database maintained by the Federal Reserve Bank of St.~Louis \citep{mccracken2016fred}.
We perform change point analysis of FRED-MD spanning the period from 1960-01 to 2024-05, under a factor-augmented forecasting model for the growth rate of the industrial production total index (INDPRO) following \cite{stock2002forecasting} who noted `when macroeconomic forecasts are constructed using many variables over a long period, some degree of temporal instability is inevitable'.
To do so, we first perform variable-specific transformations suggested by \cite{mccracken2016fred}, followed by a data truncation step (with the truncation parameter chosen as the $99.9\%$-percentile of the modulus of transformed data) to address heavy tails commonly observed from economic and financial data \citep{ibragimov2015heavy}.
Then, we form $\mbf x_t = (\mbf f_t^\top, \mbf u_t^\top)^\top$ where $\mbf f_t$ denotes the factors driving the pervasive cross-correlations in the data obtained as leading principal components, and $\mbf u_t$ the remaining idiosyncratic component after removing the presence of factors. Following \cite{barigozzi2024robust}, we set the number of factors at $6$, which leads to the dataset with $p = 114$ and $n = 773$.

Applying McScan with the default tuning parameters and the automatic threshold (see the right panel of \cref{fig:solution_path}), 
we detect three change points at 1974-02, 2010-05 and 2019-05; the last two are, with possible estimation bias, respectively associated with the 2007--2009 financial crisis and the COVID-19 pandemic.
Focusing on the last estimator, the simultaneous bootstrap confidence intervals for the differential parameter $\bm\delta_3$ at $\wh\cp_3$, are visualised in Figure~\ref{fig:fredmd:ci}.
We have enough evidence to reject $\mc H_{0, i}: \, \delta_{i3} = 0$, for $i \in \{ 2, 6 \}$, which correspond to two of the six factors. 
See also Appendix~\ref{sec:fredmd:extra} for the additional results, where we make similar observations about $\bm\delta_1$ and $\bm\delta_2$, regardless of whether the 
confidence intervals are obtained via multiplier bootstrap or sampling from the limit distribution.
This suggests that the `diffusion index forecasting' approach, where a large number of variables are compressed into a handful of factors of predictors, is effective in capturing (time-varying) dependence in large macroeconomic dataset.

\begin{figure}[h!t!b!]
\centering
\includegraphics[width = .8\textwidth]{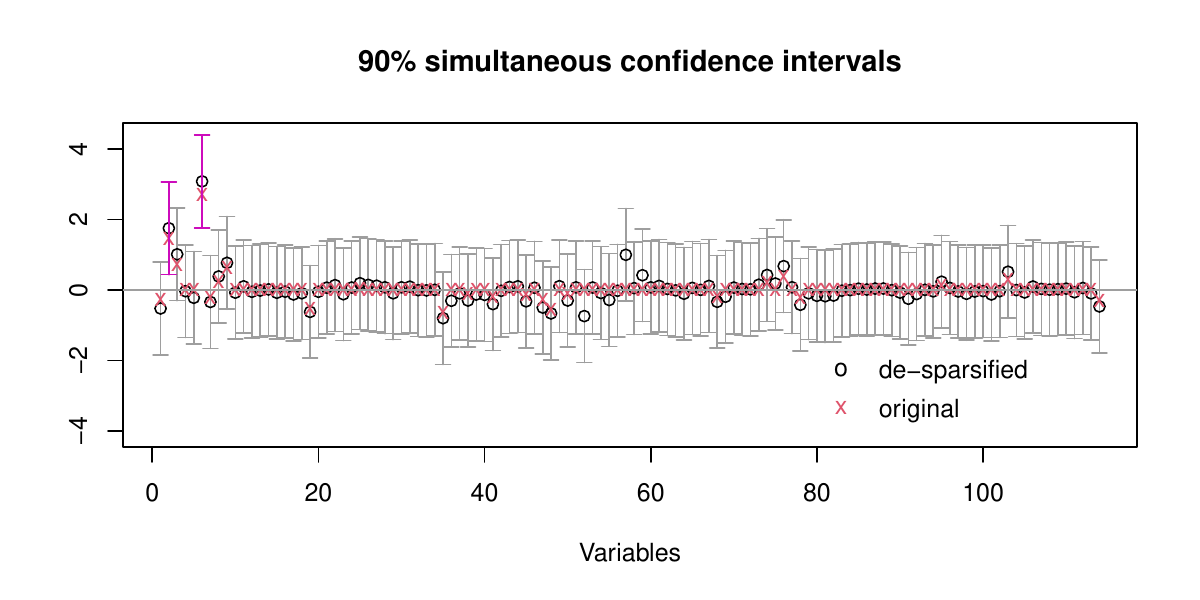}
\caption{Simultaneous $90\%$-confidence intervals obtained via multiplier bootstrap, for the $p = 114$ differential parameter coefficients before and after $\wh\cp_3$ associated with COVID-19, plotted alongside the original LOPE estimator in~\eqref{eq:lasso:est} and its de-sparsified version in~\eqref{eq:debiased}. The interval which does not contain $0$ is highlighted in magenta.}
\label{fig:fredmd:ci}
\end{figure}

\section{Conclusions}
\label{sec:conc}

In this paper, we consider the problem of detecting and inferring about changes under a high-dimensional linear regression model with multiple change points.
Unlike the existing change point detection procedures, the proposed McScan method avoids computationally costly evaluations of $\ell_1$-regularised maximum likelihood-type estimation over a grid, by scanning for large discrepancies in the covariance between $\mbf x_t$ and $Y_t$ over systematically selected intervals.
Consequently, it does not require the {exact} sparsity of either regression or differential parameters for the consistency in data segmentation.
Moreover, McScan achieves better statistical efficiency through adopting the covariance-weighted differential parameter for measuring the size of the changes.
Beyond the detection and estimation of the change points, we provide a first solution to the problem of directly inferring about the differential parameters when the regression coefficients are possibly non-sparse.
The consistency of the proposed $\ell_1$-regularised estimators of the differential parameters is established and, combined with a de-sparsification step, its modification is shown to achieve the asymptotic normality, which enables the construction of simultaneous confidence intervals about the individual components of the differential parameters.
Comparative simulation studies demonstrate the competitiveness of the proposed methods.
Altogether, the paper contributes to the emerging literature on statistical inference under possibly non-sparse high-dimensional models.

%
%
%

\bibliographystyle{apalike}
\bibliography{fbib}

\clearpage

\appendix

\numberwithin{equation}{section}
\numberwithin{figure}{section}
\numberwithin{table}{section}
\numberwithin{thm}{section}

\section{Further discussions on the data segmentation problem}

\subsection{Overview of the literature}
\label{sec:comparison}

\begin{table}[!h!t!b]
\caption{Statistical guarantees on the model~\eqref{eq:model} under (sub-)Gaussianity and serial independence with \emph{multiple} change points ($q \ge 1$), for the proposed McScan, CHARCOAL \citep{gao2022sparse}, MOSEG \citep{cho2022high}, VPWBS \citep{wang2021statistically} and DPDU \citep{xu2022change}. Recall that $\bm\Sigma = \Cov(\mbf x_t)$, $\bm\delta_j = \bm\beta_j - \bm\beta_{j - 1}$ and  $\Delta_j = \min(\cp_j - \cp_{j - 1}, \cp_{j + 1} - \cp_j)$. We omit the subscript $j$ as the conditions hold for every $j$, and $\Psi$ may depend on $n$ and $p$. 
For symmetric matrices $\bm A, \bm B \in \R^{p\times p}$, we write $\bm A\lesssim \bm B$ to indicate that there is a constant $c > 0$ such that $\mbf{x}^\top(\bm A - c\bm B)\mbf{x} \le 0$ for all $\mbf x \in \R^p$, and $\bm A \asymp \bm B$ if $\bm A\lesssim\bm B$ and $\bm B \lesssim \bm A$.}
\label{tab:comp}
\centering
\resizebox{\columnwidth}{!}{
\begin{tabular}{l cc c c}
\toprule
& $\bm\Sigma$ & $\bm\beta$, $\bm\delta$ & Detection boundary & Localisation rate $\vert \wh{\cp} - \cp \vert$ \\
\cmidrule(lr){1-1} \cmidrule(lr){2-3} \cmidrule(lr){4-5}

McScan & $\bm\Sigma \lesssim \mbf I$ & $\vert \bm\beta \vert_2 \lesssim \Psi$ & $\vert \bm\Sigma \bm\delta \vert_\infty^2 \Delta \gtrsim (\Psi^2 \vee 1) \log(pn)$ & $\vert \bm\Sigma \bm\delta \vert_\infty^{-2} (\Psi^2\vee 1) \log(pn)$ \\
\cmidrule(lr){1-1} \cmidrule(lr){2-3} \cmidrule(lr){4-5}

\multirow{2}{*}{CHARCOAL} & \multirow{2}{*}{$\bm\Sigma = \mbf I$} &  \multirow{2}{*}{$\vert \bm\delta \vert_2 \le 1$, $\vert \bm\delta \vert_0 \le \mathfrak{s}$}  & $\vert \bm\delta \vert_2^2 \Delta \gtrsim \mathfrak{s} \log^2(p)$ & \multirow{2}{*}{$\vert \bm\delta \vert_2^{-1} \sqrt{\mathfrak{s}n} \log^2(pn)$} \\

&&& $\Delta \asymp n > p$&\\
\cmidrule(lr){1-1} \cmidrule(lr){2-3} \cmidrule(lr){4-5}

\multirow{2}{*}{MOSEG} & $\bm\Sigma \lesssim \mbf I$ & \multirow{2}{*}{$\vert \bm\beta \vert_2 \lesssim \Psi$, $\vert \bm\beta \vert_0 \le \mathfrak{s}$} & \multirow{2}{*}{$\vert \bm\delta \vert_2^2 \Delta \gtrsim \underline{\sigma}^{-2}(\Psi^2 \vee 1) \mathfrak{s} \log(pn)$} & \multirow{2}{*}{$\underline{\sigma}^{-2} \vert \bm\delta \vert_2^{-2} (\Psi^2 \vee 1) \mathfrak{s} \log(pn)$} \\  

& $\Lambda_{\min}(\bm\Sigma) \ge \underline{\sigma}$ &&& \\ 

\cmidrule(lr){1-1} \cmidrule(lr){2-3} \cmidrule(lr){4-5}

\multirow{2}{*}{VPWBS}& $\bm\Sigma \lesssim \mbf I$ & $\vert \bm\beta \vert_2 \lesssim \Psi$, $\vert \bm\beta \vert_{\infty} \lesssim 1$ & \multirow{2}{*}{$\vert \bm\delta \vert_2^2 \Delta \gtrsim \underline{\sigma}^{-2}(\Psi^2 \vee \mathfrak{s}) \log(pn)$} & \multirow{2}{*}{$\vert \bm\delta \vert_2^{-2} (\Psi^2 \vee 1) \log(n)  (n/\Delta)^2$} \\  

& $\Lambda_{\min}(\bm\Sigma) \ge \underline{\sigma}$ &  $\vert \bm\beta \vert_0 \le \mathfrak{s}$, {$\vert \bm\delta \vert_2 \lesssim 1$} & & \\

\cmidrule(lr){1-1} \cmidrule(lr){2-3} \cmidrule(lr){4-5}

\multirow{2}{*}{DPDU} 
& \multirow{2}{*}{$\bm\Sigma \asymp \mbf I$} & $\vert \bm\beta \vert_0 \le \mathfrak{s}$, $\vert \bm\delta \vert_2 \lesssim 1$ & \multirow{2}{*}{$\vert \bm\delta \vert_2^2 \Delta \gg \mathfrak{s} \log(pn)$} & \multirow{2}{*}{$\vert \bm\delta \vert_2^{-2} \mathfrak{s} \log(pn)$} \\
& & $q \lesssim 1$  & & \\
\bottomrule
\end{tabular}}
\end{table}

We summarise the statistical guarantees on methods developed for model~\eqref{eq:model} with $q\ge 1$ change points under (sub-)Gaussianity in \cref{tab:comp}, complementing Remark~\ref{rem:dlb} in the main text. 
Further details on Table~\ref{tab:comp} are given below. 
\begin{itemize}
    \item For McScan, the results are based on \cref{thm:one} in this paper. 
    \item 
    For CHARCOAL, the results are based on Corollary~7 in \citet{gao2022sparse}.
    \item 
    For MOSEG, the results are based on Corollary~5~(ii) in \citet{cho2022high}.
    \item 
    For VPWBS, the results are based on Theorem~2 in \citet{wang2021statistically}. In its localisation rate, the additional factor {$(n/\Delta)^2$} occurs since we do not know $\Delta$, see the discussion in \citet[bottom of page~12]{wang2021statistically}.
    \item 
    For DPDU, we report the results of the preliminary estimator as the refined rate requires a stronger condition on the size of the change, see Lemma~5 in \citet{xu2022change}.
\end{itemize}

To facilitate comparison, recall that for any $\bm\delta = (\delta_1, \ldots, \delta_p)^\top \in \R^p$,
\begin{align*}
\vert \bm\Sigma \bm\delta \vert_\infty \ge \frac{\Lambda_{\min}(\bm\Sigma) \vert \bm\delta \vert_2}{\sqrt{\vert \bm\delta \vert_0}}
\end{align*}
(Equation~\eqref{eq:size:change} in the main text).
The gap between the left and the right hand sides of the above inequality may increase with $p$ in high dimensions: 
For example, consider the situation where $\bm\Sigma = \mbf I_p$ and $\vert \delta_{1} \vert = 1$ while $\vert \delta_{i} \vert = 1 / \sqrt{p - 1}$, for $2 \le i \le p$. 
Then, provided that $p \ge 2$,
\begin{align*}
\vert \bm\Sigma \bm\delta \vert_\infty = 1 \ge \sqrt{\frac{2}{p}} = \frac{\Lambda_{\min}(\bm\Sigma) \vert \bm\delta \vert_2}{\sqrt{\vert \bm\delta \vert_0}},
\end{align*}
and the ratio between the two measures of change size diverge with $\sqrt p$.
This example demonstrates that theoretical investigation presented in the existing literature, where the RHS is often used to measure the size of change, does not fully reflect the difficulty of the change point detection problem under~\eqref{eq:model}.
Our paper fills this gap by proposing the McScan procedure which, without requiring the sparsity of either $\bm\delta_j$ or $\bm\beta_j$, achieves better statistical and computational efficiency by adopting the LHS as the measure.

\subsection{Minimax lower bounds}
\label{ss:lb}

Complementing Theorem~\ref{thm:one}, we derive the minimax lower bounds on the detection boundary in the following two lemmas. The first one concerns the case of general (potentially degenerate) design matrix $\bm\Sigma$.  

\begin{lem}
\label{lem:lb}
Consider model~\eqref{eq:model} with $q = 1$, $\bm\delta = \bm\beta_1 - \bm\beta_0$, $\bm\mu = \bm\beta_0 + \bm\beta_1$.
Let $\bm \Sigma$ be non-zero and positive semi-definite, $\sigma_{\vep}>0$ and $1 \le \Delta \le n/4$, and define
\begin{multline*}
\mc P_{\bm\Sigma, \sigma_{\vep}}^{\tau, n, p} = \Bigl\{ \{(\mbf x_t, Y_t)\}_{t = 1}^n \text{ from model~\eqref{eq:model}} \;:\; q = 1,\;\mbf x_t \sim_{\iid} \mc N_p(\bm0, \bm \Sigma), \\
\text{ and independently }\vep_t \sim_{\iid}\mc N (0,\sigma_{\vep}^2), \text{ and } \vert\bm\Sigma \bm\delta\vert_{\infty}^2 \Delta \ge \tau \sigma_{\vep}^2\bigl(\vert\bm\delta\vert_2^2 \vee \vert \bm\mu\vert_2^2  \vee 1\bigr)\Bigr\}.   
\end{multline*}

\begin{enumerate}[label = (\roman*)]
\item\label{lem:lb:detect}
If  
\[
0 \;< \;\tau \;\le\; \frac{\max_{i \in [p]}\bm e_i^\top \bm\Sigma^2 \bm e_i}{\max\{54\Vert\bm\Sigma\Vert, \, \sigma_{\vep}^2/\Delta\}}
\] 
with $\bm e_i \in \R^p$ the $i$-th canonical basis, then
$$
\inf_{\wh\theta}\; \sup_{\p \in \mc P_{\bm\Sigma, \sigma_{\vep}}^{\tau, n, p}} \;\p\l(\bigl\vert\wh\theta -\theta\bigr\vert \ge \Delta\r)\; \ge\; \frac{1}{4}.
$$    
\item \label{lem:lb:locate}
Let $\tau \to \infty$ (at an arbitrary rate) as $n\to \infty$. If $\Delta \ge \tau\sigma_{\vep}^2 / (\max_{i \in [p]}\bm e_i^\top \bm\Sigma^2 \bm e_i)$, then
$$
\inf_{\wh\theta}\; \sup_{\p \in \mc P_{\bm\Sigma, \sigma_{\vep}}^{\tau, n, p}} \;\p\l(\bigl\vert\wh\theta -\theta\bigr\vert \ge c\cdot\frac{\sigma_{\vep}^2\bigl(\vert\bm\delta\vert_2^2 \vee \vert \bm\mu\vert_2^2  \vee 1\bigr)}{\vert\bm\Sigma\bm\delta\vert_{\infty}^2}\r)\; \ge\; \frac{1}{4},
$$
where $c = \max_{i \in [p]}\bm e_i^\top \bm\Sigma^2 \bm e_i/(54\Vert\bm\Sigma\Vert)$ is a constant.
\end{enumerate}
\end{lem}
\begin{proof}
{\bf Part~\ref{lem:lb:detect}.}
Introduce \(\alpha > 0\) and \(i_0 \in [p]\) as 
\[
\alpha \;= \;\bm e_{i_0}^\top \bm\Sigma^2 \bm e_{i_0} \;= \; \max_{i\in[p]}\bm e_i^\top \bm\Sigma^2 \bm e_i  \;=\; \max_{\vert\mbf x\vert_2\le 1} \vert\bm\Sigma \mbf x\vert_\infty^2.
\]
Define $\bm \delta = \tfrac{\sigma_{\vep}}{\alpha}\sqrt{\tfrac{\tau}{\Delta}}\bm\Sigma \bm e_{i_0}$ (possibly non-sparse) and $\bm\mu  = \bm 0 \in \R^p$. Then
\[
\vert\bm\delta\vert_2^2 = \frac{\tau \sigma_{\vep}^2}{\Delta \alpha} \le 1\;\text{ and }\; \vert\bm\Sigma\bm\delta\vert_{\infty} = \bm e_{i_0}^\top \bm\Sigma\bm\delta = \sigma_{\vep}\sqrt{\frac{\tau}{\Delta}}.
\]
Introduce $\p_{\Delta,\bm\delta}$ as the distribution of $\{(\mbf x_t, Y_t)\}_{t = 1}^n$ from \eqref{eq:model} with $\theta = \Delta$, $\bm\beta_0 = (\bm\mu -\bm\delta)/2$ and $\bm\beta_1 = (\bm\mu +\bm\delta)/2$, and similarly, $\p_{n - \Delta, -\bm\delta}$ as the distribution of $\{(\mbf x_t, Y_t)\}_{t = 1}^n$ from \eqref{eq:model} with $\theta = n - \Delta$, $\bm\beta_0 = (\bm\mu + \bm\delta)/2$ and $\bm\beta_1 = (\bm\mu -\bm\delta)/2$. Note that $\p_{\Delta, \bm\delta}, \p_{n-\Delta, -\bm\delta} \in \mc P^{\tau, n, p}_{\bm\Sigma, \sigma_{\vep}}$ and $(n-\Delta) - \Delta \ge 2 \Delta$. Then we have
\begin{equation}\label{eq:lbtwo}
\inf_{\wh\theta} \sup_{\p \in \mc P_{\bm\Sigma, \Delta, \sigma_{\vep}}^{\tau, n, p}} \p\l(\bigl\vert\wh\theta -\theta\bigr\vert \ge \Delta\r) \;\ge\; \frac{1}{2}\bigl(1 - \mathrm{TV}(\p_{\Delta,\bm\delta},\p_{n - \Delta, -\bm\delta})\bigr),
\end{equation}
see e.g.\ \citet[Theorem~2.2]{Tsy09}. Further, 
by Theorem~1.1 of \citet{devroye2018total} and similar calculations as in \cref{lem:tv}, we obtain
$$
\mathrm{TV}(\p_{\Delta,\bm\delta},\p_{n - \Delta, -\bm\delta})^2 \le \frac{27\Delta}{2\sigma^2_{\vep}}\bm\delta^\top\bm\Sigma\bm\delta  \le \frac{27\Delta}{2\sigma^2_{\vep}}\vert\bm\delta\vert_2^2\Vert\bm\Sigma\Vert = \frac{27\Vert\bm\Sigma\Vert\tau}{2\alpha} \le \frac{1}{4}.
$$
This together with \eqref{eq:lbtwo} proves the assertion. 

\bigskip

{\bf Part~\ref{lem:lb:locate}.} We use the same notation as in Part~\ref{lem:lb:detect} except that we redefine $\bm \delta = \bm\Sigma \bm e_{i_0}/\sqrt{\alpha}$. Then $\vert\bm \delta\vert_2 = 1$ and $\vert\bm \Sigma \bm\delta\vert_{\infty}^2 = \alpha$.
Introduce 
\[
\gamma \;=\;\frac{\max_{i \in [p]}\bm e_i^\top \bm\Sigma^2 \bm e_i}{54\Vert\bm\Sigma\Vert}\cdot \frac{\sigma_{\vep}^2\bigl(\vert\bm\delta\vert_2^2 \vee \vert \bm\mu\vert_2^2  \vee 1\bigr)}{\vert\bm\Sigma\bm\delta\vert_{\infty}^2} \;=\; \frac{\sigma^2_{\vep}}{54\Vert\bm\Sigma\Vert}.
\]
Since $\Delta \ge \tau \sigma_{\vep}^2/\alpha$, it holds that $\p_{\Delta, \bm\delta}, \p_{\Delta + 2\gamma, \bm\delta} \in \mc P^{\tau, n, p}_{\bm\Sigma, \Delta}$ and further that
$$
\inf_{\wh\theta} \sup_{\p \in \mc P_{\bm\Sigma, \Delta, \sigma_{\vep}}^{\tau, n, p}} \p\l(\bigl\vert\wh\theta -\theta\bigr\vert \ge \gamma\r) \;\ge\; \frac{1}{2}\bigl(1 - \mathrm{TV}(\p_{\Delta,\bm\delta},\p_{\Delta+2\gamma, \bm\delta})\bigr).
$$
Similarly as in Part~\ref{lem:lb:detect}, we have 
\[
\mathrm{TV}(\p_{\Delta,\bm\delta},\p_{\Delta + 2\gamma, \bm\delta})^2 \le \frac{27\gamma}{2\sigma^2_{\vep}}\bm\delta^\top\bm\Sigma\bm\delta  \le \frac{27\gamma}{2\sigma^2_{\vep}}\vert\bm\delta\vert_2^2\Vert\bm\Sigma\Vert = \frac{1}{4},
\]
which concludes the proof. 
\end{proof}

The second lemma concerns the special case of $\bm\Sigma = \mbf I_p$ and $\mathfrak{s}^2\le p$, whose conclusions slightly differ (in terms of log factors) from Lemmas~3 and~4 of \citet{rinaldo2020localizing}. Its proof uses the method of fuzzy hypotheses (a.k.a.\ Le Cam’s method), and the choices of hypotheses have the property that \(\vert\bm \Sigma\bm\delta\vert_{\infty}^2 = \vert\bm\delta\vert_{\infty}^2 = \vert\bm\delta\vert_2^2 / {\vert\bm\delta\vert_0}\).
The complete understanding of minimax optimality in detection and localisation of changes for the whole range of sparse levels up to $p$, even for the simple case of $\bm\Sigma = \mbf I_p$, remains unknown and serves as an interesting avenue for future research.

\begin{lem}\label{lem:lb:id}
Assume $\mathfrak{s}^2 \le p$. Consider model~\eqref{eq:model} with $q = 1$.
Let $\sigma_{\vep}>0$ and $1 \le \Delta \le n/4$, and define
\begin{multline*}
\mc P_{\mathfrak{s}, \sigma_{\vep}}^{\tau, n, p} = \Bigl\{ \{(\mbf x_t, Y_t)\}_{t = 1}^n \text{ from model~\eqref{eq:model}} \;:\; q = 1,\;\mbf x_t \sim_{\iid} \mc N_p(\bm0, \mbf I_p), \text{ and }\\
\text{ independently }  \vep_t \sim_{\iid}\mc N (0,\sigma_{\vep}^2), \; \vert\bm\delta\vert_0 = \mathfrak{s}
\text{ and } \Delta\vert\bm\delta\vert_2^2 \ge \tau \sigma_{\vep}^2\mathfrak{s} \log\bigl(1 + \tfrac{p}{17 \mathfrak{s}^2}\bigr)\Bigr\}. 
\end{multline*} 

\begin{enumerate}[label = (\roman*)]
\item\label{lem:lb:id:detect}
If $\mathfrak{s}\log\bigl(1 + {p}/({17\mathfrak{s}^2})\bigr) \le {2}\sqrt{{\Delta}/{17}}$ and $0< \tau \le {1}/{2}$, then
$$
\inf_{\wh\theta}\; \sup_{\p \in \mc P_{\mathfrak{s}, \sigma_{\vep}}^{\tau, n, p}} \;\p\l(\bigl\vert\wh\theta -\theta\bigr\vert \ge {\Delta}\r)\; \ge\; \frac{1}{4}.
$$   
\item\label{lem:lb:id:locate}
Let $\tau \to \infty$ (at an arbitrary rate) as $n\to \infty$. If $\mathfrak{s}\log\bigl(1 + {p}/({17\mathfrak{s}^2})\bigr) \le \sqrt{{2\Delta}/({17\tau})}$, then
\[
\inf_{\wh\theta}\; \sup_{\p \in \mc P_{\mathfrak{s}, \sigma_{\vep}}^{\tau, n, p}} \;\p\l(\bigl\vert\wh\theta -\theta\bigr\vert \ge \frac{ \sigma_{\vep}^2\mathfrak{s} \log\bigl(1 + \tfrac{p}{17\mathfrak{s}^2}\bigr)}{4\vert\bm\delta\vert_2^2}\r)\; \ge\; \frac{1}{4}.
\]
\end{enumerate}
\end{lem}

\begin{proof}
We start with some auxiliary results. For $\bm u \in \R^p$, let $\p_{\bm u}$ denote the centred Gaussian distribution with covariance matrix
\[
\bmx
    \mbf I_p & \bm u \\
    \bm u^\top & \bm u^\top\bm u + \sigma_{\vep}^2
\emx \in \R^{(p+1)\times(p +1)}.
\]
If $\vert\bm u\vert_2 = \vert\bm v\vert_2 = \kappa \le \sigma_{\vep}/2$, then (see also Lemmas~2 and 3 in \citealp{bradic2022testability})
\begin{align}
\E_{\p_{\bm 0}}\l(\frac{\mathrm{d}\p_{\bm u}}{\mathrm{d}\p_{\bm 0}}\frac{\mathrm{d}\p_{\bm v}}{\mathrm{d}\p_{\bm 0}}\r) & = \E_{\p_{\bm0}}\l(\frac{\mathrm{d}\p_{\bm u}^{\bm Y \vert \mbf x}\mathrm{d}\p_{\bm u}^{\mbf x}}{\mathrm{d}\p_{\bm 0}^{\bm Y \vert \mbf x}\mathrm{d}\p_{\bm 0}^{\mbf x}}\frac{\mathrm{d}\p_{\bm v}^{\bm Y \vert \mbf x}\mathrm{d}\p_{\bm v}^{\mbf x}}{\mathrm{d}\p_{\bm 0}^{\bm Y \vert \mbf x}\mathrm{d}\p_{\bm 0}^{\mbf x}}\r) =\E_{\p_{\bm0}}\l(\frac{\mathrm{d}\p_{\bm u}^{\bm Y \vert \mbf x}}{\mathrm{d}\p_{\bm 0}^{\bm Y \vert \mbf x}}\frac{\mathrm{d}\p_{\bm v}^{\bm Y \vert \mbf x}}{\mathrm{d}\p_{\bm 0}^{\bm Y \vert \mbf x}}\r) \nonumber \\
& = \mathrm{det}\l(\mbf I_p - \frac{1}{\sigma_{\vep}^2}\l(\bm u \bm v^{\top} + \bm v \bm u^{\top}\r)\r)^{-1/2} \nonumber \\
& = \l(\l(1-\frac{\langle\bm u, \bm v\rangle}{\sigma_{\vep}^2}\r)^2 - \l(\frac{\kappa^2}{\sigma_{\vep}^2}\r)^2\r)^{-1/2},\label{eq:uv}
\end{align}
where the last equality can be seen from the fact that 
\[
\l(\mbf I_p - \frac{1}{\sigma_{\vep}^2}\l(\bm u \bm v^{\top} + \bm v \bm u^{\top}\r)\r) (\bm u \pm \bm v) =  \l(1-\frac{\langle\bm u, \bm v\rangle}{\sigma_{\vep}^2} \mp  \frac{\kappa^2}{\sigma_{\vep}^2}\r)(\bm u \pm \bm v).
\]
We recall the method of fuzzy hypotheses (see e.g.\ Section~2.7.4 in \citealp{Tsy09}) using $\chi^2$~divergence in the considered setup. Let $\p_{1,i} \in \mc P_{\mathfrak{s}, \sigma_{\vep}}^{\tau, n, p}$, $i\in[m]$, have the common change point at $\theta$, and $\p_{2,i} \in \mc P_{\mathfrak{s}, \sigma_{\vep}}^{\tau, n, p}$, $i\in[m]$, have the common change point at $\theta + 2\gamma$. Let also $\xi$ and $\eta$ be two independent random variables taking values on $[m]$. Then
\begin{multline}
\inf_{\wh\theta}\; \sup_{\mc P_{\mathfrak{s}, \sigma_{\vep}}^{\tau, n, p}} \;\p\l(\bigl\vert\wh\theta -\theta\bigr\vert \ge {\gamma}\r)\; \ge\; \frac{1 - \mathrm{TV}(\p_{1,\xi}, \p_{2,\eta})}{2} \; \ge\; \frac{1 - \mathrm{TV}(\p_{1,\xi}, \p_{0})-\mathrm{TV}(\p_{2,\eta}, \p_{0})}{2}\\
\ge\; \frac{1- \sqrt{\chi^2(\p_{1,\xi},\p_0)/2} - \sqrt{\chi^2(\p_{2,\eta},\p_0)/2}}{2}, \label{eq:fuzz}    
\end{multline}
where $\p_0$ is an arbitrarily fixed probability measure.

\bigskip

{\bf Part~\ref{lem:lb:id:detect}.} Let $\p_{\theta, \bm\beta_0, \bm\beta_1}$ be the distribution of $\{(\mbf x_t, Y_t)\}_{t=1}^n$ from model~\eqref{eq:model} with $q = 1$, the change point $\theta$ and the regression coefficients $\bm \beta_0$ and $\bm\beta_1$. Let $\p_0$ be the distribution of $\{(\mbf x_t, \vep_t)\}_{t = 1}^n$. 
For $S \subseteq [p]$, let $\mathbf{1}_S \in \R^p$ be a vector with its $i$-th entry $\mathbf{1}_S(i) = 1$ if $i \in S$ and equal to $0$ otherwise.  The construction of fuzzy hypotheses is based on the different choices of the support of changes. Define 
\[
\p_1 = {p \choose \mathfrak{s}}^{-1} \sum_{S\subseteq [p],\; \vert S \vert = \mathfrak{s}}\p_{\Delta, \frac{\kappa}{\sqrt{\mathfrak{s}}}\mathbf{1}_S, \bm 0}\quad\text{and}\quad
\p_2 = {p \choose \mathfrak{s}}^{-1} \sum_{S\subseteq [p],\; \vert S \vert = \mathfrak{s}}\p_{n-\Delta, \bm 0, \frac{\kappa}{\sqrt{\mathfrak{s}}}\mathbf{1}_S}
\]
where we choose $\kappa$ such that $\Delta\kappa^2 = 2^{-1} \sigma_{\vep}^2\mathfrak{s} \log\bigl(1 + {p}/({17\mathfrak{s}^2})\bigr)$. Then, by~\eqref{eq:fuzz}, we obtain 
\begin{align*}
\inf_{\wh\theta}\; \sup_{\mc P_{\mathfrak{s}, \sigma_{\vep}}^{\tau, n, p}} \;\p\l(\bigl\vert\wh\theta -\theta\bigr\vert \ge \Delta\r)\; &
\ge\; \frac{1- \sqrt{\chi^2(\p_{1},\p_0)/2} - \sqrt{\chi^2(\p_{2},\p_0)/2}}{2}\\
& = \; \frac{1-\sqrt{2\alpha}}{2}
\end{align*}
where, with $\mathcal{U} = \bigl\{\bm{u} = ({\kappa}/{\sqrt{\mathfrak{s}}})\mathbf{1}_S \in \R^p:  S \subseteq [p],  \; \vert S \vert = \mathfrak{s} \bigr\}$, 
we have, using \eqref{eq:uv},
\begin{align*}
\alpha &= {p \choose \mathfrak{s}}^{-2}\sum_{\bm u, \bm v \in \mc U} \l[\E_{\p_{\bm0}}\l(\frac{\mathrm{d}\p_{\bm u}}{\mathrm{d}\p_{\bm 0}}\frac{\mathrm{d}\p_{\bm v}}{\mathrm{d}\p_{\bm 0}}\r)\r]^{\Delta} - 1 \\
& = {p \choose \mathfrak{s}}^{-2}\sum_{S, T \subseteq [p],\, \vert S\vert = \vert T\vert = \mathfrak{s}} \l(\l(1-\frac{\vert S \cap T \vert}{\mathfrak{s}}\frac{\kappa^2}{\sigma_{\vep}^2}\r)^2 - \l(\frac{\kappa^2}{\sigma_{\vep}^2}\r)^2\r)^{-\Delta/2} - 1\\
& = {p \choose \mathfrak{s}}^{-1}\sum_{S \subseteq [p],\, \vert S\vert = \mathfrak{s}} \l(\l(1-\frac{\vert S \cap [\mathfrak{s}] \vert}{\mathfrak{s}}\frac{\kappa^2}{\sigma_{\vep}^2}\r)^2 - \l(\frac{\kappa^2}{\sigma_{\vep}^2}\r)^2\r)^{-\Delta/2} - 1.
\end{align*}
Let $\xi_i \sim_{\iid} \mathrm{Ber}(1,\,\mathfrak{s}/p)$, $i = 1,\ldots, \mathfrak{s}$, and $\xi = \sum_{i = 1}^{\mathfrak{s}}\xi_i$, which has distribution $\mathrm{Bin}(\mathfrak{s},\,\mathfrak{s}/p)$. Using the stochastic ordering (in terms of dilatation) of hypergeometric and binomial distributions, $1/(1 -x)\le \exp(2x)$ if $0 \le x \le 3/4$, and $1+x\le \exp(x)$, we obtain
\begin{align*}
 \alpha + 1 \;& \le\; \E_{\xi}\l[\l(\l(1-\frac{\xi}{\mathfrak{s}}\frac{\kappa^2}{\sigma_{\vep}^2}\r)^2 - \l(\frac{\kappa^2}{\sigma_{\vep}^2}\r)^2\r)^{-\Delta/2}\r]   \\
 & \le \; \E_{\xi}\l[\exp\l(\Delta \frac{\kappa^4}{\sigma_{\vep}^4} + 2\Delta\frac{\xi}{\mathfrak{s}}\frac{\kappa^2}{\sigma_{\vep}^2}\r)\r]\\
 & = \; \exp\l(\Delta \frac{\kappa^4}{\sigma_{\vep}^4}\r)\l\{\E_{\xi_1}\l[\exp\l(2\Delta\frac{\xi_1}{\mathfrak{s}}\frac{\kappa^2}{\sigma_{\vep}^2}\r)\r] \r\}^{\mathfrak{s}}\\
 & \le \; \exp\l[\Delta \frac{\kappa^4}{\sigma_{\vep}^4} + \frac{\mathfrak{s}^2}{p}\l(\exp\l(\frac{2\Delta\kappa^2}{\mathfrak{s}\sigma^2_{\vep}}\r) - 1\r)\r] \\
 & = \; \exp\l(\frac{\mathfrak{s}^2}{4\Delta}\log^2\l(1 + \frac{p}{17\mathfrak{s}^2}\r) + \frac{1}{17}\r)\; \le \; \exp\l(\frac{2}{17}\r) \; < \; \frac{9}{8},
\end{align*}
where the last second inequality is due to $\mathfrak{s}\log\bigl(1 + {p}/({17\mathfrak{s}^2})\bigr) \le 2\sqrt{{\Delta}/{17}}$. Thus, 
\[
\inf_{\wh\theta}\; \sup_{\mc P_{\mathfrak{s}, \sigma_{\vep}}^{\tau, n, p}} \;\p\l(\bigl\vert\wh\theta -\theta\bigr\vert \ge \Delta\r)\; \ge \; \frac{1}{4}.
\]

\bigskip

{\bf Part~\ref{lem:lb:id:locate}.} Introduce 
\[
\gamma \;= \;\frac{\sigma_{\vep}^2}{4\kappa^2}\mathfrak{s} \log\bigl(1 + \frac{p}{17\mathfrak{s}^2}\bigr).
\]
We consider the following fuzzy hypotheses 
\[
\p_1 = {p \choose \mathfrak{s}}^{-1} \sum_{S\subseteq [p],\; \vert S \vert = \mathfrak{s}}\p_{\Delta, \frac{\kappa}{\sqrt{\mathfrak{s}}}\mathbf{1}_S, \bm 0}\quad\text{and}\quad
\p_2 = {p \choose \mathfrak{s}}^{-1} \sum_{S\subseteq [p],\; \vert S \vert = \mathfrak{s}}\p_{\Delta+2\gamma, \frac{\kappa}{\sqrt{\mathfrak{s}}}\mathbf{1}_S, \bm 0}
\]
where we choose $\kappa$ such that $\Delta\kappa^2 = \tau \sigma_{\vep}^2\mathfrak{s} \log\bigl(1 + {p}/({17\mathfrak{s}^2})\bigr)$. Then, similar as above, by \eqref{eq:uv} and \eqref{eq:fuzz}, we have
\[
\inf_{\wh\theta}\; \sup_{\mc P_{\mathfrak{s}, \sigma_{\vep}}^{\tau, n, p}} \;\p\l(\bigl\vert\wh\theta -\theta\bigr\vert \ge \gamma\r)\;
\ge\;\frac{1-\sqrt{\alpha/2}}{2},
\]
where $\alpha = \chi^2(\p_{2},\p_{1})$ and thus 
\[
\alpha \; =\; {p \choose \mathfrak{s}}^{-1}\sum_{S \subseteq [p],\, \vert S\vert = \mathfrak{s}} \l(\l(1-\frac{\vert S \cap [\mathfrak{s}] \vert}{\mathfrak{s}}\frac{\kappa^2}{\sigma_{\vep}^2}\r)^2 - \l(\frac{\kappa^2}{\sigma_{\vep}^2}\r)^2\r)^{-\gamma} - 1.
\]
Again by the similar calculation as in Part~\ref{lem:lb:id:detect}, we have
\begin{align*}
 \alpha + 1 \;&  \le \; \exp\l[2\gamma \frac{\kappa^4}{\sigma_{\vep}^4} + \frac{\mathfrak{s}^2}{p}\l(\exp\l(\frac{4\gamma\kappa^2}{\mathfrak{s}\sigma^2_{\vep}}\r) - 1\r)\r] \\
 & = \; \exp\l(\frac{\kappa^2}{2\sigma_{\vep}^2}\mathfrak{s}\log\l(1 + \frac{p}{17\mathfrak{s}^2}\r) + \frac{1}{17}\r)\; \le \; \exp\l(\frac{2}{17}\r) \; < \; \frac{3}{2},
\end{align*}
where the last second inequality is due to $\mathfrak{s}\log\bigl(1 + {p}/({17\mathfrak{s}^2})\bigr) \le \sqrt{{2\Delta}/({17\tau})}$ and the choice of~$\kappa$. This concludes the proof. 
\end{proof}

\section{Multiplier bootstrap}
\label{sec:boot}

Instead of sampling directly from the distribution of $\wh{\mbf V}_j$ in Corollary~\ref{cor:thm:two} for drawing the quantile $C_{\alpha/2, \infty}(\wh{\mbf V}_j)$ in~\eqref{eq:ci}, we can adopt a multiplier bootstrap procedure.
Recall $\wh{\mbf U}_{j, t}$ introduced for the estimation of $\bm\Gamma_j$, which approximates $\mbf U_{j, t} = \mbf x^{\tO}_t (\vep^{\tO}_t + (\mbf x^{\tO}_t)^\top \bar{\mu}^{\tE}_j)$.
This term, in turn, appears in the leading term of the decomposition of $\wc{\bm\delta}_j - \bm\delta_j$ to which Gaussian approximation is performed, see~\eqref{eq:debias:decomp} and also~\eqref{eq:debiased:decomp}.

Thus motivated, we adopt the multipler bootstrap as below: We generate
\begin{multline*}
\mbf W^{(b)}_j = \frac{1}{\sqrt{b^{\tE} - a^{\tE}}} \sum_{t = a^{\tE}_j + 1}^{b^{\tE}_j} \zeta^{(b)}_t \cdot \wh{w}^{\tE}_{j, t} \wh{\bm\Omega}^{\tE} \l( \wh{\mbf U}_{j, t} - \bar{\wh{\mbf U}}_j \r), \text{ \ where \ } \zeta^{(b)}_t \sim_{\iid} \mc N(0, 1),
\\
\wh{w}_{j, t}^{\tE} = \l\{ \begin{array}{ll}
- \sqrt{\frac{b^{\tE}_j - \wh\cp^{\tE}_j}{\wh\cp^{\tE}_j - a^{\tE}_j}} & \text{for \ } t \in \{a^{\tE}_j + 1, \ldots, \wh\cp^{\tE}_j \},  \\
\sqrt{\frac{\wh\cp^{\tE}_j - a^{\tE}_j}{b^{\tE}_j - \wh\cp^{\tE}_j}} & \text{for \ } t \in \{\wh\cp^{\tE}_j + 1, \ldots, b^{\tE}_j \}, 
\end{array}\r.
\text{ \ and \ }
\bar{\wh{\mbf U}}_j = \frac{1}{b^{\tE}_j - a^{\tE}_j} \sum_{t = a^{\tE}_j + 1}^{b^{\tE}} \wh{\mbf U}_{j, t},
\end{multline*}
for $b \in [B]$, with $B$ denoting the bootstrap sample size.
Then, we construct a simultaneous $100(1 - \alpha)\%$ confidence interval about $\delta_{ij}$ as
\begin{align*}
\mc C^{\text{boot}}_{ij}(\alpha) = \l( \wc{\delta}_{ij} - \frac{2}{\wh\Delta^{\tE}_j} C_{\alpha/2, \infty}(\mbf W^{(b)}_j), \, \wc{\delta}_{ij} - \frac{2}{\wh\Delta^{\tE}_j} C_{\alpha/2, \infty}(\mbf W^{(b)}_j) \r),
\end{align*}
with $C_{\alpha/2, \infty}(\mbf W^{(b)}_j)$ denoting the upper $\alpha/2$-quantile of $\{\vert \mbf W^{(b)}_j \vert_\infty, \, b \in [B]\} \cup \{\vert \wc{\bm\delta}_j \vert_\infty\}$.
The performance of the multiplier bootstrap procedure is demonstrated numerically in Section~\ref{sec:sim:ci}.

\section{Proofs}

\subsection{Proof of Lemma~\ref{lem:tv}}
For simplicity, we present the proof for the specific case of $\bm\Sigma$ being invertible. In general, the proof is similar and we need to introduce projection on the range of $\bm\Sigma$, see Theorem~1.1 of \citet{devroye2018total}. 

Define $\mbf Z_t({\bm \beta}):= (\mbf x_t^\top, Y_t)^\top \in \R^{p+1}$ such that $Y_t = \mbf x_t^\top {\bm\beta} + \vep_t$, i.e.\ $\mbf Z_t({\bm \beta}) \sim \mathcal{N}(\bm 0, \bm \Sigma_{{\bm\beta}})$ with 
$$
\bm \Sigma_{{\bm\beta}} := 
\bmx
\bm\Sigma &\;\; \bm\Sigma {\bm\beta}\\
{\bm\beta}^\top\bm\Sigma & \;\;{\bm\beta}^\top \bm\Sigma {\bm\beta} + \sigma_{\vep}^2
\emx.
$$
Let $\mbf I_k$ be the identity matrix in $\R^{k \times k}$ for $k\in\N$. For $j \in \{0,1\}$, we have 
\begin{align*}
\bm\Sigma_{\bm\beta}^{-1} \bm\Sigma_{\bm\beta_j} & = 
\bmx
\bm\Sigma^{-1} + \bm\beta\bm\beta^\top/\sigma_{\vep}^2 & -\bm\beta/\sigma_{\vep}^2\\
-\bm\beta^\top/\sigma_{\vep}^2& 1/\sigma_{\vep}^2
\emx
\bmx
\bm\Sigma & \bm\Sigma {\bm\beta}_j\\
{\bm\beta}_j^\top\bm\Sigma & {\bm\beta}_j^\top\bm\Sigma {\bm\beta}_j + \sigma_{\vep}^2
\emx \\
& =
\bmx
\mbf I_p - \bm\beta(\bm\beta_j - \bm\beta)^\top\bm\Sigma/\sigma_{\vep}^2 & 
(\bm\beta_j - \bm\beta) - \bm\beta(\bm\beta_j - \bm\beta)^\top\bm\Sigma{\bm\beta}_j/\sigma_{\vep}^2\\
({\bm\beta}_j - \bm\beta)^\top\bm\Sigma/\sigma_{\vep}^2 & (\bm\beta_j - \bm\beta)^\top\bm\Sigma{\bm\beta}_j/\sigma_{\vep}^2 + 1
\emx.
\end{align*}
Further, it holds that
\begin{align*}
& \tr\l(\bigl(\bm\Sigma_{\bm\beta}^{-1/2} \bm\Sigma_{\bm\beta_j} \bm\Sigma_{\bm\beta}^{-1/2} - \mbf I_{p+1}\bigr)^2\r) = \tr\l(\bigl(\bm\Sigma_{\bm\beta}^{-1} \bm\Sigma_{\bm\beta_j} - \mbf I_{p+1}\bigr)^2\r)\\
= & \, \tr\l(
\bmx
- \bm\beta(\bm\beta_j - \bm\beta)^\top\bm\Sigma/\sigma_{\vep}^2 & 
(\bm\beta_j - \bm\beta) - \bm\beta(\bm\beta_j - \bm\beta)^\top\bm\Sigma{\bm\beta}_j/\sigma_{\vep}^2\\
({\bm\beta}_j - \bm\beta)^\top\bm\Sigma/\sigma_{\vep}^2 & (\bm\beta_j - \bm\beta)^\top\bm\Sigma{\bm\beta}_j/\sigma_{\vep}^2
\emx^2\r)\\
= & \, \l(\frac{(\bm\beta_j - \bm\beta)^\top \bm\Sigma(\bm\beta_j - \bm\beta)}{\sigma_{\vep}^2} + 1\r)^2 - 1,
\end{align*}
where $\tr(\cdot)$ is the trace operator. Note that $\bm\Sigma_{\bm\beta}^{-1} \bm\Sigma_{\bm\beta_j} - \mbf I_{p+1}$ and $\bm\Sigma_{\bm\beta}^{-1/2} \bm\Sigma_{\bm\beta_j} \bm\Sigma_{\bm\beta}^{-1/2} - \mbf I_{p+1}$ have the same eigenvalues. Then, by the general inequality in Theorem~1.1 of \citet{devroye2018total} and the temporal independence, we obtain
$$
\frac{1}{100} \le \frac{\mathrm{TV} \bigl( \p_0(\bm\beta),\, \p_{\cp_1}(\bm\beta_0, \bm\beta_1) \bigr)}{\min\left\{ 1, \, \sqrt{\cp_1 \l(\frac{(\bm\beta_0 - \bm\beta)^\top \bm\Sigma(\bm\beta_0 - \bm\beta)}{\sigma_{\vep}^2} + 1\r)^2 + (n-\cp_1)\l(\frac{(\bm\beta_1 - \bm\beta)^\top \bm\Sigma(\bm\beta_1 - \bm\beta)}{\sigma_{\vep}^2} + 1\r)^2 - n}\right\}} \le \frac{3}{2}.
$$
It is clear to see that, for any $a,b\ge 0$,
\begin{multline*}
\min\l\{1,\, 2\cp_1 a +2(n-\cp_1)b\r\} \le \min\l\{1,\, \cp_1 a^2 + 2\cp_1 a +(n-\cp_1)b^2 + 2(n-\cp_1)b\r\}\\ \le 
\min\l\{1,\, 3\cp_1 a +3(n-\cp_1)b\r\}.
\end{multline*}
This implies 
$$
\frac{1}{100^2} \le \frac{\mathrm{TV} \bigl( \p_0(\bm\beta),\, \p_{\cp_1}(\bm\beta_0, \bm\beta_1) \bigr)^2}{\min\left\{ 1, \, {\cp_1 (\bm\beta_0 - \bm\beta)^\top \bm\Sigma(\bm\beta_0 - \bm\beta)/\sigma_{\vep}^2 + (n-\cp_1)(\bm\beta_1 - \bm\beta)^\top \bm\Sigma(\bm\beta_1 - \bm\beta)/\sigma_{\vep}^2}\right\}} \le \frac{27}{4}.
$$
Note that 
\begin{align*}
\min_{\substack{\bm\beta, \bm\beta_0, \bm\beta_1: \\ \bm\beta_1 - \bm\beta_0 = \bm\delta}}
\frac{\cp_1(\bm\beta_0 - \bm\beta)^\top \bm\Sigma(\bm\beta_0 - \bm\beta)}{\sigma_{\vep}^2} + \frac{(n-\cp_1)(\bm\beta_1 - \bm\beta)^\top \bm\Sigma(\bm\beta_1 - \bm\beta)}{\sigma_{\vep}^2} 
= \frac{\cp_1(n-\cp_1)}{n\sigma^2_\vep}\bm\delta^\top \bm\Sigma\bm\delta,
\end{align*}
where the minimum is attained when $n \bm\beta = (n-\cp_1)\bm\beta_1 + \cp_1\bm\beta_0$. This, together with the monotonicity of $\min(\cdot)$, concludes the proof. 

\subsection{Preliminary lemmas}

We denote by $\mbf e_\ell \in \R^{p + 1}$ the vector of zeros except for its $\ell$-th element set to be one, and $\mathbb{B}_d(r) = \{\mbf a: \, \vert \mbf a \vert_d \le r\}$ the $\ell_d$-ball of radius $r$ with the dimension of $\mbf a$ determined within the context.

\begin{lem}
	\label{lem:exp}
	\begin{enumerate}[label = (\roman*)]
		\item \label{lem:exp:one} {\bf ({Theorem~3 of \citealp{wu2016performance}}).}
		Suppose that Assumption~\ref{assum:func:dep}~\ref{assum:fd:exp} holds.
		Let $\omega = 2/(1 + 2\kappa)$. Then for all $0 \le s < e \le n$ and $z > 0$, we have
		\begin{align*}
		\sup_{\mbf a, \mbf b \in \mathbb{B}_2(1)} \p\l(
		\frac{1}{\sqrt{e - s}} \l\vert \sum_{t = s + 1}^e \l[ \mbf a^\top \mbf Z_t \mbf Z_t^\top \mbf b - \E\l( \mbf a^\top \mbf Z_t \mbf Z_t^\top \mbf b\r) \r] \r\vert \ge z
		\r) \le C_\kappa \exp\l( - \frac{z^\omega}{2 e \omega \Xi^{\omega} } \r).
		\end{align*}
		
		\item \label{lem:exp:two} {\bf (Theorem~6.6 of \citealp{zhang2021convergence}, Lemma B.5 of \citealp{cho2022high}).} Suppose that Assumption~\ref{assum:func:dep}~\ref{assum:fd:gauss} holds. Then for all $0 \le s < e \le n$ and $z > 0$, there exists a universal constant $C > 0$ such that
		\begin{multline*}
		\sup_{\mbf a, \mbf b \in \mathbb{B}_2(1)} \p\l(
		\frac{1}{\sqrt{e - s}} \l\vert \sum_{t = s + 1}^e \l[ \mbf a^\top \mbf Z_t \mbf Z_t^\top \mbf b - \E\l( \mbf a^\top \mbf Z_t \mbf Z_t^\top \mbf b\r) \r] \r\vert \ge z \r) \\
		\le \; 2 \exp\l(- C \min\l(\frac{z^2}{\Xi^2}, \frac{z \sqrt{e - s} }{\Xi} \r) \r).
		\end{multline*}

        \item \label{lem:exp:three} {\bf (Theorem~3.1 of \citealp{KuCh22})} Suppose that Assumption~\ref{assum:func:dep}~\ref{assum:fd:ind} holds. Then for all $0 \le s < e \le n$ and $z > 0$, we have
		\begin{multline*}
		\sup_{\mbf a, \mbf b \in \mathbb{B}_2(1)} \p\l(
		\frac{1}{\sqrt{e - s}} \l\vert \sum_{t = s + 1}^e \l[ \mbf a^\top \mbf Z_t \mbf Z_t^\top \mbf b - \E\l( \mbf a^\top \mbf Z_t \mbf Z_t^\top \mbf b\r) \r] \r\vert \ge z \r) \\
		\le \; 2 \exp\l(- C_r \min\l(\frac{z^2}{\Xi^2}, \frac{z^{r/2} ({e - s})^{r/4}}{\Xi^{r/2}} \r) \r).
		\end{multline*}
	\end{enumerate}
\end{lem}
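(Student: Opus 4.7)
\emph{Proof proposal.} Each of the three parts is a tail bound for the centered partial sum
$S_{s,e}(\mbf a,\mbf b) := \sum_{t=s+1}^{e}\bigl[\mbf a^\top \mbf Z_t \mbf Z_t^\top \mbf b - \E(\mbf a^\top \mbf Z_t \mbf Z_t^\top \mbf b)\bigr]$,
and since the lemma is explicitly annotated as a compilation of cited results, my plan is simply to verify, one regime at a time, that our running hypotheses feed directly into the hypotheses of the referenced concentration inequalities.

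For part~(i), under Assumption~\ref{assum:func:dep}~\ref{assum:fd:exp} the dependence-adjusted sub-exponential norm of $W_t(\mbf a,\mbf b)$ is controlled by $\Xi$ uniformly in $\mbf a,\mbf b\in\mathbb{B}_2(1)$. This is precisely the input needed by the Nagaev/Bernstein-type inequality for functional-dependent sequences in Lemma~C.4 of \cite{zhang2017gaussian}; the stretched-exponential exponent $\omega = 2/(1+2\kappa)$ merely records the Weibull tail coming from~$\psi_\kappa$. The plan is therefore to invoke that lemma directly, taking a supremum over $\mathbf{a},\mathbf{b}\in\mathbb{B}_2(1)$ only at the last step.

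For part~(ii), Gaussianity of $\mbf Z_t$ together with $\sup_{\mbf a\in\mathbb{B}_2(1)}\Vert U_\cdot(\mbf a)\Vert_2 \le \Xi^{1/2}$ bounds the long-run covariance structure of the linear functionals, and a Hanson-Wright argument then upgrades linear control to quadratic control with a Bernstein-type (sub-exponential) tail. The relevant statement is Theorem~6.6 of \cite{zhang2021convergence}, restated in the form we need as Lemma~C.5 of \cite{cho2022high}. The plan is again to quote these results; no additional work is needed beyond matching the normalisation $1/\sqrt{e-s}$ with the convention used there.

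For part~(iii), independence of $\{\mbf Z_t\}$ together with the sub-Gaussianity assumption $\sup_{\mbf a\in\mathbb{B}_2(1)}\sup_{\nu\ge 2}\nu^{-1/2}\Vert \mbf a^\top \mbf Z_t\Vert_\nu \le \Xi^{1/2}$ lets us reduce to the scalar case. First, a Cauchy-Schwarz/H\"older step (carried out explicitly in Lemma~\ref{lem:assum:func:two}) shows that $\mbf a^\top \mbf Z_t \mbf Z_t^\top \mbf b - \E[\cdot]$ is sub-exponential with $\psi_1$-norm $\lesssim \Xi$ uniformly in $\mbf a,\mbf b$. Then, since the summands in $S_{s,e}(\mbf a,\mbf b)$ are independent and centered, Bernstein's inequality (Theorem~2.8.1 of \cite{Ver18}) yields the stated two-regime bound $\exp(-C\min(z^2/\Xi^2, \sqrt{e-s}\,z/\Xi))$. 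The only mildly delicate point is tracking the absolute constant through the Bernstein step; this is routine and absorbed into the universal $C$.

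The main obstacle, such as it is, is bookkeeping rather than mathematics: ensuring the three bounds are stated with the same normalisation ($1/\sqrt{e-s}$ versus $1/(e-s)$), and that in part~(i) the constant $C_\kappa$ and the exponent in the tail match the form in which Lemma~C.4 of \cite{zhang2017gaussian} is quoted. Once alignments are verified, each part is a one-line invocation.
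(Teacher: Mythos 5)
Your proposal is correct and takes the same approach as the paper, which gives no explicit proof of this lemma and simply cites the three external results. Your verification plan — in particular, invoking Lemma~\ref{lem:assum:func:two} for the uniform sub-exponential bound that feeds Bernstein's inequality (Theorem~2.8.1 of \citealp{Ver18}) in part~(iii), and checking that the $1/\sqrt{e-s}$ normalisation converts the Bernstein two-regime tail into the stated $\exp\bigl(-C\min(z^2/\Xi^2,\sqrt{e-s}\,z/\Xi)\bigr)$ — is exactly the hypothesis-matching needed.
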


\begin{lem}
	\label{lem:bound}
	Suppose that Assumption~\ref{assum:func:dep}
 holds. Then with $\psi_{n, p}$ defined in~\eqref{eq:psi}, there exist some constants $C_0, C_1, c_2, c_3 \in (0, \infty)$ that depend only on $\kappa$ (with $\kappa = 0$ under Assumption~\ref{assum:func:dep}~\ref{assum:fd:gauss} and $\kappa = \max(1/r - 1/2, 0)$ under~\ref{assum:fd:ind}) and $\Xi$, such that 
	\begin{align*}
	\p\l(\mc E_{n, p}^{(1)} \cap \bigcap_{\mbf a, \mbf b \in \mc A} \mc E_{n, p}^{(2)}(\mbf a, \mbf b) \r) \ge 1 - c_2 (p \vee n)^{-c_3},
	\end{align*}
	where 
	\begin{align}
	\mc E_{n, p}^{(1)} &= \l\{ \max_{\substack{0 \le s < e \le n 
	}} \frac{1}{\sqrt{\max(e - s, C_1 \psi_{n, p}^2)}} \l\vert \sum_{t = s + 1}^e \mbf x_t \vep_t \r\vert_\infty \le C_0\psi_{n, p} \r\} \text{ \ and}
	\nn
	\\
	\mc E_{n, p}^{(2)}(\mbf a, \mbf b) &= \l\{ \max_{\substack{0 \le s < e \le n 	}} \frac{1}{\sqrt{\max(e - s, C_1 \psi_{n, p}^2)}} \l\vert \sum_{t = s + 1}^e \mbf a^\top \l(\mbf x_t \mbf x_t^\top - \bm\Sigma\r) \mbf b \r\vert \le C_0 \psi_{n, p} \r\},
	\nn
	\end{align}
	and $\mc A \subset \mathbb{B}_2(1)$ has its cardinality $\vert \mc A \vert \le 2(n + p)$.
\end{lem}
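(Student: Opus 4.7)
The strategy is to apply the pointwise concentration bounds of \cref{lem:exp} once per (interval, test-vector) pair, and then absorb everything into a polynomial union bound.

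First I reduce both events to the template of \cref{lem:exp}. The $i$-th coordinate of $\mbf x_t\vep_t$ equals $\mbf e_i^\top\mbf Z_t\mbf Z_t^\top\mbf e_{p+1}$ with zero mean by \cref{assum:xe}, so it fits the template with $\mbf a = \mbf e_i$, $\mbf b = \mbf e_{p+1}$, both of unit norm. For $\mc E_{n, p}^{(2)}(\mbf a, \mbf b)$, I embed $\mbf a, \mbf b \in \mathbb B_2(1)\subset \R^p$ into $\R^{p+1}$ by zero-padding the last coordinate, so that $\mbf a^\top \mbf Z_t\mbf Z_t^\top\mbf b = \mbf a^\top\mbf x_t\mbf x_t^\top\mbf b$ and the centering $\E(\mbf a^\top \mbf Z_t\mbf Z_t^\top\mbf b) = \mbf a^\top\bm\Sigma\mbf b$ cancels the $-\bm\Sigma$ inside the event. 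Under \cref{assum:func:dep:two} I additionally invoke \cref{lem:assum:func:two} to control the sub-exponential norm by $19\Xi$ before feeding into \cref{lem:exp}~\ref{lem:exp:three}.

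Next, for each fixed interval $(s, e]$ and each fixed test pair, I plan to apply \cref{lem:exp} with threshold $z = C_0 \psi_{n, p}\sqrt{\max(e - s,\, C_1\psi_{n, p}^2)/(e - s)}$, which satisfies $z \ge C_0\psi_{n, p}$ always, with equality when $e - s \ge C_1 \psi_{n, p}^2$. In that long-interval regime, choosing $C_1$ on the order of $(C_0/\Xi)^2$ keeps the Bernstein bound of \cref{lem:exp}~\ref{lem:exp:two}--\ref{lem:exp:three} in its sub-Gaussian branch, giving tail $\exp(-c z^2/\Xi^2) = (p\vee n)^{-c(C_0)}$ with $c(C_0)$ increasing in $C_0$; under \cref{assum:func:dep}~\ref{assum:fd:exp} the analogous conclusion follows from $z^\omega \ge C_0^\omega\log(p\vee n)$ with $\omega = 2/(1 + 2\kappa)$ and $\psi_{n,p}^\omega = \log(p \vee n)$. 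In the short-interval regime $e - s < C_1 \psi_{n, p}^2$ the inflated $z$ is essential, and a direct check yields $\min(z^2/\Xi^2,\, \sqrt{e - s}\,z/\Xi) \gtrsim \psi_{n, p}^2$ (since $z^2/\Xi^2 \ge C_0^2\psi_{n,p}^2/\Xi^2$ and $\sqrt{e-s}\,z/\Xi = C_0\sqrt{C_1}\psi_{n,p}^2/\Xi$), so the same polynomial tail persists uniformly.

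Finally I union-bound over at most $n(n + 1)/2$ intervals and, respectively, the $p$ canonical directions (for $\mc E_{n, p}^{(1)}$) or the at most $4(n + p)^2$ ordered pairs from $\mc A$ (for the intersection of $\mc E_{n, p}^{(2)}$). Both totals are polynomial in $p \vee n$ of degree at most four, so taking $C_0$ large enough relative to $\Xi, \kappa$ makes $c(C_0)$ exceed this degree and leaves the desired residual $c_2(p\vee n)^{-c_3}$. The one technical pinch point I expect is the short-interval case: verifying that the inflated $z$ still lands in the sub-Gaussian branch of Bernstein's inequality with a $\log(p\vee n)$-sized exponent, which is what forces the staged constant choice $C_1 \gg 1$ depending on $C_0/\Xi$, followed by $C_0$ large with respect to $C_1, \Xi, \kappa$. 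Once the constants are fixed in this order, the remainder is a mechanical union bound.
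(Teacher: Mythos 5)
Your proposal is correct and takes essentially the same route as the paper: for each fixed interval and test pair, apply the pointwise concentration of \cref{lem:exp} with the inflated threshold $z = C_0\psi_{n,p}\sqrt{\max(e-s,\,C_1\psi_{n,p}^2)/(e-s)}$, note that $C_1 \asymp (C_0/\Xi)^2$ balances the two branches of the Bernstein bound so the exponent is always $\gtrsim \log(p\vee n)$ (with the exponential-tail case handled via $z^\omega \ge C_0^\omega\log(p\vee n)$), and close with a polynomial union bound over $O(n^2)$ intervals and the test-vector set. The only cosmetic difference is that you invoke \cref{lem:assum:func:two} as an intermediate step under \cref{assum:func:dep:two}, whereas the paper treats \cref{lem:exp}~\ref{lem:exp:three} as already covering that case directly; this changes nothing substantive.
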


\begin{proof} First, suppose that Assumption~\ref{assum:func:dep}~\ref{assum:fd:exp} holds.
	By Lemma~\ref{lem:exp}~\ref{lem:exp:one}, we have
	\begin{align*}
	\p\l( \mc E_{n, p}^{(1)} \r)
	\ge 1 - p n^2 C_\kappa \exp\l(- \frac{C_0^\omega \log(p \vee n)}{2 e \omega \Xi^\omega} \r)  \ge 1 - c_2 (p \vee n)^{-c_3}
	\end{align*}
	for large enough $C_0$ that depends only on $\kappa$ and $\Xi$, where the inequality follows from setting $\mbf a = \mbf e_\ell, \, \ell \in [p]$ and $\mbf b = \mbf e_{p + 1}$ and then applying the Bonferroni correction.
	Analogously, 
	\begin{align*}
	\p\l( \bigcap_{\mbf a, \mbf b \in \mc A} \mc E_{n, p}^{(2)}(\mbf a) \r) \ge 1 - \vert \mc A \vert^2 n^2 C_\kappa \exp\l(- \frac{C_0^\omega \log(p \vee n)}{2 e \omega \Xi^\omega} \r) \ge 1 - c_2 (p \vee n)^{-c_3}.
	\end{align*}
	When either Assumption~\ref{assum:func:dep}~\ref{assum:fd:gauss} or~\ref{assum:fd:ind} holds, we adopt similar arguments using Lemma~\ref{lem:exp}~\ref{lem:exp:two} and~\ref{lem:exp:three} with $C_0$ depending only on $\Xi$ and $C$ and $C_1$ on $\Xi$, $C_0$ and $\kappa$.
\end{proof}

\subsection{Proofs for the results in Section~\ref{sec:cp}}

\subsubsection{Proof of Theorem~\ref{thm:one}}
\label{sec:pf:thm:one}

Supposing that $\cp_j - \cp_{j - 1} \ge 12$, we define for all $j \in [q]$,
\begin{align*}
\underline{\mc I}_j &= \l\{ \cp_j - \l\lfloor \frac{\Delta_j}{3} \r\rfloor, \ldots,  \cp_j - \l\lceil \frac{\Delta_j}{12} \r\rceil \r\} \text{ \ and \ } 
\bar{\mc I}_j = \l\{ \cp_j + \l\lceil \frac{\Delta_j}{12} \r\rceil, \ldots,   \cp_j + \l\lfloor \frac{\Delta_j}{3} \r\rfloor \r\},
\end{align*}
where $\Delta_j = \min(\cp_j - \cp_{j - 1}, \cp_{j + 1} - \cp_j)$.
Then for large enough sample size $n$, we always have the following event hold: 
\begin{align}
\label{eq:cond:M}
\mc M_n = \Bigl\{\text{For each} \, j \in [q], \; \text{there is some } \, (a, b] \in \mathbb{M} \text{ \ such that \ } (a, b) \in \underline{\mc I}_j \times \bar{\mc I}_j \Bigr\}.
\end{align}

Define 
\begin{align}
\mc E_{n, p} = \mc E_{n, p}^{(1)} \cap \bigcap_{\mbf a, \mbf b \in \mc A} \mc E_{n, p}^{(2)}(\mbf a, \mbf b) \text{ \ where \ } \mc A = \l\{ \frac{\bm\delta_j} {\vert \bm\delta_j \vert_2}, \frac{\bm\mu_j}{\vert \bm\mu_j \vert_2}, \, j \in [q] \r\}  \cup \{ \mbf e_\ell, \, \ell \in [p + 1] \}.
\label{eq:set:e}
\end{align}
By Lemma~\ref{lem:bound}, we have $\p(\mc E_{n, p}) \ge 1 - c_2 (p \vee n)^{-c_3}$.
In what follows, we show that conditional on $\mc E_{n, p}$, the claims in $\mc S_{n, p}$ hold, i.e.\ $\mc E_{n, p} \subset \mc S_{n, p}$ and thus $\p(\mc S_{n, p}) \ge 1 - c_2 (p \vee n)^{-c_3}$.

Throughout the proof, we consider some $(s, e), \, 0 \le s < e \le n$, which satisfies:
\begin{enumerate}[label = (S\arabic*)]
	\item \label{not:se:one} The set $\mathbb{C}_{s, e} \ne \emptyset$, where 
	\begin{align*}
	\mathbb{C}_{s, e} = \bigl\{ j\in [q]\;:\;\text{ there is } (a, b] \in \mathbb{M} \text{ such that } s \le a < b \le e \text{ and } (a, b)\in \underline{\mc I}_j \times \bar{\mc I}_j\bigr\}.
	\end{align*}
	
	\item \label{not:se:two} There exist some $j \in \{0, \ldots, q\}$ and $j' \in \{1, \ldots, q + 1\}$ such that
	\begin{align*}
	& \vert \bm\Sigma \bm\delta_j \vert_\infty^2 \vert s - \cp_j \vert \le c_1 \Psi_j^2 \psi_{n, p}^2, \text{ \ and \ }
	\vert \bm\Sigma \bm\delta_{j'} \vert_\infty^2 \vert e - \cp_{j'} \vert \le c_1 \Psi_{j'}^2 \psi_{n, p}^2 \text{ \ (with $\Psi_0 = \Psi_{q + 1} = 0$).}
	\end{align*}
\end{enumerate}
We show that for $(s, e)$ meeting~\ref{not:se:one}--\ref{not:se:two}:
\begin{enumerate}[label = (R\arabic*)]
	\item \label{not:one} There exists at least one $\ell \in \mathbb{L}_{s, e}$ for which $T_\ell > \pi_{n, p}$, where 
	$$ \mathbb{L}_{s, e} := \{ \ell' \in [\vert \mathbb{M} \vert]: \, s \le a_{\ell'} < b_{\ell'} \le e\}. $$
	\item \label{not:two} Step~2 identifies $\wh\cp = k_{\ell^\circ}$ which, for some $j \in \mathbb{C}_{s, e}$, satisfies
	$\vert \bm\Sigma \bm\delta_j \vert_\infty^2 \vert \wh\cp - \cp_j \vert \le c_1 \Psi_j^2 \psi_{n, p}^2$.
\end{enumerate}

At the beginning of the algorithm, we have $(s, e) = (0, n)$ meet~\ref{not:se:one}--\ref{not:se:two} (since $s = \cp_0 = 0$ and $e = \cp_{q + 1} = n$ {and thanks to the definition of $\mathbb{M}$, see~\eqref{eq:cond:M}}) such that by~\ref{not:one}--\ref{not:two}, we add $\wh\cp$ to $\wh{\Cp}$ which, for some $j \in \mathbb{C}_{0, n} = [q]$, estimates the location $\cp_j$.
Then, we no longer have such $j$ in $\mathbb{C}_{s, e}$ for the subsequently considered $(s, e)$ since either $\cp_j \notin \{s + 1, \ldots, e - 1\}$ or, even so, it has been detected by either $s$ or $e$ such that $\min(\cp_j - s, e - \cp_j) \le c_1 \Psi_j^2 \vert \bm\Sigma \bm\delta_j \vert_\infty^{-2} \psi_{n, p}^2 < \Delta_j/12$ for $c_0$ large enough in Assumption~\ref{assum:size}.
This rules out the possibility of such $j$ belonging to $\mathbb{C}_{s, e}$ and hence a duplicate estimator of $\cp_j$ added to $\wh{\Cp}$.
More specifically, for any $\ell \in \mathbb{L}_{s, e}$ satisfying $\cp_j \in \{ a_\ell + 1, \ldots, b_\ell - 1\}$, we have $\min_{k \in \{a_\ell, b_\ell \}} \vert \bm\Sigma \bm\delta_j \vert_\infty^2 \vert k - \cp_j \vert \le c_1 \Psi_j^2 \psi_{n, p}^2$, such that
\begin{align}
T_\ell =& \, \sqrt{ \frac{(k_\ell - a_\ell) (b_\ell - k_\ell)}{b_\ell - a_\ell} } \l\vert \wh{\bm\gamma}_{k_\ell, b_\ell} - \wh{\bm\gamma}_{a_\ell, k_\ell} \r\vert_\infty \le 2 \max_{j: \, \cp_j \in \{a_\ell + 1, \ldots, b_\ell - 1\}} \sqrt{ \frac{(\cp_j - a_\ell) (b_\ell - \cp_j)}{b_\ell - a_\ell} } \l\vert \bm\Sigma \bm\delta_j \r\vert_\infty 
\nn \\
& + \sqrt{ \frac{(k_\ell - a_\ell) (b_\ell - k_\ell)}{b_\ell - a_\ell} } \l\vert \wh{\bm\gamma}_{k_\ell, b_\ell} - \wh{\bm\gamma}_{a_\ell, k_\ell} - \l( \bm\gamma_{k_\ell, b_\ell} - \bm\gamma_{a_\ell, k_\ell} \r) \r\vert_\infty 
\nn \\
\le & \, 2 \sqrt{c_1} \Psi_j \psi_{n, p} + 2 C_0 \Psi_j \psi_{n, p} < \pi_{n, p},
\label{eq:bounadary}
\end{align}
where the first inequality follows from Lemma~\ref{lem:cusum}, the second from Lemma~\ref{lem:gamma}.

Once $\vert \wh{\Cp} \vert = q$, for any $(s, e)$ defined by two neighbouring points in $\{0, n\} \cup \wh{\Cp}$, we have all $\ell \in \mathbb{L}_{s, e}$ satisfy $\vert \Cp \cap \{ a_\ell + 1, \ldots, b_\ell - 1\} \vert \le 2$.
Then, by the identical arguments leading to~\eqref{eq:bounadary}, we do not have any $T_\ell, \, \ell \in \mathbb{L}_{s, e}$, exceed $\pi_{n, p}$, thus the algorithm is terminated.

It remains to show that~\ref{not:one}--\ref{not:two} hold.

\begin{proof}[Proof of~\ref{not:one}]
	The set $\mathbb{C}_{s, e}$ is not empty  by~\ref{not:se:one}.
	For every $j \in \mathbb{C}_{s, e}$, we have $\ell = \ell(j) \in \mathbb{L}_{s, e}$ and
	$\Delta_j/12 \le \min(\cp_j - a_{\ell},\; b_{\ell} - \cp_j) \le \max(\cp_j - a_{\ell}, b_{\ell} - \cp_j) \le \Delta_j/3$. Then, by the definition of $T_\ell$ and Lemma~\ref{lem:gamma},
	\begin{align*}
	T_{\ell} \ge & \, T_{a_{\ell}, \cp_j, b_{\ell}}
	\\
	\ge & \, 
	\sqrt{\frac{(\cp_j - a_{\ell})(b_{\ell} - \cp_j)}{b_{\ell} - a_{\ell}}} \l( \l\vert \bm\Sigma \bm\delta_j \r\vert_\infty - \l\vert \wh{\bm\gamma}_{\cp_j, b_{\ell}} - \wh{\bm\gamma}_{a_{\ell}, \cp_j} - \l( \bm\gamma_{\cp_j, b_{\ell}} - \bm\gamma_{a_{\ell}, \cp_j} \r) \r\vert_\infty \r)
	\\
	\ge & \, \sqrt{\frac{\Delta_j}{15}} \l\vert \bm\Sigma \bm\delta_j \r\vert_\infty  - 2 C_0 \Psi_j \psi_{n, p} > \pi_{n, p},
	\end{align*}
	provided that $c_0$, in \cref{assum:size}, is sufficiently large.
\end{proof}

\begin{proof}[Proof of~\ref{not:two}] 
	If $\vert \wh\cp - \cp_j \vert \le C_1 \psi_{n, p}^2$, the statement holds provided that $c_1$ is large enough to meet $C_1 < c_1 \max_{j \in [q]} \vert \bm\Sigma \bm\delta_j \vert_\infty^{-2} \Psi_j^2$, since $ \vert \bm\Sigma \bm\delta_j \vert_\infty \lesssim \Psi_j$. 
	Therefore, we consider the case where $\vert \wh\cp - \cp_j \vert \ge C_1 \psi_{n, p}^2$ below.
	Without loss of generality, we consider the case $\wh\cp \le \cp_j$; the case where $\wh\cp \ge \cp_j + 1$ is handled analogously.
	
	From the arguments adopted in the proof of~\ref{not:one}, we have $b_{\ell^\circ} - a_{\ell^\circ} \le \min_{j \in \mathbb{C}_{s, e}} 2\Delta_j/3$.
	In particular, this implies that $\vert (a_{\ell^\circ}, b_{\ell^\circ}) \cap \Cp \vert = 1$.
	We first establish that for $j \in [q]$ satisfying $\{\cp_j\} = (a_{\ell^\circ}, b_{\ell^\circ}) \cap \Cp$, we do not have either $s$ or $e$ estimate $\cp_j$ in the sense that $\min_{k \in \{a_{\ell^\circ}, b_{\ell^\circ}\}} \vert \bm\Sigma \bm\delta_j \vert_\infty^2 \vert k - \cp_j \vert \le c_1 \Psi_j^2 \psi_{n, p}^2$ by~\ref{not:se:two}. If so, by the arguments analogous to those given in~\eqref{eq:bounadary}, we have $T_{\ell^\circ} < \pi_{n, p}$.
	Further, we have that 
	\begin{align}
	& \sqrt{\frac{(\cp_j - a_{\ell^\circ}) (b_{\ell^\circ} - \cp_j) }{ b_{\ell^\circ} - a_{\ell^\circ} }} \l\vert \bm\Sigma \bm\delta_j \r\vert_\infty 
	\ge T_{\ell^\circ} - 2 C_0 \Psi_j \psi_{n, p} > \pi_{n, p} - 2 C_0 \Psi_j \psi_{n, p} \ge \l( 1 - \frac{2 C_0}{c^\prime} \r) \pi_{n, p} \label{eq:not:min:dist}
	\end{align}
	by Lemmas~\ref{lem:gamma} and~\ref{lem:cusum}.
	Since $T_{\ell^\circ} = T_{a_{\ell^\circ}, \wh\cp, b_{\ell^\circ}} \ge T_{a_{\ell^\circ}, \cp_j, b_{\ell^\circ}}$, we have
	\begin{align}
	& \sqrt{\frac{(\cp_j - a_{\ell^\circ}) (b_{\ell^\circ} - \cp_j) }{ b_{\ell^\circ} - a_{\ell^\circ} }} \l\vert \bm\gamma_{\cp_j, b_{\ell^\circ}} - \bm\gamma_{a_{\ell^\circ}, \cp_j} \r\vert_\infty
	- 
	\sqrt{\frac{(\wh\cp - a_{\ell^\circ}) (b_{\ell^\circ} - \wh\cp) }{ b_{\ell^\circ} - a_{\ell^\circ} }} \l\vert \bm\gamma_{\wh\cp, b_{\ell^\circ}} - \bm\gamma_{a_{\ell^\circ}, \wh\cp} \r\vert_\infty
	\nn \\
	\le & \, 
	\sqrt{\frac{(\wh\cp - a_{\ell^\circ}) (b_{\ell^\circ} - \wh\cp) }{ b_{\ell^\circ} - a_{\ell^\circ} }} \l\vert \wh{\bm\gamma}_{\wh\cp, b_{\ell^\circ}} - \wh{\bm\gamma}_{a_{\ell^\circ}, \wh\cp} - \l( \bm\gamma_{\wh\cp, b_{\ell^\circ}} - \bm\gamma_{a_{\ell^\circ}, \wh\cp} \r) \r\vert_\infty
	\nn \\
	& \, +
	\sqrt{\frac{(\cp_j - a_{\ell^\circ}) (b_{\ell^\circ} - \cp_j) }{ b_{\ell^\circ} - a_{\ell^\circ} }} \l\vert \wh{\bm\gamma}_{\cp_j, b_{\ell^\circ}} - \wh{\bm\gamma}_{a_{\ell^\circ}, \cp_j} - \l( \bm\gamma_{\cp_j, b_{\ell^\circ}} - \bm\gamma_{a_{\ell^\circ}, \cp_j} \r) \r\vert_\infty \le 4 C_0 \Psi_j \psi_{n, p}
	\label{eq:prem:rate}
	\end{align}
	by Lemma~\ref{lem:gamma}.
	From this, it follows that $\vert \wh\cp - \cp_j \vert \le \min( \cp_j - a_{\ell^\circ}, b_{\ell^\circ} - \cp_j)/4$; otherwise, by~\eqref{eq:cusum:diff},
	\begin{align}
	& \sqrt{\frac{(\cp_j - a_{\ell^\circ}) (b_{\ell^\circ} - \cp_j) }{ b_{\ell^\circ} - a_{\ell^\circ} }} \l\vert \bm\gamma_{\cp_j, b_{\ell^\circ}} - \bm\gamma_{a_{\ell^\circ}, \cp_j} \r\vert_\infty
	- 
	\sqrt{\frac{(\wh\cp - a_{\ell^\circ}) (b_{\ell^\circ} - \wh\cp) }{ b_{\ell^\circ} - a_{\ell^\circ} }} \l\vert \bm\gamma_{\wh\cp, b_{\ell^\circ}} - \bm\gamma_{a_{\ell^\circ}, \wh\cp} \r\vert_\infty
	\nn \\
	= & \, \sqrt{\frac{(\cp_j - a_{\ell^\circ}) (b_{\ell^\circ} - \cp_j) }{ b_{\ell^\circ} - a_{\ell^\circ} }} \l( 1 - \sqrt{\frac{1 - \vert \wh\cp - \cp_j \vert/(\cp_j - a_{\ell^\circ})}{1 + \vert \wh\cp - \cp_j \vert/(b_{\ell^\circ} - \cp_j)}} \r)  \vert \bm\Sigma \bm\delta_j \vert_\infty 
	\nn \\
	\ge & \, \l( 1 - \sqrt{\frac{3}{5}} \r) \sqrt{\frac{(\cp_j - a_{\ell^\circ}) (b_{\ell^\circ} - \cp_j) }{ b_{\ell^\circ} - a_{\ell^\circ} }} \vert \bm\Sigma \bm\delta_j \vert_\infty\nn\\
	\ge &\l(1 - \sqrt{\frac{3}{5}}\r) (\pi_{n, p} - 2 C_0 \Psi_j \psi_{n, p}) > 4 C_0 \Psi_j \psi_{n, p},
	\label{eq:cusum:deriv}
	\end{align}
	which violates~\eqref{eq:prem:rate}, where the penultimate inequality makes use of~\eqref{eq:not:min:dist}.
	Therefore,
	\begin{align}
	\vert \wh\cp - \cp_j \vert \le \frac{1}{4} \Delta^\circ \text{ \ with \ } \Delta^\circ = \min( \cp_j - a_{\ell^\circ}, b_{\ell^\circ} - \cp_j). 
	\label{eq:one:fourth}
	\end{align}
	
	Let $\mbf e^\circ \in \R^p$ denote a vector containing $p - 1$ zeros and a single one which satisfies
	\begin{align*}
	T_{\ell^\circ} = \sqrt{\frac{(\wh\cp - a_{\ell^\circ})(b_{\ell^\circ} - \wh\cp)}{b_{\ell^\circ} - a_{\ell^\circ}}} \l\vert (\mbf e^\circ)^\top \l(\wh{\bm\gamma}_{\wh\cp, b_{\ell^\circ}} - \wh{\bm\gamma}_{a_{\ell^\circ}, \wh\cp}\r) \r\vert
	\end{align*}
	and without loss of generality, let $(\mbf e^\circ)^\top \bm\Sigma \bm\delta_j > 0$.
	Then, from Lemmas~\ref{lem:gamma} and~\ref{lem:cusum} and the arguments adopted in the proof of the latter, 
	\begin{align*}
	& \sqrt{\frac{(\cp_j - a_{\ell^\circ})(b_{\ell^\circ} - \cp_j)}{b_{\ell^\circ} - a_{\ell^\circ}}} (\mbf e^\circ)^\top \bm\Sigma \bm\delta_j + 2 C_0 \Psi_j \psi_{n, p} \ge T_{\ell^\circ} \ge T^*_{a_{\ell^\circ}, \cp_j, b_{\ell^\circ}} - 2 C_0 \Psi_j \psi_{n, p}, \text{ \ such that}
	\\
	& \sqrt{\frac{(\cp_j - a_{\ell^\circ})(b_{\ell^\circ} - \cp_j)}{b_{\ell^\circ} - a_{\ell^\circ}}} (\mbf e^\circ)^\top \bm\Sigma \bm\delta_j \ge T^*_{a_{\ell^\circ}, \cp_j, b_{\ell^\circ}}  - 4 C_0 \Psi_j \psi_{n, p} \ge \l( 1 - \frac{4 C_0}{c^\prime - 2 C_0} \r) T^*_{a_{\ell^\circ}, \cp_j, b_{\ell^\circ}} 
	\end{align*}
	with the last inequality followed by~\eqref{eq:not:min:dist}.
	The definition of $T^*_{s, k, e}$ is in Lemma~\ref{lem:cusum}; in particular, $T^*_{a_{\ell^\circ}, \cp_j, b_{\ell^\circ}} = \sqrt{(\cp_j - a_{\ell^\circ})(b_{\ell^\circ} - \cp_j)/(b_{\ell^\circ} - a_{\ell^\circ})} \vert \bm\Sigma \bm\delta_j \vert_\infty$.
	This implies that 
	\begin{align}
	\label{eq:e:circ}
	(\mbf e^\circ)^\top \bm\Sigma \bm\delta_j \ge \l( 1 - \frac{4 C_0}{c^\prime - 2 C_0} \r) \vert \bm\Sigma \bm\delta_j \vert_\infty.
	\end{align}
	Also, using Lemma~\ref{lem:gamma}, the results in \eqref{eq:not:min:dist}, \eqref{eq:one:fourth} and~\eqref{eq:e:circ} and the arguments analogous to those adopted in~\eqref{eq:cusum:deriv}, we have
	\begin{align*}
	& \sqrt{\frac{(\wh\cp - a_{\ell^\circ})(b_{\ell^\circ} - \wh\cp)}{b_{\ell^\circ} - a_{\ell^\circ}}} (\mbf e^\circ)^\top \l( \wh{\bm\gamma}_{\wh\cp, b_{\ell^\circ}} - \wh{\bm\gamma}_{a_{\ell^\circ}, \wh\cp} \r) 
	\\
	\ge & \,
	\sqrt{\frac{(\wh\cp - a_{\ell^\circ})(b_{\ell^\circ} - \wh\cp)}{b_{\ell^\circ} - a_{\ell^\circ}}} (\mbf e^\circ)^\top \l( \bm\gamma_{\wh\cp, b_{\ell^\circ}} - \bm\gamma_{a_{\ell^\circ}, \wh\cp} \r) - 2 C_0 \Psi_j \psi_{n, p}
	\\
	\ge & \, \sqrt{\frac{3}{5}} \l( 1 - \frac{4 C_0}{c^\prime - 2 C_0} \r) \sqrt{\frac{(\cp_j - a_{\ell^\circ})(b_{\ell^\circ} - \cp_j)}{b_{\ell^\circ} - a_{\ell^\circ}}} \vert \bm\Sigma \bm\delta_j \vert_\infty - 2 C_0 \Psi_j \psi_{n, p} > 0,
	\end{align*}
	i.e.\ $T_{a_{\ell^\circ}, \wh\cp, b_{\ell^\circ}} = \sqrt{(\wh\cp - a_{\ell^\circ})(b_{\ell^\circ} - \wh\cp)/(b_{\ell^\circ} - a_{\ell^\circ})} (\mbf e^\circ)^\top ( \wh{\bm\gamma}_{\wh\cp, b_{\ell^\circ}} - \wh{\bm\gamma}_{a_{\ell^\circ}, \wh\cp} )$.
	Similarly, 
	\begin{align*}
	T_{a_{\ell^\circ}, \cp_j, b_{\ell^\circ}} = \sqrt{\frac{(\cp_j - a_{\ell^\circ})(b_{\ell^\circ} - \cp_j)}{b_{\ell^\circ} - a_{\ell^\circ}}} (\mbf e^\circ)^\top \l( \wh{\bm\gamma}_{\cp_j, b_{\ell^\circ}} - \wh{\bm\gamma}_{a_{\ell^\circ}, \cp_j} \r) > 0.
	\end{align*}
	
	Then, since $T_{a_{\ell^\circ}, \wh\cp, b_{\ell^\circ}} \ge T_{a_{\ell^\circ}, \cp_j, b_{\ell^\circ}}$, we have
	\begin{align}
	0 \le& \, \frac{1}{\sqrt{b_{\ell^\circ} - a_{\ell^\circ}}} \l( \sqrt{\frac{\wh\cp - a_{\ell^\circ}}{b_{\ell^\circ} - \wh\cp}} \sum_{t = \wh\cp + 1}^{b_{\ell^\circ}} Y_t \mbf x_t^\top -
	\sqrt{\frac{b_{\ell^\circ} - \wh\cp}{\wh\cp - a_{\ell^\circ}}} \sum_{t = a_{\ell^\circ} + 1}^{\wh\cp} Y_t \mbf x_t^\top \r. 
	\nn \\
	& \quad \l. - \sqrt{\frac{\cp_j - a_{\ell^\circ}}{b_{\ell^\circ} - \cp_j}} \sum_{t = \cp_j + 1}^{b_{\ell^\circ}} Y_t \mbf x_t^\top +
	\sqrt{\frac{b_{\ell^\circ} - \cp_j}{\cp_j - a_{\ell^\circ}}} \sum_{t = a_{\ell^\circ} + 1}^{\cp_j} Y_t \mbf x_t^\top \r) \mbf e^\circ 
	\nn \\
	=& \, \frac{1}{\sqrt{b_{\ell^\circ} - a_{\ell^\circ}}} \l( \sqrt{\frac{\wh\cp - a_{\ell^\circ}}{b_{\ell^\circ} - \wh\cp}} \sum_{t = \wh\cp + 1}^{b_{\ell^\circ}} \vep_t \mbf x_t^\top -
	\sqrt{\frac{b_{\ell^\circ} - \wh\cp}{\wh\cp - a_{\ell^\circ}}} \sum_{t = a_{\ell^\circ} + 1}^{\wh\cp} \vep_t \mbf x_t^\top \r. 
	\nn \\ 
	& \quad \l. - \sqrt{\frac{\cp_j - a_{\ell^\circ}}{b_{\ell^\circ} - \cp_j}} \sum_{t = \cp_j + 1}^{b_{\ell^\circ}} \vep_t \mbf x_t^\top +
	\sqrt{\frac{b_{\ell^\circ} - \cp_j}{\cp_j - a_{\ell^\circ}}} \sum_{t = a_{\ell^\circ} + 1}^{\cp_j} \vep_t \mbf x_t^\top \r) \mbf e^\circ 
	\nn \\
	& + \frac{1}{\sqrt{b_{\ell^\circ} - a_{\ell^\circ}}} \l[ \sqrt{\frac{\wh\cp - a_{\ell^\circ}}{b_{\ell^\circ} - \wh\cp}} \sum_{t = \wh\cp + 1}^{\cp_j} \bm\beta_{j - 1}^\top \l( \mbf x_t \mbf x_t^\top - \bm\Sigma \r) + 
	\sqrt{\frac{\wh\cp - a_{\ell^\circ}}{b_{\ell^\circ} - \wh\cp}} \sum_{t = \cp_j + 1}^{b_{\ell^\circ}} \bm\beta_j^\top \l( \mbf x_t \mbf x_t^\top - \bm\Sigma \r) \r.
	\nn \\
	& \quad - \sqrt{\frac{b_{\ell^\circ} - \wh\cp}{\wh\cp - a_{\ell^\circ}}} \sum_{t = a_{\ell^\circ} + 1}^{\wh\cp} \bm\beta_{j - 1}^\top \l( \mbf x_t \mbf x_t^\top - \bm\Sigma \r) -
	\sqrt{\frac{\cp_j - a_{\ell^\circ}}{b_{\ell^\circ} - \cp_j}} \sum_{t = \cp_j + 1}^{b_{\ell^\circ}} \bm\beta_j^\top \l( \mbf x_t \mbf x_t^\top - \bm\Sigma \r) 
	\nn \\ 
	& \quad + \l. \sqrt{\frac{b_{\ell^\circ} - \cp_j}{\cp_j - a_{\ell^\circ}}} \sum_{t = a_{\ell^\circ} + 1}^{\cp_j} \bm\beta_{j - 1}^\top \l( \mbf x_t \mbf x_t^\top - \bm\Sigma \r) \r] \mbf e^\circ 
	\nn \\
	& - \l(\sqrt{\frac{(\cp_j - a_{\ell^\circ})(b_{\ell^\circ} - \cp_j) }{b_{\ell^\circ} - a_{\ell^\circ}}} - \sqrt{\frac{(\wh\cp - a_{\ell^\circ})(b_{\ell^\circ} - \wh\cp) }{b_{\ell^\circ} - a_{\ell^\circ}}} \cdot \frac{b_{\ell^\circ} - \cp_j}{b_{\ell^\circ} - \wh\cp} \r) (\mbf e^\circ)^\top \bm\Sigma \bm\delta_j
	\nn \\
	=:& \, T_1 + T_2 - T_3.
	\label{eq:r:ineq}
	\end{align}
	
	First, note that
	\begin{align*}
	\sqrt{b_{\ell^\circ} - a_{\ell^\circ}} \vert T_1 \vert \le& \, \l\vert \sqrt{\frac{\wh\cp - a_{\ell^\circ}}{b_{\ell^\circ} - \wh\cp}} - \sqrt{\frac{\cp_j - a_{\ell^\circ}}{b_{\ell^\circ} - \cp_j}} \r\vert \; \l\vert \sum_{t = \cp_j + 1}^{b_{\ell^\circ}} \vep_t \mbf x_t^\top \r\vert_\infty + 
	\sqrt{\frac{\wh\cp - a_{\ell^\circ}}{b_{\ell^\circ} - \wh\cp}} \l\vert \sum_{t = \wh\cp + 1}^{\cp_j} \vep_t \mbf x_t^\top \r\vert_\infty
	\\ 
	& + \l\vert \sqrt{\frac{b_{\ell^\circ} - \wh\cp}{\wh\cp - a_{\ell^\circ}}} -  \sqrt{\frac{b_{\ell^\circ} - \cp_j}{\cp_j - a_{\ell^\circ}}} \r\vert \; \l\vert \sum_{t = a_{\ell^\circ} + 1}^{\cp_j} \vep_t \mbf x_t^\top \r\vert_\infty
	+ \sqrt{\frac{b_{\ell^\circ} - \wh\cp}{\wh\cp - a_{\ell^\circ}}} \l\vert \sum_{t = \wh\cp + 1}^{\cp_j} \vep_t \mbf x_t^\top \r\vert_\infty
	\\
	=:& \, T_{1, 1} + T_{1, 2} + T_{1, 3} + T_{1, 4}.
	\end{align*}
	From Lemmas~\ref{lem:bound} and~\ref{lem:sqrt:diff},
	\begin{align*}
	\frac{T_{1, 1}}{\sqrt{b_{\ell^\circ} - a_{\ell^\circ}}} \le \frac{\sqrt{b_{\ell^\circ} - a_{\ell^\circ}} \vert \wh\cp - \cp_j \vert}{\sqrt{\cp_j - a_{\ell^\circ}}(b_{\ell^\circ} - \cp_j)} C_0 \psi_{n, p} \le \frac{\sqrt{2} \vert \wh\cp - \cp_j \vert C_0 \psi_{n, p} }{\Delta^\circ},
	\end{align*}
	and $T_{1, 3}$ is similarly bounded.
	By Lemma~\ref{lem:bound},
	\begin{align*}
	\frac{T_{1, 2}}{\sqrt{b_{\ell^\circ} - a_{\ell^\circ}}} \le \sqrt{\frac{ \vert \wh\cp - \cp_j \vert (\wh\cp - a_{\ell^\circ})}{(b_{\ell^\circ} - a_{\ell^\circ})(b_{\ell^\circ} - \wh\cp)}} C_0 \psi_{n, p} \le \frac{C_0 \psi_{n, p} \sqrt{\vert \wh\cp - \cp_j \vert}}{\sqrt{\Delta^\circ}},
	\end{align*}
	and $T_{1, 4}$ is handled analogously. 
	Putting together the bounds on $T_{1, k}, \, k = 1, \ldots, 4$, and using that $(\Delta^\circ)^{-1} \vert \wh\cp - \cp_j \vert \le 1/4$ from~\eqref{eq:one:fourth}, we have
	\begin{align*}
	\vert T_1 \vert \le C_0 ( \sqrt{2} + 2 ) \psi_{n, p} \sqrt{\frac{\vert \wh\cp  - \cp_j \vert}{\Delta^\circ}}.
	\end{align*}
	Next, we bound $T_2$ as
	\begin{align*}
	\sqrt{b_{\ell^\circ} - a_{\ell^\circ}} \vert T_2 \vert \le & \,
	\l\vert \sqrt{\frac{\wh\cp - a_{\ell^\circ}}{b_{\ell^\circ} - \wh\cp}} - \sqrt{\frac{\cp_j - a_{\ell^\circ}}{b_{\ell^\circ} - \cp_j}} \r\vert \; \l\vert \sum_{t = \cp_j + 1}^{b_{\ell^\circ}} \bm\beta_j^\top \l( \mbf x_t \mbf x_t^\top - \bm\Sigma \r)\r\vert_\infty 
	\\
	& + \sqrt{\frac{\wh\cp - a_{\ell^\circ}}{b_{\ell^\circ} - \wh\cp}} \l\vert \sum_{t = \wh\cp + 1}^{\cp_j} \bm\beta_{j - 1}^\top \l( \mbf x_t \mbf x_t^\top - \bm\Sigma \r) \r\vert_\infty
	\\ 
	& + \l\vert \sqrt{\frac{b_{\ell^\circ} - \wh\cp}{\wh\cp - a_{\ell^\circ}}} -  \sqrt{\frac{b_{\ell^\circ} - \cp_j}{\cp_j - a_{\ell^\circ}}} \r\vert \; \l\vert \sum_{t = a_{\ell^\circ} + 1}^{\cp_j} \bm\beta_{j - 1}^\top \l( \mbf x_t \mbf x_t^\top - \bm\Sigma \r) \r\vert_\infty
	\\
	& + \sqrt{\frac{b_{\ell^\circ} - \wh\cp}{\wh\cp - a_{\ell^\circ}}} \l\vert \sum_{t = \wh\cp + 1}^{\cp_j} \bm\beta_{j - 1}^\top \l( \mbf x_t \mbf x_t^\top - \bm\Sigma \r) \r\vert_\infty
	=: T_{2, 1} + T_{2, 2} + T_{2, 3} + T_{2, 4}.
	\end{align*}
	Using the same arguments as those adopted in bounding $T_{1, 1}$, we have for $k = 1, 3$,
	\begin{align*}
	\frac{T_{2, k}}{\sqrt{b_{\ell^\circ} - a_{\ell^\circ}}} \le \frac{\sqrt{2} \vert \wh\cp - \cp_j \vert C_0 \max_{j' \in \{j - 1, j\}} \vert \bm\beta_{j'} \vert_2 \psi_{n, p}}{\Delta^\circ}.
	\end{align*}
	Also, using the arguments involved in bounding $T_{1, 2}$, we have for $k = 2, 4$,
	\begin{align*}
	\frac{T_{2, k}}{\sqrt{b_{\ell^\circ} - a_{\ell^\circ}}} \le \frac{ C_0 \sqrt{\vert \wh\cp - \cp_j \vert}  \max_{j' \in \{j - 1, j\}} \vert \bm\beta_{j'} \vert_2 \psi_{n, p}}{\sqrt{\Delta^\circ}}.
	\end{align*}
	Therefore, we have
	\begin{align*}
	\vert T_2 \vert \le C_0 ( \sqrt{2} + 2 ) \max_{j' \in \{j - 1, j\}} \vert \bm\beta_{j'} \vert_2 \psi_{n, p} \sqrt{\frac{\vert \wh\cp - \cp_j \vert}{\Delta^\circ}}.
	\end{align*}
	Finally, by~\eqref{eq:e:circ} and Lemma~7 of \cite{wang2018high}, we have
	\begin{align*}
	T_3 \ge \frac{2(1 - 8C_0/c^{\prime})}{3\sqrt{6}} \frac{\vert \wh\cp - \cp \vert}{\sqrt{\Delta^\circ}} \vert \bm\Sigma \bm\delta_j \vert_\infty.
	\end{align*}
	Then, from~\eqref{eq:r:ineq}, we have
	\begin{align}
	\frac{2(1 - 8C_0/c^\prime)}{3\sqrt{6}} \frac{\vert \wh\cp - \cp \vert}{\sqrt{\Delta^\circ}} \vert \bm\Sigma \bm\delta_j \vert_\infty & \le C_0 (2 + \sqrt{2}) \l( 1 + \max_{j' \in \{j - 1, j\}} \vert \bm\beta_{j'} \vert_2 \r) \psi_{n, p} \sqrt{\frac{\vert \wh\cp - \cp_j \vert}{\Delta^\circ}}, \text{ \ such that} \nn \\
	\vert \bm\Sigma \bm\delta_j \vert_\infty^2 \vert \wh\cp - \cp_j \vert & \le \l(\frac{3\sqrt{6}(1 + \sqrt{2})C_0}{1 - 8C_0/c^\prime}\r)^2 \Psi_j^2 \psi_{n, p}^2, \nn
	\end{align}
	from which the conclusion follows with a large enough constant $c_1$.
\end{proof}

\subsubsection{Supporting lemmas}

\begin{lem}
	\label{lem:gamma}
	Suppose that Assumption~\ref{assum:func:dep} holds. 
	Then, for all $(s, k, e) \in \mc I^\prime$ with
	\begin{align*}
	\mc I^\prime =& \, \Bigl\{0 \le s < k < e \le n \; : \; \bigl\vert \{s + 1, \ldots, e - 1\} \cap \Cp \bigr\vert \le 1 
	\\ 
	& \qquad \text{and} \quad \min(k - s, e - k) > C_1 \psi_{n, p}^2 \Bigr\},
	\end{align*}
	we have, uniformly over $(s, k, e) \in \mc I^\prime$,
	\begin{align*}
    \sqrt{\frac{(k - s)(e - k)}{e - s}} \bigl\vert \wh{\bm\gamma}_{k, e} - \wh{\bm\gamma}_{s, k} - \l( \bm\gamma_{k, e} - \bm\gamma_{s, k} \r) \bigr\vert_\infty \;
	\le \; \sum_{j = 0}^q \Psi_j \mathbb{I}_{\{\cp_j < e \le \cp_{j + 1}\}} \cdot 2 C_0 \psi_{n, p} 
	\end{align*}
	conditional on $\mc E_{n, p}$ in~\eqref{eq:set:e}.
\end{lem}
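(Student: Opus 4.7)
The plan is to reduce the claim to $\ell_\infty$-controls on the centred partial sums $S_{a, b} := \sum_{t = a + 1}^{b} [\mbf x_t Y_t - \E(\mbf x_t Y_t)] = \sum_{t = a + 1}^{b} [(\mbf x_t \mbf x_t^\top - \bm\Sigma) \bm\beta_{\mathrm{seg}(t)} + \mbf x_t \vep_t]$, and then absorb the CUSUM weight via the elementary identity
\[
\sqrt{\frac{(k - s)(e - k)}{e - s}} \bigl[(\wh{\bm\gamma}_{k, e} - \bm\gamma_{k, e}) - (\wh{\bm\gamma}_{s, k} - \bm\gamma_{s, k})\bigr] = \sqrt{\frac{k - s}{(e - s)(e - k)}}\, S_{k, e} - \sqrt{\frac{e - k}{(e - s)(k - s)}}\, S_{s, k}.
\]
It therefore suffices to establish $\vert S_{s, k} \vert_\infty \le \Psi C_0 \psi_{n, p} \sqrt{k - s}$ and $\vert S_{k, e} \vert_\infty \le \Psi C_0 \psi_{n, p} \sqrt{e - k}$ on $\mc E_{n, p}$; the claim then follows from $\sqrt{\alpha} + \sqrt{1 - \alpha} \le \sqrt{2} < 2$ with $\alpha = (k - s)/(e - s)$.

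\textbf{Constant-parameter pieces.} On any sub-interval $(a, b] \subseteq (s, e]$ avoiding change points, $\bm\beta_{\mathrm{seg}(t)} \equiv \bm\beta_j$ for some $j$. Writing $\bm\beta_j = (\bm\mu_{j'} \pm \bm\delta_{j'})/2$ for a neighbouring change index $j'$, and applying $\mc E^{(2)}_{n, p}(\mbf e_\ell, \bm\mu_{j'} / \vert \bm\mu_{j'} \vert_2)$ and $\mc E^{(2)}_{n, p}(\mbf e_\ell, \bm\delta_{j'} / \vert \bm\delta_{j'} \vert_2)$ coordinate-wise in $\ell \in [p]$ together with $\mc E^{(1)}_{n, p}$, we obtain
\[
\vert S_{a, b} \vert_\infty \le \bigl(1 + (\vert \bm\mu_{j'} \vert_2 + \vert \bm\delta_{j'} \vert_2)/2\bigr) C_0 \psi_{n, p} \sqrt{\max(b - a, C_1 \psi_{n, p}^2)} \le \Psi\, C_0 \psi_{n, p} \sqrt{\max(b - a, C_1 \psi_{n, p}^2)},
\]
where the last step uses $1 + (\Psi - 1)/2 \le \Psi$ since $\Psi \ge 1$. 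When no change point lies in $(s, e)$, applying this with $(a, b] \in \{(s, k], (k, e]\}$ yields both required bounds immediately because $\min(k - s, e - k) > C_1 \psi_{n, p}^2$.

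\textbf{Single change-point piece.} If the unique change point $\cp_j$ in $(s, e)$ lies (WLOG) in $(s, k]$, the bound on $\vert S_{k, e} \vert_\infty$ is as above. For $\vert S_{s, k} \vert_\infty$, to avoid both a loss from $\vert \bm\beta_{j - 1} \vert_2 + \vert \bm\beta_j \vert_2$ exceeding $\Psi$ and the failure of the event bounds on sub-intervals shorter than $C_1 \psi_{n, p}^2$, I regroup via $\bm\beta_{j - 1} = (\bm\mu_j - \bm\delta_j)/2$ and $\bm\beta_j = (\bm\mu_j + \bm\delta_j)/2$ to get
\[
S_{s, k} = \sum_{t = s + 1}^{k} \mbf x_t \vep_t + \frac{1}{2} \sum_{t = s + 1}^{k} (\mbf x_t \mbf x_t^\top - \bm\Sigma) \bm\mu_j + \frac{1}{2} \biggl[\sum_{t = \cp_j + 1}^{k} - \sum_{t = s + 1}^{\cp_j}\biggr] (\mbf x_t \mbf x_t^\top - \bm\Sigma) \bm\delta_j.
\]
The first two terms are full-interval sums over $(s, k]$, so the event bounds give at most $(1 + \vert \bm\mu_j \vert_2/2) C_0 \psi_{n, p} \sqrt{k - s}$. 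The $\bm\delta_j$-piece is the only term that splits at $\cp_j$; bounding each sub-sum via $\mc E^{(2)}_{n, p}(\mbf e_\ell, \bm\delta_j / \vert \bm\delta_j \vert_2)$ and invoking the inequality
\[
\sqrt{\max(\cp_j - s, C_1 \psi_{n, p}^2)} + \sqrt{\max(k - \cp_j, C_1 \psi_{n, p}^2)} \le 2 \sqrt{k - s},
\]
valid under the hypothesis $k - s > C_1 \psi_{n, p}^2$, it contributes at most $\vert \bm\delta_j \vert_2 C_0 \psi_{n, p} \sqrt{k - s}$. Summing yields $\vert S_{s, k} \vert_\infty \le (1 + \vert \bm\mu_j \vert_2/2 + \vert \bm\delta_j \vert_2) C_0 \psi_{n, p} \sqrt{k - s} \le \Psi C_0 \psi_{n, p} \sqrt{k - s}$.

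\textbf{Main obstacle.} Combining the two bounds with the CUSUM identity gives $\sqrt{2}\, \Psi\, C_0 \psi_{n, p} < 2 C_0 \Psi \psi_{n, p}$, and since each event comprising $\mc E_{n, p}$ is stated uniformly over intervals, the bound transfers to a uniform one over $(s, k, e) \in \mc I^\prime$. The delicate step is the single change-point case: the $\bm\mu_j/\bm\delta_j$ regrouping is essential both for keeping the constant equal to $\Psi$ (rather than a multiple thereof, which would arise from bounding $\vert \bm\beta_{j - 1} \vert_2 + \vert \bm\beta_j \vert_2$ separately) and for localising the split-sum issue to a single term. The sub-interval inequality above must then be verified by separately checking the cases where both of $\cp_j - s$ and $k - \cp_j$ exceed $C_1 \psi_{n, p}^2$ (direct from $\sqrt{u} + \sqrt{v} \le \sqrt{2(u + v)}$) and where one does not (using $\sqrt{C_1}\, \psi_{n, p} \le \sqrt{k - s}$ from the defining hypothesis of $\mc I^\prime$).
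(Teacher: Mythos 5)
Your proposal is correct and follows essentially the same approach as the paper's proof: decompose the empirical detector into the $\vep$-noise part and the $(\mbf x_t\mbf x_t^\top - \bm\Sigma)$ part, use the $\bm\mu_j/\bm\delta_j$ reparametrisation so that the relevant constant is $\Psi$ rather than a multiple of it, and control the one sub-sum that is split at $\cp_j$ via the $\max(\cdot, C_1\psi_{n,p}^2)$ device in $\mc E^{(2)}_{n,p}$. Your reorganisation — bounding $\vert S_{s,k}\vert_\infty$ and $\vert S_{k,e}\vert_\infty$ separately and then applying $\sqrt{\alpha} + \sqrt{1-\alpha}\le\sqrt 2$ — is a minor tidying of the paper's direct absorption of the CUSUM weights into each term and yields the same $2C_0\Psi\psi_{n,p}$ bound (in fact the slightly sharper $\sqrt 2\,C_0\Psi\psi_{n,p}$).
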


\begin{proof} 
	Below we condition all our arguments on $\mc E_{n, p}$.
	We first consider the case $\{s + 1, \ldots, e - 1\} \cap \Cp \ne \emptyset$ and $s <  \cp_j < e$.
	We prove the case where $k \le \cp_j$; the case with $k \ge \cp_j + 1$ is similarly handled. Note that
	\begin{align*}
	& \sqrt{\frac{(k - s)(e - k)}{e - s}} \l\vert \wh{\bm\gamma}_{k, e} - \wh{\bm\gamma}_{s, k} - \l( \bm\gamma_{k, e} - \bm\gamma_{s, k} \r) \r\vert_\infty 
 \\
	\le & \,
	\sqrt{\frac{(k - s)(e - k)}{e - s}} \Biggl\vert \frac{1}{e - k} \sum_{t = k + 1}^e \mbf x_t \mbf x_t^\top \bm\beta_j  - \frac{1}{e - k} \sum_{t = k + 1}^{\cp_j} \mbf x_t \mbf x_t^\top \bm\delta_j
	\\
	& \qquad \qquad \qquad 
	- \frac{1}{k - s} \sum_{t = s + 1}^k \mbf x_t \mbf x_t^\top \bm\beta_{j - 1}
	- \frac{e - \cp_j}{e - k} \bm\Sigma \bm\delta_j \Biggr\vert_\infty
	\\
	& + \l\vert \frac{1}{\sqrt{k - s}} \sum_{t = s + 1}^k \mbf x_t \vep_t \r\vert_\infty
	+ \l\vert \frac{1}{\sqrt{e - k}} \sum_{t = k + 1}^e \mbf x_t \vep_t \r\vert_\infty
	=: T_1 + T_2 + T_3.
	\end{align*}
	By Lemma~\ref{lem:bound}, we have $T_2 + T_3 \le 2 C_0\psi_{n, p}$ on $\mc E_{n, p}^{(1)}$ uniformly for all $(s, k, e)$ in consideration.
	As for $T_1$, recalling that $\bm\beta_j = (\bm\mu_j + \bm\delta_j)/2$ and $\bm\beta_{j - 1} = (\bm\mu_j - \bm\delta_j)/2$, 
	\begin{align*}
	T_1 \le & \, \frac{1}{\sqrt{e - k}} \l\vert \sum_{t = k + 1}^e \l( \mbf x_t \mbf x_t^\top - \bm\Sigma \r) \frac{\bm\mu_j}{2} \r\vert_\infty
	+ \frac{1}{\sqrt{k - s}} \l\vert \sum_{t = s + 1}^k \l( \mbf x_t \mbf x_t^\top - \bm\Sigma \r) \frac{\bm\mu_j}{2} \r\vert_\infty
	\\
	& + \frac{1}{\sqrt{e - k}} \l\vert  \sum_{t = k + 1}^e \l( \mbf x_t \mbf x_t^\top - \bm\Sigma \r) \frac{\bm\delta_j}{2} \r\vert_\infty
	+ \frac{1}{\sqrt{k - s}} \l\vert \sum_{t = s + 1}^k \l( \mbf x_t \mbf x_t^\top - \bm\Sigma \r) \frac{\bm\delta_j}{2} \r\vert_\infty
	\\
	& + \sqrt{\frac{\max(\cp_j - k, C_1\psi_{n, p}^2)}{e - k}} \cdot \frac{1}{\sqrt{\max(\cp_j - k, C_1\psi_{n, p}^2)}} \l\vert  \sum_{t = k + 1}^{\cp_j} \l(\mbf x_t \mbf x_t^\top - \bm\Sigma \r) \bm\delta_j \r\vert_\infty
	\\
	\le & \, C_0 (\vert \bm\mu_j \vert_2 + 2\vert \bm\delta_j \vert_2 ) \psi_{n, p},
	\end{align*}
	by Lemma~\ref{lem:bound} and~\eqref{eq:set:e}.
	In the case where $\{s + 1, \ldots, e - 1\} \cap \Cp = \emptyset$, we have $T_1 \le C_0 \vert \bm\mu_j \vert_2 \psi_{n, p}$.
	Combining the bounds on $T_1$, $T_2$ and $T_3$, the conclusion follows.
\end{proof}

\begin{lem}
	\label{lem:cusum}
	Suppose that $0 \le s < e \le n$ satisfy $\{\cp_j\} \subset \{s + 1, \ldots, e - 1\} \cap \Cp \subset \{ \cp_j, \cp_{j + 1} \}$ for some $j \in [q]$. Then, writing $T^*_{s, k, e} = \sqrt{(k - s)(e - k)/(e - s)} \l\vert \bm\gamma_{k, e} - \bm\gamma_{s, k} \r\vert_\infty$, we have $\mathop{\arg\max}_{s < k < e} T^*_{s, k, e}  \in \{ \cp_j, \cp_{j + 1} \}$ and
	\begin{align*}
	\max_{s < k < e} T^*_{s, k, e} = \max\l\{ \sqrt{\frac{(\cp_j - s)(e - \cp_j)}{e - s}} \l\vert \bm\Sigma \l( \bm\delta_j + \frac{e - \cp_{j + 1}}{e - \cp_j} \bm\delta_{j + 1} \mathbb{I}_{\{\cp_{j + 1} < e\}} \r) \r\vert_\infty, \r. \\
	\l. \sqrt{\frac{(\cp_{j + 1} - s)(e - \cp_{j + 1})}{e - s}} \l\vert \bm\Sigma \l( \bm\delta_{j + 1} + \frac{\cp_j - s}{\cp_{j + 1} - s} \bm\delta_j \r) \r\vert_\infty 
	\mathbb{I}_{\{\cp_{j + 1} < e\}} \r\}.
	\end{align*}
\end{lem}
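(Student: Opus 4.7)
The plan is to directly evaluate the piecewise structure of $k \mapsto \bm\gamma_{k,e} - \bm\gamma_{s,k}$ and then argue coordinate-wise that the signed CUSUM attains its maximum in absolute value at one of the endpoints $\cp_j$ or $\cp_{j+1}$. Writing $\bm\alpha_0 = \bm\Sigma \bm\beta_{j-1}$, $\bm\alpha_1 = \bm\Sigma \bm\beta_j$ and, when $\cp_{j+1} < e$, $\bm\alpha_2 = \bm\Sigma \bm\beta_{j+1}$, I split $(s,e)$ into at most three sub-intervals $(s, \cp_j]$, $(\cp_j, \cp_{j+1}]$, and $(\cp_{j+1}, e)$, and compute $\bm\gamma_{s,k}$ and $\bm\gamma_{k,e}$ as weighted averages of the $\bm\alpha_i$. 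Using $\bm\alpha_1 - \bm\alpha_0 = \bm\Sigma \bm\delta_j$ and $\bm\alpha_2 - \bm\alpha_1 = \bm\Sigma \bm\delta_{j+1}$, each region yields a closed form for $\bm\gamma_{k,e} - \bm\gamma_{s,k}$; direct evaluation at $k = \cp_j$ and $k = \cp_{j+1}$ reproduces the two expressions claimed for $T^*_{s,\cp_j,e}$ and $T^*_{s,\cp_{j+1},e}$.

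Next, fix $\ell \in [p]$ and set $f_\ell(k) = \sqrt{(k-s)(e-k)/(e-s)} \, [\bm\gamma_{k,e} - \bm\gamma_{s,k}]_\ell$, so that $T^*_{s,k,e} = \max_{\ell} |f_\ell(k)|$. On the first sub-interval the formula reduces $|f_\ell(k)|$ to $|c_\ell| \sqrt{(k-s)/[(e-s)(e-k)]}$ for some constant $c_\ell$ independent of $k$; this is strictly increasing in $k$, so the supremum on $(s, \cp_j]$ is attained at $\cp_j$. A symmetric computation shows that $|f_\ell(k)|$ is strictly decreasing on $[\cp_{j+1}, e)$ with supremum at $\cp_{j+1}$.

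The main obstacle is the middle region $[\cp_j, \cp_{j+1}]$, where
\begin{equation*}
f_\ell(k) = \frac{\cp_j - s}{\sqrt{e-s}} \sqrt{\frac{e-k}{k-s}} [\bm\Sigma \bm\delta_j]_\ell + \frac{e - \cp_{j+1}}{\sqrt{e-s}} \sqrt{\frac{k-s}{e-k}} [\bm\Sigma \bm\delta_{j+1}]_\ell.
\end{equation*}
Substituting $x = \sqrt{(e-k)/(k-s)}$ recasts this as $\varphi(x) = u x + v/x$, with $x$ ranging over the compact interval $\bigl[\sqrt{(e-\cp_{j+1})/(\cp_{j+1}-s)},\, \sqrt{(e-\cp_j)/(\cp_j - s)}\bigr]$ and constants $u, v$ determined by $[\bm\Sigma \bm\delta_j]_\ell$, $[\bm\Sigma \bm\delta_{j+1}]_\ell$ and the segment lengths. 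The key claim is that $|\varphi|$ attains its maximum on any such interval at an endpoint, by a short case analysis on the sign of $uv$. If $uv \ge 0$, then $\varphi$ keeps a constant sign, $|\varphi(x)| = |u|x + |v|/x$ is strictly convex on $(0,\infty)$, and any interior critical point is a minimum, forcing the maximum to an endpoint. If $uv < 0$, then $\varphi'(x) = u - v/x^2$ has constant sign, so $\varphi$ is monotonic; a monotonic function has $|\varphi|$ maximised at an endpoint as well.

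Combining the three regions, $\max_k |f_\ell(k)| \in \{|f_\ell(\cp_j)|,\, |f_\ell(\cp_{j+1})|\}$ for every $\ell$, and exchanging the maxima over $\ell$ and $k$ yields $\max_{s<k<e} T^*_{s,k,e} = \max\bigl\{T^*_{s,\cp_j,e},\, T^*_{s,\cp_{j+1},e}\bigr\}$, as stated. The degenerate case $\cp_{j+1} \ge e$ is handled by the same argument with the middle region collapsing and the $\bm\delta_{j+1}$ term dropping out, consistent with the indicator $\mathbb{I}_{\{\cp_{j+1}<e\}}$ in the lemma.
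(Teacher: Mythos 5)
Your proof is correct, and it takes a genuinely more elementary route than the paper. The paper dispatches the first claim (that the argmax lies in $\{\cp_j, \cp_{j+1}\}$) by citing Lemma~8 of Wang and Samworth (2018) for high-dimensional CUSUM statistics, and then simply evaluates $T^*_{s,k,e}$ at $k=\cp_j$ and $k=\cp_{j+1}$. You instead prove the endpoint-maximisation property from scratch by splitting $(s,e)$ into at most three sub-intervals, computing $\bm\gamma_{k,e}-\bm\gamma_{s,k}$ in closed form on each, and observing that the scaled coordinate functions are of the form $c_\ell\sqrt{(k-s)/(e-k)}$, $c'_\ell\sqrt{(e-k)/(k-s)}$, and $ux + v/x$ respectively. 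Your sign-case analysis on the middle region (convexity when $uv\ge 0$, monotonicity when $uv<0$) is a clean self-contained substitute for the cited lemma, and the final exchange of maxima over $\ell$ and $k$ is valid because the two maxima genuinely commute and each $\max_k\vert f_\ell(k)\vert$ coincides with $\max\{\vert f_\ell(\cp_j)\vert, \vert f_\ell(\cp_{j+1})\vert\}$. One tiny imprecision: when $v=0$ the function $ux+v/x$ is linear rather than strictly convex, but the endpoint-maximisation conclusion still holds, so this does not affect the argument. Overall, the paper's approach is shorter by outsourcing the structural step to an existing reference; yours is fully self-contained and makes explicit exactly why the piecewise-affine structure of $k\mapsto\sum_{t\le k}\Cov(\mbf x_t,Y_t)$ forces the maximum to a change point, which is arguably more instructive.
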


\begin{proof} 
	The first statement follows from Lemma~8 of \cite{wang2018high}.
	Note that
	\begin{align*}
	T^*_{s, \cp_j, e} &= \sqrt{\frac{(\cp_j - s)(e - \cp_j)}{e - s}} \l\vert \frac{1}{e - \cp_j} \bm\Sigma \l( (\cp_{j + 1} \wedge e - \cp_j) \bm\beta_j + (e - \cp_{j + 1}) \bm\beta_{j + 1} \mathbb{I}_{\{ \cp_{j + 1} < e\}} \r) - \bm\Sigma \bm\beta_{j - 1} \r\vert_\infty
	\\
	&= \sqrt{\frac{(\cp_j - s)(e - \cp_j)}{e - s}} \l\vert \bm\Sigma \l( \bm\delta_j + \frac{e - \cp_{j + 1}}{e - \cp_j} \bm\delta_{j + 1} \mathbb{I}_{\{\cp_{j + 1} < e\}} \r) \r\vert_\infty.
	\end{align*}
	We can analogously derive $T^*_{s, \cp_{j + 1}, e}$ and from that $\max_{s < k < e} T^*_{s, k, e} = \max\{ T^*_{s, \cp_j, e}, T^*_{s, \cp_{j + 1}, e} \}$, the second statement follows.
	In particular, when $\{s + 1, \ldots, e - 1\} \cap \Cp = \{\cp_j\}$, we have
	\begin{align}
	T^*_{s, k, e} = \frac{ \vert \bm\Sigma \bm\delta_j \vert_\infty }{\sqrt{e - s}} \l( (e - \cp_j) \sqrt{\frac{k - s}{e - k}} \mathbb{I}_{\{k \le \cp_j\}} + (\cp_j - s) \sqrt{\frac{e - k}{k - s}} \mathbb{I}_{\{k > \cp_j\}} \r).
	\label{eq:cusum:diff}
	\end{align}
\end{proof}

\begin{lem}
	\label{lem:sqrt:diff}
	Suppose that $0 \le s < e \le n$ and $e - s > 2$. For any $(a, b)$ satisfying $s < a < b < e$ and $b - a \le \min(b - s, e - b)/4$, we have 
	\begin{align*}
	\l\vert \sqrt{\frac{b - s}{e - b}} - \sqrt{\frac{a - s}{e - a}} \r\vert &\le \sqrt{\frac{b - s}{e - b}} \cdot \frac{(b - a)(e - s)}{(b - s)(e - b)}, \text{ \ and}
	\\
	\l\vert \sqrt{\frac{e - a}{a - s}} - \sqrt{\frac{e - b}{b - s}} \r\vert &\le \sqrt{\frac{e - b}{b - s}} \cdot \frac{(b - a)(e - s)}{(b - s)(e - b)}.
	\end{align*}
\end{lem}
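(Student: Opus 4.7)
\textbf{Proof plan for Lemma~\ref{lem:sqrt:diff}.} Both inequalities will follow from standard gradient-type bounds on differences of square roots, combined with one algebraic identity. A direct expansion yields
\begin{align*}
\frac{b-s}{e-b} - \frac{a-s}{e-a} = \frac{(b-s)(e-a)-(a-s)(e-b)}{(e-a)(e-b)} = \frac{(b-a)(e-s)}{(e-a)(e-b)}, \\
\frac{e-a}{a-s} - \frac{e-b}{b-s} = \frac{(e-a)(b-s)-(e-b)(a-s)}{(a-s)(b-s)} = \frac{(b-a)(e-s)}{(a-s)(b-s)},
\end{align*}
and both differences are positive since $s<a<b<e$, so the absolute values in the statement can be dropped. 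The common factor $(b-a)(e-s)$ matches the numerator on the RHS of both inequalities, so only the denominators need to be matched.

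For the first inequality, the plan is to apply $\sqrt{x}-\sqrt{y}\le (x-y)/\sqrt{x}$ (which holds for $x\ge y>0$ since $\sqrt{x}+\sqrt{y}\ge\sqrt{x}$) with $x=(b-s)/(e-b)$ and $y=(a-s)/(e-a)$. This produces the intermediate bound $(b-a)(e-s)/[(e-a)\sqrt{(b-s)(e-b)}]$. The elementary inequality $e-a\ge e-b$, which is already contained in $a<b$, then rewrites this as $\sqrt{(b-s)/(e-b)}\cdot (b-a)(e-s)/[(b-s)(e-b)]$, matching the stated RHS. Notably, the quantitative hypothesis $b-a\le\min(b-s,e-b)/4$ is not needed here.

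For the second inequality, the symmetric attempt using $(x-y)/\sqrt{x}$ fails: it leaves $\sqrt{(a-s)(e-a)}$ in the denominator, which cannot be converted to the required $(b-s)^{3/2}\sqrt{e-b}$ since the inequality $(a-s)(e-a)\ge (b-s)(e-b)$ does not follow from the hypotheses (for instance, if both $a$ and $b$ lie in the left half of $(s,e)$, the LHS is the smaller quantity). Instead I would use the alternative bound $\sqrt{x}-\sqrt{y}\le (x-y)/(2\sqrt{y})$ (valid for $x\ge y>0$ via $\sqrt{x}+\sqrt{y}\ge 2\sqrt{y}$) with $x=(e-a)/(a-s)$ and $y=(e-b)/(b-s)$. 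This yields an intermediate bound of the form $(b-a)(e-s)/[2(a-s)\sqrt{(b-s)(e-b)}]$. This is where the assumption $b-a\le (b-s)/4$ enters: it forces $a-s\ge 3(b-s)/4$, and hence $2(a-s)\ge (b-s)$. Replacing $2(a-s)$ by $(b-s)$ in the denominator gives exactly the stated RHS.

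The main obstacle, such as it is, is the asymmetry between the two inequalities. The first is a consequence of the ordering $s<a<b<e$ alone, whereas the second genuinely requires the quantitative assumption on $b-a$, and the correct choice of gradient bound (the $2\sqrt{y}$ form rather than the $\sqrt{x}$ form) must be made so that the constants reduce precisely to the stated RHS without extraneous factors.
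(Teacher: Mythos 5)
Your proof is correct, and it takes a genuinely different route from the paper. The paper's proof factors out $\sqrt{(b-s)/(e-b)}$ (respectively $\sqrt{(e-b)/(b-s)}$) and then bounds the remaining factor $1-\sqrt{(1-u)/(1+v)}$ (respectively $\sqrt{(1+v)/(1-u)}-1$), where $u=(b-a)/(b-s)$ and $v=(b-a)/(e-b)$, by a two-variable Taylor expansion using $u,v\le 1/4$ and then simplifies the resulting polynomial constants. You instead compute the difference of the ratios exactly — both numerators collapse to $(b-a)(e-s)$ — and then apply the secant identity $\sqrt{x}-\sqrt{y}=(x-y)/(\sqrt{x}+\sqrt{y})$, bounding the denominator below by $\sqrt{x}$ in the first case and by $2\sqrt{y}$ in the second. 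Your route is more elementary and arguably tighter: it replaces delicate Taylor bookkeeping (whose individual factor bounds require some care to justify) with one exact algebraic identity and a trivially verified inequality for square roots. You also isolate cleanly where the quantitative assumption is actually used: the first inequality needs only the ordering $s<a<b<e$ (you only invoke $e-a\ge e-b$), while the second genuinely uses $b-a\le(b-s)/4$ via $a-s\ge 3(b-s)/4\ge(b-s)/2$. This asymmetry is present but not made explicit in the paper's symmetric Taylor treatment, and identifying it is a useful observation.
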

\begin{proof}
	By Taylor expansion, we have
	\begin{align*}
	\l\vert \sqrt{\frac{b - s}{e - b}} - \sqrt{\frac{a - s}{e - a}} \r\vert &= \sqrt{\frac{b - s}{e - b}} \l( 1 - \sqrt{\frac{1 - \frac{b - a}{b - s}}{1 + \frac{b - a}{e - b}}} \r)
	\\
	&\le \sqrt{\frac{b - s}{e - b}} \l( 1 - \l(1 - \frac{1}{2} \cdot \frac{b - a}{b - s} - \frac{1}{8} \l( \frac{b - a}{b - s} \r)^2 \r) \l( 1 - \frac{1}{2} \cdot \frac{b - a}{e - b} \r) \r)
	\\
	&\le \sqrt{\frac{b - s}{e - b}} \l( \frac{17}{32} \cdot \frac{b - a}{b - s} + \frac{1}{2} \cdot \frac{b - a}{e - b}  \r) \le \sqrt{\frac{b - s}{e - b}} \frac{(b - a)(e - s)}{(b - s) (e - b)}.
	\end{align*}
	Analogously,
	\begin{align*}
	\l\vert \sqrt{\frac{e - a}{a - s}} - \sqrt{\frac{e - b}{b - s}} \r\vert &= \sqrt{\frac{e - b}{b - s}} \l( \sqrt{\frac{1 + \frac{b - a}{e - b}}{1 - \frac{b - a}{b - s}}} - 1 \r)
	\\
	&\le \sqrt{\frac{e - b}{b - s}} \l(\l( 1 + \frac{1}{2} \cdot \frac{b - a}{e - b} \r) \l(1 + \frac{1}{2} \cdot \frac{b - a}{b - s} \r) - 1\r)
	\\
	&\le \sqrt{\frac{e - b}{b - s}} \l( \frac{9}{16} \cdot \frac{b - a}{b - s} + \frac{1}{2} \cdot \frac{b - a}{e - b}  \r) \le \sqrt{\frac{e - b}{b - s}} \frac{(b - a)(e - s)}{(b - s) (e - b)}.
	\end{align*}
\end{proof}

\subsection{Proofs for the results in Section~\ref{sec:diff}}

\subsubsection{Proof of Proposition~\ref{prop:dp:est}}
\label{pf:prop:lasso}

We first prove the following more general result.

\begin{prop}
	\label{prop:lasso}
	Suppose that Assumptions~\ref{assum:xe}, \ref{assum:func:dep}  and~\ref{assum:xe:iii} hold.
	For some constant $\crsc$ depending on $\kappa$ (with $\kappa = 0$ under Assumption~\ref{assum:func:dep}~\ref{assum:fd:gauss} and $\kappa = \max(1/r - 1/2, 0)$ under~\ref{assum:fd:ind}), define
	\begin{multline*}
	\mc I = \, \Bigl\{0 \le s < k < e \le n\; : \; \{s + 1, \ldots, e - 1\} \cap \Cp = \{ \cp_j \} \text{ \ for some \ } j \in [q],
	\\ 
	\quad  \min(k - s, e - k) > C_1\psi_{n, p}^2 \text{ and } e - s \ge \underline{\sigma}^{-2}\bigl(64 \mathfrak{s}_j \crsc \log(p) \bigr)^{1 + 2\kappa}
	\Bigr\}.
	\end{multline*}
	Then conditional on $\mc E_{n, p} \cap \mc R_{n, p}$, where $\mc E_{n, p}$ is defined in~\eqref{eq:set:e} and $\mc R_{n, p}$ in Lemma~\ref{lem:rsc} below, we have $\wh{\bm\delta}_{s, e}(k)$ obtained with $\lambda = C_\lambda \Psi_j \psi_{n, p}$ (where $j$ is the index of $\cp_j \in (s, e)$) and $C_\lambda \ge 5 C_0$, satisfy the following uniformly for all $(s, k, e) \in \mc I$: $\vert \wh{\bm\delta}_{s, e}(k) \vert_1 \lesssim \vert \bm\delta_{s, e}(k) \vert_1$ and
	\begin{align*}
	& 
 \sqrt{\min(k - s, e - k)} \l\vert \wh{\bm\delta}_{s, e}(k) - \bm\delta_{s, e}(k) \r\vert_2 \lesssim \frac{\Psi_j \sqrt{\mathfrak{s_j}} \psi_{n, p}}{\underline{\sigma}}, 
	\\
	&  
 \sqrt{\min(k - s, e - k)} \l\vert \wh{\bm\delta}_{s, e}(k) - \bm\delta_{s, e}(k) \r\vert_1 \lesssim \frac{\Psi_j \mathfrak{s}_j \psi_{n, p}}{\underline{\sigma}},
	\end{align*}
	where the unspecified constants depend only on $\Xi$ and $\kappa$.
\end{prop}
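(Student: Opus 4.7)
I set $\bm\delta^\star := \bm\delta_{s,e}(k)$ and $\mbf v := \wh{\bm\delta}_{s,e}(k) - \bm\delta^\star$, and proceed along the standard $\ell_1$-regularised route via two ingredients: a noise bound supplied by $\mc E_{n,p}$, and a uniform restricted eigenvalue (RE) bound on $\wh{\bm\Sigma}_{s,e}$ that should be exactly the content of Lemma~\ref{lem:rsc} encoded by $\mc R_{n,p}$. The starting observation is that, for $(s,k,e)\in\mc I$ with $\{s+1,\ldots,e-1\}\cap\Cp=\{\cp_j\}$, a direct evaluation of $\bm\gamma_{a,b}$ along the lines of the display preceding~\eqref{eq:detector} gives $\bm\delta^\star = c\,\bm\delta_j$ with some $c\in[0,1]$ (and $\bm\delta^\star=\mbf 0$ when no change point lies in $(s,e)$). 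Consequently $\mathrm{supp}(\bm\delta^\star)\subseteq\mc S_j$, so $\vert\bm\delta^\star\vert_0\le\mathfrak{s}$ and $\vert\bm\delta^\star\vert_2\le\Psi$; moreover, the identity $\bm\Sigma\bm\delta^\star = \bm\gamma_{k,e}-\bm\gamma_{s,k}$ links the estimand of $\wh{\bm\delta}_{s,e}(k)$ to that of $\wh{\bm\gamma}_{k,e}-\wh{\bm\gamma}_{s,k}$, a fact used repeatedly below.

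\textbf{Noise control.} Writing $\tau := \sqrt{(k-s)(e-k)/(e-s)}$, the target is
\begin{equation*}
\tau\,\bigl\vert \wh{\bm\Sigma}_{s,e}\bm\delta^\star - \wh{\bm\gamma}_{k,e} + \wh{\bm\gamma}_{s,k}\bigr\vert_\infty \; \le \; \tfrac{5}{2}C_0\Psi\psi_{n,p}
\end{equation*}
uniformly over $(s,k,e)\in\mc I$, on $\mc E_{n,p}$. The piece $\vert\wh{\bm\gamma}_{k,e}-\wh{\bm\gamma}_{s,k} - \bm\Sigma\bm\delta^\star\vert_\infty$ is bounded directly by \cref{lem:gamma}, contributing $2C_0\Psi\psi_{n,p}/\tau$. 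The piece $\vert(\wh{\bm\Sigma}_{s,e}-\bm\Sigma)\bm\delta^\star\vert_\infty$ is controlled coordinate-wise via $\mc E_{n,p}^{(2)}(\mbf e_\ell,\bm\delta_j/\vert\bm\delta_j\vert_2)$ for $\ell\in[p]$, both of which lie in the set $\mc A$ of~\eqref{eq:set:e}, and then using $\bm\delta^\star\propto\bm\delta_j$ and $\sqrt{e-s}\ge 2\tau$ to obtain $C_0\Psi\psi_{n,p}/(2\tau)$; the two pieces sum to the displayed bound. The hypothesis $C_\lambda\ge 5\sqrt 2\,C_0$ is precisely sharp to guarantee, in~\eqref{eq:direct:est}, that $\bm\delta^\star$ (and hence $\wh{\bm\delta}$) is feasible with gap $\lambda\ge 2\cdot$(scaled noise), and in~\eqref{eq:lasso:est} that the empirical gradient error does not exceed $\lambda/(2\sqrt{\min(k-s,e-k)})$, i.e.\ half the effective penalty.

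\textbf{Cone condition and closing.} From here the two estimators are handled in parallel. For CLOM, feasibility of both $\wh{\bm\delta}$ and $\bm\delta^\star$ combined with optimality give $\vert\wh{\bm\delta}\vert_1\le\vert\bm\delta^\star\vert_1$ and, writing $\mc S^\star=\mathrm{supp}(\bm\delta^\star)$, the cone inclusion $\vert\mbf v_{\mc S^{\star c}}\vert_1\le\vert\mbf v_{\mc S^\star}\vert_1$; for LOPE, the usual basic inequality combined with the gradient bound yields $\vert\mbf v_{\mc S^{\star c}}\vert_1\le 3\,\vert\mbf v_{\mc S^\star}\vert_1$ and, by inserting $\vert\mbf v\vert_1\le\vert\wh{\bm\delta}\vert_1+\vert\bm\delta^\star\vert_1$ into the basic inequality, also $\vert\wh{\bm\delta}\vert_1\le 3\vert\bm\delta^\star\vert_1$. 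Either way $\vert\mbf v\vert_1\le 4\sqrt{\mathfrak{s}}\,\vert\mbf v\vert_2$. The RE bound from Lemma~\ref{lem:rsc}, which I anticipate in the form $\mbf v^\top\wh{\bm\Sigma}_{s,e}\mbf v\gtrsim\underline{\sigma}\,\vert\mbf v\vert_2^2$ on this cone, combined with the basic inequality in the form $\mbf v^\top\wh{\bm\Sigma}_{s,e}\mbf v\lesssim\Psi\psi_{n,p}\sqrt{\mathfrak{s}}\,\vert\mbf v\vert_2/\tau$, delivers $\vert\mbf v\vert_2\lesssim\sqrt{\mathfrak{s}}\,\Psi\psi_{n,p}/(\underline{\sigma}\,\tau)$. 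Since $\tau\asymp\sqrt{\min(k-s,e-k)}$, this is the stated $\ell_2$ rate; the $\ell_1$ rate follows from the cone condition.

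\textbf{Main obstacle.} The only substantive technical burden is the uniform RE bound. The sample-size threshold $e-s\ge\underline{\sigma}^{-2}(64\mathfrak{s}\crsc\log p)^{1+2\kappa}$ in the definition of $\mc I$ is tailored to push Zhang--Wu-type concentration of $\wh{\bm\Sigma}_{s,e}-\bm\Sigma$ through a peeling argument over $\mathfrak{s}$-sparse cones under the sub-Weibull functional-dependence condition Assumption~\ref{assum:func:dep}~\ref{assum:fd:exp}; the $(1+2\kappa)$ exponent inflating the classical $\mathfrak{s}\log p$ rate is the heavy-tail price. Uniformising over the $O(n^2)$ admissible pairs $(s,e)$ costs only logarithmic factors that are absorbed into $\crsc$. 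Once Lemma~\ref{lem:rsc} is in place, everything else is the routine calculation outlined above, giving the claimed bounds with probability at least $\p(\mc E_{n,p}\cap\mc R_{n,p})$.
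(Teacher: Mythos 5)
Your proposal is correct and follows essentially the same route as the paper's proof: the feasibility/gradient bound from Lemma~\ref{lem:gamma} and the $\mc E_{n,p}^{(2)}$ events in~\eqref{eq:set:e}, the split into a $2C_0\Psi\psi_{n,p}$ piece (from the change in $\wh{\bm\gamma}$) and a $\tfrac{1}{2}C_0\Psi\psi_{n,p}$ piece (from $(\wh{\bm\Sigma}_{s,e}-\bm\Sigma)\bm\delta^\star$, using $\bm\delta^\star\propto\bm\delta_j$ and $\sqrt{e-s}\ge 2\tau$), the resulting cone condition (with constants $1$ for CLOM and $3$ for LOPE), and the relaxed RE bound from $\mc R_{n,p}$ absorbed via the sample-size threshold $e-s\ge\underline{\sigma}^{-2}(64\mathfrak{s}\crsc\log p)^{1+2\kappa}$. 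Your derivation of $\vert\wh{\bm\delta}\vert_1\le 3\vert\bm\delta^\star\vert_1$ via inserting $\vert\mbf v\vert_1\le\vert\wh{\bm\delta}\vert_1+\vert\bm\delta^\star\vert_1$ into the basic inequality is exactly the paper's argument. The only cosmetic difference is in your description of the RE bound: the paper's Lemma~\ref{lem:rsc} gives a ``soft'' RE of the Loh--Wainwright form $\mbf v^\top\wh{\bm\Sigma}_{s,e}\mbf v\ge\frac{\underline{\sigma}}{2}\vert\mbf v\vert_2^2 - c\log(p)(e-s)^{-1/(1+2\kappa)}\vert\mbf v\vert_1^2$ valid for \emph{all} $\mbf v$, proved by a single discretisation over $\mathbb{K}(2b_{s,e})$ rather than peeling; the restriction to the cone then enters only when absorbing the $\vert\mbf v\vert_1^2$ term using $\vert\mbf v\vert_1\le 4\sqrt{\mathfrak{s}}\vert\mbf v\vert_2$. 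Either way, the $(1+2\kappa)$ exponent and the $\Xi$-dependent constants are handled identically, so you have reproduced the proof.
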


\begin{proof} 
Let us suppose that $\{s + 1, \ldots, e - 1\} \cap \Cp = \{ \cp_j \}$. 

	\paragraph{LOPE, $\ell_1$-penalised estimator in~\eqref{eq:lasso:est}.}
	By Lemma~\ref{lem:gamma} and the identity that $\bm\Sigma \bm\delta_{s, e}(k) = \bm\gamma_{k, e} + \bm\gamma_{s, k}$, we have conditional on $\mc E_{n, p}$,
	\begin{align}
	& \sqrt{\frac{(k - s)(e - k)}{e - s}} \l\vert \wh{\bm\Sigma}_{s, e} \bm\delta_{s, e}(k) - \wh{\bm\gamma}_{k, e} + \wh{\bm\gamma}_{s, k} \r\vert_\infty \nn\\
	\le & \sqrt{\frac{(k - s)(e - k)}{e - s}} \l[ \l\vert \l( \wh{\bm\Sigma}_{s, e} - \bm\Sigma \r) \bm\delta_{s, e}(k) \r\vert_\infty +
	\l\vert \wh{\bm\gamma}_{k, e} - \wh{\bm\gamma}_{s, k} - (\bm\gamma_{k, e} - \bm\gamma_{s, k}) \r\vert_\infty
	\r]
	\nn \\
	\le & C_0 \l( \frac{\sqrt{(k - s)(e - k)}}{e - s} \max_{j' \in \{j - 1, j\}} \vert \bm\delta_{j'} \vert_2 + 2 \Psi_j \r) \psi_{n, p} \le \frac{5}{2}C_0 \Psi_j \psi_{n, p}
	\label{eq:feasible}
	\end{align}
	for all such $(s, k, e) \in \mc I$.
	Also by the Karush--Kuhn--Tucker condition, we have 
	\begin{align*}
	\mbf 0 \in \wh{\bm\Sigma}_{s, e} \wh{\bm\delta}_{s, e}(k) - \wh{\bm\gamma}_{k, e} + \wh{\bm\gamma}_{s, k} + \lambda \sqrt{\frac{e - s}{(k - s)(e - k)}} \mathrm{sgn} \l(\wh{\bm\delta}_{s, e}(k) \r)
	\end{align*}
	where, defined as 
	\begin{align*}
	\mathrm{sgn}(x) = \begin{cases}
	1 & \text{ if } x > 0,\\
	-1 & \text{ if } x < 0,\\
	[-1, 1] & \text{ if } x = 0,
	\end{cases}
	\end{align*}
	the operator $\mathrm{sgn}(\cdot)$ also applies to vectors element-wise. It implies that 
	\begin{align}
	\label{eq:lasso:kkt}
	\l\vert \wh{\bm\Sigma}_{s, e} \wh{\bm\delta}_{s, e}(k) - \wh{\bm\gamma}_{k, e} + \wh{\bm\gamma}_{s, k} \r\vert_\infty \le \lambda \sqrt{\frac{e - s}{(k - s)(e - k)}}.
	\end{align}
	By the definition of $\wh{\bm\delta}_{s, e}(k)$, we have
	\begin{align}
	& \frac{1}{2} \bigl(\wh{\bm\delta}_{s, e}(k)\bigr)^\top \wh{\bm\Sigma}_{s, e} \wh{\bm\delta}_{s, e}(k) - \bigl(\wh{\bm\delta}_{s, e}(k)\bigr)^\top (\wh{\bm\gamma}_{k, e} - \wh{\bm\gamma}_{s, k}) + \lambda \sqrt{\frac{e - s}{(k - s)(e - k)}} \vert \wh{\bm\delta}_{s, e}(k) \vert_1
	\nn \\
	\le & \, \frac{1}{2} \bigl({\bm\delta}_{s, e}(k)\bigr)^\top \wh{\bm\Sigma}_{s, e} {\bm\delta}_{s, e}(k) - \bigl({\bm\delta}_{s, e}(k)\bigr)^\top (\wh{\bm\gamma}_{k, e} - \wh{\bm\gamma}_{s, k}) + \lambda \sqrt{\frac{e - s}{(k - s)(e - k)}} \vert {\bm\delta}_{s, e}(k) \vert_1, 
	\nn 
	\end{align}
	such that with $\wh{\mbf v} = \wh{\bm\delta}_{s, e}(k) - \bm\delta_{s, e}(k)$,
	\begin{align}
	\frac{1}{2} \wh{\mbf v}^\top \wh{\bm\Sigma}_{s, e} \wh{\mbf v}  \le & \, \lambda \sqrt{\frac{e - s}{(k - s)(e - k)}} \l( \vert \bm\delta_{s, e}(k) \vert_1 - \vert \bm\delta_{s, e}(k) + \wh{\mbf v} \vert_1 \r)
	\nn \\
	&  \qquad - \wh{\mbf v}^\top \l( \wh{\bm\Sigma}_{s, e} {\bm\delta}_{s, e}(k) - \wh{\bm\gamma}_{k, e} + \wh{\bm\gamma}_{s, k} \r)\nn\\
	\le & \, \lambda \sqrt{\frac{e - s}{(k - s)(e - k)}} \l( \vert \bm\delta_{s, e}(k) \vert_1 - \vert \bm\delta_{s, e}(k) + \wh{\mbf v} \vert_1 \r)
	\nn \\
	& \qquad +\vert \wh{\mbf v} \vert_1 \l\vert \wh{\bm\Sigma}_{s, e} {\bm\delta}_{s, e}(k) - \wh{\bm\gamma}_{k, e} + \wh{\bm\gamma}_{s, k} \r\vert_{\infty}.
	\label{eq:lasso:one}
	\end{align}
	By~\eqref{eq:feasible}, we have
	\begin{align}
	\label{eq:feasible:two}
	\l\vert \wh{\bm\Sigma}_{s, e} {\bm\delta}_{s, e}(k) - \wh{\bm\gamma}_{k, e} + \wh{\bm\gamma}_{s, k} \r\vert_\infty \le \frac{5}{2} C_0 \Psi_j \psi_{n, p} \sqrt{\frac{e - s}{(k - s)(e - k)}} \le \frac{1}{2} \cdot \lambda \sqrt{\frac{e - s}{(k - s)(e - k)}},
	\end{align}
	where the last inequality holds by setting $\lambda = C_\lambda \Psi \psi_{n, p}$ with $C_\lambda \ge 5 C_0$. Also, writing $\mbf a^{\mc S_j} = (a_i, \, i \in \mc S_j)^\top$ for any vector $\mbf a \in \R^p$, we have
	\begin{align*}
	\vert \bm\delta_{s, e}(k) \vert_1 - \vert \bm\delta_{s, e}(k) + \wh{\mbf v} \vert_1
	\le \vert \bm\delta_{s, e}^{\mc S_j}(k) \vert_1 - \l( \vert \bm\delta_{s, e}^{\mc S_j}(k) \vert_1 - \vert \wh{\mbf v}^{\mc S_j} \vert_1 + \vert \wh{\mbf v}^{\mc S_j^c} \vert_1 \r) = \vert \wh{\mbf v}^{\mc S_j} \vert_1 - \vert \wh{\mbf v}^{\mc S_j^c} \vert_1.
	\end{align*}
	This together with~\eqref{eq:lasso:one}  and the positive definiteness of $\wh{\bm\Sigma}_{s, e}$ implies that
	\begin{align}
	\label{eq:lasso:two}
	0 &\le \frac{1}{2} \wh{\mbf v}^\top \wh{\bm\Sigma}_{s, e} \wh{\mbf v} \le \lambda \sqrt{\frac{e - s}{(k - s)(e - k)}} \l( \frac{3}{2} \vert \wh{\mbf v}^{\mc S_j} \vert_1 - \frac{1}{2} \vert \wh{\mbf v}^{\mc S_j^c} \vert_1 \r),
	\end{align}
	which further implies that
	\begin{align}
	\label{eq:lasso:three}
	\vert \wh{\mbf v}^{\mc S_j^c} \vert_1 \le 3 \vert \wh{\mbf v}^{\mc S_j} \vert_1 
	\text{ \ and \ }
	\vert \wh{\mbf v} \vert_1 \le 4\vert \wh{\mbf v}^{\mc S_j} \vert_1 \le 4 \sqrt{\mathfrak{s}_j} \vert \wh{\mbf v} \vert_2. 
	\end{align}
	Conditional on $\mc R_{n, p}$, we have
	\begin{align*}
	\frac{1}{2} \wh{\mbf v}^\top \wh{\bm\Sigma}_{s, e} \wh{\mbf v} \ge \frac{\underline{\sigma}}{4} \vert \wh{\mbf v} \vert_2^2 - \frac{{\underline{\sigma}}^{\frac{2\kappa - 1}{2\kappa + 1}} \crsc \log(p)}{2 (e - s)^{\frac{1}{1 + 2\kappa}}} \vert \wh{\mbf v} \vert_1^2 \ge \frac{\underline{\sigma}}{4} \vert \wh{\mbf v} \vert_2^2 \l( 1 - \frac{32 \crsc \mathfrak{s}_j \log(p) }{\bigl(\underline{\sigma}^2(e - s)\bigr)^{\frac{1}{1 + 2\kappa}}} \r) \ge \frac{\underline{\sigma}}{8} \vert \wh{\mbf v} \vert_2^2,
	\end{align*}
	from the requirement on $e - s$.
	Combining this with~\eqref{eq:lasso:two} and~\eqref{eq:lasso:three}, we have
	\begin{align*}
	& \frac{\underline{\sigma}}{8} \vert \wh{\mbf v} \vert_2^2 
	\le \frac{3 \lambda \sqrt{\mathfrak{s}_j}}{\sqrt{2 \min(k - s, e - k)}} \vert \wh{\mbf v} \vert_2,
	\text{ \ from which it follows that}
	\\
	& \vert \wh{\mbf v} \vert_2 \le \frac{12 \sqrt{2} \lambda \sqrt{\mathfrak{s}_j}}{\underline{\sigma} \sqrt{\min(k - s, e - k)}},
	\quad
	\vert \wh{\mbf v} \vert_1 \le \frac{48 \sqrt{2} \lambda \mathfrak{s}_j}{\underline{\sigma} \sqrt{\min(k - s, e - k)}} \text{ \ and }
	\\
	&  \wh{\mbf v}^\top \wh{\bm\Sigma}_{s, e} \wh{\mbf v} \le \frac{3 \sqrt{2} \lambda \sqrt{\mathfrak{s}_j} \vert \wh{\mbf v} \vert_2}{\sqrt{\min(k - s, e - k)}} \le \frac{72 \lambda^2 \mathfrak{s}_j}{\underline{\sigma} {\min(k - s, e - k)}}.
	\end{align*}
	Besides, from~\eqref{eq:lasso:one} and~\eqref{eq:feasible:two}, we have
	\begin{align*}
	0 \le \vert \bm\delta_{s, e}(k) \vert_1 - \vert \wh{\bm\delta}_{s, e}(k) \vert_1 + \frac{1}{2} \l( \vert \bm\delta_{s, e}(k) \vert_1 + \vert \wh{\bm\delta}_{s, e}(k) \vert_1\r), \text{ \ i.e. \ }
	\vert \wh{\bm\delta}_{s, e}(k) \vert_1 \le 3 \vert \bm\delta_{s, e}(k) \vert_1.
	\end{align*}
	Then, by \cref{lem:bound}, we have conditional on $\mc E_{n, p}$,
	\begin{align}
	\l\vert \l( \wh{\bm\Sigma}_{s, e} - \bm\Sigma \r)  \wh{\bm\delta}_{s, e}(k) \r\vert_{\infty} &\le \sum_{i = 1}^p \l\vert \l( \wh{\bm\Sigma}_{s, e} - \bm\Sigma \r)  \mbf e_i \r\vert_{\infty}\l\vert\bigl(\wh{\bm\delta}_{s, e} (k)\bigr)_i\r\vert \nn\\
	&\le C_0 \psi_{n,p}\l\vert\wh{\bm\delta}_{s, e}(k)\r\vert_1 \le 3C_0\psi_{n,p} \l\vert{\bm\delta}_{s, e}(k)\r\vert_1.
	\label{eq:lasso:lone}
	\end{align}
	Finally, note that 
	\begin{align*}
	& \sqrt{\frac{(k - s)(e - k)}{e - s}} \vert \wh{\mbf v} \vert_\infty \le 
	\sqrt{\frac{(k - s)(e - k)}{e - s}} \l\Vert \bm\Omega \r\Vert_1 \l\vert \bm\Sigma \wh{\bm\delta}_{s, e}(k) - {\bm\gamma}_{k, e} + {\bm\gamma}_{s, k} \r\vert_\infty
	\end{align*}
	where, by~\eqref{eq:lasso:kkt}, \eqref{eq:lasso:lone} and \cref{lem:gamma},
	\begin{align*}
	& \sqrt{\frac{(k - s)(e - k)}{e - s}} \l\vert \bm\Sigma \wh{\bm\delta}_{s, e}(k) - {\bm\gamma}_{k, e} + {\bm\gamma}_{s, k} \r\vert_\infty \le \sqrt{\frac{(k - s)(e - k)}{e - s}} \l[ \l\vert \l( \wh{\bm\Sigma}_{s, e} - \bm\Sigma \r)  \wh{\bm\delta}_{s, e}(k) \r\vert_{\infty} + \r.
	\\
	& \l. \l\vert \wh{\bm\gamma}_{k, e} - \wh{\bm\gamma}_{s, k} - {\bm\gamma}_{k, e} + {\bm\gamma}_{s, k} \r\vert_\infty +  \l\vert \wh{\bm\Sigma}_{s, e} \wh{\bm\delta}_{s, e}(k) - \wh{\bm\gamma}_{k, e} + \wh{\bm\gamma}_{s, k} \r\vert_{\infty}
	\r] \le C_0 \l( 3\vert \bm\delta_{s, e}(k) \vert_1 + 2 \Psi_j \r) \psi_{n, p} + \lambda
	\\
	& \le 3 C_0 \vert \bm\delta_j \vert_1 \psi_{n, p} + 2 \lambda,
	\end{align*}
	Combined with the trivial bound $\vert \wh{\mbf v} \vert_\infty \le \vert \wh{\mbf v} \vert_2$, the bound on $\vert \wh{\mbf v} \vert_\infty$ in \cref{rem:linf} follows.
	
	\paragraph{CLOM, constrained $\ell_1$-minimisation estimator in~\eqref{eq:direct:est}.}
	Note that from~\eqref{eq:feasible}, we have $\bm\delta_{s, e}(k)$ feasible for the constraint in~\eqref{eq:direct:est}, from which it follows that $\vert \wh{\bm\delta}_{s, e}(k) \vert_1 \le \vert \bm\delta_{s, e}(k) \vert_1$.
	Then, by~\eqref{eq:feasible} and the definition of $\wh{\bm\delta}_{s, e}(k)$, we have
	\begin{align*}
	\wh{\mbf v}^\top \wh{\bm\Sigma}_{s, e} \wh{\mbf v} &\le \vert\wh{\mbf v} \vert_1 \vert \wh{\bm\Sigma}_{s, e} \wh{\mbf v} \vert_\infty  \le \vert\wh{\mbf v} \vert_1 \l(\bigl\vert\wh{\bm\Sigma}_{s, e}\wh{\bm \delta}(k) - \wh{\bm\gamma}_{k, e} + \wh{\bm\gamma}_{s, k}\bigr\vert_{\infty} + \bigl\vert\wh{\bm\Sigma}_{s, e}\delta_{s, e}(k) - \wh{\bm\gamma}_{k, e} + \wh{\bm\gamma}_{s, k}\bigr\vert_{\infty}\r)\\
	& \le \sqrt{\frac{e - s}{(k - s) (e - k)}} \l(\lambda + \frac{5}{2}C_0 \Psi_j \psi_{n,p}\r) \vert\wh{\mbf v} \vert_1 \le 
	2\lambda \sqrt{\frac{e - s}{(k - s) (e - k)}} \vert\wh{\mbf v} \vert_1.
	\end{align*}
	By splitting the coordinates into $\mc S_j$ and its complement, we obtain
	\begin{align*}
	\vert \wh{\mbf v}^{\mc S_j^c} \vert_1 &= \vert \wh{\bm \delta}^{\mc S_j^c}_{s, e}(k) \vert_1 = \vert \wh{\bm \delta}_{s, e}(k) \vert_1 - \vert \wh{\bm \delta}^{\mc S_j}_{s, e}(k) \vert_1 \le \vert \bm \delta_{s, e}(k) \vert_1 - \vert \wh{\bm \delta}^{\mc S_j}_{s, e}(k) \vert_1 
	\\
	&\le \vert \wh{\mbf v}^{\mc S_j} \vert_1 + \vert \wh{\bm \delta}^{\mc S_j}_{s, e}(k) \vert_1 - \vert \wh{\bm \delta}^{\mc S_j}_{s, e}(k) \vert_1 = \vert \wh{\mbf v}^{\mc S_j} \vert_1.
	\end{align*}
	Thus, similarly to the proof of the $\ell_1$-penalised estimator in~\eqref{eq:lasso:est}, the claims follow up to constants, i.e.\ 
	\begin{align*}
	\vert \wh{\mbf v} \vert_2 \le \frac{32 \lambda \sqrt{\mathfrak{s}_j}}{\underline{\sigma} \sqrt{\min(k - s, e - k)}},
	\text{ \ and \ }
	\vert \wh{\mbf v} \vert_1 \le \frac{64 \lambda \mathfrak{s}_j}{\underline{\sigma} \sqrt{\min(k - s, e - k)}}.
	\end{align*}
\end{proof}

\begin{proof}[Proof of Proposition~\ref{prop:dp:est}]
	In what follows, we show that conditional on $\mc E_{n, p} \cap \mc R_{n, p}$, the claim follows where $\mc E_{n, p}$ is defined in~\eqref{eq:set:e} and $\mc R_{n, p}$ in Lemma~\ref{lem:rsc} below. 
	In doing so, we will also use that from the proof of Theorem~\ref{thm:one}, we have $\mc E_{n, p} \subset \mc S_{n, p}$ defined therein.
	Then, since $\p(\mc E_{n, p} \cap \mc R_{n, p}) \ge 1 - 2c_2 (p \vee n)^{-c_3}$ by Lemmas~\ref{lem:bound} and~\ref{lem:rsc}, the proof is completed.
	
	By Theorem~\ref{thm:one}, conditional on $\mc S_{n, p}$, the pairs $(a_j, b_j)$ defined in~\eqref{eq:ab} satisfy
	\begin{enumerate}[label = (\alph*)]
		\item \label{se:one} $\min(\cp_j - a_j, b_j - \cp_j) \ge \Delta_j/6$ and $\min(\wh\cp_j - a_j, b_j - \wh\cp_j) \ge \Delta_j/3$, and
		\item \label{se:two} $\{a_j + 1, \ldots, b_j - 1\} \cap \Cp = \{\cp_j\}$.
	\end{enumerate}
	To see~\ref{se:one}, note that, conditional on $\mc S_{n, p}$, 
	\begin{align*}
	\frac{\vert \wh\cp_j - \cp_j \vert}{\Delta_j} \le \frac{c_1\Psi_j^2 \psi_{n, p}^2}{ \Delta_j \vert \bm\Sigma \bm\delta_j \vert_\infty^{2}} \le \frac{c_1}{c_0}\le \frac{1}{6}
	\end{align*}
	for all $j \in [q]$ by \cref{assum:size}, such that
	\begin{align*}
	\wh\cp_j - \l(\frac{2}{3} \wh\cp_{j - 1} + \frac{1}{3} \wh\cp_j \r) &\ge \cp_j - \l(\frac{2}{3} \cp_{j - 1} + \frac{1}{3} \cp_j \r) - \l( \frac{2}{3} \l\vert \wh\cp_{j - 1} - \cp_{j - 1} \r\vert + \frac{1}{3} \l\vert \wh\cp_j - \cp_j \r\vert + \vert \wh\cp_j - \cp_j \vert \r)
	\\
	&\ge \l(\frac{2}{3} - \frac{2c_1}{c_0}\r) \Delta_j
	\end{align*}
	and similarly, $(2\wh\cp_{j + 1}/3 + \wh\cp_j/3) - \wh\cp_j \ge (2/3 - 2c_1/c_0)\Delta_j$.
	Thus, we have $\wh{\Delta}_j \ge (2/3 - 2c_1/c_0) \Delta_j \ge \Delta_j/3$.
	Then,
	\begin{align*}
	\cp_j - a_j \ge \wh\cp_j - a_j - \vert \wh\cp_j - \cp_j \vert \ge \l( \frac{1}{3} - \frac{c_1}{c_0} \r) \Delta_j \ge \frac{\Delta_j}{6} \text{ \ and analogously, \ } b_j - \cp_j \ge \frac{\Delta_j}{6}.
	\end{align*}
	Further, since $a_j \ge \lfloor 2\wh\cp_{j - 1}/3 + \wh\cp_j / 3 \rfloor$,
	\begin{align*}
	\cp_{j - 1} - a_j &\le \cp_{j - 1} - \l\lfloor\frac{2}{3} \wh\cp_{j - 1} + \frac{1}{3} \wh\cp_j \r\rfloor \le \l\lceil - \frac{1}{3} (\cp_j - \cp_{j - 1}) + \frac{2}{3} \vert \wh\cp_{j - 1} - \cp_{j - 1} \vert + \frac{1}{3} \vert \wh\cp_j - \cp_j \vert \r\rceil
	\\
	&\le \l\lceil\l( -\frac{1}{3} + \frac{c_1}{c_0} \r) \Delta_j \r\rceil \le \l\lceil-\frac{\Delta_j}{6}\r\rceil < 0
	\end{align*}
	and similarly, $b_j - \cp_{j + 1} < 0$; Hence~\ref{se:two} follows.
	Combining~\ref{se:one} and~\eqref{eq:prop:lasso:two}, we can apply Proposition~\ref{prop:lasso} and obtain
	\begin{align*}
	\l\vert \wh{\bm\delta}_j - \bm\delta_{s_j, e_j}(\wh\cp_j) \r\vert_1 \lesssim \frac{\Psi_j \mathfrak{s}_j \psi_{n, p}}{\underline{\sigma} \sqrt{\Delta_j}}.
	\end{align*}
	Note that
	$\underline\sigma \mathfrak{s}_j^{-1}\vert \bm \delta_j\vert_1^2\le\underline\sigma \vert \bm \delta_j\vert_2^2\le(\bm\delta_j)^\top \bm\Sigma \bm\delta_j \le \vert \bm\delta_j \vert_1\vert\bm\Sigma \bm\delta_j \vert_\infty,$
	which implies that $\vert\bm\delta_j\vert_1 \le {\underline\sigma}^{-1}\mathfrak{s}_j\vert\bm\Sigma\bm\delta_j\vert_{\infty}$. 
	Then, as $\bm\delta_{a_j, b_j}(\cp_j) = \bm\delta_j$ by~\ref{se:one}, we have conditional on $\mc S_{n,p}$, 
	\begin{align*}
	\l\vert \bm\delta_{a_j, b_j}(\wh\cp_j) - \bm\delta_j \r\vert_1 
	= & \frac{\vert \wh\cp_j - \cp_j \vert \vert \bm\delta_j \vert_1}{(\cp_j - a_j) \mathbb{I}_{\{\wh\cp_j \le \cp_j\}} + (b_j - \cp_j) \mathbb{I}_{\{\wh\cp_j > \cp_j\}} } \\
	\le & \frac{6 c_1 \vert \bm\delta_j \vert_1  \Psi_j^2 \psi_{n, p}^2 }{ \Delta_j \vert \bm\Sigma \bm\delta_j \vert_\infty^{2}}
	\le \frac{6c_1\mathfrak{s}_j \Psi_j^2\psi_{n, p}^2}{\underline{\sigma} \Delta_j\vert\bm\Sigma\bm\delta_j\vert_\infty} \le \frac{6c_1\Psi_j \mathfrak{s}_j \psi_{n, p}}{\underline{\sigma} \sqrt{c_0\Delta_j}},
	\end{align*}
	where the last inequality is due to \cref{assum:size}.  
	Combining the above two displayed inequalities, we get
	\begin{align*}
	\l\vert \wh{\bm\delta}_j - \bm\delta_j \r\vert_1 \le \l\vert \wh{\bm\delta}_j - \bm\delta_{a_j, b_j}(\wh\cp_j) \r\vert_1 + \l\vert \bm\delta_{a_j, b_j}(\wh\cp_j) - \bm\delta_j \r\vert_1 
	\lesssim \frac{\Psi_j \mathfrak{s}_j \psi_{n, p}}{\underline{\sigma} \sqrt{\Delta_j}}.
	\end{align*}
	Analogously, we obtain, by Proposition~\ref{prop:lasso},
	\begin{align*}
	\l\vert \wh{\bm\delta}_j - \bm\delta_j \r\vert_2 &\le 
	\l\vert \wh{\bm\delta}_j - \bm\delta_{a_j, b_j}(\wh\cp_j) \r\vert_2 + \l\vert \bm\delta_{a_j, b_j}(\wh\cp_j) - \bm\delta_j \r\vert_2 \\
	& \lesssim 
	\frac{ \Psi_j \sqrt{\mathfrak{s}_j} \psi_{n, p} }{\underline{\sigma} \sqrt{\Delta_j}}
	+ \frac{ \vert \bm\delta_j \vert_2 \Psi_j^2 \psi^2_{n, p}}{\Delta_j \vert \bm\Sigma \bm\delta \vert_\infty^{2}} 
	\lesssim \frac{ \Psi_j \sqrt{\mathfrak{s}_j} \psi_{n, p}}{\underline{\sigma} \sqrt{\Delta_j}}
	\end{align*}
	conditional on $\mc S_{n, p}$ where in the last inequality, we use that $\vert\bm\delta_j\vert_2 \le \underline{\sigma}^{-1}\sqrt{\mathfrak{s}_j}\vert\bm\Sigma\bm\delta_j\vert_{\infty}$ and \cref{assum:size}.
\end{proof}

\subsubsection{Proof of Theorem~\ref{thm:two}}
\label{sec:pf:thm:two}

We prove the results under a general version of \cref{assum:omega:w}, which allows one to consider a wide range of estimators for precision matrices. 
\begin{massum}{\ref{assum:omega:w}$^\prime$}[Precision matrix estimator]
	\label{assum:omega}
	Suppose that there exists an estimator $\wh{\bm\Omega}^{\tE}$ obtained from $\mc D^{\tE}$ such that $\p(\mc O_{n, p}) \to 1$ as $n \to \infty$ where on $\mc O_{n, p}$, the following event holds: Let $\mathfrak{s}_{\varrho} = \max_{i \in [p]} \vert \bm\Omega_{i\cdot} \vert_{\varrho}^{\varrho} = \max_{i \in [p]} \sum_{i' \in [p]} \vert \Omega_{ii'} \vert^{\varrho}$ and $\wh{\bm\Sigma}^{\tE}$ the sample covariance of $\mbf x^{\tE}_t$'s.
	Then, there exist some $\omega^{(\ell)}_{n, p} > 0$ for $\ell = 1, 2, 3$, and $\varrho \in [0, 1)$ such that
\begin{enumerate}[label = (\roman*)]
    \item \label{assum:omega:one} $\Vert \wh{\bm\Omega}^{\tE} \Vert_\infty \le \omega^{(1)}_{n, p}$, 
    \item \label{assum:omega:two} $\sqrt{n} \vert \mbf I_p - \wh{\bm\Omega}^{\tE} \wh{\bm\Sigma}^{\tE} \vert_\infty \le \omega^{(2)}_{n, p} \lesssim \omega^{(1)}_{n, p} \sqrt{\log(p 
    \vee n)}$, and
    \item \label{assum:omega:three} $\Vert \wh{\bm\Omega}^{\tE} - \bm\Omega \Vert \le \mathfrak{s}_\varrho (\omega^{(3)}_{n, p}/ \sqrt{n})^{1 - \varrho} \le 1/(2\bar{\sigma})$.
\end{enumerate}
\end{massum}
Note that \cref{assum:omega:w} is a particular case of \cref{assum:omega} with $\omega^{(1)}_{n,p} \asymp \Vert\bm\Omega\Vert_1$, $\omega^{(2)}_{n,p} \asymp \Vert\bm\Omega\Vert_1 \sqrt{\log(p \vee n)}$ and $\omega^{(3)}_{n,p} \asymp \Vert\bm\Omega\Vert_1^\omega \sqrt{\log(p \vee n)}$.
We define $\mc E^\ell_{n, p}$, $\mc R^\ell_{n, p}$ and $\mc S^\ell_{n, p}$, $\ell \in \{\tE, \tO\}$, analogously as in~\eqref{eq:set:e}, Lemma~\ref{lem:rsc} and Theorem~\ref{thm:one} with the corresponding dataset $\mc D^{\ell}$, respectively.
We regard $\psi_{n, p} = \sqrt{\log(p \vee n)}$ when applying the preceding results.
We also frequently use that $\mc E^\ell_{n, p} \subset \mc S^\ell_{n, p}$ which follows from the proof of Theorem~\ref{thm:one}.
We condition the subsequent arguments on $\mc Q^{\tE}_{n, p} \cap \mc E^{\tO}_{n, p}$, where $\mc Q^{\tE}_{n, p} = \mc E^{\tE}_{n, p} \cap \mc R^{\tE}_{n, p} \cap \mc O^{\tE}_{n, p}$, noting that the condition~\eqref{eq:prop:lasso:two} holds under Assumptions~\ref{assum:func:dep:two} and~\ref{assum:size:two} with $\kappa = 0$.
Throughout, all constants unspecified in $\lesssim$ depend only on $\Xi$.

Recall that we denote by $\cp_j$ the locations of the change points in the joint distribution of $(Y^{\ell}_t, \mbf x^{\ell}_t), \, t \in [n_0], \, \ell \in \{\tE, \tO\}$.
By the arguments analogous to those adopted in the proof of Proposition~\ref{prop:lasso} (see~\ref{se:one}--\ref{se:two} in Appendix~\ref{pf:prop:lasso}), we have 
\begin{multline}
\min(\wh\cp^{\tE}_j - a^{\tE}_j, b^{\tE}_j - \wh\cp^{\tE}_j) \gtrsim \Delta_j, \quad 
\min(\cp_j - a^{\tE}_j, b^{\tE}_j - \cp_j) \gtrsim \Delta_j, \\ 
\text{ and \ } \{a^{\tE}_j + 1, \ldots, b^{\tE}_j - 1 \} \cap \Cp = \{\cp_j\}.
\label{eq:ab:prop}
\end{multline}
By the definition of $\wc{\bm\delta}_j$ in~\eqref{eq:debiased},
\begin{align}
& \sqrt{\frac{ (\wh\cp^{\tE}_j - a^{\tE}_j) (b^{\tE}_j - \wh\cp^{\tE}_j) }{ b^{\tE}_j - a^{\tE}_j }}
\l( \wc{\bm\delta}_j - \bm\delta_j \r) 
\nn \\
=& \, \sqrt{\frac{ (\wh\cp^{\tE}_j - a^{\tE}_j) (b^{\tE}_j - \wh\cp^{\tE}_j) }{ b^{\tE}_j - a^{\tE}_j }} \l( \mbf I_p - \wh{\bm\Omega}^{\tE} \wh{\bm\Sigma}^{\tO}_{a^{\tE}_j, b^{\tE}_j} \r) \l( \wh{\bm\delta}^{\tE}_j - \bm\delta_j \r)
\nn \\
& \, + \sqrt{\frac{ (\wh\cp^{\tE}_j - a^{\tE}_j) (b^{\tE}_j - \wh\cp^{\tE}_j) }{ b^{\tE}_j - a^{\tE}_j }} \wh{\bm\Omega}^{\tE} \l( \wh{\bm\gamma}^{\tO}_{\wh\cp^{\tE}_j, b^{\tE}_j} - \wh{\bm\gamma}^{\tO}_{a^{\tE}_j, \wh\cp^{\tE}_j} - \wh{\bm\gamma}^{\tO}_{\cp_j, b^{\tE}_j} + \wh{\bm\gamma}^{\tO}_{a^{\tE}_j, \cp_j} \r)
\nn \\
&
- \sqrt{\frac{ (\wh\cp^{\tE}_j - a^{\tE}_j) (b^{\tE}_j - \wh\cp^{\tE}_j) }{ b^{\tE}_j - a^{\tE}_j }} \wh{\bm\Omega}^{\tE} \l( \wh{\bm\Sigma}^{\tO}_{a^{\tE}_j, b^{\tE}_j} \bm\delta_j - \wh{\bm\gamma}^{\tO}_{\cp_j, b^{\tE}_j} + \wh{\bm\gamma}^{\tO}_{a^{\tE}_j, \cp_j} \r)
=: T_{j, 1} + T_{j, 2} + T_{j, 3}.
\label{eq:debiased:decomp}
\end{align}
For simplicity, we omit the subscript $j$ where there is no confusion.
By Assumption~\ref{assum:omega} and Lemma~\ref{lem:bound}, we have 
\begin{multline*}
\sqrt{b^{\tE} - a^{\tE}} \l\vert \mbf I_p - \wh{\bm\Omega}^{\tE} \wh{\bm\Sigma}^{\tO}_{a^{\tE}, b^{\tE}} \r\vert_\infty \le 
\sqrt{b^{\tE} - a^{\tE}} \l( \l\vert \mbf I_p - \wh{\bm\Omega}^{\tE} \wh{\bm\Sigma}^{\tE} \r\vert_\infty + 
\l\vert \wh{\bm\Omega}^{\tE} \l( \wh{\bm\Sigma}^{\tE} - \wh{\bm\Sigma}^{\tO}_{a^{\tE}, b^{\tE}} \r) \r\vert_\infty \r)
\\
\lesssim \omega^{(2)}_{n, p} + 2 C_0 \omega^{(1)}_{n, p} \sqrt{\log(p \vee n)} \lesssim \omega^{(1)}_{n, p} \sqrt{\log(p \vee n)}.
\end{multline*}
From this, Proposition~\ref{prop:dp:est}, \eqref{eq:ab:prop} and Assumption~\ref{assum:size:two}, we bound $T_1$ as 
\begin{align}
\vert T_1 \vert_\infty & = \sqrt{\frac{ (\wh\cp^{\tE} - a^{\tE}) (b^{\tE} - \wh\cp^{\tE}) }{ b^{\tE} - a^{\tE} }}\l\vert \l( \mbf I_p - \wh{\bm\Omega}^{\tE} \wh{\bm\Sigma}^{\tO}_{a^{\tE}, b^{\tE}} \r) \l( \wh{\bm\delta}^{\tE} - \bm\delta \r)\r\vert_{\infty} \nn \\
& \le \sqrt{\frac{ (\wh\cp^{\tE} - a^{\tE}) (b^{\tE} - \wh\cp^{\tE}) }{ b^{\tE} - a^{\tE} }}\l\vert  \mbf I_p - \wh{\bm\Omega}^{\tE} \wh{\bm\Sigma}^{\tO}_{a^{\tE}, b^{\tE}} \r\vert_{\infty} \l\vert \wh{\bm\delta}^{\tE} - \bm\delta \r\vert_{1} \nn \\
& \lesssim \l\vert \mbf I_p - \wh{\bm\Omega}^{\tE} \wh{\bm\Sigma}^{\tO}_{a^{\tE}, b^{\tE}} \r\vert_{\infty}\cdot\sqrt{\min(\wh\cp^{\tE} - a^{\tE}, b^{\tE} - \wh\cp^{\tE})}\l\vert \wh{\bm\delta}^{\tE} - \bm\delta \r\vert_{1} 
\nn \\
& \lesssim \omega^{(1)}_{n, p} \sqrt{\log(p \vee n)} \cdot \frac{\Psi_j \mathfrak{s}_j \sqrt{\log(p \vee n)}}{\underline{\sigma}\sqrt{\Delta_j}}
\lesssim \frac{\omega^{(1)}_{n, p} \mathfrak{s}_j \Psi_j \log(p \vee n)}{\underline{\sigma}\sqrt{\Delta_j}}.
\label{eq:t1:bound}
\end{align}
Also, by Assumption~\ref{assum:omega} and Lemma~\ref{lem:gamma:two}, we have 
\begin{align}
\vert T_2 \vert_\infty & = \sqrt{\frac{ (\wh\cp^{\tE} - a^{\tE}) (b^{\tE} - \wh\cp^{\tE}) }{ b^{\tE} - a^{\tE} }} \l\vert\wh{\bm\Omega}^{\tE} \l( \wh{\bm\gamma}^{\tO}_{\wh\cp^{\tE}, b^{\tE}} - \wh{\bm\gamma}^{\tO}_{a^{\tE}, \wh\cp^{\tE}} - \wh{\bm\gamma}^{\tO}_{\cp, b^{\tE}} + \wh{\bm\gamma}^{\tO}_{a^{\tE}, \cp} \r)\r\vert_{\infty} 
\nn \\
& \le \sqrt{\frac{ (\wh\cp^{\tE} - a^{\tE}) (b^{\tE} - \wh\cp^{\tE}) }{ b^{\tE} - a^{\tE} }} \Vert \wh{\bm\Omega}^{\tE}\Vert_{\infty}\l\vert  \wh{\bm\gamma}^{\tO}_{\wh\cp^{\tE}, b^{\tE}} - \wh{\bm\gamma}^{\tO}_{a^{\tE}, \wh\cp^{\tE}} - \wh{\bm\gamma}^{\tO}_{\cp, b^{\tE}} + \wh{\bm\gamma}^{\tO}_{a^{\tE}, \cp}\r\vert_{\infty} 
\nn \\
& \lesssim \omega^{(1)}_{n, p} \cdot \sqrt{\frac{ (\wh\cp^{\tE} - a^{\tE}) (b^{\tE} - \wh\cp^{\tE}) }{ b^{\tE} - a^{\tE} }} \l\vert  \wh{\bm\gamma}^{\tO}_{\wh\cp^{\tE}, b^{\tE}} - \wh{\bm\gamma}^{\tO}_{a^{\tE}, \wh\cp^{\tE}} - \wh{\bm\gamma}^{\tO}_{\cp, b^{\tE}} + \wh{\bm\gamma}^{\tO}_{a^{\tE}, \cp}\r\vert_{\infty} 
\nn \\
& \lesssim \frac{\omega^{(1)}_{n, p} \Psi_j^2 \log(p \vee n)}{\sqrt{\Delta_j} \,\vert \bm\Sigma \bm\delta_j \vert_\infty}.
\label{eq:t2:bound}
\end{align}
Next, by the definition of $\bm\Gamma_j$ and from the independence between $\mbf x_t$ and $\vep_t$ (Assumption~\ref{assum:func:dep:two}) we have 
\begin{subequations}
	\label{eq:gamma:eigval}
	\begin{align}
	\Lambda_{\min}(\bm\Gamma_j) &\ge \Lambda_{\min}\bigl(\mathrm{Cov}(\mbf x_t^{\tO} \varepsilon_t^{\tO})\bigr)\ge \sigma_{\varepsilon}^2 \underline{\sigma} \text{ \ and }    
	\\
	\Lambda_{\max}( \bm\Gamma_j) &\le \max_{\mbf a \in \mathbb{B}_2(1)} \E \l(\bigl(\mbf a^\top \mbf x_t^\tO\bigr)^2\bigl(\mbf b^\top\bm Z_t^\tO\bigr)^2\Big\vert \mc D^{\tE} \r) \le 16 \Xi^2 \Psi_j^2,
	\end{align}
\end{subequations}
where the last inequality is a consequence of H\"older's inequality with $\mbf b$ set as $\mbf b = \bigl((\bar{\bm\mu}^{\tE}_j)^{\top},\, 1\bigr)^\top\in\R^{p+1}$ which satisfies $\vert\mbf b\vert_2 \le \Psi_j$.
By Assumptions~\ref{assum:size:two} and~\ref{assum:omega} and Weyl's inequality,
\begin{subequations}
	\label{eq:omega:eigval}
	\begin{align}
	\Lambda_{\min}\l( \wh{\bm\Omega}^{\tE} \r) &\ge \Lambda_{\min}\l(\bm\Omega \r) - \l\Vert \wh{\bm\Omega}^{\tE} - \bm\Omega \r\Vert \ge \frac{1}{\bar{\sigma}} - \mathfrak{s}_\varrho \l( \frac{\omega^{(3)}_{n, p}}{\sqrt{n}} \r)^{1 - \varrho} \ge  \frac{1}{2\bar{\sigma}},\\
	\Lambda_{\max}\l( \wh{\bm\Omega}^{\tE} \r) &\le \Lambda_{\max}\l(\bm\Omega \r) + \l\Vert \wh{\bm\Omega}^{\tE} - \bm\Omega \r\Vert \le \frac{1}{\underline{\sigma}} + \mathfrak{s}_\varrho \l( \frac{\omega^{(3)}_{n, p}}{\sqrt{n}} \r)^{1 - \varrho} \le \frac{3}{2\underline{\sigma}}.
	\end{align}
\end{subequations}
This together with~\eqref{eq:ab:prop} implies
\begin{subequations}
	\label{eq:maxmin}
	\begin{align}
	&& \Lambda_{\min}\l(\tfrac{(\wh\cp^{\tE}_j - a^{\tE}_j)(b^{\tE}_j - \wh\cp^{\tE}_j)}{(\cp_j - a^{\tE}_j)(b^{\tE}_j - \cp_j)} \wh{\bm\Omega}^{\tE} \bm\Gamma_j \l(\wh{\bm\Omega}^{\tE}\r)^\top\r)&  \gtrsim \Lambda_{\min}(\bm\Gamma_j)\Lambda_{\min}^2\l( \wh{\bm\Omega}^{\tE} \r) \gtrsim \frac{\sigma_{\varepsilon}^2\underline{\sigma}}{\bar\sigma^2}, \\
	&& \Lambda_{\max}\l(\tfrac{(\wh\cp^{\tE}_j - a^{\tE}_j)(b^{\tE}_j - \wh\cp^{\tE}_j)}{(\cp_j - a^{\tE}_j)(b^{\tE}_j - \cp_j)} \wh{\bm\Omega}^{\tE} \bm\Gamma_j \l(\wh{\bm\Omega}^{\tE}\r)^\top \r) & \lesssim \Lambda_{\max}^2\l(\wh{\bm\Omega}^{\tE}\r) \Lambda_{\max}(\bm\Gamma_j) \lesssim \frac{\Xi^2 \Psi_j^2}{\underline \sigma^2}.
	\end{align}
\end{subequations}
Next, for each $j \in [q]$, we define
\begin{align*}
& w^{\tE}_{j, t} = \l\{\begin{array}{ll}-
\frac{\sqrt{(\wh\cp^{\tE}_j - a^{\tE}_j) (b^{\tE}_j - \wh\cp^{\tE}_j)}}{\cp_j - a^{\tE}_j} & \text{for \ } t \in \{a^{\tE}_j + 1, \ldots, \cp_j\}, \\
\frac{\sqrt{(\wh\cp^{\tE}_j - a^{\tE}_j) (b^{\tE}_j - \wh\cp^{\tE}_j)}}{b^{\tE}_j - \cp_j} & \text{for \ } t \in \{\cp_j + 1, \ldots, b^{\tE}_j\}.
\end{array}\r.
\end{align*}
Then, by~\eqref{eq:ab:prop}, it holds that 
\begin{align}
\label{eq:weights}
\vert w^{\tE}_{j, t} \vert \asymp 1, \quad 
\frac{1}{b^{\tE}_j - a^{\tE}_j} \sum_{t = a^{\tE}_j + 1}^{b^{\tE}_j} (w^{\tE}_{j, t})^2 \asymp 1 \text{ \ and \ }
\frac{1}{b^{\tE}_j - a^{\tE}_j} \sum_{t = a^{\tE}_j + 1}^{b^{\tE}_j} (w^{\tE}_{j, t})^4 \asymp 1.
\end{align}
Also, let 
\begin{align}
\label{eq:def:ujt}
\mbf U_{j, t} = \mbf x^{\tO}_t ( \vep^{\tO}_t + (\mbf x^{\tO}_t)^\top \bar{\bm\mu}^{\tE}_j )
\text{ and }
\mbf U^{\circ}_{j, t} = \mbf U_{j, t} - \E(\mbf U_{j, t} \vert \mc D^{\tE}) = \mbf U_{j, t} - \bm\Sigma \bar{\bm\mu}^{\tE}_j.
\end{align}
Then, we can write (omitting the subscript $j$)
\begin{align*}
\sqrt{ b^{\tE} - a^{\tE} }\,T_3 = \wh{\bm\Omega}^{\tE} \sum_{t = a^{\tE} + 1}^{b^{\tE}} w^{\tE}_t \mbf x^{\tO}_t ( \vep^{\tO}_t + (\mbf x^{\tO}_t)^\top \bar{\bm\mu}^{\tE} ) = \sum_{t = a^{\tE} + 1}^{b^{\tE}} w^{\tE}_t \wh{\bm\Omega}^{\tE}\mbf U_t
= \sum_{t = a^{\tE} + 1}^{b^{\tE}} w^{\tE}_t \wh{\bm\Omega}^{\tE} \mbf U^{\circ}_t,
\end{align*}
where the last equality is due to that $\sum_{t = a^{\tE} +1}^{b^{\tE}}w^{\tE}_t = 0$. Let $\bm\xi_t = (\xi_{it})_{i \in [p]} := \wh{\bm\Omega}^{\tE} \mbf U^{\circ}_t$. Then,
\begin{align}\label{eq:defT3}
T_3 = \frac{1}{\sqrt{ b^{\tE} - a^{\tE} }} \sum_{t = a^{\tE} + 1}^{b^{\tE}} w^{\tE}_t \bm\xi_t.
\end{align}
Further, with $\mbf a_i = (\wh{\bm\Omega}^{\tE}_{i\cdot}, 0)^\top \in \R^{p+1}$ and $\mbf b = ((\bar{\bm\mu}^{\tE})^\top, 1)^\top \in \mathbb{R}^{p+1}$, we have
\begin{align*}
\xi_{it}\;=\;(\wh{\bm\Omega}^{\tE}\mbf U^\circ_{t})_i \; = \; \mbf a_i^\top \mbf Z^{\tO}_t \l(\mbf Z^{\tO}_t\r)^\top \mbf b - \E\l(\mbf a_i^\top \mbf Z^{\tO}_t \l(\mbf Z^{\tO}_t\r)^\top \mbf b \Big\vert \mc D^{\tE}\r).
\end{align*}
Note that $\vert \mbf a_i \vert_2 \le \max_{i \in [p]} \vert \wh{\bm\Omega}^{\tE}_{i \cdot} \vert_2 \le \Lambda_{\max}\bigl(\wh{\bm\Omega}^{\tE}\bigr) \le 3/(2\underline{\sigma})$ from~\eqref{eq:omega:eigval}, 
and $\vert \mbf b \vert_2 \le 1 + \vert \bar{\bm\mu}^{\tE} \vert_2 \le \Psi_j$. 
Then by~\eqref{eq:weights} and Lemma~\ref{lem:assum:func:two}, it follows that
\begin{align}\label{eq:sum4m}
\sum_{t = a^{\tE} + 1}^{b^{\tE}} (w^{\tE}_t)^4 \E\l( \vert \xi_{it} \vert^4  \big\vert \mc D^{\tE}, \mc Q^{\tE}_{n, p} \r)  \lesssim \l(\frac{\Xi \Psi_j}{\underline\sigma}\r)^4 \sum_{t = a^{\tE} + 1}^{b^{\tE}} (w^{\tE}_t)^4 \asymp 
(b^{\tE} - a^{\tE})\l(\frac{\Xi \Psi_j}{\underline\sigma}\r)^4.
\end{align}
Also, with $C_{\xi} \asymp \Xi\Psi_j/\underline{\sigma}$, we have from the property of sub-exponential random variables (see, e.g.\ Proposition~2.7.1 of \citealp{Ver18}),
\begin{align*}
\p\l( \vert \xi_{it} \vert \ge u \big\vert \mc D^{\tE},  \mc Q^{\tE}_{n, p} \r)\; \le \; 2\exp\l(-\frac{u}{C_\xi}\r) \text{ \ for every \ } u \ge 0.
\end{align*}
The above tail probability bound, together with \eqref{eq:weights}, Fubini's theorem and the union bound, implies that
\begin{align}\label{eq:max4m}
&(w^{\tE}_t)^4 \E\l(\max_{i \in [p]}\max_{a^{\tE} < t \le b^{\tE}}\vert\xi_{it}\vert^4\Big\vert\mc D^{\tE}, \mc Q^{\tE}_{n, p} \r)
\nn \\ 
\asymp & \int_{0}^\infty \p\l(\max_{i \in [p]}\max_{a^{\tE} < t \le b^{\tE}}\vert\xi_{it}\vert^4 \ge u \big\vert \mc D^{\tE}, \mc Q^{\tE}_{n, p} \r) \mathrm{d}u 
\nn \\
\le & \int_0^{(2C_{\xi}\log(p(b^\tE-a^\tE)))^4} \mathrm{d}u + p(b^\tE-a^\tE)\int_{(2C_{\xi}\log(p(b^\tE-a^\tE)))^4}^\infty  \p\l(\vert\xi_{it}\vert^4 \ge u \big\vert \mc D^{\tE}, \mc Q^{\tE}_{n, p} \r)\mathrm{d}u 
\nn \\
\le & \bigl(2C_{\xi}\log(p(b^\tE-a^\tE))\bigr)^4 + 2 p (b^\tE - a^\tE) \int_{(2C_{\xi}\log(p(b^\tE-a^\tE)))^4}^\infty \exp\l(-\frac{u^{1/4}}{2C_{\xi}}\r) \mathrm{d}u
\nn \\
\asymp & \l(\frac{\Xi\Psi_j}{\underline{\sigma}} \log(p\Delta)\r)^4.
\end{align}
Similarly, with $\psi \asymp \Xi\Psi \log(p\Delta) / \underline{\sigma}$, we can show
\begin{align}\label{eq:tail4m}
(w^{\tE}_t)^4 \E\l( \max_{i \in [p]} \vert\xi_{it}\vert^4 \mathbb{I}_{\{\max_{i \in [p]}\vert\xi_{it}\vert > \psi \}} \Big\vert \mc D^{\tE}, \mc Q^{\tE}_{n, p} \r) \lesssim \frac{(\Xi\Psi_j)^4\log^3(p\Delta)}{p\Delta\underline{\sigma}^4}.
\end{align}
Recalling the definition of $T_3$ in~\eqref{eq:defT3}, we have 
\begin{align*}
\mathrm{Cov}(T_3 \big\vert\mc D^{\tE}) & = \frac{1}{b^{\tE} - a^{\tE}} \sum_{t = a^{\tE} + 1}^{b^{\tE}} (w^{\tE}_t)^2\mathrm{Cov}(\bm\xi_t \big\vert \mc D^{\tE}) 
\\
& = \frac{1}{b^{\tE} - a^{\tE}} \sum_{t = a^{\tE} + 1}^{b^{\tE}} (w^{\tE}_t)^2\wh{\bm\Omega}^{\tE} \mathrm{Cov}(\mbf U^{\circ}_t\big\vert \mc D^{\tE}) \bigl(\wh{\bm\Omega}^{\tE}\bigr)^{\top}
\\
& = \frac{(\wh{\cp}^{\tE} - a^{\tE})(b^{\tE} - \wh{\cp}^{\tE})}{(\cp - a^{\tE})(b^{\tE} - \cp)}\wh{\bm\Omega}^{\tE}{\bm\Gamma}\bigl(\wh{\bm\Omega}^{\tE}\bigr)^{\top}.
\end{align*}
Now we apply \cref{lem:gaussapp} to $T_3$ of the form \eqref{eq:defT3}.
Collecting~\eqref{eq:maxmin}, \eqref{eq:sum4m} \eqref{eq:max4m} and~\eqref{eq:tail4m}, we obtain 
\begin{align}
& \sup_{z \in \R} \l\vert \p\l( \vert T_3 \vert_\infty \le z \,\big\vert\, \mc D^{\tE}, \mc Q^{\tE}_{n, p} \r) - \p\l( \vert \mbf V \vert_\infty \le z \,\big\vert\, \mc D^{\tE}, \mc Q^{\tE}_{n, p} \r) \r\vert 
\nn \\
\le \;& C\l( \log(n)\log^{3/2}(p)\frac{\Psi_j^4}{\sqrt{\Delta_j}} + \log(n)\log^2(p)\log^2(p\vee n)\frac{\Psi_j^4}{{\Delta_j}} \r. 
\nn \\
& \qquad \l. + \log^{1/2}(n)\log(p)\log^2(p \vee n)\frac{\Psi_j^4}{\Delta_j\sqrt{p}} + \log^{3/2}(p)\log(p \vee n)\frac{\Psi_j^2}{\sqrt{\Delta_j}} \r) 
\nn \\
\le \; & \frac{C' \Psi_j^4 \log^{3/2}(p) \log(p \vee n)}{\sqrt{\Delta_j}} \l( 1 + \frac{\sqrt{\log(p)}\log^2(p \vee n)}{\sqrt{\Delta_j}} \r),
\label{eq:t3:approx}
\end{align}
where the constant $C, C' \in (0, \infty)$ depend only on $\bar{\sigma}$, $\underline{\sigma}$, $\sigma_\vep$ and $\Xi$.
Finally, note that for any $y \ge 0$,
\begin{align*}
& \p\l( \vert T_1 + T_2 + T_3 \vert_\infty \le z \,\big\vert\, \mc D^{\tE}, \mc Q^{\tE}_{n, p} \r) 
\\
\le & \, \p\l( \vert T_1 + T_2 \vert_\infty \ge y \,\big\vert\, \mc D^{\tE}, \mc Q^{\tE}_{n, p} \r) + \p\l( \vert T_3 \vert_\infty \le z + y \,\big\vert\, \mc D^{\tE}, \mc Q^{\tE}_{n, p} \r),
\text{ \ and}
\\
& \p\l( \vert \mbf V \vert_\infty \le z \,\big\vert\, \mc D^{\tE}, \mc Q^{\tE}_{n, p} \r) = \p\l( \vert \mbf V \vert_\infty \le z + y \,\big\vert\, \mc D^{\tE}, \mc Q^{\tE}_{n, p} \r) - \p\l( z \le \vert \mbf V \vert_\infty \le z + y \,\big\vert\, \mc D^{\tE}, \mc Q^{\tE}_{n, p} \r),
\end{align*}
from which we derive
\begin{align*}
& \sup_{z \in \R} \l\vert \p\l( \vert T_1 + T_2 + T_3 \vert_\infty \le z \,\big\vert\, \mc D^{\tE} \r) - \p\l( \vert \mbf V \vert_\infty \le z \,\big\vert\, \mc D^{\tE} \r) \r\vert 
\\
\le & \, \p\l( \vert T_1 + T_2 \vert_\infty \ge y \,\big\vert\, \mc D^{\tE}, \mc Q^{\tE}_{n, p} \r) + \sup_{z \in \R} \Bigl\vert \p\l( \vert T_3 \vert_\infty \le z \,\big\vert\, \mc D^{\tE}, \mc Q^{\tE}_{n, p} \r) - \p\l( \vert \mbf V \vert_\infty \le z \,\big\vert\, \mc D^{\tE}, \mc Q^{\tE}_{n, p} \r) \Bigr\vert 
\\
& \, + \sup_{z \in \R} \p\Bigl( \bigl\vert \vert \mbf V \vert_\infty - z \,\big\vert\, \le y \Bigm\vert \mc D^{\tE}, \mc Q^{\tE}_{n, p} \Bigr) + \p\l((\mc Q^{\tE}_{n, p})^c\r)
=: T_4 + T_5 + T_6 + \p\l((\mc Q^{\tE}_{n, p})^c\r).
\end{align*}
Setting $y \gtrsim \omega^{(1)}_{n, p} \Psi_j \max(\mathfrak{s}_j, \Psi_j \vert \bm\Sigma \bm\delta_j \vert_\infty^{-1})\log(p \vee n)/\sqrt{\Delta_j}$, we have 
\begin{align*}
T_4 \le \p\l( \vert T_1 + T_2 \vert_\infty \ge y \;\big\vert\; \mc D^{\tE}, \mc Q^{\tE}_{n, p} \cap \mc E^{\tO}_{n, p} \r) \p(\mc E^{\tO}_{n, p}) + \p((\mc E^{\tO}_{n, p})^c) = \p((\mc E^{\tO}_{n, p})^c)
\end{align*}
from~\eqref{eq:t1:bound} and~\eqref{eq:t2:bound}, and $T_5$ is handled by~\eqref{eq:t3:approx}.
Also, by \cite{nazarov2003maximal} (see also Lemma~4 of \citealp{chen2022inference} and \cite{chernozhukov2017detailed}), we obtain $T_6 \le C'' y \sqrt{\log(p)}$, where the constant $C'' > 0$ depends only on $\sigma_\vep$, $\bar{\sigma}$ and $\underline{\sigma}$ involved in the lower bound of $\min_{i \in [p]} \Var(V_i)$, see~\eqref{eq:maxmin}.
We collect the bounds on $T_4$, $T_5$ and $T_6$ and combine these with that $\p(\mc (Q^{\tE}_{n, p})^c) \le 2c_2 (p \vee n)^{-c_3} + \p((\mc O^{\tE}_{n, p})^c)$ and $\p((\mc E^{\tO}_{n, p})^c) \le c_2(p \vee n)^{-c_3}$ by Lemmas~\ref{lem:bound} and~\ref{lem:rsc}. 
Finally, noting the lower bound on $\Delta_j$ and that $c_3$ can be arbitrarily large, the proof is complete. 

\subsubsection{Proof of Proposition~\ref{prop:clime:dataE}}

It follows readily the more general result given in \cref{lem:clime}, where the main claim holds conditional on $\mc E^{\tE}_{n, p}$ defined in~\eqref{eq:set:e} with the dataset $\mc D^{\tE}$.

\subsubsection{Proof of Proposition~\ref{prop:G}}

Recall the definitions of $\mc E^\ell_{n, p}$ and $\mc R^\ell_{n, p}$, $\ell \in \{\tE, \tO\}$, from the proof of Theorem~\ref{thm:two}. We frequently use that $\mc E^{\tE}_{n, p} \subset \mc S^{\tE}_{n, p}$.
The subsequent arguments are conditional on $\mc E^{\tE}_{n, p} \cap \mc R^{\tE}_{n, p} \cap \mc E^{\tO}_{n, p} \cap \wt{\mc E}^{\tO}_{n, p}$ with $\wt{\mc E}^{\tO}_{n, p}$ defined in~\eqref{eq:set:e:tilde} below, which fulfils $\p(\wt{\mc E}^{\tO}_{n, p}) \ge 1 - c_2(p \vee n)^{-c_3}$.
Hence, $\p(\mc E^{\tE}_{n, p} \cap \mc R^{\tE}_{n, p} \cap \mc E^{\tO}_{n, p} \cap \wt{\mc E}^{\tO}_{n, p}) \ge 1 - 4 c_2 (p \vee n)^{-c_3}$ by Lemmas~\ref{lem:bound} and~\ref{lem:rsc}.

By~\eqref{eq:ab:prop}, which follows from Theorem~\ref{thm:one} conditional on $\mc E^{\tE}_{n, p}$, and by Assumption~\ref{assum:size}, we have $\cp_j \notin \{a^{\tE}_j + 1, \ldots, \bar{b}^{\tE}_j\}$ and $\cp_j \notin \{\bar{a}^{\tE}_j + 1, \ldots, b^{\tE}_j\}$ for small enough constant $\epsilon$ involved in the definition of $\bar{a}^{\tE}_j$ and $\bar{b}^{\tE}_j$.
Then, we have $\E(Y^{\tO}_t) = (\mbf x_t^{\tO})^\top \bm\beta_{j - 1}$ for $a^{\tE}_j + 1 \le t \le \bar{b}^{\tE}_j$, and $\E(Y^{\tO}_t) = (\mbf x_t^{\tO})^\top \bm\beta_j$ for $\bar{a}^{\tE}_j + 1 \le t \le b^{\tE}_j$.
Recalling the definitions of $\mbf U_{j, t}$ and $\mbf U^{\circ}_{j, t}$ from~\eqref{eq:def:ujt}, we have $\bm\Gamma_j = \Cov(\mbf U_{j, t} \vert \mc D^{\tE})$.
Noting that
\begin{align*}
\wh{\bm\Gamma}_j = \frac{1}{\bar{b}^{\tE}_j - a^{\tE}_j + b^{\tE}_j - \bar{a}^{\tE}_j} \Bigl[ (\bar{b}^{\tE}_j - a^{\tE}_j) \underbrace{\wh{\Cov}_{a^{\tE}_j, \bar{b}^{\tE}_j} (\wh{\mbf U}_{j, t})}_{=: \wh{\bm\Gamma}_{j, (\ell)}} + (b^{\tE}_j - \bar{a}^{\tE}_j) \underbrace{\wh{\Cov}_{\bar{a}^{\tE}_j, b^{\tE}_j} (\wh{\mbf U}_{j, t})}_{=: \wh{\bm\Gamma}_{j, (r)}} \Bigr],
\end{align*}
we derive a bound on $\vert \wh{\bm\Gamma}_{j, (\ell)} - \bm\Gamma_j \vert_\infty$. 
Analogous arguments carry over to $\vert \wh{\bm\Gamma}_{j, (r)} - \bm\Gamma_j \vert_\infty$, which completes the proof. 
We write
\begin{align}
& \l\vert \wh{\bm\Gamma}_{j, (\ell)} - \bm\Gamma_j \r\vert_\infty \le \l\vert \frac{1}{\bar{b}^{\tE}_j - a^{\tE}_j} \sum_{t = a^{\tE}_j + 1}^{\bar{b}^{\tE}_j} \wh{\mbf U}^\circ_{j, t} (\wh{\mbf U}^\circ_{j, t})^\top - \wt{\bm\Gamma}_{j, (\ell)}\r\vert_\infty
\nn \\
& \qquad + \l\vert \bar{\mbf U}^\circ_{j, (\ell)} (\bar{\mbf U}^\circ_{j, (\ell)})^\top \r\vert_\infty
+ \l\vert \wt{\bm\Gamma}_{j, (\ell)} - \bm\Gamma_j \r\vert_\infty =: T_{j, 1} + T_{j, 2} + T_{j, 3}, \text{ where} \label{eq:prop:G}
\\
& \wh{\mbf U}^\circ_{j, t} = \wh{\mbf U}_{j, t} - \E(\wh{\mbf U}_{j, t} \vert \mc D^{\tE}), \  
\bar{\mbf U}^\circ_{j, (\ell)} = \frac{1}{\bar{b}^{\tE}_j - a^{\tE}_j} \sum_{t = a^{\tE}_j + 1}^{\bar{b}^{\tE}_j} \wh{\mbf U}^\circ_{j, t} \text{ \ and}
\nn \\
& \wt{\bm\Gamma}_{j, (\ell)} = \frac{1}{\bar{b}^{\tE}_j - a^{\tE}_j} \sum_{t = a^{\tE}_j + 1}^{\bar{b}^{\tE}_j} \E\l( \wh{\mbf U}^\circ_{j, t} (\wh{\mbf U}^\circ_{j, t})^\top \big \vert \mc D^{\tE}\r).
\nn
\end{align}
Under Assumption~\ref{assum:func:dep:two}, we have $\wh{\mbf U}_{j, t}$ distributed independently over $t$ conditional on $\mc D^{\tE}$.
Further, we can write $\wh{U}_{j, it} = \mbf e_i^\top \mbf Z_t^{\tO} (\mbf Z_t^{\tO})^\top \mbf b_{j, t}$ with
\begin{align*}
\mbf b_{j, t} = \l\{ \begin{array}{ll}
\mbf b_{j, -} = \bm\beta_{j - 1} + \frac{1}{2} \wh{\bm\delta}^{\tE}_j, & t \in \{a^{\tE}_j + 1, \ldots, \bar{b}^{\tE}_j \},  \\
\mbf b_{j, +} = \bm\beta_j - \frac{1}{2} \wh{\bm\delta}^{\tE}_j, & t \in \{\bar{a}^{\tE}_j + 1, \ldots, b^{\tE}_j \}.
\end{array}\r.
\end{align*}
Then, we have 
\begin{align*}
\l\vert \bmx \bm\beta_{j - 1} + \frac{1}{2} \wh{\bm\delta}^{\tE}_j \\ 1 \emx \r\vert_2 &\le 1 + \vert \bm\beta_{j - 1} \vert_2 + \frac{1}{2} \vert \wh{\bm\delta}^{\tE}_j \vert_2 \lesssim 1 + \vert \bm\beta_{j - 1} \vert_2 + \vert \bm\delta_j \vert_2,
\\
\l\vert \bmx \bm\beta_j - \frac{1}{2} \wh{\bm\delta}^{\tE}_j \\ 1 \emx \r\vert_2 &\le 1 + \vert \bm\beta_j \vert_2 + \frac{1}{2} \vert \wh{\bm\delta}^{\tE}_j \vert_2 \lesssim 1 + \vert \bm\beta_j \vert_2 + \vert \bm\delta_j \vert_2,
\end{align*}
by Proposition~\ref{prop:dp:est} and Assumption~\ref{assum:size:two}. 
Combined with Lemma~\ref{lem:assum:func:two}, the above observations give
\begin{align}
\label{eq:uhat:subexp}
\sup_{\nu \ge 2} \nu^{-1} \l(\E\l( \vert \wh{U}_{j, it}^\circ \vert^\nu \big\vert \mc D^{\tE} \r)\r)^{1/\nu} \lesssim \Xi \Psi_j
\end{align}
for all $i \in [p]$ and $t \in \{a^{\tE}_j + 1, \ldots, \bar{b}^{\tE}_j\} \cup \{\bar{a}^{\tE}_j + 1, \ldots, b^{\tE}_j\}$. 
Then by~\eqref{eq:ab:prop}, \eqref{eq:uhat:subexp} and Lemma~\ref{lem:bound}, we have
\begin{align*}
\vert \bar{\mbf U}^\circ_{j, (\ell)} \vert_\infty \lesssim \frac{\Xi \Psi_j \sqrt{\log(p \vee n)}}{\sqrt{\Delta_j}}
\end{align*}
conditional on $\mc E^{\tE}_{n, p} \cap \mc R^{\tE}_{n, p} \cap \mc E^{\tO}_{n, p}$, which leads to 
\begin{align*}
\Delta_j T_{j, 2} \le \max_{j \in [q]} \Delta_j \vert \bar{\mbf U}^\circ_{j, (\ell)} \vert_\infty^2 \lesssim \Xi^2 \Psi_j^2 \log(p \vee n), \text{ \and }
\\
\sqrt{\Delta_j} \; T_{j, 2} \lesssim \frac{ \Xi^2 \Psi_j^2 \log(p \vee n) } {\sqrt{\Delta_j}} \lesssim \frac{\Psi_j^2}{\sqrt{\log(p \vee n)}}
\end{align*}
for all $j \in [q]$ under Assumption~\ref{assum:size:two}.
Similarly as in~\eqref{eq:uhat:subexp}, we have
\begin{align}
& \sup_{\nu \ge 2} \nu^{-2} \l(\E\l( \vert \wh{U}_{j, it}^\circ \wh{U}_{j, i' t}^\circ - \wt{\Gamma}_{j, (\ell), ii'} \vert^\nu \big \vert \mc D^{\tE} \r)\r)^{1/\nu} 
\le 2 \sup_{\nu \ge 2} \nu^{-2} \l(\E\l( \vert \wh{U}_{j, it}^\circ \wh{U}_{j, i't}^\circ \vert^\nu  \vert \mc D^{\tE}\r)\r)^{1/\nu}\nn \\
\le & \,
2 \sup_{\nu \ge 2} \nu^{-2} \l(\E\l( \vert \wh{U}_{j, it}^\circ \vert^{2\nu} \big \vert \mc D^{\tE} \r)\r)^{1/(2\nu)} \l(\E\l( \vert \wh{U}_{j, i't}^\circ \vert^{2\nu} \big \vert \mc D^{\tE} \r)\r)^{1/(2\nu)} 
\lesssim \Xi^2 \Psi_j^2 
\label{eq:uhat:subexp:two}
\end{align}
for all $i, i' \in [p]$, where the first inequality follows from Jensen's inequality and the second from H\"{o}lder's inequality.
Then, e.g.\ by Lemma~5 of \cite{wong2020lasso}, we have
\begin{align}
\label{eq:uu:tail:prob}
    \p\l( \Xi^{-2} \Psi_j^{-2} \vert \wh{U}_{j, it}^\circ \wh{U}_{j, i' t}^\circ - \wt{\Gamma}_{j, (\ell), ii'} \vert \ge z \r) \lesssim \exp( - z^{1/2}).
\end{align}
Theorem~4 of \cite{xu2022change} shows that for a sequence of independent centred random variables $\{W_t\}_{t \in \Z}$ satisfying $\sup_{t \in \Z} \p( \vert W_t \vert \ge z) \le \exp(1 - z^{1/2})$, there exist absolute constants $C_2, C_3 \in (0, \infty)$ such that for any integer $n \ge 3$,
\begin{align}
\label{eq:xu}
\p\l( \frac{1}{\sqrt n} \l\vert \sum_{t = 1}^n W_t \r\vert \ge z \r) \le n \exp\l( - C_2 (z\sqrt{n})^{1/2} \r) + 2 \exp\l( - C_3 z^2 \r).
\end{align}
Let us define with some constants $C_4, C_5 \in (0, \infty)$,
\begin{align}
\wt{\mc E}^{\tO}_{n, p} &= \bigcup_{i, i' \in [p]} \bigcup_{j \in [q]} \bigcup_{\ell \in \{\pm\}} \wt{\mc E}^{\tO}_{n, p}(\mbf e_i, \mbf e_{i'}, \vert \mbf b_{j, \ell} \vert_2^{-1} \mbf b_{j, \ell}), \text{\ where}
\label{eq:set:e:tilde}
\\
\wt{\mc E}^{\tO}_{n, p}(\mbf a, \mbf a', \mbf b) &= \l\{ 
\max_{\substack{0 \le s < e \le n \\  e - s \ge C_4 \log^3(p \vee n)}} \frac{1}{\sqrt{e - s}} \l\vert \sum_{t = s + 1}^e \l( U_t U'_t - \E(U_t U'_t) \r) \r\vert \le C_5 \Xi^2 \sqrt{\log(p \vee n)}, 
\r.
\nn \\
& \qquad \qquad \text{with \ } 
U_t = \mbf a^\top \mbf Z_t^{\tO} (\mbf Z_t^{\tO})^\top \mbf b - \E(\mbf a^\top \mbf Z_t^{\tO} (\mbf Z_t^{\tO})^\top \mbf b)
\nn \\
& \qquad \qquad \l. \text{and \ }
U'_t = (\mbf a')^\top \mbf Z_t^{\tO} (\mbf Z_t^{\tO})^\top \mbf b - \E((\mbf a')^\top \mbf Z_t^{\tO} (\mbf Z_t^{\tO})^\top \mbf b) \r\}.
\nn
\end{align}
Combining~\eqref{eq:uu:tail:prob} with~\eqref{eq:xu}, we can show that for large enough $C_5 > 0$,
\begin{align*}
\p(\wt{\mc E}^{\tO}_{n, p}) &\ge 1 - 2n^2p^2 \exp\l( - C_2 (\sqrt{C_4} C_5 \log^2(p \vee n))^{1/2} \r) - 4 np^2 \exp\l( - C_3 \bigl(C_5 \sqrt{\log(p \vee n)}\bigr)^2 \r)
\\
&\ge 1 - c_2 (p \vee n)^{-c_3}.
\end{align*}
This shows that
\begin{align*}
\sqrt{\Delta_j} \; T_{j, 1} \lesssim \Xi^2 \Psi_j ^2\sqrt{\log (p \vee n)}
\end{align*}
conditional on $\mc E^{\tE}_{n, p} \cap \mc R^{\tE}_{n, p} \cap \wt{\mc E}^{\tO}_{n, p}$, 
where we use that under Assumption~\ref{assum:size:two}, $\Delta_j \gtrsim \log^3(p \vee n)$.

Finally, for bounding $T_{j, 3}$, we note that 
\begin{align*}
\l\vert \wt{\bm\Gamma}_{(\ell), ii'} - \bm\Gamma_{ii'} \r\vert
\le \E\l( \l\vert \wh{U}^\circ_{it}\l( \wh{U}^\circ_{i't} - U^\circ_{i't} \r) \r\vert \,\big\vert\, \mc D^{\tE} \r) + \E\l( \l\vert U^\circ_{it}\l( \wh{U}^\circ_{i't} - U^\circ_{i't} \r) \r\vert \,\big\vert\, \mc D^{\tE} \r)  =: T_{4, ii'} + T_{5, ii'}
\end{align*}
Observe that for $t \in \{a^{\tE}_j + 1, \ldots, \bar{b}^{\tE}_j\}$, 
\begin{align*}
\wh{U}_{it} - U_{it} &= \mbf e_i^\top \mbf Z_t^{\tO} (\mbf Z_t^{\tO})^\top 
\bmx
\frac{\wh\cp^{\tE}_j - a^{\tE}_j}{b^{\tE}_j - a^{\tE}_j} \wh{\bm\delta}^{\tE}_j - \frac{\cp_j - a^{\tE}_j}{b^{\tE}_j - a^{\tE}_j} \bm\delta_j \\ 0
\emx \text{ \ and}
\\
\l\vert \frac{\wh\cp^{\tE}_j - a^{\tE}_j}{b^{\tE}_j - a^{\tE}_j} \wh{\bm\delta}^{\tE}_j - \frac{\cp_j - a^{\tE}_j}{b^{\tE}_j - a^{\tE}_j} \bm\delta_j \r\vert_2 
&\le \frac{\wh\cp^{\tE}_j - a^{\tE}_j}{b^{\tE}_j - a^{\tE}_j} \l\vert \wh{\bm\delta}^{\tE}_j - \bm\delta_j \r\vert_2 + \frac{\vert \wh\cp^{\tE}_j - \cp_j \vert }{b^{\tE}_j - a^{\tE}_j} \vert \bm\delta_j \vert_2
\\
&\lesssim \frac{\Psi_j \sqrt{\mathfrak{s}_j \log(p \vee n)}}{\sqrt{\Delta_j}} \l(1 + \frac{\vert \bm\delta_j \vert_2 \Psi_j \sqrt{\log(p \vee n)}}{\sqrt{\mathfrak{s}_j} \vert \bm\Sigma \bm\delta_j \vert_\infty^2 \sqrt{\Delta}_j} \r)
\lesssim \Psi_j \sqrt{\frac{\mathfrak{s}_j \log(p \vee n)}{\Delta_j}},
\end{align*}
where the second inequality follows 
from~\eqref{eq:ab:prop}, Theorem~\ref{thm:one} and Proposition~\ref{prop:dp:est}, and the last one from~\eqref{eq:size:change} and Assumption~\ref{assum:size}.
Then thanks to Lemma~\ref{lem:assum:func:two}, we have for all $i \in [p]$, 
\begin{align*}
\l(\E\l( \l\vert \wh{U}^\circ_{it} - U^\circ_{it} \r\vert^2 \bigg \vert \mc D^{\tE} \r)\r)^{1/2} \lesssim \Xi \l\vert \frac{\wh\cp^{\tE}_j - a^{\tE}_j}{b^{\tE}_j - a^{\tE}_j} \wh{\bm\delta}^{\tE}_j - \frac{\cp_j - a^{\tE}_j}{b^{\tE}_j - a^{\tE}_j} \bm\delta_j \r\vert_2 \lesssim \Xi \Psi_j \sqrt{\frac{\mathfrak{s}_j \log(p \vee n)}{\Delta_j}}.
\end{align*}
Together with that $(\E( \vert U^\circ_{it} \vert^2 \vert \mc D^{\tE}))^{1/2} \lesssim \Xi \Psi$, which follows analogously as in~\eqref{eq:uhat:subexp}, this gives 
\begin{align*}
T_{5, ii'} \lesssim \Xi^2 \Psi_j^2 \sqrt{\frac{\mathfrak{s}_j \log(p \vee n)}{\Delta_j}}
\end{align*}
for all $i, i' \in [p]$ by H\"{o}lder's inequality. 
We similarly bound $T_{4, ii'}$ such that
\begin{align*}
T_{j, 3} \lesssim \Xi^2 \Psi_j^2 \sqrt{\frac{\mathfrak{s}_j \log(p \vee n)}{\Delta_j}}
\end{align*}
for all $j \in [q]$.
Collecting the bounds on $T_{j, 1}$, $T_{j, 2}$ and $T_{j, 3}$ in~\eqref{eq:prop:G}, the proof is complete.

\subsubsection{Proof of Corollary~\ref{cor:thm:two}}

In what follows, we present our arguments conditionally on $\mc Q_{n, p} = \mc E^{\tE}_{n, p} \cap \mc R^{\tE}_{n, p} \cap \mc E^{\tO}_{n, p} \cap \wt{\mc E}^{\tO}_{n, p}$ with $\mc E^\ell_{n, p}$ and $\mc R^\ell_{n, p}$ defined in the proof of Theorem~\ref{thm:two}, and $\wt{\mc E}^{\tO}_{n, p}$ in the proof of Proposition~\ref{prop:G}.
In particular, we have
$\p(\mc Q_n) \ge 1 - 4 c_2 (p \vee n)^{-c_3}$.
We regard $\psi_{n, p} = \sqrt{\log(p \vee n)}$ when applying the preceding results.

Under Assumption~\ref{assum:func:dep:two}, setting $\mbf b = ((\bar{\bm\mu}^{\tE}_j)^\top, 1)^\top$, we have
\begin{align*}
\vert \bm\Gamma_j \vert_\infty \le \max_{i \in [p]} \E\l[ \l( \mbf e_i^\top \mbf Z^{\tO}_t (\mbf Z^{\tO}_t)^\top \mbf b - \E\l( \mbf e_i^\top \mbf Z^{\tO}_t (\mbf Z^{\tO}_t)^\top \mbf b \r) \r)^2  \Big\vert \mc D^{\tE}\r]
\le (2 \Xi)^2 \vert \mbf b \vert_2^2 \le 4 \Xi^2 \Psi_j^2
\end{align*}
for all $j \in [q]$. Then, the (conditional) covariance matrices of $\mbf V_j$ and $\wh{\mbf V}_j$ satisfy
\begin{align*}
& \l\vert \frac{(\wh\cp^{\tE}_j - a^{\tE}_j) (b^{\tE}_j - \wh\cp^{\tE}_j)}{(\cp_j - a^{\tE}_j) (b^{\tE}_j - \cp_j)} \wh{\bm\Omega}^{\tE} \bm\Gamma_j (\wh{\bm\Omega}^{\tE})^\top - \wh{\bm\Omega}^{\tE} \wh{\bm\Gamma}_j (\wh{\bm\Omega}^{\tE})^\top \r\vert_\infty 
\\
\lesssim & \, \l\vert \wh{\bm\Omega}^{\tE} \l( \wh{\bm\Gamma}_j - \bm\Gamma_j\r) (\wh{\bm\Omega}^{\tE})^\top \r\vert_\infty 
+ \frac{\vert \wh{\cp}_j - \cp_j \vert^2 \l\vert \wh{\bm\Omega}^{\tE} \bm\Gamma_j (\wh{\bm\Omega}^{\tE})^\top \r\vert_\infty}{\Delta_j^2} 
\\
\lesssim & \, \l\Vert \wh{\bm\Omega}^{\tE} \r\Vert_{\infty}^2 \l\vert \wh{\bm\Gamma}_j - \bm\Gamma_j\r\vert_{\infty} + \frac{\vert \wh{\cp}_j - \cp_j \vert^2 \Vert \wh{\bm\Omega}^{\tE} \Vert_{\infty}^2 \l\vert \bm\Gamma_j\r\vert_{\infty}}{\Delta_j^2}
\\
\lesssim & \, \Vert \bm\Omega \Vert_1^2 \Psi_j^2 \sqrt{\frac{ \mathfrak{s}_j \log(p \vee n) }{\Delta_j}} + \frac{ \Vert \bm\Omega \Vert_1^2 \Psi_j^6 \log^2(p \vee n)}{ \vert \bm\Sigma \bm\delta_j \vert_\infty^4 \Delta_j^2} 
\\
= & \Vert \bm\Omega \Vert_1^2 \Psi_j^2 \l[ \sqrt{\frac{\mathfrak{s}_j \log(p \vee n)}{\Delta_j}} + \l(\frac{\Psi_j^2 \log(p \vee n)}{\vert \bm\Sigma \bm\delta_j \vert_\infty^2 \Delta_j} \r)^2 \r]
\end{align*}
where the first inequality follows from~\eqref{eq:ab:prop} and  the third inequality from \cref{thm:one} and Propositions~\ref{prop:clime:dataE} and~\ref{prop:G}. Combined with~\eqref{eq:maxmin}, Lemma~\ref{lem:gausscmp} indicates that
\begin{align*}
& \sup_{z \in \R} \l\vert \p\l( \vert \wh{\mbf V} \vert_\infty \le z \big\vert \mc D^{\tE} \cup \mc D^{\tO} \r) - \p\l( \vert \mbf V \vert_\infty \le z \big\vert \mc D^{\tE} \cup \mc D^{\tO} \r) \r\vert
\\
\le & \, \sup_{z \in \R} \l\vert \p\l( \vert \wh{\mbf V} \vert_\infty \le z \big\vert \mc D^{\tE} \cup \mc D^{\tO}, \mc Q_{n, p} \r) - \p\l( \vert \mbf V \vert_\infty \le z \big\vert \mc D^{\tE} \cup \mc D^{\tO}, \mc Q_{n, p} \r) \r\vert + \p(\mc Q_{n, p}^c)
\\
\lesssim & \, 
\frac{ \Vert \bm\Omega \Vert_1^2 \Psi_j^2 \bar{\sigma}^2 }{\sigma_\vep^2 \underline{\sigma}} \log(p)\log(n) \l[ \sqrt{\frac{\mathfrak{s}_j \log(p \vee n)}{\Delta_j}} + \l(\frac{\Psi_j^2 \log(p \vee n)}{\vert \bm\Sigma \bm\delta_j \vert_\infty^2 \Delta_j} \r)^2 \r] + 4 c_2 (p \vee n)^{-c_3}.
\end{align*}
Taking into account the overlap in the sets conditioned on in the proofs of Theorem~\ref{thm:two}, Propositions~\ref{prop:clime:dataE} and~\ref{prop:G}, and noting that 
\begin{multline*}
\min\l(\frac{ \Vert \bm\Omega \Vert_1^2 \Psi_j^6\log(p)\log(n)\log^2(p \vee n)}{\vert\bm\Sigma\bm\delta_j\vert_\infty^4\Delta_j^2}, 1\r)  \\ \le
\min \l(\frac{ \Vert \bm\Omega \Vert_1^2 \Psi_j^6\log(p)\log(n)\log^2(p \vee n)}{\vert\bm\Sigma\bm\delta_j\vert_\infty^4\Delta_j^2}, 1 \r)^{1/4} \le \frac{\Psi_j^2\Vert\bm\Omega\Vert_1\log(n\vee p)\sqrt{\log(p)}}{\vert\bm\Sigma\bm\delta_j\vert_{\infty}\sqrt{\Delta_j}},    
\end{multline*}
the conclusion follows from Theorem~\ref{thm:two} and that $c_3$ can be arbitrarily large.

\subsubsection{Supporting lemmas}

Define $\mathbb{K}(b) = \mathbb{B}_0(b) \cap \mathbb{B}_2(1)$
with some $b \ge 1$.

\begin{lem}[Restricted eigenvalue condition]
	\label{lem:rsc}
	Suppose that Assumptions~\ref{assum:xe}, \ref{assum:func:dep} 
    and~\ref{assum:xe:iii} hold.
	Then, $\p(\mc R_{n, p}) \ge 1 - c_2(p \vee n)^{-c_3}$ where the constants $c_2, c_3 \in (0, \infty)$ are as in Theorem~\ref{thm:one}, and $\mc R_{n, p} = \mc R^{(1)}_{n, p} \cap \mc R^{(2)}_{n, p}$ with
	\begin{align*}
	\mc R^{(1)}_{n, p} &= \Biggl\{ \frac{1}{e - s} \sum_{t = s + 1}^e \mbf a^\top \mbf x_t \mbf x_t^\top \mbf a 
	\ge \frac{\underline{\sigma}}{2} \vert \mbf a \vert_2^2 - {\underline{\sigma}^{\frac{2\kappa -1}{2\kappa+1}}} \crsc \log(p) (e - s)^{-\frac{1}{1 + 2\kappa}} \vert \mbf a \vert_1^2, 
	\\
	& \qquad \quad \text{ for all }\, 0 \le s < e \le n \text{ satisfying } e - s \ge \crsc \underline{\sigma}^{-2} \psi^2_{n, p} \text{ and } \mbf a \in \R^p \Biggr\},
	\\
	\mc R^{(2)}_{n, p} &= \Biggl\{ \frac{1}{e - s}  \sum_{t = s + 1}^e \mbf a^\top \mbf x_t \mbf x_t^\top \mbf a 
	\le \frac{3 \bar{\sigma}}{2} \vert \mbf a \vert_2^2 + {\underline{\sigma}^{\frac{2\kappa -1}{2\kappa+1}}} \crsc \log(p) (e - s)^{-\frac{1}{1 + 2\kappa}} \vert \mbf a \vert_1^2,
	\\
	& \qquad \quad \text{ for all } \, 0 \le s < e \le n \text{ satisfying } e - s \ge \crsc \underline{\sigma}^{-2} \psi^2_{n, p} \text{ and } \mbf a \in \R^p \Biggr\}.
	\end{align*}
	Here $\crsc \in (0, \infty)$ is a constant that depends on $\kappa$ only (with $\kappa = 0$ under Assumption~\ref{assum:func:dep}~\ref{assum:fd:gauss} and $\kappa = \max(1/r - 1/2, 0)$ under~\ref{assum:fd:ind}).
\end{lem}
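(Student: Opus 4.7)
The proof follows the standard pipeline for restricted eigenvalue conditions in the high-dimensional, weakly-dependent regime, pioneered by \citet{loh2012high} and \citet{basu2015regularized}, specialised here to deliver the bound uniformly over all $O(n^2)$ intervals $(s,e]$. My plan is to first control $\mbf a^\top (\wh{\bm\Sigma}_{s,e} - \bm\Sigma) \mbf a$ uniformly over sparse unit vectors via a net argument, then extend to arbitrary $\mbf a \in \R^p$ via the Loh-Wainwright decomposition trick, and finally union-bound over intervals.

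For a fixed $\mbf a \in \mathbb{B}_2(1)$, \cref{lem:exp} (with $\mbf b = \mbf a$) delivers the tail bound
\begin{align*}
\p\l(\l\vert \frac{1}{e-s}\sum_{t=s+1}^e (\mbf a^\top \mbf x_t)^2 - \mbf a^\top \bm\Sigma \mbf a \r\vert \ge \frac{z}{\sqrt{e-s}}\r) \;\le\; C_\kappa \exp\bigl(- c\, z^{2/(1+2\kappa)}\bigr)
\end{align*}
under Assumption~\ref{assum:func:dep}~\ref{assum:fd:exp}, with the sharper sub-Gaussian tail $2\exp(-c z^2)$ (for $z \lesssim \sqrt{e-s}$) under~\ref{assum:fd:gauss} or~\ref{assum:func:dep:two}. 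Taking a $(1/4)$-net $\mathcal{N}_k$ of $\mathbb{K}(k) = \mathbb{B}_0(k) \cap \mathbb{B}_2(1)$ of cardinality $\vert\mathcal{N}_k\vert \le \binom{p}{k}9^k \le (9ep/k)^k$, the standard discretisation argument (Exercise~4.4.3 of \citealp{Ver18}) combined with a union bound over the net yields, with probability at least $1 - (p\vee n)^{-c}$,
\begin{align*}
\sup_{\mbf a \in \mathbb{K}(k)}\l\vert\frac{1}{e-s}\sum_{t=s+1}^e (\mbf a^\top \mbf x_t)^2 - \mbf a^\top \bm\Sigma \mbf a\r\vert \;\le\; C_1 \frac{(k\log(p\vee n))^{(1+2\kappa)/2}}{\sqrt{e-s}} \;=:\; r_{k,e-s}.
\end{align*}

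Next, Lemma~12 of \citet{loh2012high} promotes the sparse uniform bound to the global bound, valid for every $\mbf a \in \R^p$,
\begin{align*}
\l\vert\frac{1}{e-s}\sum_{t=s+1}^e (\mbf a^\top \mbf x_t)^2 - \mbf a^\top \bm\Sigma \mbf a\r\vert \;\le\; 27\, r_{k,e-s}\l(\vert\mbf a\vert_2^2 + \frac{\vert\mbf a\vert_1^2}{k}\r).
\end{align*}
I then optimise over $k$: taking $k$ as the largest integer with $27\, r_{k,e-s} \le \underline{\sigma}/2$ gives $k \asymp \underline{\sigma}^{2/(1+2\kappa)}(e-s)^{1/(1+2\kappa)}/\log(p\vee n)$, and this is $\ge 1$ exactly under the stated condition $e-s \ge \crsc \underline{\sigma}^{-2}\psi_{n,p}^2$. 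Substituting back, the $\ell_1$-penalty coefficient becomes $r_{k,e-s}/k \asymp \underline{\sigma}^{(2\kappa-1)/(2\kappa+1)}\log(p)(e-s)^{-1/(1+2\kappa)}$, which matches the statement. Combining with $\underline{\sigma}\vert\mbf a\vert_2^2 \le \mbf a^\top\bm\Sigma\mbf a \le \bar{\sigma}\vert\mbf a\vert_2^2$ produces both the lower bound defining $\mc R_{n,p}^{(1)}$ and the upper bound defining $\mc R_{n,p}^{(2)}$ simultaneously.

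A final union bound over the $\binom{n+1}{2} \le n^2$ intervals $(s,e]$ inflates the failure probability by a factor of $n^2$, which is absorbed into $(p\vee n)^{-c_3}$ by enlarging the absolute constant $c$ in the exponent of the pointwise tail bound. The chief obstacle is purely bookkeeping: carefully tracking the $\underline{\sigma}$-dependence through the $k$-optimisation to recover the precise exponent $(2\kappa-1)/(2\kappa+1)$ on $\underline{\sigma}$ in the penalty coefficient, and verifying that the threshold $\crsc \underline{\sigma}^{-2}\psi_{n,p}^2$ on $e-s$ is exactly the level at which the optimal $k$ drops to $1$. Under Assumption~\ref{assum:func:dep}~\ref{assum:fd:gauss} or~\ref{assum:func:dep:two} we set $\kappa = 0$ throughout and all the bounds simplify accordingly.
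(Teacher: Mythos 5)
Your proposal is correct and follows essentially the same route as the paper: pointwise exponential concentration from \cref{lem:exp}, an $\varepsilon$-net over sparse unit balls, promotion to all of $\R^p$ via Lemma~12 of \citet{loh2012high}, and a union bound over the $O(n^2)$ intervals, with the sparsity level chosen so that the sparse deviation stays below $\underline{\sigma}/54$. The only bookkeeping wrinkle worth flagging is that Lemma~12 of \citet{loh2012high} requires the sparse uniform bound over $\mathbb{K}(2k)$ rather than $\mathbb{K}(k)$ (the paper works with $\mathbb{K}(2b_{s,e})$), but this only rescales constants and does not change the argument.
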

\begin{proof}
	First suppose that Assumption~\ref{assum:func:dep}~\ref{assum:fd:exp} holds.
	Let $b_{s, e}$ denote an integer that depends on $(e - s)$ for some $0 \le s < e \le n$, and define
	\begin{multline*}
	\wt{\mc R}_{n, p} = \biggl\{ \sup_{\mbf a \in \mathbb{K}(2b_{s, e})} \frac{1}{e - s} \l\vert \sum_{t = s + 1}^e \mbf a^\top\l(\mbf x_t\mbf x_t^\top - \bm\Sigma\r) \mbf a \r\vert \le \frac{\underline{\sigma}}{54},  \\ 
	\text{ for all }\, 0 \le s < e \le n \text{ with } e - s \ge \crsc \underline{\sigma}^{-2}\psi_{n, p}^2 \biggr\}.
	\end{multline*}
	By Lemma~\ref{lem:exp}~\ref{lem:exp:one}, the $\varepsilon$-net argument (see e.g.\ Lemma~F.2 of \citealp{basu2015regularized}) and the union bound,
	we have
	\begin{align}
	\p\l( \wt{\mc R}_{n, p}^c \r) &\le 
	\sum_{\substack{0 \le s < e \le n \\ e - s \ge \crsc \underline{\sigma}^{-2} \psi_{n, p}^2}}
	C_\kappa \exp\l[ - C^\prime \l(\frac{\underline{\sigma}\sqrt{e - s}}{54 \Xi}\r)^{\frac{2}{1 + 2 \kappa}} + 2 b_{s, e} \log(p) \r]
	\nn \\
	&\le C_\kappa n^2 \exp\l[ -\frac{C^\prime}{2} \l(\frac{\crsc^{1/2}}{54 \Xi} \r)^{\frac{2}{1 + 2\kappa}} \log(p \vee n) \r],
	\nn
	\end{align}
	with some constant $C^\prime \in (0, \infty)$ depending only on $\kappa$.
	Here, the last inequality follows with 
	\begin{align*}
	b_{s, e} = \l\lfloor \frac{C^\prime}{4\log(p)}\l(\frac{ \underline{\sigma}\sqrt{e - s} }{54}\r)^{\frac{2}{1 + 2\kappa}} \r\rfloor
	\end{align*}
	and the condition on $e -s$. 
	We can find $\crsc$ that depends on $\kappa$, such that $b_{s, e}  \ge 1$ for large enough $\crsc$ and further $\p(\wt{\mc R}_{n, p}) \ge 1 - c_2(p \vee n)^{-c_3}$.
	Then, by Lemma~12 of \cite{loh2012high}, on $\wt{\mc R}_{n, p}$, we have
	\begin{align*}
	\sum_{t = s + 1}^e \mbf a^\top \mbf x_t \mbf x_t^\top \mbf a 
	& \ge \underline{\sigma} (e - s) \vert \mbf a \vert_2^2 - \frac{\underline{\sigma}}{2} (e - s) \l( \vert \mbf a \vert_2^2 + \frac{4\log(p)}{C^\prime} \l(\frac{54}{ \underline{\sigma} \sqrt{e - s} }\r)^{\frac{2}{1 + 2\kappa}} \vert \mbf a \vert_1^2 \r)
	\\
	& \ge \frac{\underline{\sigma}}{2} (e - s) \vert \mbf a \vert_2^2 - \underline{\sigma}^{\frac{2\kappa - 1}{2\kappa + 1}} \crsc \log(p) (e - s)^{\frac{2\kappa}{1 + 2\kappa}} \vert \mbf a \vert_1^2
	\end{align*}
	for all $\mbf a \in \R^p$, with sufficiently large $\crsc$ depending only on $\kappa$. 	Analogously conditional on $\wt{\mc R}_{n, p}$, we have
	\begin{align*}
	\sum_{t = s + 1}^e \mbf a^\top \mbf x_t \mbf x_t^\top \mbf a 
	\le \frac{3\bar{\sigma}}{2} (e - s) \vert \mbf a \vert_2^2 +  \underline{\sigma}^{\frac{2\kappa - 1}{2\kappa + 1}} \crsc \log(p) (e - s)^{\frac{2\kappa}{1 + 2\kappa}} \vert \mbf a \vert_1^2
	\end{align*}
	for all $\mbf a \in \R^p$.
	The cases under Assumption~\ref{assum:func:dep}~\ref{assum:fd:gauss}--\ref{assum:fd:ind} are handled similarly and thus we omit the proof.
\end{proof}

\begin{lem}[CLIME]
	\label{lem:clime}
	Suppose that Assumptions~\ref{assum:xe}, \ref{assum:func:dep} and~\ref{assum:xe:iii} hold. Define 
	\begin{align}\label{eq:clime}
	\wh{\bm\Omega}_{s, e} = \mathop{\arg\min}_{\mbf M \in \R^{p \times p}} \vert \mbf M \vert_1 \text{ \ subject to \ } 
	\sqrt{e - s} \l\vert \mbf M \wh{\bm\Sigma}_{s, e} - \mbf I_p \r\vert_\infty \le \eta.
	\end{align}
	Then conditionally on $\mc E_{n, p}$ defined in~\eqref{eq:set:e}, the following results hold.
	\begin{enumerate}[label = (\roman*)]
		\item \label{lem:clime:one} Setting $\eta = C_\eta \Vert \bm\Omega \Vert_1 \psi_{n, p}$ with $C_\eta \ge C_0$, we have
		\begin{align*}
		\sqrt{e-s}\l\vert \wh{\bm\Omega}_{s, e} - \bm\Omega \r\vert_\infty \le 4 C_\eta \Vert \bm\Omega \Vert_1^2 \psi_{n, p} \text{ \ and \ } \Vert \wh{\bm\Omega}_{s, e} \Vert_\infty \le \Vert \bm\Omega \Vert_1,
		\end{align*}
		uniformly over $0 \le s < e \le n$ with $e - s \ge C_1\psi_{n, p}^2$.
		
		\item \label{lem:clime:two} 
		For $\mathfrak{s}_{\varrho}$ defined in \cref{assum:omega} with $\varrho \in [0, 1)$, it holds that 
		\begin{align*}
		\l\Vert \wh{\bm\Omega}_{s, e} - \bm\Omega \r\Vert \le \l\Vert \wh{\bm\Omega}_{s, e} - \bm\Omega \r\Vert_\infty \le 12\, \mathfrak{s}_{\varrho}\,
		\l(\frac{4 C_\eta \Vert \bm\Omega \Vert_1^2 \psi_{n, p} }{\sqrt{e-s}}\r)^{1 - \varrho}
		\end{align*}
		uniformly over $0 \le s < e \le n$ with $e - s \ge C_1\psi_{n, p}^2$.
	\end{enumerate}
\end{lem}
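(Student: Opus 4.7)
The plan is to follow the standard CLIME analysis of \cite{cai2011constrained}, adapted to exploit the uniform concentration already supplied by \cref{lem:bound}. Specialising the vectors $\mbf a, \mbf b$ inside the event $\mc E_{n, p}^{(2)}(\mbf a, \mbf b)$ to the standard basis vectors $\mbf e_i, \mbf e_{i'}$ with $i, i' \in [p]$, we obtain on $\mc E_{n, p}$ the uniform element-wise bound
\[
\sqrt{e - s} \, \bigl| \wh{\bm\Sigma}_{s, e} - \bm\Sigma \bigr|_\infty \;\le\; C_0 \psi_{n, p}
\qquad \text{for all } 0 \le s < e \le n \text{ with } e - s \ge C_1 \psi_{n, p}^2,
\]
and all subsequent arguments are conditioned on $\mc E_{n, p}$. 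Combined with $\bm\Omega \bm\Sigma = \mbf I_p$, this gives
$\sqrt{e - s} \, |\bm\Omega \wh{\bm\Sigma}_{s, e} - \mbf I_p|_\infty \le \|\bm\Omega\|_1 \cdot C_0 \psi_{n, p} \le \eta$ whenever $C_\eta \ge C_0$, so that $\bm\Omega$ is feasible for~\eqref{eq:clime}. Writing $\wh{\mbf b}_i$ and $\mbf b_i^0$ for the $i$-th rows of $\wh{\bm\Omega}_{s, e}$ and $\bm\Omega$ viewed as column vectors, and noting that \eqref{eq:clime} decouples row by row, the $\ell_1$-minimality of CLIME combined with the feasibility of $\mbf b_i^0$ gives $|\wh{\mbf b}_i|_1 \le |\mbf b_i^0|_1 \le \|\bm\Omega\|_1$ for every $i$, which is the second claim in~\ref{lem:clime:one}.

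For the element-wise bound in~\ref{lem:clime:one}, I will use the row-wise identity
\[
\wh{\mbf b}_i - \mbf b_i^0 \;=\; \bm\Omega \bigl( \wh{\bm\Sigma}_{s, e} \wh{\mbf b}_i - \mbf e_i \bigr) \;+\; \bm\Omega (\bm\Sigma - \wh{\bm\Sigma}_{s, e}) \wh{\mbf b}_i,
\]
obtained by telescoping through $\mbf b_i^0 = \bm\Omega \mbf e_i$ and $\wh{\mbf b}_i = \bm\Omega \bm\Sigma \wh{\mbf b}_i$. The CLIME constraint bounds the first summand in $|\cdot|_\infty$ by $\|\bm\Omega\|_1 \eta/\sqrt{e - s}$, while the H\"older-type inequality $|\bm\Omega (\bm\Sigma - \wh{\bm\Sigma}_{s, e}) \wh{\mbf b}_i|_\infty \le \|\bm\Omega\|_1 \cdot |\bm\Sigma - \wh{\bm\Sigma}_{s, e}|_\infty \cdot |\wh{\mbf b}_i|_1$ combined with $|\wh{\mbf b}_i|_1 \le \|\bm\Omega\|_1$ bounds the second by $C_0 \|\bm\Omega\|_1^2 \psi_{n, p}/\sqrt{e - s}$. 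Summing and using $C_\eta \ge C_0$ yields the claimed $4 C_\eta \|\bm\Omega\|_1^2 \psi_{n, p}/\sqrt{e - s}$ bound (the constant $4$ is comfortably loose).

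Part~\ref{lem:clime:two} will follow from a thresholding argument on $\ell_\varrho$-balls that is standard in the CLIME literature. Setting $c_n = 4 C_\eta \|\bm\Omega\|_1^2 \psi_{n, p}/\sqrt{e - s}$ and fixing $i \in [p]$, partition $[p] = T_i \cup T_i^c$ via $T_i = \{i' \in [p] : |\Omega_{ii'}| \ge c_n\}$, so that $|T_i| \le \mathfrak{s}_\varrho c_n^{-\varrho}$ and the $T_i$-contribution to $|\wh{\mbf b}_i - \mbf b_i^0|_1$ is at most $|T_i| c_n \le \mathfrak{s}_\varrho c_n^{1 - \varrho}$ by part~\ref{lem:clime:one}. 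On $T_i^c$, the $\ell_1$-minimality $|\wh{\mbf b}_i|_1 \le |\mbf b_i^0|_1$ rearranges to
\[
\sum_{i' \in T_i^c} |\wh\Omega_{ii'}| \;\le\; \sum_{i' \in T_i} \bigl( |\Omega_{ii'}| - |\wh\Omega_{ii'}| \bigr) + \sum_{i' \in T_i^c} |\Omega_{ii'}|,
\]
so the triangle inequality, the element-wise bound on $T_i$, and the pointwise estimate $|\Omega_{ii'}| \le c_n^{1 - \varrho} |\Omega_{ii'}|^\varrho$ on $T_i^c$ together give a bound of order $\mathfrak{s}_\varrho c_n^{1 - \varrho}$, with the accumulated constant absorbed into the factor $12$ in the statement. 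Finally, $\|\wh{\bm\Omega}_{s, e} - \bm\Omega\| \le \|\wh{\bm\Omega}_{s, e} - \bm\Omega\|_\infty$ follows from $\|\mbf A\|^2 \le \|\mbf A\|_1 \|\mbf A\|_\infty$ after (implicit) symmetrisation of $\wh{\bm\Omega}_{s, e}$, which makes the two operator norms coincide on the difference.

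The delicate step is the handling of $T_i^c$ in part~\ref{lem:clime:two}: the element-wise bound from~\ref{lem:clime:one} is on its own insufficient there, and the argument must trade mass between $T_i$ and $T_i^c$ by simultaneously invoking the $\ell_1$-minimality of CLIME and the $\ell_\varrho$-ball geometry encoded in $\mathfrak{s}_\varrho$; everything else is bookkeeping once the uniform deviation bound on $\wh{\bm\Sigma}_{s, e}$ is in hand.
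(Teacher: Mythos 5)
Your proposal is correct and follows the same overall blueprint as the paper's proof (feasibility of $\bm\Omega$, a row-wise $\ell_\infty$ bound via the CLIME constraint, then a thresholding argument on $\ell_\varrho$-balls), but the two intermediate steps are organised differently, and each of your variants is slightly more economical than the paper's.

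For part~\ref{lem:clime:one}, the paper proceeds through a chain: it first bounds $\sqrt{e-s}\,|(\wh{\bm\Omega}_{s,e}-\bm\Omega)\wh{\bm\Sigma}_{s,e}|_\infty \le 2\eta$ by triangle inequality (using feasibility of both $\wh{\bm\Omega}_{s,e}$ and $\bm\Omega$), then converts to $|(\wh{\bm\Omega}_{s,e}-\bm\Omega)\bm\Sigma|_\infty \le 4\eta/\sqrt{e-s}$ using $\Vert\wh{\bm\Omega}_{s,e}-\bm\Omega\Vert_\infty \le 2\Vert\bm\Omega\Vert_1$, and finally multiplies by $\Vert\bm\Omega\Vert_1$. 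Your row-wise telescoping identity $\wh{\mbf b}_i - \mbf b_i^0 = \bm\Omega(\wh{\bm\Sigma}_{s,e}\wh{\mbf b}_i - \mbf e_i) + \bm\Omega(\bm\Sigma - \wh{\bm\Sigma}_{s,e})\wh{\mbf b}_i$ is algebraically the same object, but it invokes the CLIME constraint only once (rather than twice) and replaces the loose bound $\Vert\wh{\bm\Omega}_{s,e}-\bm\Omega\Vert_\infty \le 2\Vert\bm\Omega\Vert_1$ with the sharper $|\wh{\mbf b}_i|_1 \le \Vert\bm\Omega\Vert_1$; this is why you get $2C_\eta$ where the paper gets $4C_\eta$. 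For part~\ref{lem:clime:two}, you threshold on the entries of the true $\bm\Omega$ (setting $T_i = \{i':|\Omega_{ii'}|\ge c_n\}$), whereas the paper thresholds on the estimated entries $\wh\omega_{s,e,ii'}$; both are standard variants in the CLIME literature, and your bookkeeping yields a factor $4$ in place of the paper's $12$. One remark on the very last step: the inequality $\Vert\wh{\bm\Omega}_{s,e}-\bm\Omega\Vert \le \Vert\wh{\bm\Omega}_{s,e}-\bm\Omega\Vert_\infty$ requires either symmetrisation of $\wh{\bm\Omega}_{s,e}$ (as in \citealp{cai2011constrained}) or the two-sided bound $\Vert\mbf A\Vert\le\sqrt{\Vert\mbf A\Vert_1\Vert\mbf A\Vert_\infty}$; you flag this with \enquote{implicit symmetrisation}, which is the right fix but is not spelled out in the paper's definition \eqref{eq:clime} either, so this is a shared omission rather than a gap in your argument specifically.
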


\begin{proof}[Proof of Lemma~\ref{lem:clime}~\ref{lem:clime:one}]
	The proof takes analogous steps as those in the proof of Theorem~6 of \cite{cai2011constrained}.
	We present the following arguments conditional on $\mc E_{n, p}$.
	Since $\Vert \bm\Omega \Vert_\infty = \Vert \bm\Omega \Vert_1$, we have
	\begin{align}
	\sqrt{ e - s } \l\vert \bm\Omega \wh{\bm\Sigma}_{s, e} - \mbf I_p \r\vert_\infty
	&= \sqrt{ e - s } \l\vert \bm\Omega \l(\wh{\bm\Sigma}_{s, e} - \bm\Sigma\r) \r\vert_\infty
	\le \sqrt{ e - s } \Vert \bm\Omega \Vert_1 \l\vert \wh{\bm\Sigma}_{s, e} - \bm\Sigma \r\vert_\infty \nn\\
	&\le C_0 \Vert \bm\Omega \Vert_1 \psi_{n, p} \le \eta.\label{ineq:Sighat}
	\end{align}
	Thus, $\bm\Omega$ is feasible for the constraint in~\eqref{eq:clime}. Since solving the problem in~\eqref{eq:clime} is equivalent to solving
	\begin{align*}
	(\wh{\bm\Omega}_{s, e, i \cdot})^\top = \mathop{\arg\min}_{\mbf m \in \R^p} \vert \mbf m \vert_1 
	\text{ \ subject to \ }
	\sqrt{e - s} \l\vert \mbf m^\top \wh{\bm\Sigma}_{s, e} - \mbf e_i^\top \r\vert_\infty \le \eta
	\end{align*}
	for each $i \in [p]$, it follows that 
	$\Vert \wh{\bm\Omega}_{s, e} \Vert_\infty = \max_{i \in [p]} \vert \wh{\bm\Omega}_{s, e, i \cdot} \vert_1 \le \Vert \bm\Omega \Vert_1$.
	Then, by \eqref{ineq:Sighat},
	\begin{align*}
	\sqrt{e - s} \l\vert \l( \wh{\bm\Omega}_{s, e} - \bm\Omega \r)\wh{\bm\Sigma}_{s, e} \r\vert_\infty
	& \le \sqrt{e - s} \l\vert \wh{\bm\Omega}_{s, e} \wh{\bm\Sigma}_{s, e} - \mbf I_p \r\vert_\infty
	+ \sqrt{e - s} \l\vert \bm\Omega \wh{\bm\Sigma}_{s, e} -\mbf I_p   \r\vert_\infty \le 2\eta.
	\end{align*}
	Further,
	\begin{align*}
	\sqrt{e - s} \l\vert \l( \wh{\bm\Omega}_{s, e} - \bm\Omega \r) \bm\Sigma  \r\vert_\infty
	& \le \sqrt{e-s} \l\vert \l( \wh{\bm\Omega}_{s, e} - \bm\Omega \r) \wh{\bm\Sigma}_{s, e} \r\vert_\infty
	+ \sqrt{e - s} \l\vert \l( \wh{\bm\Omega}_{s, e}- \bm\Omega \r) \l(\wh{\bm\Sigma}_{s, e} - \bm\Sigma\r) \r\vert_\infty
	\\
	&\le 2 \eta +
	\sqrt{e - s} \l\Vert \wh{\bm\Omega}_{s, e} - \bm\Omega  \r\Vert_\infty
	\l\vert \wh{\bm\Sigma}_{s, e} - \bm\Sigma\r\vert_\infty
	\le 4 \eta.
	\end{align*}
	Therefore, it follows that
	\begin{align*}
	\sqrt{e - s} \l\vert \wh{\bm\Omega}_{s, e} - \bm\Omega  \r\vert_\infty
	\le \sqrt{e - s} \l\vert \l( \wh{\bm\Omega}_{s, e} - \bm\Omega \r) \bm\Sigma \r\vert_\infty \l\Vert \bm\Omega  \r\Vert_1
	\le 4 \eta \l\Vert \bm\Omega  \r\Vert_1 ,
	\end{align*}
	which shows the assertion, as $\eta = C_\eta \Vert \bm\Omega \Vert_1 \psi_{n, p}$.
\end{proof}

\begin{proof}[Proof of Lemma~\ref{lem:clime}~\ref{lem:clime:two}]
	We continue to present our arguments conditionally on $\mc E_{n, p}$.
	Let $\varphi := \vert \wh{\bm\Omega}_{s, e} - \bm\Omega \vert_\infty$, $\wh{\bm\Omega}_{s, e} = (\wh\omega_{s, e, ii'})_{i,i \in [p]}$, $\mbf h_i := \wh{\bm\Omega}_{s, e, i \cdot} - \bm\Omega_{i \cdot}$,
	$\mbf h^{(1)}_i := (\wh\omega_{s, e, ii'} \mathbb I_{\{\vert \wh\omega_{s, e, ii'} \vert \ge 2\varphi \}}, \, i' \in [p]) - \bm\Omega_{i \cdot}$ and $\mbf h^{(2)}_i := \mbf h_i - \mbf h^{(1)}_i = (\wh\omega_{s, e, ii'} \mathbb I_{\{\vert \wh\omega_{s, e, ii'} \vert < 2\varphi \}}, \, i' \in [p])$. Then,
	\begin{align*}
	\vert \bm\Omega_{i \cdot} \vert_1 - \vert \mbf h^{(1)}_i \vert_1 + \vert \mbf h^{(2)}_i \vert_1
	\le \vert \bm\Omega_{i \cdot} + \mbf h^{(1)}_i \vert_1 + \vert \mbf h^{(2)}_i \vert_1
	= \vert \wh{\bm\Omega}_{s, e, i \cdot} \vert_1 \le \vert \bm\Omega_{i \cdot} \vert_1,
	\end{align*}
	which implies that $\vert \mbf h^{(2)}_i \vert_1 \le \vert \mbf h^{(1)}_i \vert_1$
	and thus $\vert \mbf h_i \vert_1 \le 2 \vert \mbf h^{(1)}_i \vert_1$.
	The latter is bounded as
	\begin{align*}
	\vert \mbf h^{(1)}_i \vert_1 
	=& \, \sum_{i' = 1}^p \l\vert \wh\omega_{s, e, ii'} \mathbb I_{\{\vert \wh\omega_{s, e, ii'} \vert \ge 2\varphi\}} - \omega_{ii'} \r\vert
	\le 
	\sum_{i' = 1}^p \l\vert \omega_{ii'} \mathbb I_{\{\vert \omega_{ii'} \vert < 2\varphi\}} \r\vert
	+
	\\
	& \, \sum_{i' = 1}^p \l\vert \l(\wh\omega_{s, e, ii'} - \omega_{ii'}\r) 
	\mathbb I_{\{\vert \wh\omega_{s, e, ii'} \vert \ge 2\varphi\}}
	+
	\omega_{ii'} \l(\mathbb I_{\{\vert \wh\omega_{s, e, ii'} \vert \ge 2\varphi\}} -
	\mathbb I_{\{\vert \omega_{ii'} \vert \ge 2\varphi\}} \r)\r\vert
	\\
	\le&  \, \mathfrak{s}_\varrho (2\varphi)^{1 - \varrho} + 
	\varphi \sum_{i' = 1}^p \mathbb I_{\{\vert \wh\omega_{s, e, ii'} \vert \ge 2\varphi\}}
	+ \sum_{i' = 1}^p \vert \omega_{ii'} \vert \l\vert
	\mathbb I_{\{\vert \wh\omega_{s, e, ii'} \vert \ge 2\varphi\}} -
	\mathbb I_{\{\vert \omega_{ii'} \vert \ge 2\varphi\}} \r\vert
	\\
	\le&  \, \mathfrak{s}_\varrho (2\varphi)^{1 - \varrho} + 
	\varphi \sum_{i' = 1}^p \mathbb I_{\{\vert \omega_{ii'} \vert \ge \varphi\}}
	+ \sum_{i' = 1}^p \vert \omega_{ii'} \vert
	\mathbb I_{\{\vert \vert \omega_{ii'} \vert  - 2\varphi \vert \le \vert \wh\omega_{s, e, ii'} - \omega_{ii'} \vert\}}
	\\
	\le&  \, \mathfrak{s}_\varrho \varphi^{1 - \varrho} \l(2^{1 - \varrho} + 1 + 3^{1 - \varrho} \r),
	\end{align*}
	from which we drive that
	\begin{align*}
	\l\Vert \wh{\bm\Omega}_{s, e} - \bm\Omega \r\Vert
	\le \l\Vert \wh{\bm\Omega}_{s, e} - \bm\Omega \r\Vert_\infty
	\le 12\mathfrak{s}_\varrho \varphi^{1 - \varrho}.
	\end{align*}
	This concludes the proof, since $\varphi = \vert \wh{\bm\Omega}_{s, e} - \bm\Omega \vert_\infty \le 4 C_\eta \Vert \bm\Omega \Vert_1^2 \psi_{n, p}/\sqrt{e-s}$ by Lemma~\ref{lem:clime}~\ref{lem:clime:one}. 
\end{proof}

\begin{lem}
	\label{lem:gamma:two}
	Suppose that Assumptions~\ref{assum:func:dep} and~\ref{assum:size} hold.
	Recall the notation $a^{\tE}_j$ and $b^{\tE}_j$ given in~\eqref{eq:ab}.
	Conditional on $\mc E^{\tO}_{n, p} \cap \mc S^{\tE}_{n, p}$, we have for all $j \in [q]$,
	\begin{align*}
	\sqrt{\frac{(\wh\cp^{\tE}_j - a^{\tE}_j)(b^{\tE}_j - \wh\cp^{\tE}_j)}{b^{\tE}_j - a^{\tE}_j}} 	\l\vert \wh{\bm\gamma}^{\tO}_{\wh\cp^{\tE}_j, b^{\tE}_j} - \wh{\bm\gamma}^{\tO}_{a^{\tE}_j, \wh\cp^{\tE}_j} - \wh{\bm\gamma}^{\tO}_{\cp^{\tE}_j, b^{\tE}_j} + \wh{\bm\gamma}^{\tO}_{a^{\tE}_j, \cp^{\tE}_j} \r\vert_\infty 
	\lesssim \frac{\Psi_j^2 \psi_{n, p}^2}{\sqrt{\Delta_j} \,\vert \bm\Sigma \bm\delta_j \vert_\infty}.
	\end{align*}
\end{lem}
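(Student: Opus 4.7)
The plan is to condition on $\mc D^{\tE}$ (so $a^{\tE}_j, b^{\tE}_j, \wh\cp^{\tE}_j$ are deterministic) and decompose the quantity inside $\vert \cdot \vert_\infty$ into a signal part and a noise part. Writing $\mc C(k) := \wh{\bm\gamma}^{\tO}_{k, b^{\tE}_j} - \wh{\bm\gamma}^{\tO}_{a^{\tE}_j, k}$ and splitting each sample $\mbf x_t^{\tO} Y_t^{\tO}$ into its mean $\bm\Sigma \bm\beta_t$ plus $\mbf R_t^{\tO} = (\mbf x_t^{\tO} (\mbf x_t^{\tO})^\top - \bm\Sigma) \bm\beta_t + \mbf x_t^{\tO} \vep_t^{\tO}$, we have $\mc C(\wh\cp^{\tE}_j) - \mc C(\cp_j) = \bigl(\E[\mc C(\wh\cp^{\tE}_j)] - \E[\mc C(\cp_j)]\bigr) + (\text{noise})$. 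Conditional on $\mc S^{\tE}_{n,p}$, the arguments in \cref{pf:prop:lasso} ensure $\min(\wh\cp^{\tE}_j - a^{\tE}_j, b^{\tE}_j - \wh\cp^{\tE}_j) \gtrsim \Delta_j$, $\min(\cp_j - a^{\tE}_j, b^{\tE}_j - \cp_j) \gtrsim \Delta_j$, $\{a^{\tE}_j + 1, \ldots, b^{\tE}_j - 1\} \cap \Cp = \{\cp_j\}$, and the crucial localisation bound $\vert \wh\cp^{\tE}_j - \cp_j \vert \lesssim \Psi^2 \psi_{n,p}^2 / \vert \bm\Sigma \bm\delta_j \vert_\infty^2$.

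For the signal part, assuming without loss of generality $\wh\cp^{\tE}_j \le \cp_j$ (the other case is symmetric), the expression for $\bm\gamma_{k,e} - \bm\gamma_{s,k}$ preceding~\eqref{eq:detector} gives $\E[\mc C(\cp_j)] = \bm\Sigma \bm\delta_j$ and $\E[\mc C(\wh\cp^{\tE}_j)] = (b^{\tE}_j - \cp_j)/(b^{\tE}_j - \wh\cp^{\tE}_j) \cdot \bm\Sigma \bm\delta_j$, so $\vert \E[\mc C(\wh\cp^{\tE}_j)] - \E[\mc C(\cp_j)] \vert_\infty = (\cp_j - \wh\cp^{\tE}_j)/(b^{\tE}_j - \wh\cp^{\tE}_j) \cdot \vert \bm\Sigma \bm\delta_j \vert_\infty \lesssim \Psi^2 \psi_{n,p}^2 / (\Delta_j \vert \bm\Sigma \bm\delta_j \vert_\infty)$, which matches the target after multiplication by $\sqrt{(\wh\cp^{\tE}_j - a^{\tE}_j)(b^{\tE}_j - \wh\cp^{\tE}_j)/(b^{\tE}_j - a^{\tE}_j)} \asymp \sqrt{\Delta_j}$.

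For the noise part, the same sort of rearrangement as in the proof of~\ref{not:two} yields
\begin{align*}
\text{noise} = \l( \frac{1}{b^{\tE}_j - \wh\cp^{\tE}_j} + \frac{1}{\cp_j - a^{\tE}_j} \r) \sum_{t = \wh\cp^{\tE}_j + 1}^{\cp_j} \mbf R_t^{\tO} - \frac{\cp_j - \wh\cp^{\tE}_j}{(b^{\tE}_j - \wh\cp^{\tE}_j)(b^{\tE}_j - \cp_j)} \sum_{t = \cp_j + 1}^{b^{\tE}_j} \mbf R_t^{\tO} \\ - \frac{\cp_j - \wh\cp^{\tE}_j}{(\wh\cp^{\tE}_j - a^{\tE}_j)(\cp_j - a^{\tE}_j)} \sum_{t = a^{\tE}_j + 1}^{\wh\cp^{\tE}_j} \mbf R_t^{\tO}.
\end{align*}
Writing $\bm\beta_t = (\bm\mu_j \pm \bm\delta_j)/2$ and using that $\bm\mu_j/\vert \bm\mu_j \vert_2, \bm\delta_j/\vert \bm\delta_j \vert_2 \in \mc A$ in~\eqref{eq:set:e}, \cref{lem:bound} on $\mc E^{\tO}_{n,p}$ yields $\vert \sum_{s+1}^e \mbf R_t^{\tO} \vert_\infty \lesssim \Psi \sqrt{\max(e-s, C_1 \psi_{n,p}^2)} \psi_{n,p}$ for any interval within $(a^{\tE}_j, b^{\tE}_j]$ on either side of $\cp_j$. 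The leading coefficients have orders $\asymp 1/\Delta_j$ for the first term and $\asymp \vert \wh\cp^{\tE}_j - \cp_j \vert / \Delta_j^2$ for the other two.

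The main obstacle is converting the resulting bounds into the target form involving $\vert \bm\Sigma \bm\delta_j \vert_\infty^{-1}$. The first term gives $\lesssim \Psi \sqrt{\max(\vert \wh\cp^{\tE}_j - \cp_j \vert, \psi_{n,p}^2)} \psi_{n,p}/\Delta_j$: in the large-gap regime we substitute $\sqrt{\vert \wh\cp^{\tE}_j - \cp_j \vert} \lesssim \Psi \psi_{n,p}/\vert \bm\Sigma \bm\delta_j \vert_\infty$ directly, while in the small-gap regime we obtain $\Psi \psi_{n,p}^2/\Delta_j$ and then use the trivial inequality $\vert \bm\Sigma \bm\delta_j \vert_\infty \le \bar\sigma \Psi$ to produce the missing factor $\Psi/\vert \bm\Sigma \bm\delta_j \vert_\infty$. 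The second and third terms give $\lesssim \Psi^3 \psi_{n,p}^3/(\Delta_j^{3/2} \vert \bm\Sigma \bm\delta_j \vert_\infty^2)$, which we convert to the required $\Psi^2 \psi_{n,p}^2/(\Delta_j \vert \bm\Sigma \bm\delta_j \vert_\infty)$ using that \cref{assum:size} implies $\sqrt{\Delta_j} \vert \bm\Sigma \bm\delta_j \vert_\infty \gtrsim \Psi \psi_{n,p}$. Collecting the signal and noise bounds, dividing out the $\sqrt{\Delta_j}$ prefactor, and taking the maximum over $j \in [q]$ completes the argument.
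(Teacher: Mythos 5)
Your proof follows essentially the same approach as the paper's: condition on the estimated split $\mc D^{\tE}$ and $\mc S^{\tE}_{n,p}$, restrict (WLOG) to $\wh\cp^{\tE}_j \le \cp_j$, decompose the difference of contrasts into a deterministic signal term proportional to $(\cp_j - \wh\cp^{\tE}_j)/(b^{\tE}_j - \wh\cp^{\tE}_j) \cdot \bm\Sigma\bm\delta_j$ plus three centered noise sums (the sum over the gap $(\wh\cp^{\tE}_j,\cp_j]$, and the two boundary sums weighted by the changed denominators), bound each noise term via Lemma~\ref{lem:bound} on $\mc E^{\tO}_{n,p}$ using directions from $\mc A$, and convert to the target rate using the localisation bound $\vert\wh\cp^{\tE}_j-\cp_j\vert \lesssim \Psi^2\psi_{n,p}^2/\vert\bm\Sigma\bm\delta_j\vert_\infty^2$ together with $\vert\bm\Sigma\bm\delta_j\vert_\infty \lesssim \Psi$ and $\sqrt{\Delta_j}\vert\bm\Sigma\bm\delta_j\vert_\infty \gtrsim \Psi\psi_{n,p}$. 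Your four pieces (signal plus three noise sums) correspond, after a trivial regrouping, to the paper's $T_1,\ldots,T_4$, and the conversion tricks — including the handling of the small-gap regime through the $\max(\cdot,C_1\psi_{n,p}^2)$ in Lemma~\ref{lem:bound} — match the paper's argument step for step; the only cosmetic slip is writing $\bar\sigma$ for the upper bound on $\Lambda_{\max}(\bm\Sigma)$, which here comes implicitly from Assumption~\ref{assum:func:dep} rather than from Assumption~\ref{assum:xe:iii}.
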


\begin{proof} We consider the case $\wh\cp^{\tE}_j \le \cp_j$; the case with $\wh\cp^{\tE}_j > \cp_j$ can be handled analogously. 
	We drop the subscript $j$ where there is no confusion. 
	Then,
	\begin{align*}
	& \sqrt{\frac{(\wh\cp^{\tE} - a^{\tE})(b^{\tE} - \wh\cp^{\tE})}{b^{\tE} - a^{\tE}}} 
	\l\vert \wh{\bm\gamma}^{\tO}_{\wh\cp^{\tE}, b^{\tE}} - \wh{\bm\gamma}^{\tO}_{a^{\tE}, \wh\cp^{\tE}} - \wh{\bm\gamma}^{\tO}_{\cp, b^{\tE}} + \wh{\bm\gamma}^{\tO}_{a^{\tE}, \cp} \r\vert_\infty
	\\
	\le & \,
	\frac{\vert \wh\cp^{\tE} - \cp \vert \sqrt{\wh\cp^{\tE} - a^{\tE}}}{(b^{\tE} - \cp)\sqrt{(b^{\tE} - \wh\cp^{\tE})(b^{\tE} - a^{\tE})}} 
	\l\vert \sum_{t = \cp + 1}^{b^{\tE}} \l[ \mbf x_t^{\tO} \vep_t^{\tO} + \l(\mbf x_t^{\tO}{\mbf x_t^{\tO}}^\top - \bm\Sigma\r) \bm\beta_j \r] \r\vert_\infty
	\\
	& + \frac{\vert \wh\cp^{\tE} - \cp \vert \sqrt{ b^{\tE} - \wh\cp^{\tE} }}{(\cp - a^{\tE})\sqrt{(\wh\cp^{\tE} - a^{\tE})(b^{\tE} - a^{\tE})}} 
	\l\vert \sum_{t = a^{\tE} + 1}^{\wh\cp^{\tE}} \l[ \mbf x_t^{\tO} \vep_t^{\tO} + \l(\mbf x_t^{\tO}{\mbf x_t^{\tO}}^\top - \bm\Sigma\r) \bm\beta_{j - 1} \r] \r\vert_\infty
	\\
	& + \l( \sqrt{\frac{\wh\cp^{\tE} - a^{\tE}}{(b^{\tE} - \wh\cp^{\tE})(b^{\tE} - a^{\tE})}} + \frac{\sqrt{(\wh\cp^{\tE} - a^{\tE}) (b^{\tE} - \wh\cp^{\tE})}}{(\cp - a^{\tE})\sqrt{ b^{\tE} - a^{\tE}}} \r) 	\l\vert \sum_{t = \wh\cp^{\tE} + 1}^{\cp} \l[ \mbf x_t^{\tO} \vep_t^{\tO} + \l(\mbf x_t^{\tO}{\mbf x_t^{\tO}}^\top - \bm\Sigma\r) \bm\beta_{j - 1} \r] \r\vert_\infty 
	\\
	& + \frac{\vert \wh\cp^{\tE} - \cp \vert\sqrt{\wh\cp^{\tE} - a^{\tE}}}{\sqrt{(b^{\tE} - \wh\cp^{\tE})(b^{\tE} - a^{\tE})}} 
	\vert \bm\Sigma \bm\delta_j \vert_\infty
	=: T_1 + T_2 + T_3 + T_4.
	\end{align*}
	By Lemma~\ref{lem:bound}, \cref{assum:size} and and~\eqref{eq:ab:prop}, on $\mc E^{\tO}_{n, p} \cap \mc S^{\tE}_{n, p}$,
	\begin{align*}
	T_1 \le \frac{c_1 \Psi_j^2 \psi_{n, p}^2}{\vert \bm\Sigma \bm\delta_j \vert_\infty^2 \sqrt{(b^{\tE} - \wh\cp^{\tE})(b^{\tE} - \cp)}} \cdot \sqrt{\frac{\wh\cp^{\tE} - a^{\tE}}{b^{\tE} - a^{\tE}}} \cdot C_0 \Psi_j \psi_{n, p}
	\lesssim \frac{\Psi_j^3 \psi_{n, p}^3}{\vert \bm\Sigma \bm\delta_j \vert_\infty^2 \Delta_j}
	\lesssim \frac{\Psi_j^2 \psi_{n, p}^2}{\sqrt{\Delta_j} \,\vert \bm\Sigma \bm\delta_j \vert_\infty}. 
	\end{align*}
	The term $T_2$ is similarly bounded, while
	\begin{align*}
	T_3 \le 
	\frac{\sqrt{c_1}\Psi_j\psi_{n,p}}{\vert \bm\Sigma \bm\delta_j \vert_\infty}
	\l( \frac{1}{\sqrt{b^{\tE} - \wh\cp^{\tE}}} + \frac{\sqrt{\wh\cp^{\tE} - a^{\tE}}}{\cp - a^{\tE}} \r) \cdot C_0 \Psi_j \psi_{n, p}
	\lesssim \frac{\Psi_j^2 \psi_{n, p}^2}{\sqrt{\Delta_j} \,\vert \bm\Sigma \bm\delta_j \vert_\infty}.
	\end{align*}
	Finally, on $\mc E^{\tO}_{n, p} \cap \mc S^{\tE}_{n, p}$,
	\begin{align*}
	T_4 \le \frac{c_1 \Psi_j^2 \psi_{n, p}^2}{\vert \bm\Sigma \bm\delta_j \vert_\infty \sqrt{b^{\tE} - \wh\cp^{\tE}}} \lesssim \frac{ \Psi_j^2 \psi_{n, p}^2}{\vert \bm\Sigma \bm\delta_j \vert_\infty\sqrt{\Delta_j}},
	\end{align*}
	which completes the proof.
\end{proof}

\begin{lem}[Comparison of two Gaussian distributions]\label{lem:gausscmp}
	Let $\bm Z \sim \mathcal{N}_p(\bm 0, \bm \Sigma)$ and $\bm Z' \sim \mathcal{N}_p(\bm 0, \bm \Sigma')$ be centered $p$-dimensional Gaussian vectors. Assume that  the smallest singular (or eigen) value of $\bm \Sigma = (\sigma_{ii'})_{i,i'\in [p]}$, denoted by $\sigma_*$, is strictly positive. Then:
	\begin{align*}
	\sup_{A \in \mathfrak{R}}\l\vert\p(\bm Z \in A) - \p(\bm Z' \in A)\r\vert \;\le\; C\log(p)\frac{\vert\bm\Sigma - \bm\Sigma'\vert_{\infty}}{\sigma_*}\l(\l\vert\log\l(\frac{\vert\bm\Sigma - \bm\Sigma'\vert_{\infty}\min_{i\in[p]}\sqrt{\sigma_{ii}}}{\sigma_*\max_{i\in[p]}\sqrt{\sigma_{ii}}}\r)\r\vert\,\vee\,1\r),
	\end{align*}
	where $\mathfrak{R}$ is the set of rectangles in $\mathbb R^p$, and $C \in (0, \infty)$ is a universal constant. 
\end{lem}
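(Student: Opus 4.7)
The plan is to adapt the Chernozhukov--Chetverikov--Kato (CCK) Gaussian comparison framework on rectangles while carefully tracking how the three spectral quantities $\sigma_*$, $\min_{i \in [p]}\sqrt{\sigma_{ii}}$ and $\max_{i \in [p]}\sqrt{\sigma_{ii}}$ enter. As a first step I would pre-standardise by the diagonal of $\bm\Sigma$: set $\bm D = \mathrm{diag}(\sqrt{\sigma_{ii}})_{i \in [p]}$ and $\widetilde{\bm Z} = \bm D^{-1}\bm Z$, $\widetilde{\bm Z}' = \bm D^{-1}\bm Z'$. Since $\bm D^{-1}$ is diagonal with positive entries, it maps rectangles to rectangles and the supremum is unchanged, while the rescaled covariance $\widetilde{\bm\Sigma} = \bm D^{-1}\bm\Sigma\bm D^{-1}$ has unit diagonal, smallest eigenvalue at least $\widetilde{\sigma}_* := \sigma_*/\max_{i \in [p]}\sigma_{ii}$, and $\vert\widetilde{\bm\Sigma} - \widetilde{\bm\Sigma}'\vert_\infty \le \vert\bm\Sigma-\bm\Sigma'\vert_\infty/\min_{i \in [p]}\sigma_{ii}$.

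For a hyperrectangle $A = \prod_i (a_i, b_i]$ I would encode the event $\{\widetilde{\bm Z}\in A\}$ as $\{\max_{k\in[2p]} U_k \le 0\}$ via the $2p$ linear forms $\pm\widetilde{Z}_i$, and approximate the indicator by a smooth surrogate $\varphi_{\beta,\epsilon} = g_\epsilon\circ F_\beta$, with $F_\beta(\mbf x) = \beta^{-1}\log\sum_k e^{\beta x_k}$ the soft-max and $g_\epsilon$ a $C^3$ cut-off of width $\epsilon$. Standard estimates give $0 \le F_\beta(\mbf x) - \max_k x_k \le \beta^{-1}\log(2p)$ together with bounds on the partial derivatives of $\varphi_{\beta,\epsilon}$ up to order three that grow polynomially in $\beta$ and $\epsilon^{-1}$. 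Applying the Slepian-type interpolation $\bm Z_t = \sqrt{t}\,\widetilde{\bm Z} + \sqrt{1-t}\,\widetilde{\bm Z}'$ and Gaussian integration by parts as in \cite{CCK23} and \cite{FaKo21} would yield $\l\vert\E\varphi_{\beta,\epsilon}(\widetilde{\bm Z}) - \E\varphi_{\beta,\epsilon}(\widetilde{\bm Z}')\r\vert \lesssim \vert\widetilde{\bm\Sigma}-\widetilde{\bm\Sigma}'\vert_\infty \cdot Q(\beta,\epsilon^{-1})$ for an explicit quadratic polynomial $Q$ arising from the derivative bounds. The smoothing error is then absorbed via Nazarov's anti-concentration inequality \citep{nazarov2003maximal, chernozhukov2017detailed}, giving
\begin{align*}
\p\bigl(\widetilde{\bm Z}\in A^\eta\setminus A^{-\eta}\bigr) \;\lesssim\; \eta\sqrt{\log(p)/\widetilde{\sigma}_*}
\end{align*}
with $\eta \asymp \beta^{-1}\log(2p) + \epsilon$; minimising in $\beta$ and $\epsilon$ balances the two error terms and delivers the claimed rate, with the $\vert\log(\cdot)\vert \vee 1$ correction emerging from the explicit solution of the balancing equation.

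The main obstacle will be the book-keeping. Standardisation splits the diagonal of $\bm\Sigma$ between the upper bound on $\vert\widetilde{\bm\Sigma}-\widetilde{\bm\Sigma}'\vert_\infty$ (picking up $1/\min_i\sigma_{ii}$) and the lower bound on $\widetilde{\sigma}_*$ (picking up $1/\max_i\sigma_{ii}$), and it is the interplay of these two scalings in the optimal choice of $\beta$ and $\epsilon$ that produces exactly the ratio $\min_i\sqrt{\sigma_{ii}}/\max_i\sqrt{\sigma_{ii}}$ appearing inside the logarithm in the statement, rather than a cleaner expression. A secondary, routine point is that a general rectangle (with both lower and upper faces) is already handled by the $2p$-coordinate encoding above and affects only constants in the final bound.
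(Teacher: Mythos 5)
The paper's proof is a two-line application of an existing theorem: since $\bm Z'$ is exactly Gaussian, $\bm\Sigma'$ is a constant (exact) Stein kernel for $\bm Z'$, and Theorem~1.1 of \citet{FaKo21} then gives precisely the stated bound. Your proposal instead attempts to re-derive the inequality from scratch via the standard CCK smoothing machinery (soft-max composed with a $C^3$ cut-off, Slepian interpolation, Gaussian integration by parts, Nazarov anti-concentration, and a $\beta,\epsilon$-balance). That recipe does not deliver the claimed rate, and this is the genuine gap.

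Concretely, with the thin-band refinement (using the fact that $g'_\epsilon, g''_\epsilon$ are supported on a band) the interpolation term scales like $\vert\widetilde{\bm\Sigma}-\widetilde{\bm\Sigma}'\vert_\infty\,\epsilon^{-1}(\log p)^{3/2}$ while the smoothing error contributes $\epsilon\sqrt{\log p}$; balancing gives $\epsilon \asymp \sqrt{\vert\widetilde{\bm\Sigma}-\widetilde{\bm\Sigma}'\vert_\infty\log p}$ and a total of order $\sqrt{\vert\widetilde{\bm\Sigma}-\widetilde{\bm\Sigma}'\vert_\infty}\,\log p$. (Without the thin-band refinement one gets the even weaker $\vert\widetilde{\bm\Sigma}-\widetilde{\bm\Sigma}'\vert_\infty^{1/3}(\log p)^{2/3}$.) The lemma, however, asserts a bound that is \emph{linear} in $\vert\bm\Sigma-\bm\Sigma'\vert_\infty/\sigma_*$ up to logarithmic factors; when the discrepancy is small, the square-root rate from your balance is strictly worse. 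The $\vert\log(\cdot)\vert\vee 1$ factor does not ``emerge from the balancing equation'' as you suggest; it is a by-product of the Stein-method analysis (loosely, the cost of integrating a Gaussian near a degenerate configuration), and there is no choice of $\beta,\epsilon$ in your scheme that produces it.

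To close the gap you would have to abandon the naive balance and instead either (i) invoke the Stein-kernel comparison theorem as the paper does, or (ii) iterate/bootstrap the anti-concentration bound into the interpolation step (a recursive self-improvement, in the spirit of Fang--Koike and Lopes), which is a qualitatively different and substantially more involved argument than the one you outline. The pre-standardisation step by $\bm D = \mathrm{diag}(\sqrt{\sigma_{ii}})$ is the right move to account for the $\min_i\sqrt{\sigma_{ii}}/\max_i\sqrt{\sigma_{ii}}$ ratio inside the logarithm, and your observation that diagonal rescaling preserves the class of rectangles is correct; that part is exactly what one needs to reduce to the unit-diagonal case treated in \citet{FaKo21}. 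The issue is entirely in the core comparison estimate.
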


\begin{proof}
	Note that $\bm\Sigma'$ is a Stein kernel for $\bm Z'$. This lemma follows then from Theorem~1.1 in \citet{FaKo21}.
\end{proof}

We require a version of Theorem~2.1 in \citet{CCK23}.
\begin{lem}[Gaussian approximation]\label{lem:gaussapp}
	Let $\mbf Z_1, \ldots, \mbf Z_n$ be a sequence of centered independent random vectors in $\mathbb R^p$. Let $\mbf T := \sum_{i=1}^n\mbf Z_i/\sqrt{n}$ with $\mbf \Sigma_{\mbf T} := \E (\mbf T\mbf T^\top)$, and $\mbf W \sim \mathcal{N}_p (\mbf 0, \, \mbf \Sigma)$. Assume that the smallest singular (or eigen) value of $\mbf \Sigma$, denoted by $\sigma_*$, is strictly positive. Then, for all $\psi > 0$,
	\begin{multline*}
	\sup_{A \in \mathfrak{R}}\l\vert\p(\mbf T \in A) - \p(\mbf W \in A)\r\vert  \le C\log (n) \log(p) \frac{\Vert\mbf\Sigma\Vert}{n\sigma_*^2} \Biggl(n\vert\mbf\Sigma -\mbf\Sigma_{\mbf T}\vert_{\infty}\\
	+  \sqrt{\log (p)\max_{j \in [p]}\sum_{i = 1}^n \E(\vert Z_{ij}\vert^4)} + \log(p)\sqrt{\E\l(\max_{j \in [p]}\max_{i \in [n]}\vert Z_{ij}\vert^4\r)} \\
	+ \sqrt{\frac{n \log(p n)}{\log (n)}\max_{i \in [n]}\E\l(\vert\mbf Z_i\vert_{\infty}^4 \mathbb{I}_{\{\vert\mbf Z_i\vert_{\infty} > \psi\}}\r)}\Biggr) + \frac{C\psi}{\sigma_*}\sqrt{\frac{\log^3(p)\Vert\mbf\Sigma\Vert}{n}},
	\end{multline*}
	where $\mathfrak{R}$ is the set of rectangles in $\mathbb R^p$ and $C \in (0, \infty)$ is a universal constant. 
\end{lem}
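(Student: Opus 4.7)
The plan is to write the stated bound as a sum of two pieces by interposing an auxiliary Gaussian whose covariance matches that of $\mbf T$: a high-dimensional CLT error (with matched covariance) and a Gaussian-to-Gaussian covariance-swap error, each handled by a previously cited result, then combined by the triangle inequality.

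First, let $\mbf W' \sim \mathcal{N}_p(\mbf 0, \mbf\Sigma_{\mbf T})$ and apply Theorem~2.1 of \citet{CCK23} to the independent centered sum $\mbf T$ with target $\mbf W'$. Their result, stated for rectangular sets, yields a bound that is essentially the original Gaussian approximation rate for sums of independent random vectors with matching covariance: truncating each $\mbf Z_i$ at the level $\psi$ in sup-norm, the bounded part contributes the term $C\psi\sigma_*^{-1}\sqrt{\log^3(p)\|\mbf\Sigma\|/n}$ coming from the Slepian/Stein smoothed-max analysis on the interpolation path, while the tail part is controlled by $\sqrt{n\log(pn)/\log(n) \cdot \max_i\E(|\mbf Z_i|_\infty^4 \mathbb{I}_{\{|\mbf Z_i|_\infty>\psi\}})}$ via Markov's inequality combined with Nazarov's anti-concentration for $\mbf W'$. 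The intermediate fourth-moment quantities $\sqrt{\log(p)\max_j \sum_i\E|Z_{ij}|^4}$ and $\log(p)\sqrt{\E\max_{j,i}|Z_{ij}|^4}$ appear as the Lindeberg-type remainders governing the third-derivative error in the smooth-function comparison.

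Second, invoke the preceding Lemma~\ref{lem:gausscmp} on $\mbf W' \sim \mathcal{N}_p(\mbf 0, \mbf\Sigma_{\mbf T})$ and $\mbf W \sim \mathcal{N}_p(\mbf 0, \mbf\Sigma)$. This gives
\[
\sup_{A \in \mathfrak{R}}\bigl|\p(\mbf W' \in A) - \p(\mbf W \in A)\bigr| \;\lesssim\; \log(p)\,\frac{|\mbf\Sigma - \mbf\Sigma_{\mbf T}|_\infty}{\sigma_*}\Bigl(\bigl|\log(\cdot)\bigr|\vee 1\Bigr),
\]
which, after multiplication by the common prefactor $C\log(n)\log(p)\|\mbf\Sigma\|/(n\sigma_*^2)$, matches the $n|\mbf\Sigma-\mbf\Sigma_{\mbf T}|_\infty$ term inside the large parenthesis. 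A final triangle inequality combines the two pieces, and the conclusion follows after consolidating the separately collected prefactors into a single universal constant~$C$.

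The main obstacle is bookkeeping: the prefactor $\log(n)\log(p)\|\mbf\Sigma\|/(n\sigma_*^2)$ multiplies four very different-looking quantities, and must be assembled from pieces originating in distinct steps (the CCK anti-concentration, the truncation-error bound, and the iterated-logarithm correction in Lemma~\ref{lem:gausscmp}). In particular, the doubly-logarithmic factor in Lemma~\ref{lem:gausscmp} must be absorbed into the $\log(n)$ factor, which requires that $|\mbf\Sigma - \mbf\Sigma_{\mbf T}|_\infty/\sigma_*$ not be exponentially small relative to $n$ — a regime outside the useful range of the lemma, where the bound is anyway vacuous. Once these constants are reconciled, no new probabilistic input beyond \citet{CCK23} and Lemma~\ref{lem:gausscmp} is required.
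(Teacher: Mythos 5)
Your decomposition---interpose $\mbf W' \sim \mathcal{N}_p(\mbf 0, \mbf\Sigma_{\mbf T})$, bound $\mbf T$ against $\mbf W'$ by a matched-covariance CLT from \citet{CCK23}, then swap $\mbf W'$ for $\mbf W$ via \cref{lem:gausscmp}---is a genuinely different route from the paper's, and it has a concrete gap. The paper applies Theorem~2.1 of \citet{CCK23} exactly once, directly to the diagonal-standardised pair $(\wt{\mbf T}, \wt{\mbf W})$ with the \emph{exogenous} target $\wt{\mbf W}$; that theorem already carries the covariance-mismatch term $\propto |\wt{\mbf\Sigma}-\wt{\mbf\Sigma}_{\wt{\mbf T}}|_\infty$ internally, so no auxiliary Gaussian is introduced and \cref{lem:gausscmp} plays no role in this proof (it is invoked only later, in the proof of \cref{cor:thm:two}).

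The gap in your route: the matched-covariance CLT step requires anti-concentration of $\mathcal{N}_p(\mbf 0, \mbf\Sigma_{\mbf T})$, hence a positive lower bound on the smallest eigenvalue of (the standardised) $\mbf\Sigma_{\mbf T}$. The hypotheses only provide $\sigma_* = \Lambda_{\min}(\mbf\Sigma) > 0$; nothing controls $\Lambda_{\min}(\mbf\Sigma_{\mbf T})$, which may vanish even when $|\mbf\Sigma - \mbf\Sigma_{\mbf T}|_\infty$ is small, since the operator-norm gap between the two matrices can be of order $p\,|\mbf\Sigma - \mbf\Sigma_{\mbf T}|_\infty$. The lemma's stated bound reflects this: the anti-concentration constant $\wt\sigma_* \ge \sigma_*/\Vert\mbf\Sigma\Vert$ is computed from the target covariance $\mbf\Sigma$, which is only possible when the CCK theorem is invoked with the exogenous Gaussian. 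The secondary issue you flagged yourself---absorbing the iterated-logarithm factor of \cref{lem:gausscmp} into $\log(n)\Vert\mbf\Sigma\Vert/\sigma_*$---is a further obstruction rather than mere bookkeeping: when $|\mbf\Sigma - \mbf\Sigma_{\mbf T}|_\infty/\sigma_*$ is very small, the log factor can exceed $\log(n)\Vert\mbf\Sigma\Vert/\sigma_*$ by an arbitrary amount, so the two-step bound is not dominated by the stated one in that regime. Both difficulties disappear with the paper's one-shot application.
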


\begin{proof}
	Let $d_1, \ldots, d_p$ be diagonal entries of $\mbf \Sigma$ and $\mbf D := \mathrm{diag} (d_1, \ldots, d_p)$. The assumption of $\sigma_* >0$ implies that $d_i >0$ and $\mbf D$ is invertible. Thus, we can define $\wt{\mbf T} := \mbf D^{-1/2}\mbf T$ and $\wt{\mbf W} := \mbf D^{-1/2}\mbf W$. Then we have 
	\begin{align*}
	\wt{\mbf \Sigma}_{\wt{\mbf T}} := \E\l(\wt{\mbf T}\wt{\mbf T}^{\top}\r) = \mbf D^{-1/2}\mbf \Sigma_{\mbf T}\mbf D^{-1/2}\quad \text{and} \quad\wt{\mbf \Sigma} := \E\l(\wt{\mbf W}\wt{\mbf W}^\top\r) = \mbf D^{-1/2}\mbf \Sigma\mbf D^{-1/2}.
	\end{align*}
	Since the diagonal entries of $\wt{\mbf \Sigma}$ are all ones, we can apply Theorem~2.1 in \citet{CCK23} and obtain
	\begin{align}\label{eq:gauss_approx}
	&\sup_{A \in \mathfrak{R}}\l\vert\p(\mbf T \in A) - \p(\mbf W \in A)\r\vert  \nn\\
	= \; &\sup_{A \in \mathfrak{R}}\l\vert\p(\wt{\mbf T} \in A) - \p(\wt{\mbf W} \in A)\r\vert\nn \\
	\le\; &\frac{C\log (n )\log (p)}{\wt\sigma_*}\l\vert\wt{\mbf \Sigma}-\wt{\mbf \Sigma}_{\wt{\mbf T}}\r\vert_{\infty} + \frac{C\log(n)\bigl(\log(p)\bigr)^{3/2}}{n\wt\sigma_*}\sqrt{\max_{j \in [p]}\sum_{i =1}^n \E\l(\frac{\vert Z_{ij}\vert^4}{d_j^2}\r)}\nn\\
	& {}\qquad + \frac{C\log(n)\bigl(\log(p)\bigr)^2}{n\wt\sigma_*}\sqrt{\E\l(\max_{j \in [p]}\max_{i \in [n]}\frac{\vert Z_{ij}\vert^4}{d_j^2}\r)}\nn\\
	& {}\qquad + \frac{C\log(p)}{\wt\sigma_*}\sqrt{\frac{\log(n)\log(pn)}{n}\max_{i \in [n]}\E\l(\vert \mbf D^{-1/2}\mbf Z_{i}\vert_{\infty}^4 \mathbb{I}_{\{\vert \mbf D^{-1/2}\mbf Z_{i}\vert_{\infty} > \wt\psi\}}\r)}\nn\\
	& {}\qquad + \frac{C\wt\psi\bigl(\log(p)\bigr)^{3/2}}{\sqrt{n\wt\sigma_*}},
	\end{align}
	where $\wt\sigma_*$ is the smallest singular value of $\wt{\mbf \Sigma}$, and $\wt\psi >0$ is arbitrary. 
	Note that 
	\begin{align*}
	\wt\sigma_* & = \min_{\mbf x}\frac{\mbf x^\top \wt{\mbf \Sigma}\mbf x}{\mbf x^\top\mbf x} = \min_{\mbf x} \frac{\mbf x^\top \wt{\mbf \Sigma}\mbf x}{\mbf x^\top\mbf D^{-1}\mbf x}\frac{\mbf x^\top\mbf D^{-1}\mbf x}{\mbf x^\top\mbf x}\\
	& \ge \min_{\mbf x} \frac{\mbf x^\top \wt{\mbf \Sigma}\mbf x}{\mbf x^\top\mbf D^{-1}\mbf x}\min_{\mbf x}\frac{\mbf x^\top\mbf D^{-1}\mbf x}{\mbf x^\top\mbf x} 
	= \sigma_*\frac{1}{\max_{j \in [p]}d_j}\ge \frac{\sigma_*}{\Vert\mbf \Sigma\Vert},
	\end{align*}
	and
	\begin{align*}
	\l\vert\wt{\mbf \Sigma}-\wt{\mbf \Sigma}_{\wt{\mbf T}}\r\vert_{\infty} & = \l\vert\mbf D^{-1/2}{\mbf \Sigma}\mbf D^{-1/2}-\mbf D^{-1/2}{\mbf \Sigma}_{\mbf T}\mbf D^{-1/2}\r\vert_{\infty} \\
	&\le \frac{1}{\min_{j \in [p]}d_j}\l\vert\mbf\Sigma -\mbf\Sigma_{\mbf T}\r\vert_\infty \le \frac{1}{\sigma_*}\l\vert\mbf\Sigma -\mbf\Sigma_{\mbf T}\r\vert_\infty.
	\end{align*}
	Note also that $\vert Z_{ij}\vert^4/d_j^2\le \vert Z_{ij}\vert^4/\min_{j \in [p]}d_j^2\le \vert Z_{ij}\vert^4/\sigma_*^2$ and 
	\begin{multline*}
	\E\l(\vert \mbf D^{-1/2}\mbf Z_{i}\vert_{\infty}^4 \mathbb{I}_{\{\vert \mbf D^{-1/2}\mbf Z_{i}\vert_{\infty} > \wt\psi\}}\r) = \E\l(\max_{j \in [p]}\frac{\vert Z_{ij}\vert^4}{d_j^2} \mathbb{I}_{\{\max_{j \in [p]}d_j^{-1/2}\vert Z_{ij}\vert > \wt\psi\}}\r) \\
	\le \frac{1}{\min_{j \in [p]}d_j^2}\E\l(\max_{j \in [p]}\vert Z_{ij}\vert^4 \mathbb{I}_{\l\{\vert Z_{ij}\vert > \wt\psi\sqrt{\min_{j \in [p]}d_j}\r\}}\r) \le \frac{1}{\sigma_*^2}\E\l(\max_{j \in [p]}\vert Z_{ij}\vert^4 \mathbb{I}_{\l\{\vert Z_{ij}\vert > \wt\psi\sqrt{\sigma_*}\r\}}\r).
	\end{multline*}
	Setting $\wt\psi = \psi/\sqrt{\sigma_*}$ and using the above estimates, we obtain the assertion from \eqref{eq:gauss_approx}. 
\end{proof}

\begin{lem}
	\label{lem:assum:func:two} 
	Suppose that Assumption~\ref{assum:func:dep:two} holds. Then,
	\begin{align*}
	\sup_{\mbf a, \mbf b \in \mathbb{B}_2(1)} \sup_{\nu \ge 2} \nu^{-1} \l\Vert \mbf a^\top \mbf Z_t\mbf Z_t^\top \mbf b - \E\l( \mbf a^\top \mbf Z_t\mbf Z_t^\top \mbf b \r) \r\Vert_\nu \le 3\Xi.
	\end{align*}
\end{lem}

\begin{proof}
	Since $\max_{\mbf a \in \mathbb{B}_2(1)} \sup_{\nu \ge 2} \nu^{-1/2} \Vert \mbf a^\top \mbf Z_t \Vert_\nu \le \Xi^{1/2}$,
	\begin{align*}
	\nu^{-1} \l\Vert \mbf a^\top \mbf Z_t (\mbf Z_t)^\top \mbf b \r\Vert_\nu 
	\le 2 (2\nu)^{-1/2} \l\Vert \mbf a^\top \mbf Z_t \r\Vert_{2\nu} \cdot (2\nu)^{-1/2} \l\Vert \mbf b^\top \mbf Z_t \r\Vert_{2\nu}
	\le 2 \Xi, 
	\end{align*}
	where the first inequality follows from H\"older's inequality.
	Combining this with that 
	\begin{align*}
	\vert \E(\mbf a^\top \mbf Z_t \mbf Z_t^\top \mbf b) \vert \le \sqrt{\Vert a^\top \mbf Z_t \Vert_2^2 \Vert b^\top \mbf Z_t \Vert_2^2 } \le 2\Xi, 
	\end{align*}
	the claim follows.
\end{proof}

\clearpage

\section{Additional numerical results}

\subsection{Empirical studies in Section~\ref{sec:size:change}}
\label{sec:sim:intro}

We detail the simulation setup of \cref{fig:intro}, which is designed to examine the impact of $\vert\bm\Sigma\bm\delta_j\vert_{\infty}$, $\vert\bm\delta_j\vert_0$ and $\vert\bm\delta_j\vert_2$ on the detectability of a change point.
For $d \in [p]$, let $\bm a_d$ be a uniformly distributed random vector on the unit sphere in $\R^d$, and define $\bm\delta_1 = \sqrt{2} \cdot (1, \bm a_{\mathfrak s - 1}^\top, \bm 0_{p-\mathfrak s}^\top)^\top$ with $\bm 0_{p-\mathfrak s}\in\R^{p-\mathfrak s}$ being the zero vector. It holds almost surely that $\vert\bm\delta_1 \vert_0 = \mathfrak s$ and $\vert\bm\delta_1 \vert_2 = 2$. 
Next, we consider an orthogonal matrix $\mbf U \in \R^{p\times p}$ such that its first column is $\bm\delta_1/2$. In simulation, we generate such $\mbf U$ by applying the Gram--Schmidt orthonormalisation to the collection of $\bm\delta_1$ and $p-1$ standard normal vectors in $\R^p$. Then the observations are generated under the model~\eqref{eq:model} where we set $q = 1$, $\cp_1 = 100$, $n = 300$, $p = 200$, $\bm\beta_0 = \bm\delta_1/2$, $\bm\beta_1 = -\bm\delta_1/2$ and $\mbf x_t \sim_{\iid} \mc N_p(\mbf 0, \bm\Sigma)$. We consider two choices of $\bm\Sigma$, i.e.\
$$
\bm\Sigma_{1} = \frac{1}{d_1}\mbf U\; \mathrm{diag}(d_1,d_2,\ldots,d_p) \;\mbf U^\top \quad\text{ or }\quad
\bm\Sigma_{2} = \frac{1}{d_1}\mbf U\; \mathrm{diag}(d_p,d_{p-1},\ldots,d_1)\; \mbf U^\top
$$
which correspond to the top and the bottom rows of \cref{fig:intro}, respectively. Here, $d_1 > \cdots > d_p$ denote the eigenvalues of the Toeplitz matrix $\left(0.6^{\vert i - j \vert}\right)_{i,j=1}^p \in \R^{p\times p}$. Note that $\vert\bm\Sigma_1\bm\delta_1\vert_{\infty} = \vert\bm\delta_1\vert_{\infty} \approx 1.414$ and  $\vert\bm\Sigma_2\bm\delta_1\vert_{\infty} = \vert d_1^{-1}d_p\bm\delta_1\vert_{\infty} \approx 0.088$.

\clearpage

\subsection{Change point estimation}\label{sec:add:cp}

\subsubsection{Details of comparative methods}\label{sec:add:cmp:detail}

We describe the implementation and choices of tuning parameters for the methods in comparison study as follows. 
\begin{itemize}
	\item 
	MOSEG \citep{cho2022high}: The implementation is provided in the R package \texttt{moseg} on GitHub (\url{https://github.com/Dom-Owens-UoB/moseg}, version~0.1.0). In the single change point scenarios, we use the function \texttt{moseg} with input argument \texttt{n.cps = 1}, and use the output \texttt{refined.cps}, which has a better empirical performance than the output \texttt{cps}. 
	In the multiple change point scenarios, we use the function \texttt{moseg.ms.cv}, which estimates multiple change points using multiple bandwidths and automatically selects the number of change point via cross validation. In these two functions, we set the input argument \texttt{ncores = 1} when recording the runtime, and the rest of the input arguments are set to their default values. 
	
	\item 
	CHARCOAL \citep{gao2022sparse}: The implementation is provided in the R package \texttt{charcoal} on GitHub (\url{https://github.com/gaofengnan/charcoal/}, version~0.13). In the single change point scenarios, we use the function \texttt{cpreg} with default input arguments, which is designed for estimating a single change point \citep[Algorithm~2]{gao2022sparse}. In the multiple change point scenarios, we use \texttt{not\_cpreg} with default input arguments, where the threshold for the narrowest-over-threshold is computed via Monte Carlo repetitions based on the null model (via function \texttt{getTestThreshold}). 
	
	\item 
	DPDU \citep{xu2022change}: The implementation is provided in the R package \texttt{changepoints} on CRAN (\url{https://CRAN.R-project.org/package=changepoints}, version~1.1.0). We first use the function \texttt{CV.search.DPDU.regression} with input arguments \texttt{lambda\_set = c(0.01, 0.1, 1, 2)} and \texttt{zeta\_set = c(10, 15, 20)}, according to the  examples in the R package. The output that attains the minimal test error (\texttt{test\_error}) is further improved by local refinement, via the function \texttt{local.refine.DPDU.regression}. This refined output is used as the final estimate. There is no option to set the number of change points. Thus, in the single change point scenarios, towards a fair comparison, we select among the estimated change points (including $0$ and $n$) the one that is closest to the truth as the final estimate. In this case, the localisation error $\vert\wh\cp_1-\cp_1\vert$ will be upper bounded by $\min\{\cp_1, n-\cp_1\}$, see \cref{fig:can_err,fig:toep_err}.
	
	\item 
	VPWBS \citep{wang2021statistically}: The implementation is provided in the R codes on GitHub (\url{https://github.com/darenwang/VPBS}, version of May 26, 2021). We tune the parameters involved using a cross validation procedure (via the function \texttt{vpcusum}) provided in the example codes on GitHub. There is no option to set the number of change points. Thus similar to DPDU, in the single change point scenarios, we select the estimated change point (including $0$ and $n$) that is closest to the truth as the final estimate, and the resulting localisation error $\vert\wh\cp_1-\cp_1\vert$ is bounded by $\min\{\cp_1, n-\cp_1\}$ from above, see \cref{fig:can_err,fig:toep_err}.  
\end{itemize}

\subsubsection{Additional results under (M1)--(M2)}
\label{sec:add:cp:m1}

Figures~\ref{fig:can_err:two}--\ref{fig:toep_err:two} report the localisation errors under the scenarios (M1)--(M2) when $\cp_1 = 150$, complementing Figures~\ref{fig:can_err} and~\ref{fig:toep_err} obtained with $\cp_1 = 75$.

\begin{figure}[h!t!b!]
	\centering
	\includegraphics[width = \textwidth]{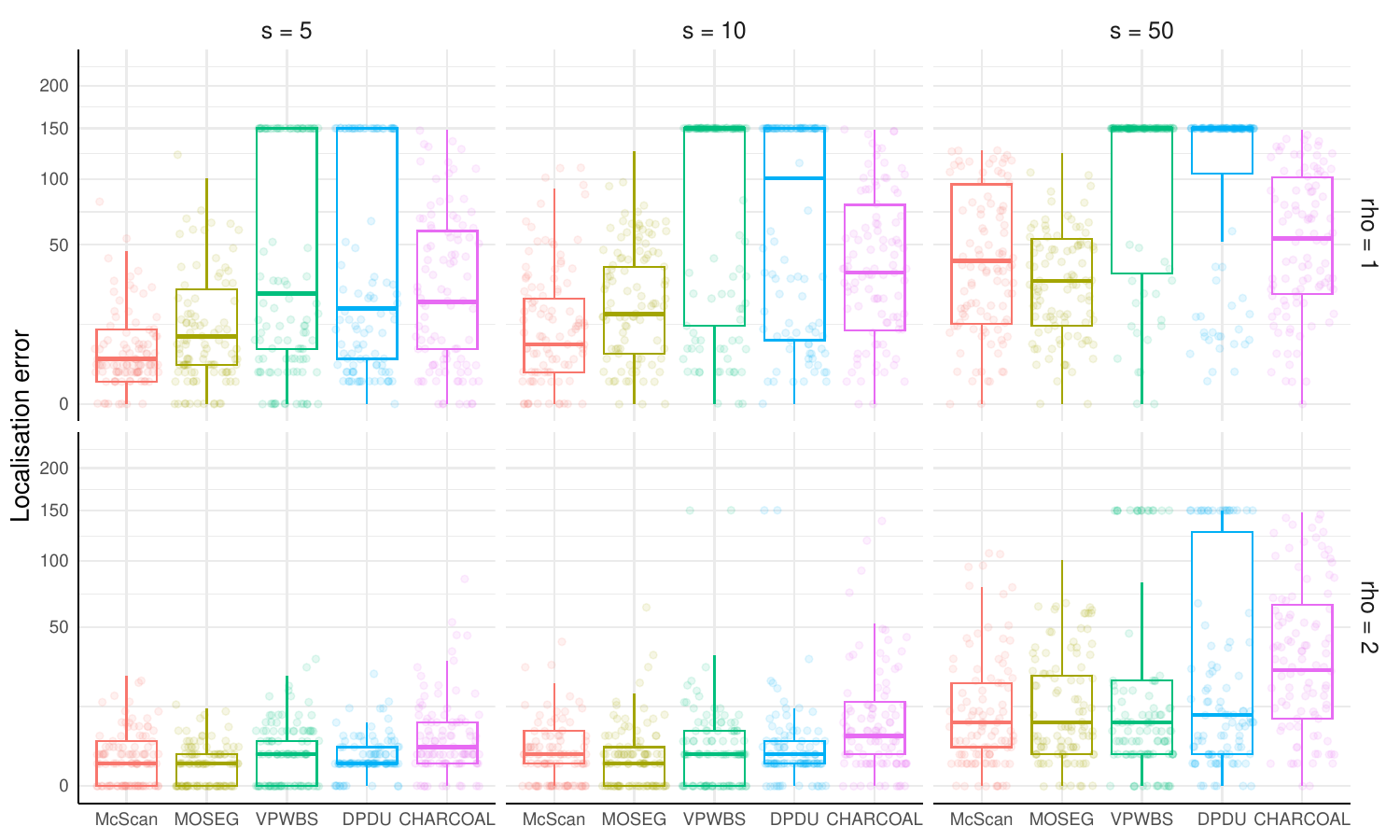}
	\caption{Localisation errors in (M1). For each method, the localisation errors $\vert\wh{\cp}_1 - \cp_1\vert$ over 100 repetitions are jittered in dots with a low intensity, and the overall performance is summarised as a boxplot (obtained from 1000 repetitions, outliers not shown), for $\cp_1 = 150$ and varying $\rho \in \{1, 2\}$. The $y$-axis is in the square root scale.}
	\label{fig:can_err:two}
\end{figure}

\begin{figure}[h!t!b!]
	\centering
	\includegraphics[width = \textwidth]{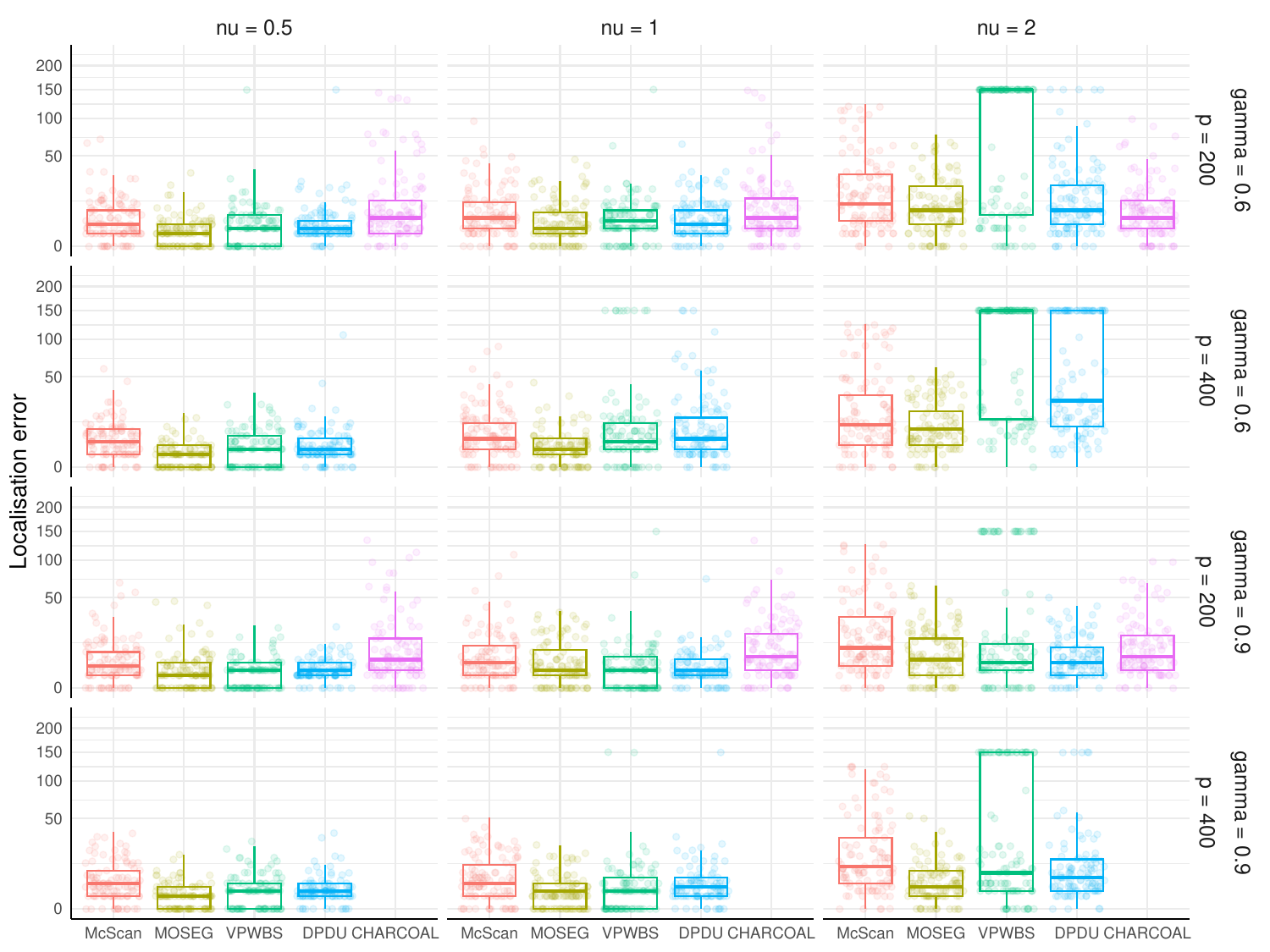}
	\caption{Localisation errors in (M2).  For each method, the localisation errors $\vert\wh{\cp}_1 - \cp_1\vert$ over 100 repetitions are jittered in dots with a low intensity, and the overall performance is summarised as a boxplot (obtained from 1000 repetitions, outliers not shown), for $\cp_1 = 150$ and varying $p \in \{200, 400\}$ and $\gamma \in \{0.6, 0.9\}$. The $y$-axis is in the square root scale. There is no result from CHARCOAL for $p = 400$ as it is not applicable when $p > n = 300$. }
	\label{fig:toep_err:two}
\end{figure}

\clearpage

\subsubsection{Additional results under (M3)}
\label{sec:add:cp:m3}

We complement the comparative simulation study under the scenario (M3) in \cref{sec:sim:multi} by reporting further quantitative performance evaluation measures, including the Hausdorff distances between estimated and true change points, the V-measure \citep{rosenberg2007v} and the number of estimated change points, in \cref{fig:mcp_hausdorff,fig:mcp_vmeas,fig:mcp_numcp}. The V-measure is an entropy-based clustering measure, which takes values in $[0,1]$ with a larger value indicating a higher accuracy. Unlike the other clustering measures, such as (adjusted) Rand index \citep{rand1971objective,hubert1985comparing}, the V-measure satisfies all of the desirable  properties of clustering measures, proposed in \citet{dom2001information}, see \citet{rosenberg2007v} for further discussion.  

\begin{figure}[h!t!b!]
	\centering
	\includegraphics[width = \textwidth]{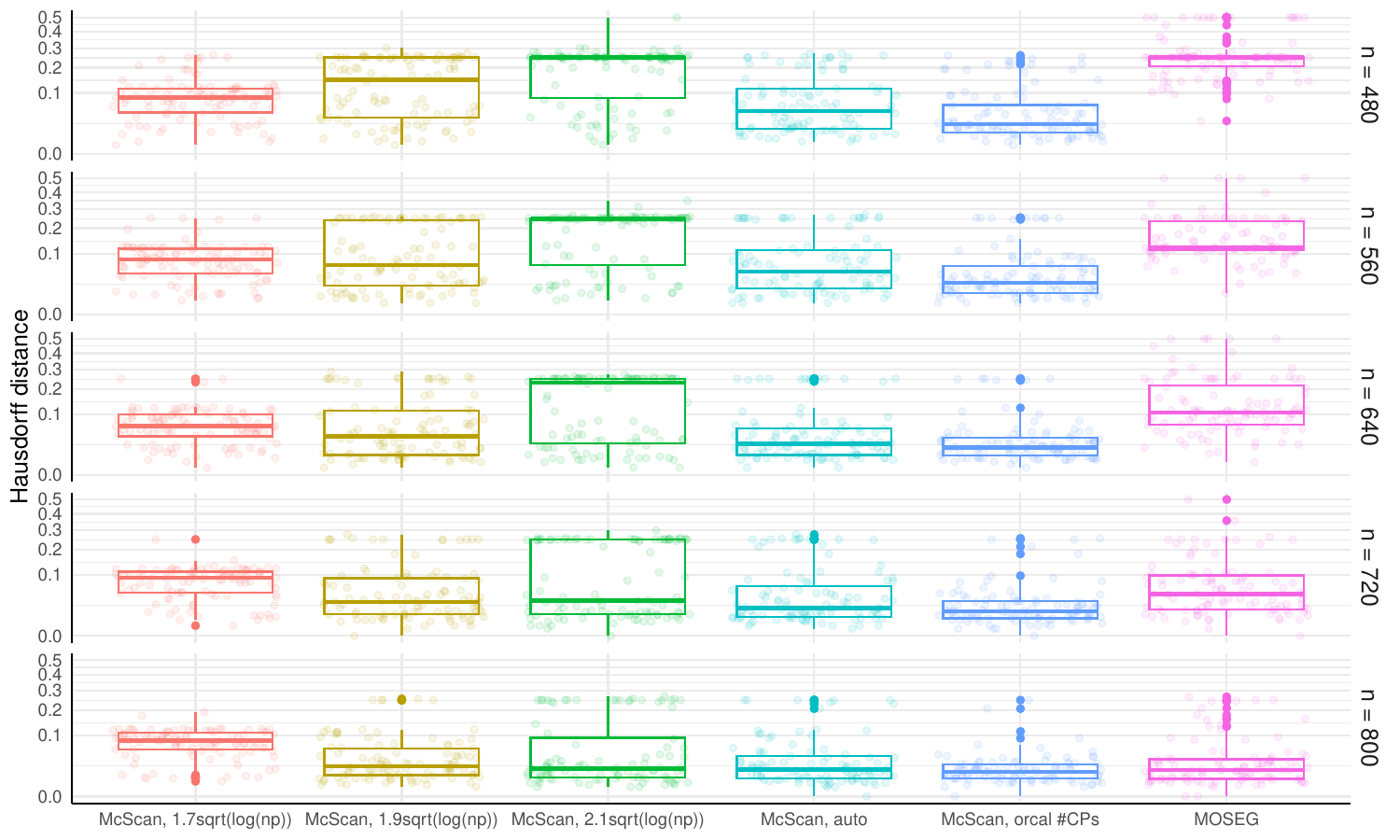}
	\caption{Hausdorff distances of scaled estimated change points $\bigl\{\wh\cp_j/n\,:\,j = 1,\ldots, \wh q\bigr\}$ and scaled true change points $\bigl\{\cp_j/n\,:\,j = 1,\ldots, q\bigr\}$ in (M3). The performance over 100 repetitions is summarised as a boxplot, and that of individual repetition is jittered in dots with a low intensity. The $y$-axis is in the square root scale. Abbreviations of methods are the same as in \cref{fig:mcp_location} in the main text.}
	\label{fig:mcp_hausdorff}
\end{figure}

\begin{figure}[h!t!b!]
	\centering
	\includegraphics[width = \textwidth]{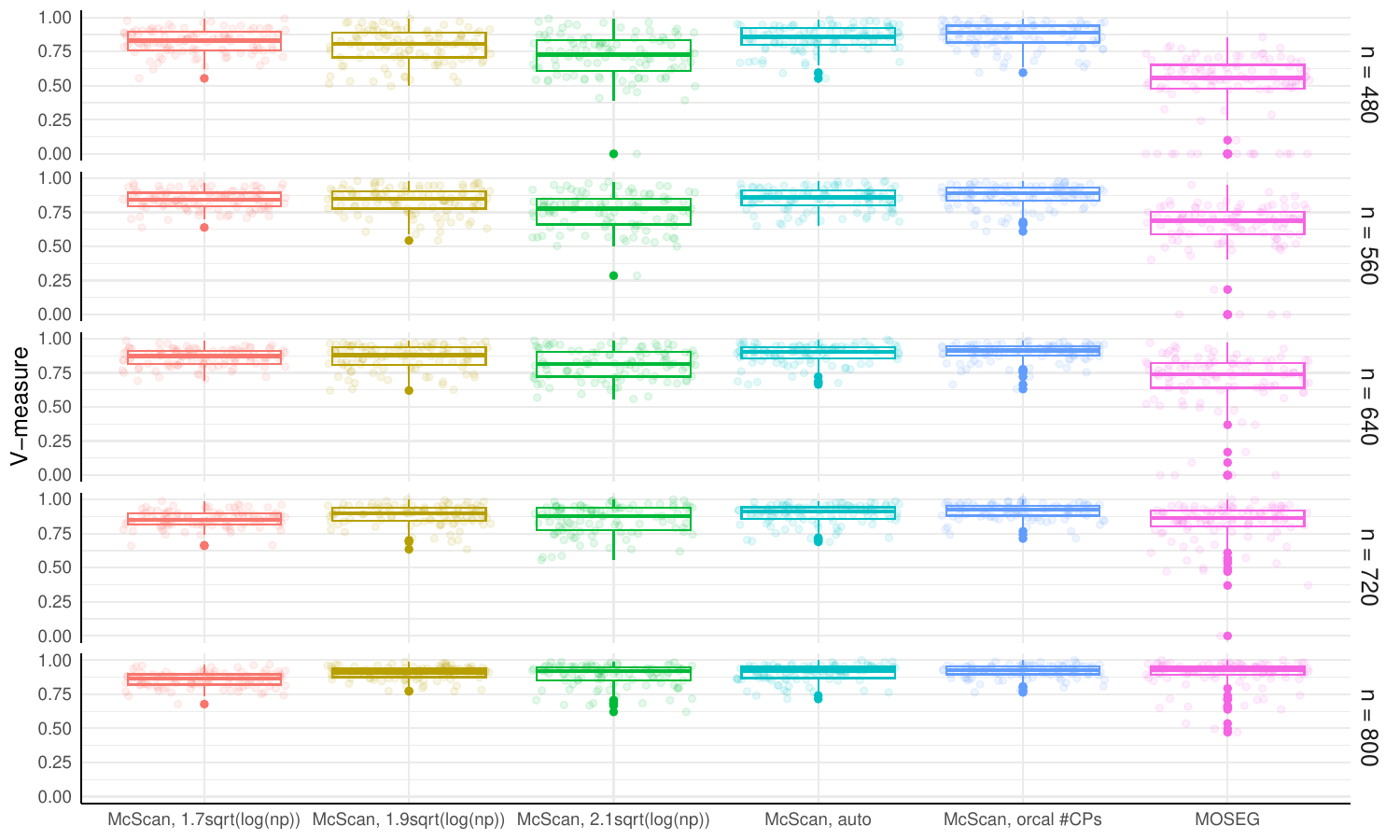}
	\caption{V-measures in (M3). The performance over 100 repetitions is summarised as a boxplot, and that of individual repetition is jittered in dots with a low intensity. Abbreviations of methods are the same as in \cref{fig:mcp_location} in the main text.}
	\label{fig:mcp_vmeas}
\end{figure}

\begin{figure}[h!t!b!]
	\centering
	\includegraphics[width = \textwidth]{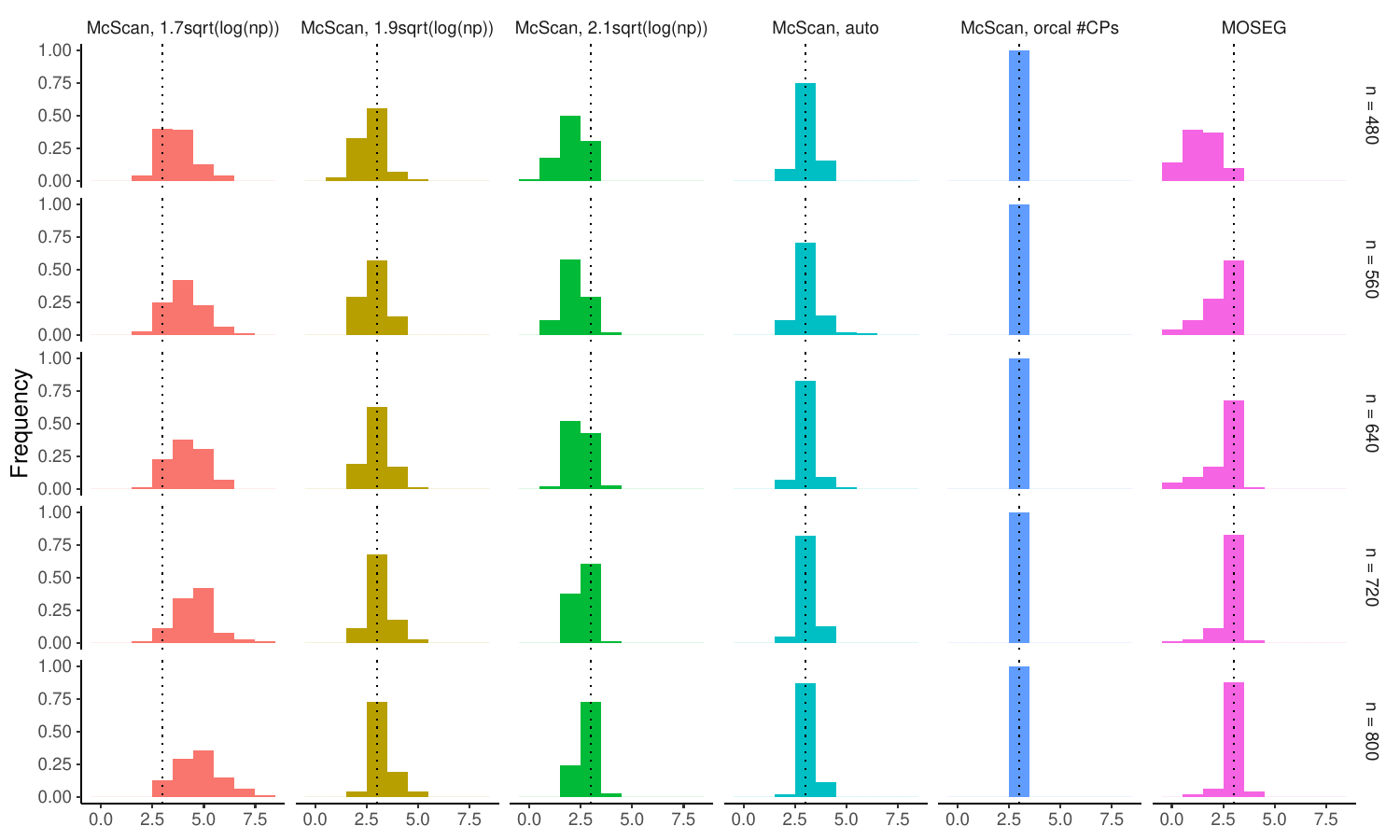}
	\caption{Estimated number of change points in (M3) over 100 repetitions. The true number of change points are marked by the vertical dotted lines. Abbreviations of methods are the same as in \cref{fig:mcp_location} in the main text.}
	\label{fig:mcp_numcp}
\end{figure}

\clearpage

\subsection{{Model mis-specification}}
\label{sec:sim:mask}
As an investigation of robustness to model mis-specification, we consider the scenarios with heterogeneous design, where $\Cov(\mbf x_t)$ undergoes either an abrupt shift or a gradual change. 

Consider particularly the single change model 
\begin{align*}
Y_t = \l\{ \begin{array}{ll}
\mbf x_t^\top \bm\beta_0 + \sigma^{(0)}_\vep \bm\vep_t & \text{for $1 \le t \le \cp_1$},  \\
\mbf x_t^\top \bm\beta_1 + \sigma^{(1)}_\vep \bm\vep_t & \text{for $\cp_1 + 1 \le t \le n$}, 
\end{array}\r.
\end{align*}
where $\mbf x_t \sim \mc N_p(\mbf 0, \bm\Sigma_t)$ and $\bm\vep_t \sim \mc N_p(\mbf 0, \mbf I_p)$ are independently distributed.
\begin{enumerate}[label = (M\arabic*)]
\setcounter{enumi}{3}
\item \label{m:four} We define
$\bm\Sigma_t = \bm\Sigma^{(0)}(\gamma) \cdot \mathbb{I}_{\{t \le \cp_\Sigma\}} + \bm\Sigma^{(1)}(\gamma) \cdot \mathbb{I}_{\{t > \cp_\Sigma\}} \in \R^{p \times p}$, where $\bm\Sigma^{(j)}(\gamma), \, j = 0, 1$, are block-diagonal matrices with $\bm\Sigma^{(0)}(\gamma) = \text{diag}(\bm\Gamma(\gamma), \mbf I_{p - \mathfrak{s}})$, $\bm\Sigma^{(1)}(\gamma) = \text{diag}(\mbf I_{p - \mathfrak{s}}, \bm\Gamma(\gamma))$ and $\bm\Gamma(\gamma) = [\gamma^{\vert i - j \vert}, \, 1 \le i, j \le \mathfrak{s}] \in \R^{\mathfrak{s} \times \mathfrak{s}}$.
We generate $\mbf b = (b_1, \ldots, b_p)^\top \in \R^p$ with $b_i \sim_{\iid} \{-0.3, 0.3\}$ for $i \in \{1, \mathfrak{s} + 1, \ldots, 2\mathfrak{s} - 2, p\}$ and $b_i = 0$ otherwise.
Then, we set $\sigma^{(j)}_\vep = (1 - \mbf b^\top (\bm\Sigma^{(j)}(\gamma))^{-1} \mbf b)^{-1/2}$ and $\bm\beta_j = (\sigma^{(j)}_\vep)^2 \cdot (\bm\Sigma^{(j)}(\gamma))^{-1} \mbf b$, $j = 0, 1$.
We consider $(n, \mathfrak{s}) = (300, 5)$ and $\gamma = 0.9$, and vary $p \in \{400, 900\}$ and $\cp_1, \cp_\Sigma \in \{75, 150, 225\}$.

\item \label{m:five} Using the same model parameters as in~\ref{m:four}, we now define \[\bm\Sigma_t = \frac{n - t}{n - 1}\cdot \bm\Sigma^{(0)}(0.3) +\frac{t - 1}{n - 1} \cdot \bm\Sigma^{(1)}(0.3).\]
Setting $(n, p, \mathfrak{s}) = (300, 900, 5)$, we vary $\cp_1 \in \{75, 150, 225\}$.
\end{enumerate}

Under \ref{m:four}, $\Cov(\mbf x_t)$ undergoes an abrupt shift at $\cp_\Sigma$, whereas under \ref{m:five}, it changes gradually over time. In both cases, the regression parameters remain sparse (with $\mathfrak{s}+1$ nonzero entries), a situation favoured by the existing change point methods that directly estimate the local regression parameter.
As in \cref{sec:sim:cp}, we compare McScan's performance to that of MOSEG \citep{cho2022high}, DPDU \citep{xu2022change} and VPWBS \citep{wang2021statistically}, applying all methods with prior knowledge of the number of change points to isolate change point estimation from model selection. For DPDU and VPWBS, we manually select the estimate nearest to $\cp_1$ as described in \cref{sec:sim:single}. CHARCOAL \citep{gao2022sparse} is excluded since $p > n$ (see \cref{sec:add:cmp:detail} for implementation details). DPDU and VPWBS are omitted when $p = 900$ due to high computational costs.

Figures~\ref{fig:m4:p400}--\ref{fig:m4:p900} display the scaled locations of the change point estimators under~\ref{m:four}.
When $\cp_1 = \cp_\Sigma$ (on the diagonal of the figures), McScan fails to detect the change.
Here, the change in $\bm\Sigma_t$ offsets the change in the regression parameters since by design, $\bm\Sigma^{(0)}(\gamma) \beta_0 = \bm\Sigma^{(1)}(\gamma) \beta_1$.
This poses a particularly challenging set-up for McScan scanning $\Cov(\mbf x_t, Y_t)$ to detect the change, which in turn represents the covariance weighted regression parameter.  
Notably, we observe that the competing methods that utilise the local parameter estimators also fail to detect $\cp_1$. This suggests that the simultaneous shifts in both $\bm\Sigma_t$ and the regression coefficients we have designed pose a difficulty in general for the change point problem.

\begin{figure}[h!t!b!p!]
    \centering
    \includegraphics[width = 1\linewidth]{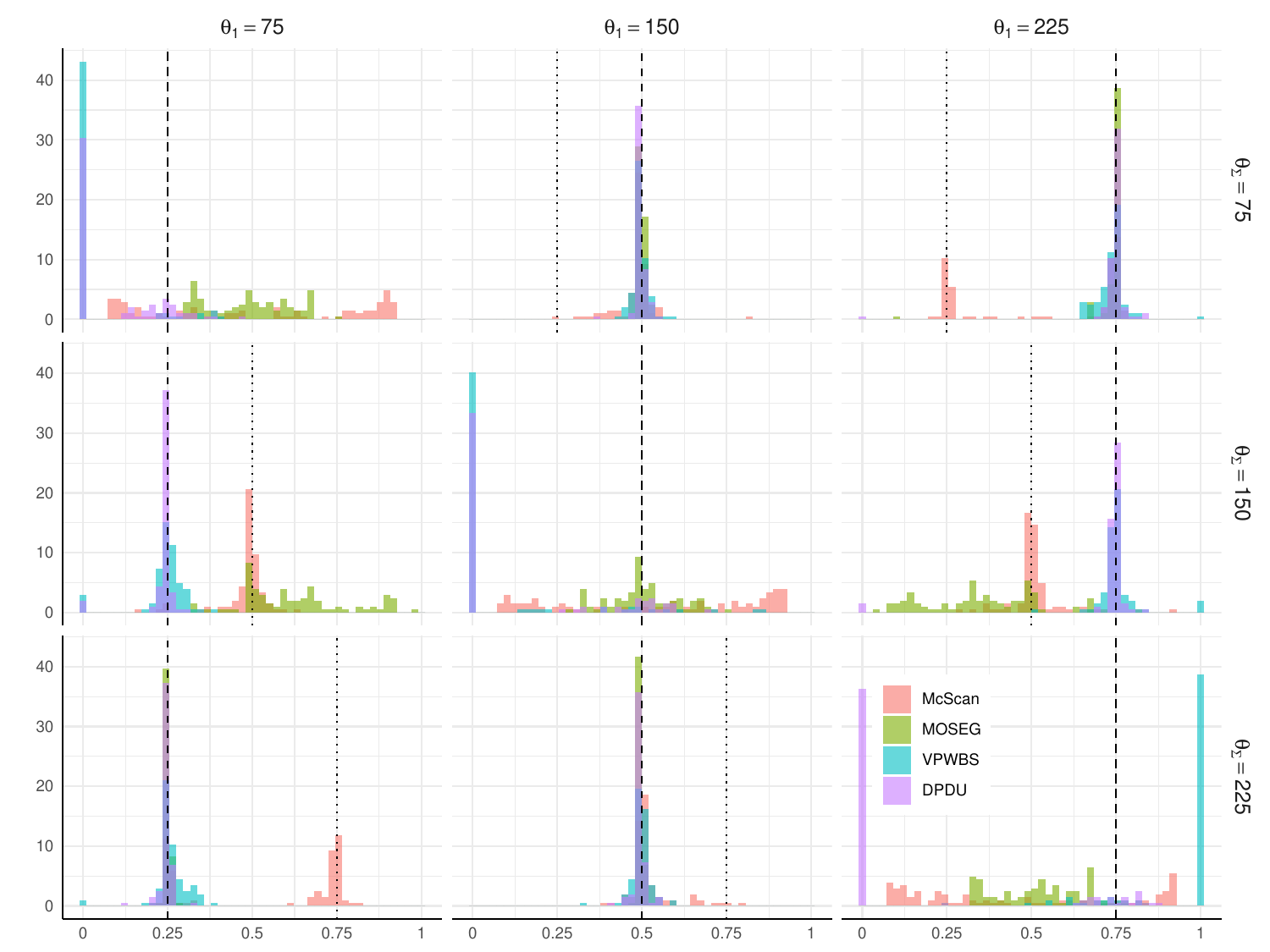}
    \caption{Scaled change point estimators ($\wh\cp_1/n$) returned by McScan, MOSEG, VPWBS and DPDU, under \ref{m:four} for varying $(\cp_1, \cp_\Sigma)$ when $p = 400$; for each setting, 100 repetitions are generated and each method is applied to detect a single change point. We denote $\cp_1$ by a broken line and $\cp_\Sigma$ by a dotted one in each plot. The histograms use a percentage scale.}
    \label{fig:m4:p400}
\end{figure}

\begin{figure}[h!t!b!p!]
    \centering
    \includegraphics[width = 1\linewidth]{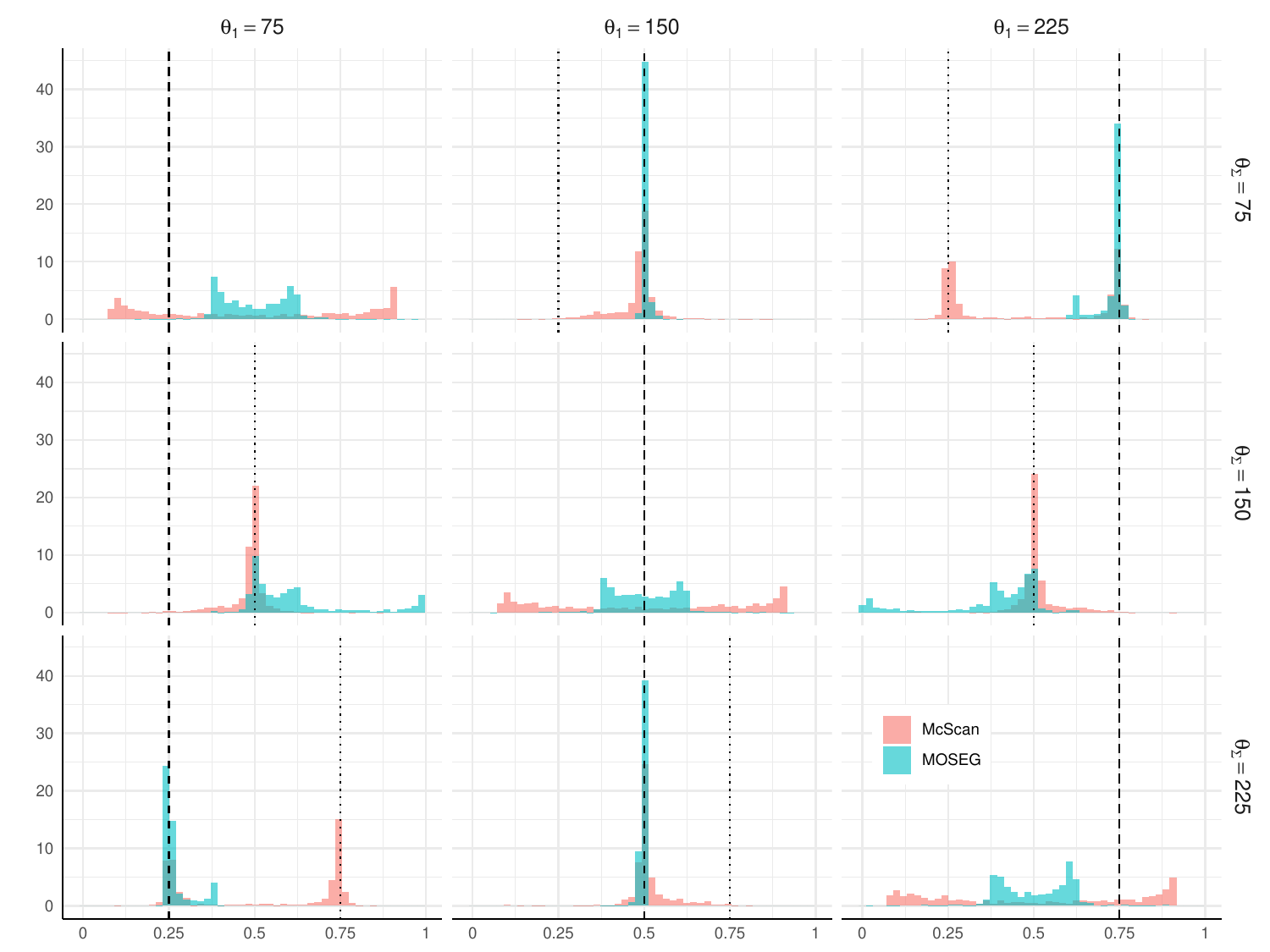}
    \caption{Scaled change point estimators ($\wh\cp_1/n$) returned by McScan and MOSEG, under \ref{m:four} for varying $(\cp_1, \cp_\Sigma)$ when $p = 900$; for each setting, 1000 repetitions are generated and each method is applied to detect a single change point. We denote $\cp_1$ by a broken line and $\cp_\Sigma$ by a dotted one in each plot. The histograms use a percentage scale.}
    \label{fig:m4:p900}
\end{figure}

In the off-diagonal plots,  McScan detects either $\cp_1$ or $\cp_\Sigma$ with high accuracy, often preferring the one closer to $n / 2 = 150$. 
When $\vert \cp_1 - 150 \vert = \vert \cp_\Sigma - 150 \vert$ and $\cp_1 \ne \cp_\Sigma$, the distribution of $\wh\cp_1$ returned by McScan is bi-modal, peaking at $\{\cp_1, \cp_\Sigma\}$.
Surprisingly, when $p = 900$, MOSEG prefers $\cp_\Sigma$ over $\cp_1$ when $\cp_\Sigma = 150$, suggesting that even those methods that are based on local regression parameter estimators, struggle to distinguish regression parameter shifts from covariance shifts.
Additionally, we have replicated the experiments reported in Figure~\ref{fig:m4:p900} with the number of change points set to be two for McScan, see Figure~\ref{fig:m4:p900:two}, which shows that McScan treats the two types of change points on an equal footing and detects them both with high accuracy.

In Figure~\ref{fig:m5}, we present the results for McScan and MOSEG under~\ref{m:five}. 
While generally outperformed by MOSEG, McScan successfully detects the change point despite the gradual evolution of $\Cov(\mbf x_t)$.

In summary, these studies show that the scenarios where $\Cov(\mbf x_t, Y_t)$ is identical pre- and post-change, adversely impact not only McScan but also the other change point methods that directly look for changes in the sparse regression parameters. 
We leave this interesting investigation into the detection lower bound and  optimality for the change point detection problem in a regression setting with time-varying $\Cov(\mbf x_t)$ for future research.

\begin{figure}[h!t!b!p!]
    \centering
    \includegraphics[width = 1\linewidth]{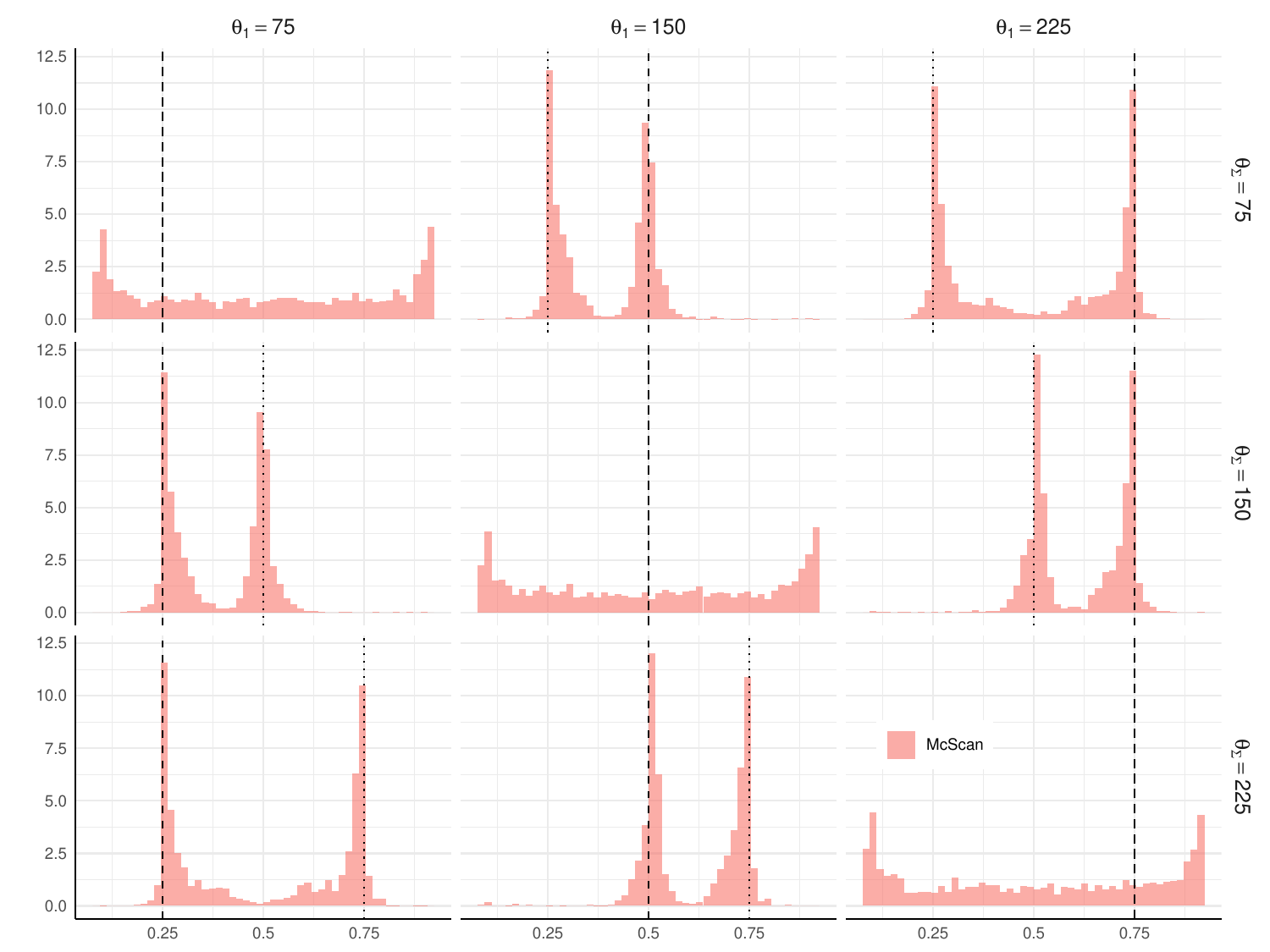}
    \caption{Scaled change point estimators ($\wh\cp_1/n$) returned by McScan applied to detect two change points, under \ref{m:four} for varying $(\cp_1, \cp_\Sigma)$ when $p = 900$, over 1000 repetitions. We denote $\cp_1$ by a broken line and $\cp_\Sigma$ by a dotted one in each plot. The histograms use a percentage scale.}
    \label{fig:m4:p900:two}
\end{figure}

\begin{figure}[h!t!b!p!]
    \centering
    \includegraphics[width = 1\linewidth]{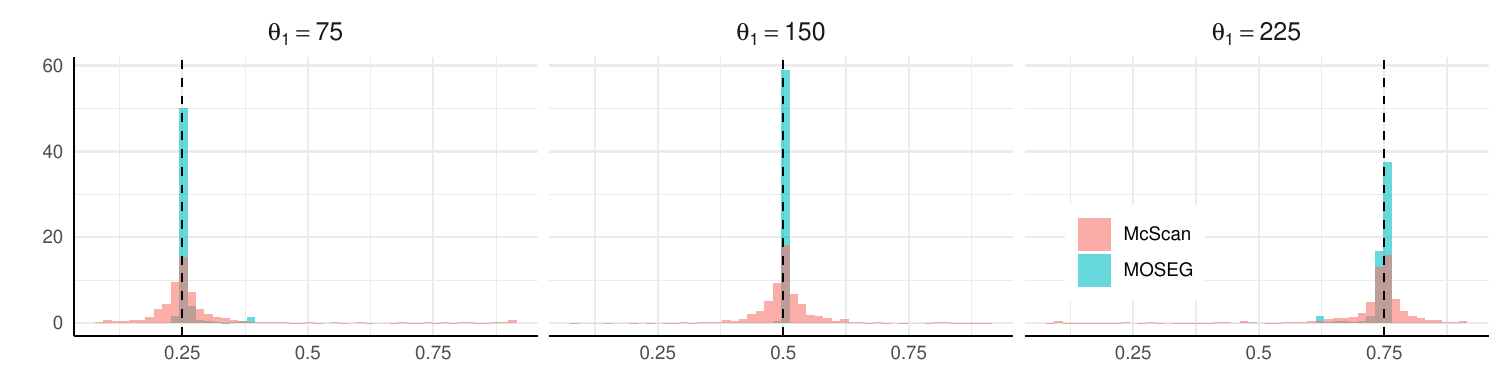}
    \caption{Scaled change point estimators ($\wh\cp_1/n$) returned by McScan and MOSEG, under \ref{m:five} for varying $\cp_1$ when $p = 900$; for each setting, 1000 repetitions are generated and each method is applied to detect a single change point. We denote $\cp_1$ by a broken line. The histograms use a percentage scale.}
    \label{fig:m5}
\end{figure}
\clearpage

\subsection{Differential parameter estimation}
\label{sec:add:dp}

See \cref{sec:sim:dpe} for the details of the experiments.
Figures~\ref{fig:MSE_ell1_g0}--\ref{fig:MSE_ell1_g00} report the estimation errors of different estimators of $\bm\delta_1$ with $\gamma = 0$ and $\gamma = 0.9$, respectively.

\begin{figure}[h!t!b!p!]
	\centering
	\includegraphics[width = .8\textwidth]{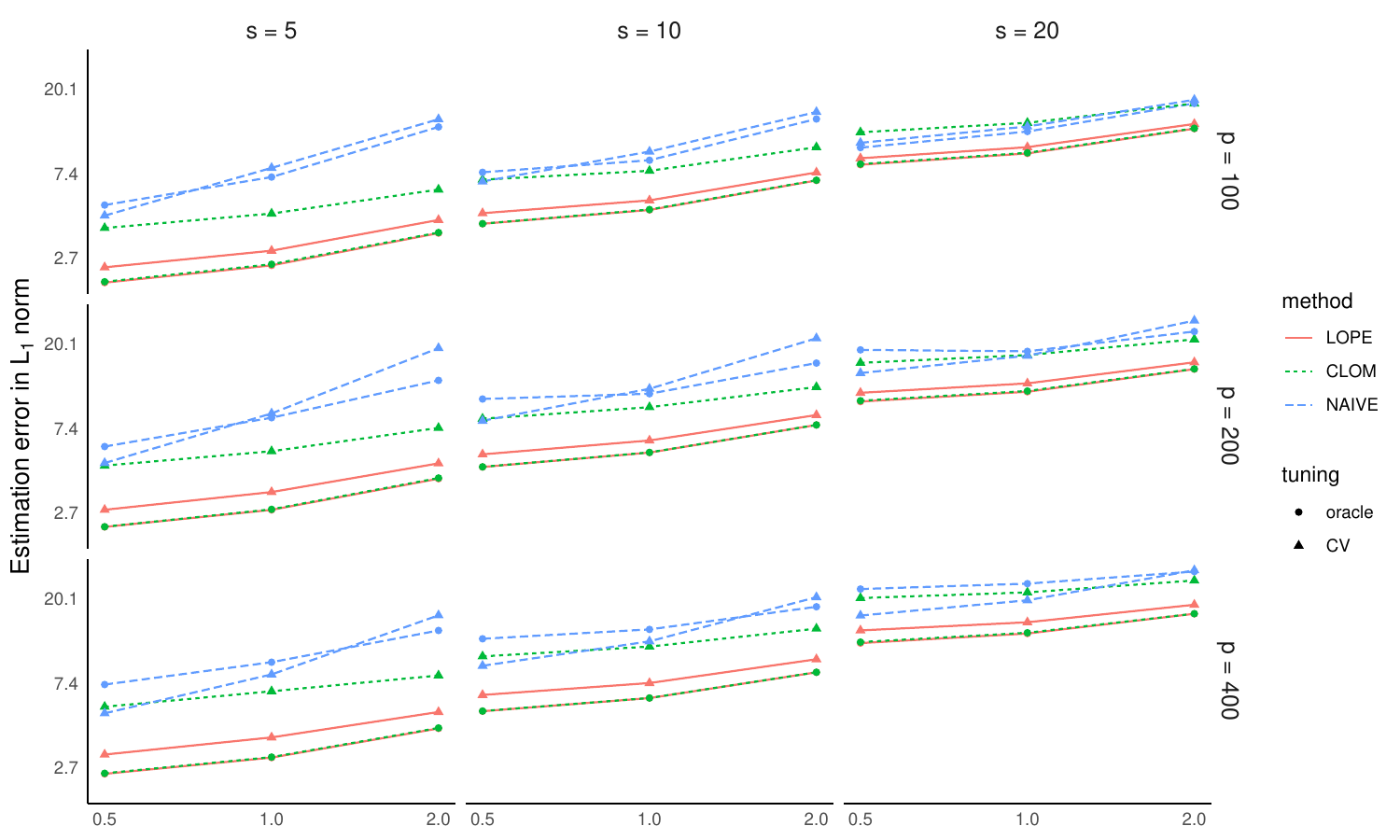}\\
	\includegraphics[width = .8\textwidth]{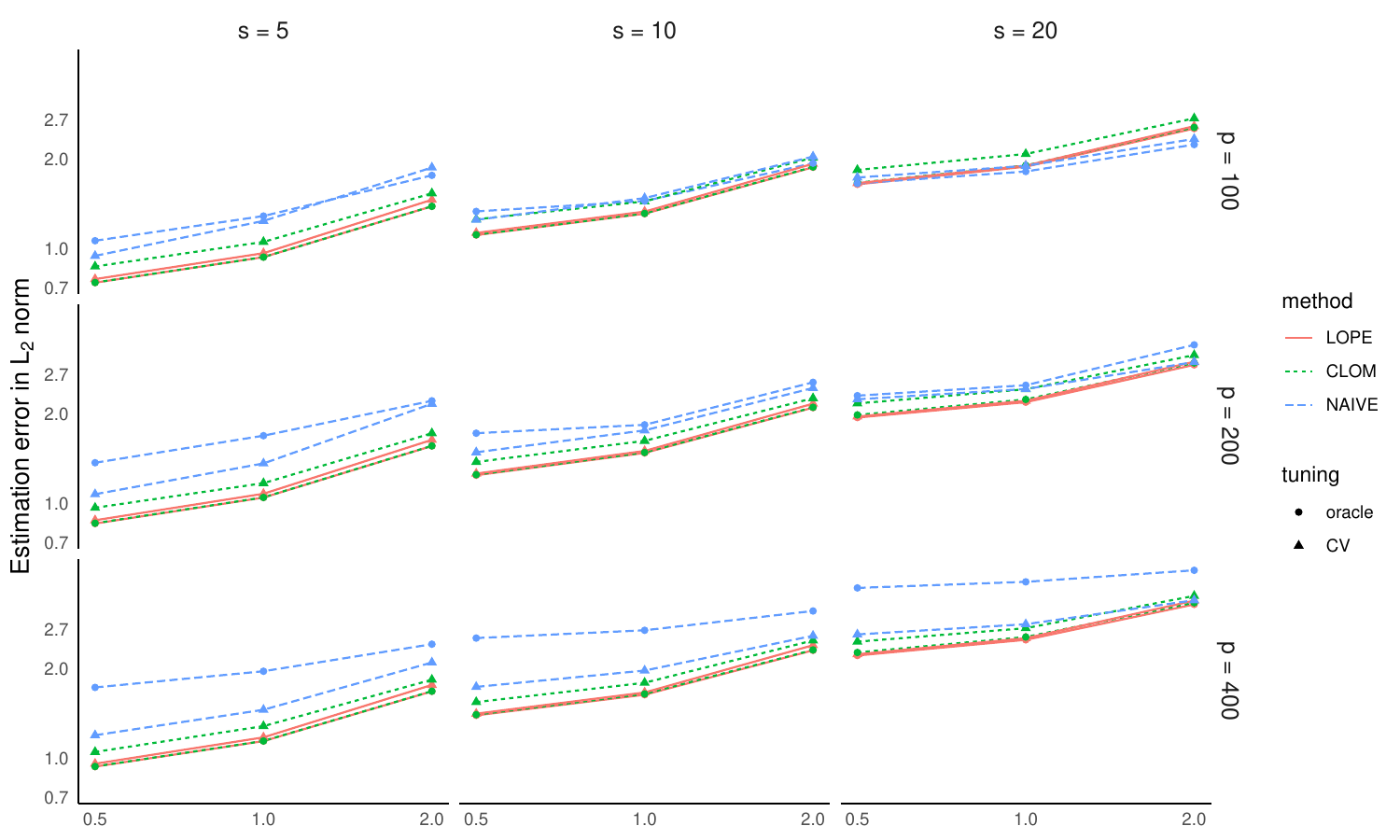}
  	\caption{Estimation errors in $\ell_1$- (top) and $\ell_2$- (bottom) norms against $\nu \in \{0.5, 1, 2\}$ ($x$-axis), from LOPE~\eqref{eq:lasso:est}, CLOM~\eqref{eq:direct:est} and NAIVE combined with the tuning parameter selected via cross-validation (CV) and the oracle one when $\gamma = 0$, averaged over $1000$ realisations. The $y$-axis is in the logarithm scale.}
	\label{fig:MSE_ell1_g0}
\end{figure}

\begin{figure}[h!t!b!p!]
	\centering
	\includegraphics[width = .8\textwidth]{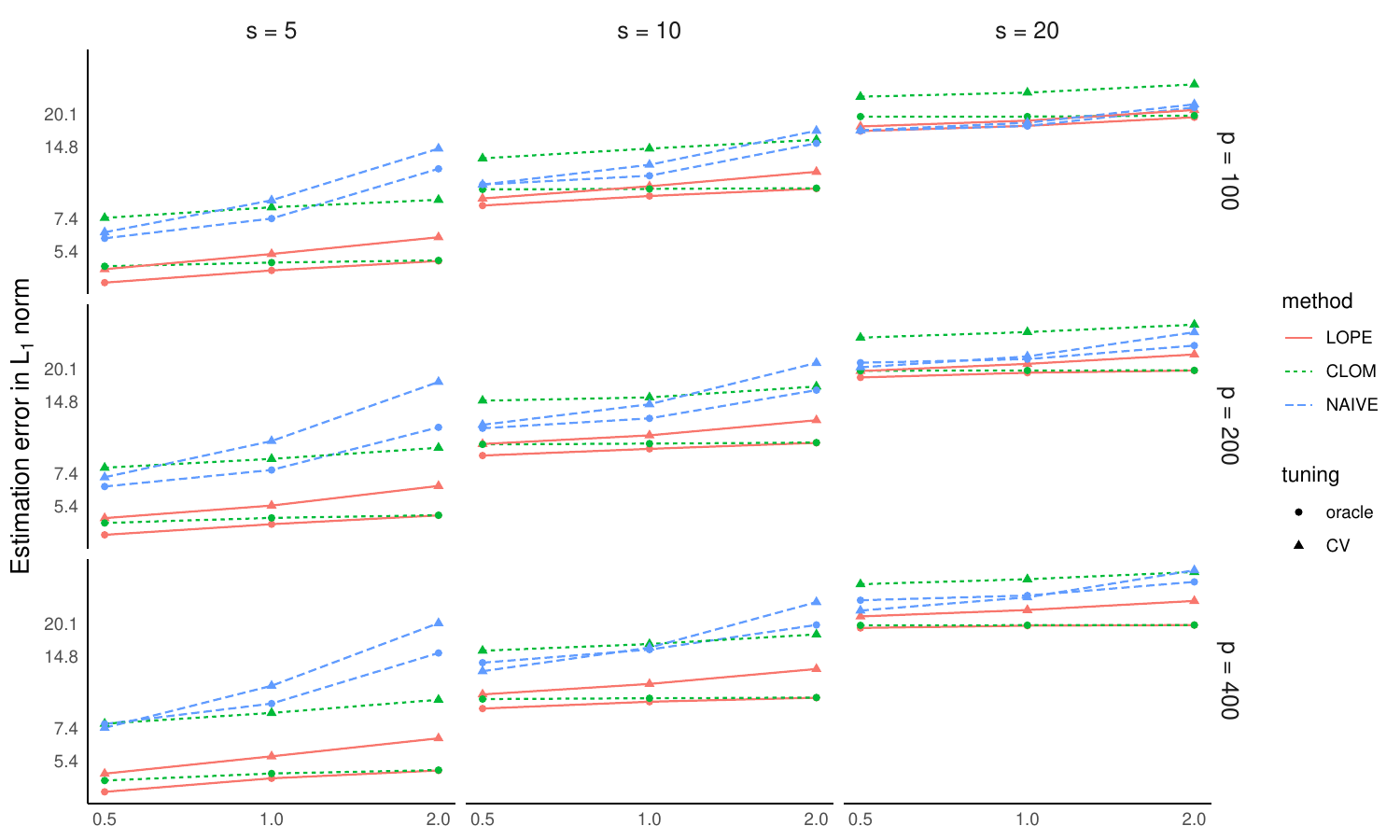}\\
	\includegraphics[width = .8\textwidth]{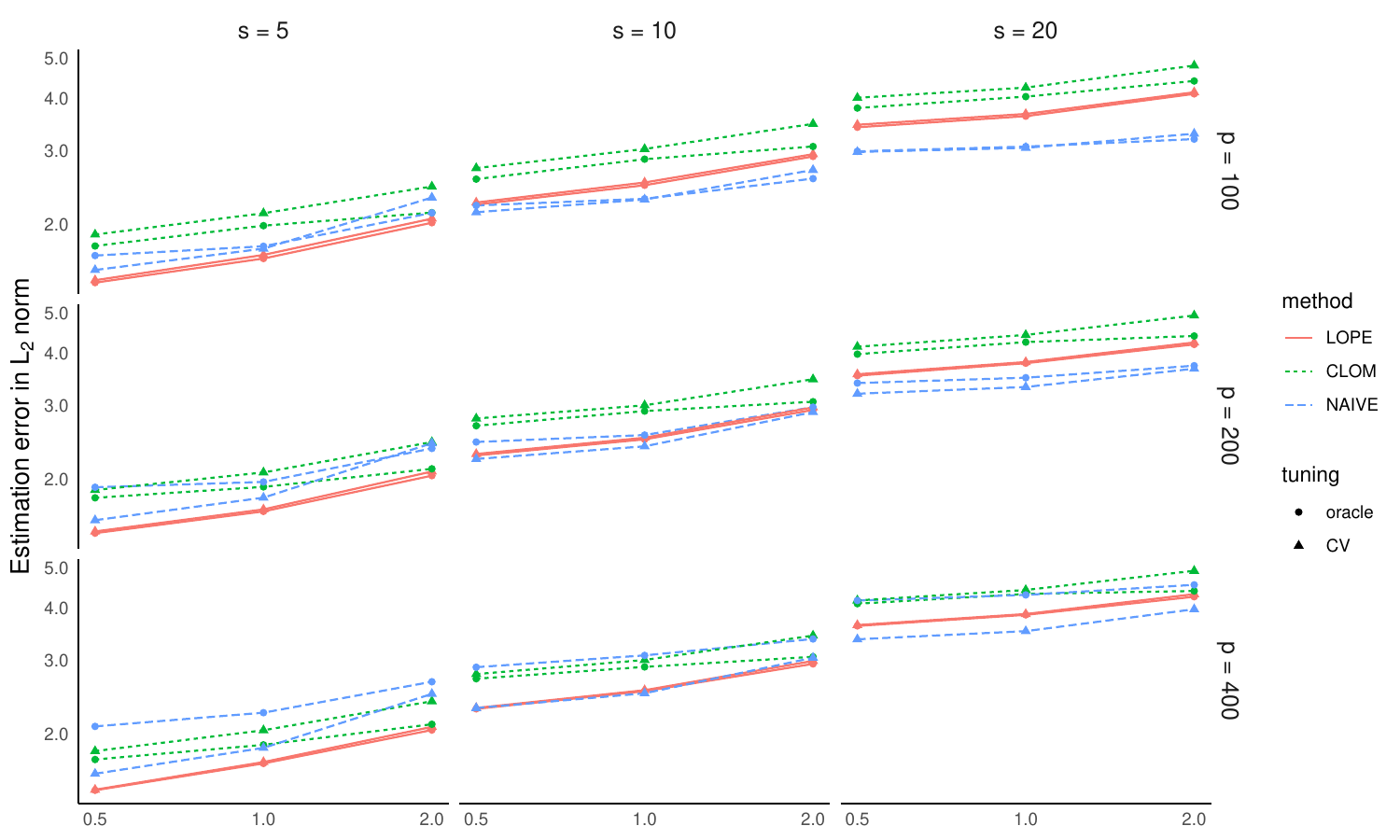}
  	\caption{Estimation errors in $\ell_1$- (top) and $\ell_2$- (bottom) norms against $\nu \in \{0.5, 1, 2\}$ ($x$-axis), from LOPE~\eqref{eq:lasso:est}, CLOM~\eqref{eq:direct:est} and NAIVE combined with the tuning parameter selected via cross-validation (CV) and the oracle one when $\gamma = 0.9$, averaged over $1000$ realisations. The $y$-axis is in the logarithm scale.}
	\label{fig:MSE_ell1_g00}
\end{figure}

\clearpage

\subsection{Simultaneous confidence intervals}
\label{sec:add:ci}

\subsubsection{Additional results under (M2)}
\label{sec:add:ci:m2}

See \cref{sec:sim:ci} for the details of the experiments.
Figures~\ref{fig:ci:cover:1:1}--\ref{fig:ci:cover:3:3} report the additional results on the coverage and other performance metrics of the bootstrap confidence intervals over varying $\alpha \in \{0.1, 0.05, 0.01\}$ and $\gamma \in \{0, 0.6, 0.9\}$.
Also, Figures~\ref{fig:ci:width:one}--\ref{fig:ci:width:three} display the half-width of the confidence intervals averaged over $100$ realisations for each setting.

\begin{figure}[h!t!b!]
\centering
\includegraphics[width = 1\textwidth]{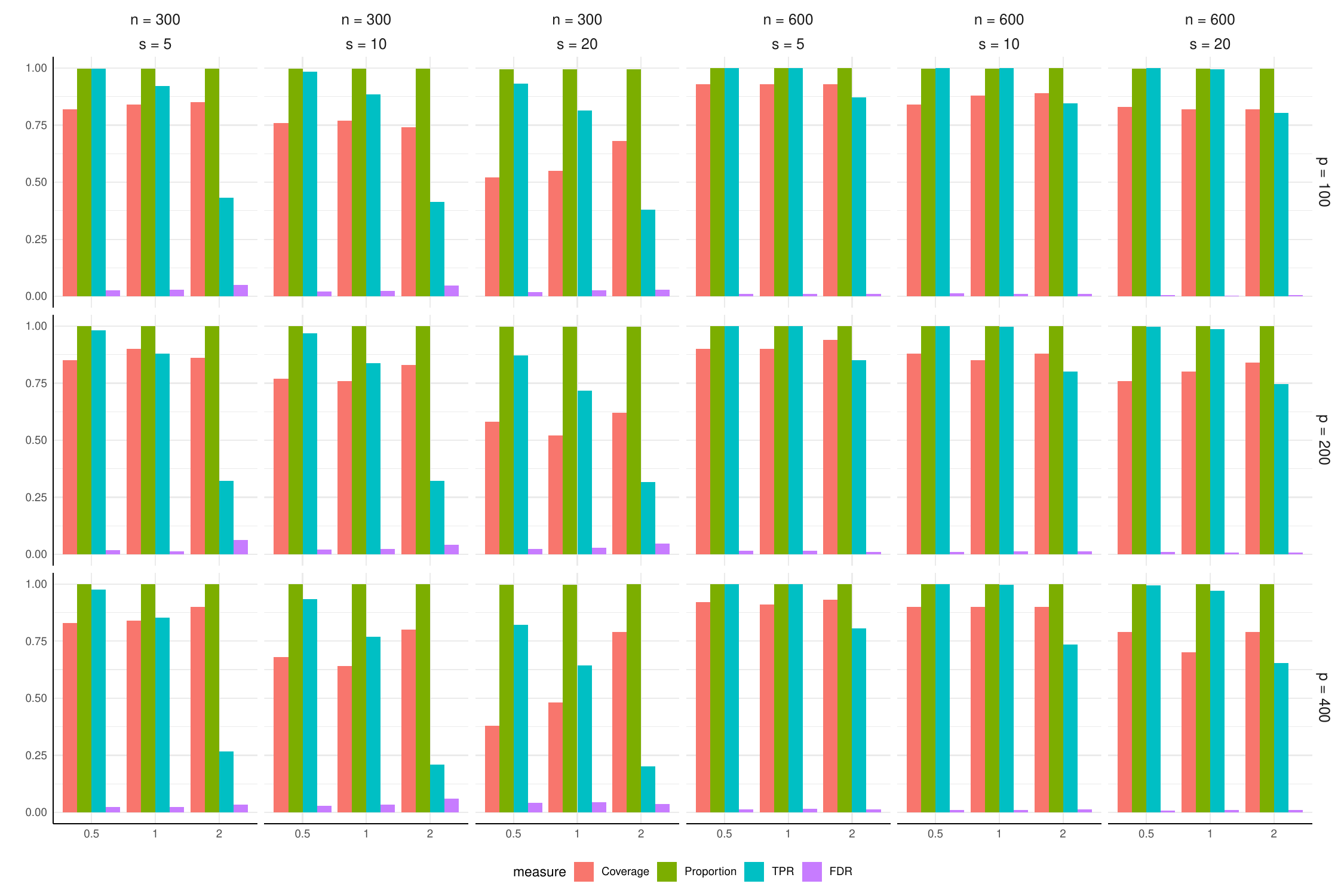}
\includegraphics[width = 1\textwidth]{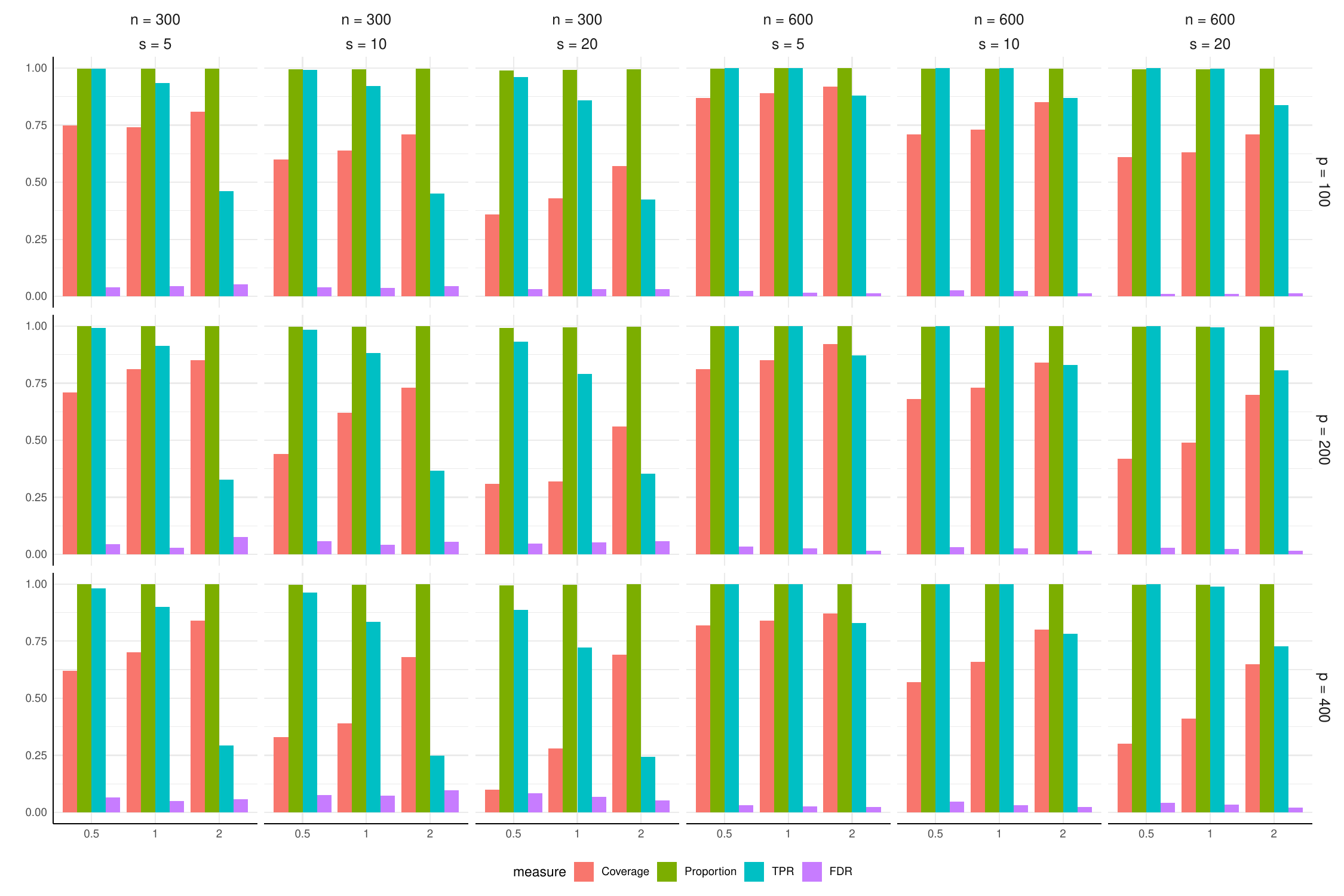}
\caption{Coverage, Proportion, TPR and FDR of simultaneous $90\%$-confidence intervals with (top) and without (bottom) bootstrapping against $\nu \in \{0.5, 1, 2\}$ ($x$-axis) when $\gamma = 0$, averaged over $100$ realisations.}
\label{fig:ci:cover:1:1}
\end{figure}

\begin{figure}[h!t!b!]
\centering
\includegraphics[width = 1\textwidth]{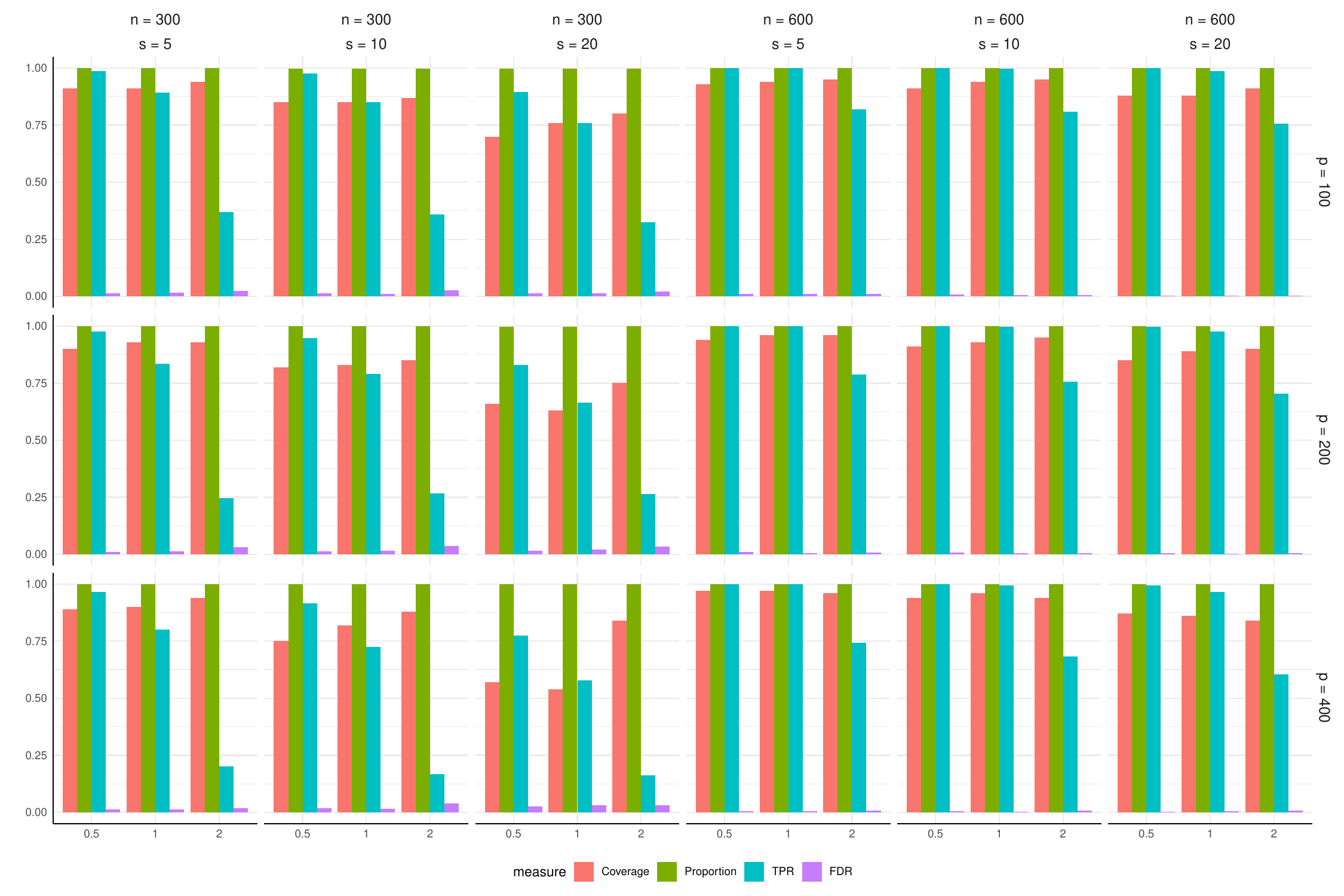}
\includegraphics[width = 1\textwidth]{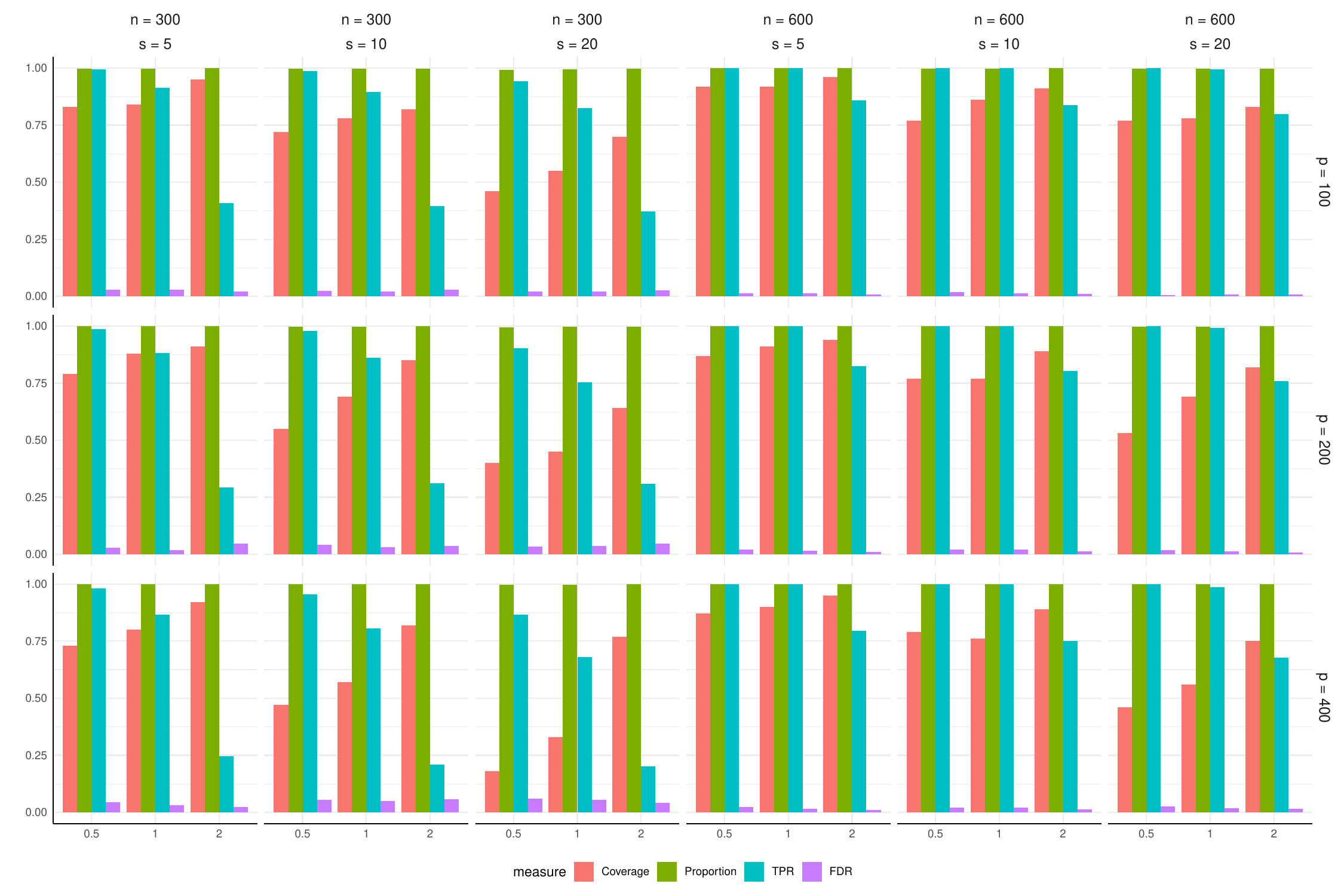}
\caption{Coverage, Proportion, TPR and FDR of simultaneous $95\%$-confidence intervals with (top) and without (bottom) bootstrapping against $\nu \in \{0.5, 1, 2\}$ ($x$-axis) when $\gamma = 0$, averaged over $100$ realisations.}
\label{fig:ci:cover:1:2}
\end{figure}

\begin{figure}[h!t!b!]
\centering
\includegraphics[width = 1\textwidth]{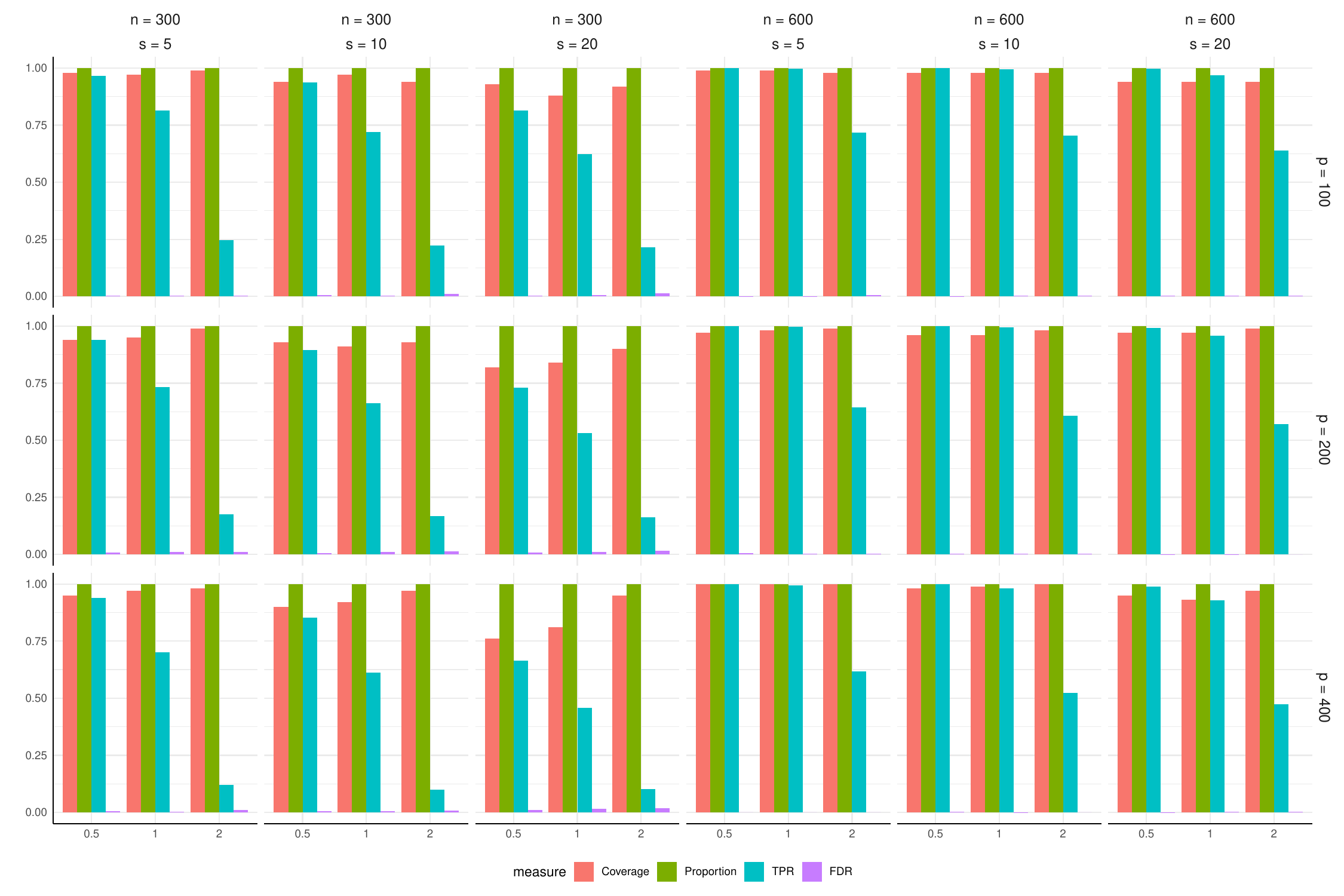}
\includegraphics[width = 1\textwidth]{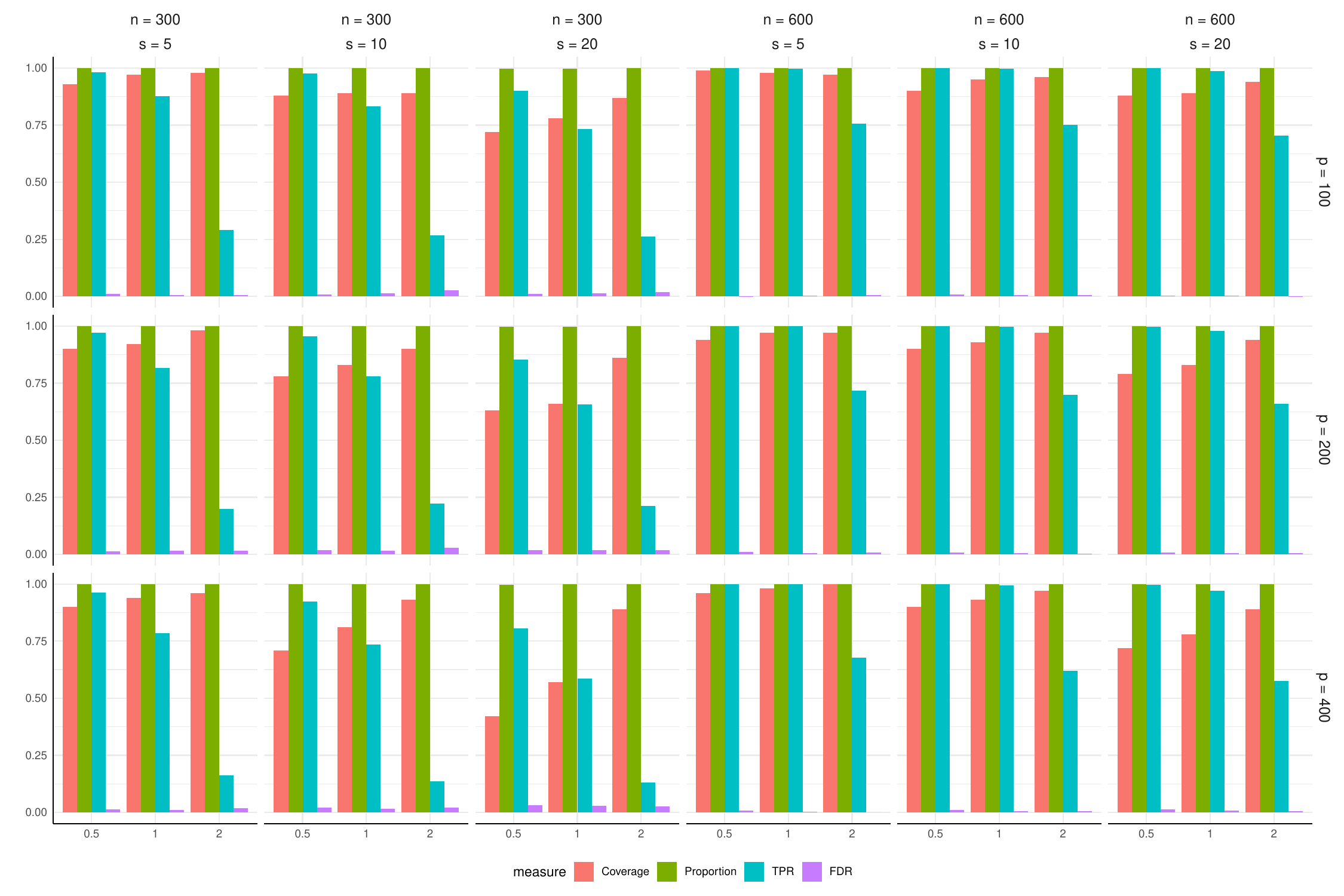}
\caption{Coverage, Proportion, TPR and FDR of simultaneous $99\%$-confidence intervals with (top) and without (bottom) bootstrapping against $\nu \in \{0.5, 1, 2\}$ ($x$-axis) when $\gamma = 0$, averaged over $100$ realisations.}
\label{fig:ci:cover:1:3}
\end{figure}

\begin{figure}[h!t!b!]
\centering
\includegraphics[width = 1\textwidth]{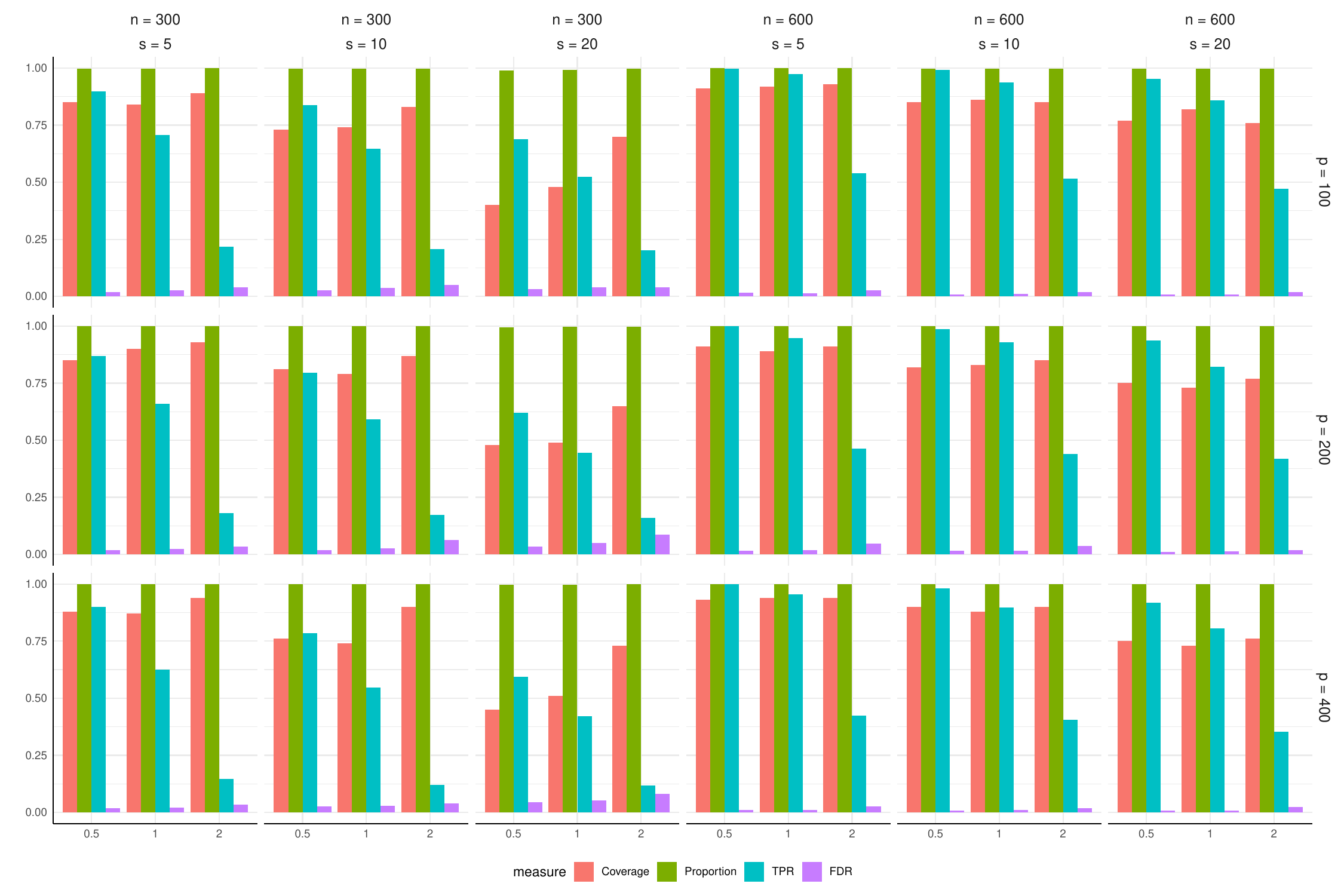}
\includegraphics[width = 1\textwidth]{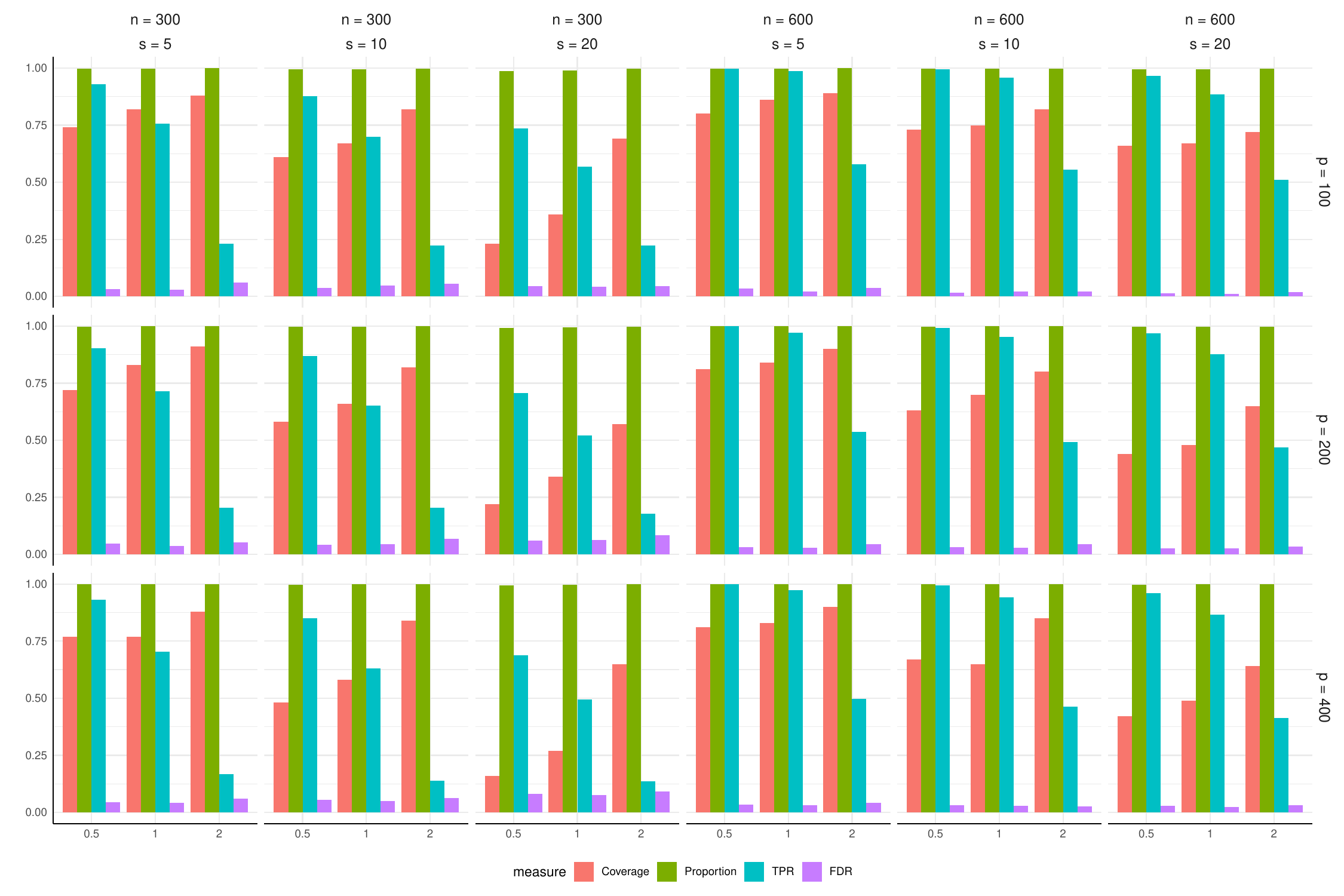}
\caption{Coverage, Proportion, TPR and FDR of simultaneous $90\%$-confidence intervals with (top) and without (bottom) bootstrapping against $\nu \in \{0.5, 1, 2\}$ ($x$-axis) when $\gamma = 0.6$, averaged over $100$ realisations.}
\label{fig:ci:cover:2:1}
\end{figure}

\begin{figure}[h!t!b!]
\centering
\includegraphics[width = 1\textwidth]{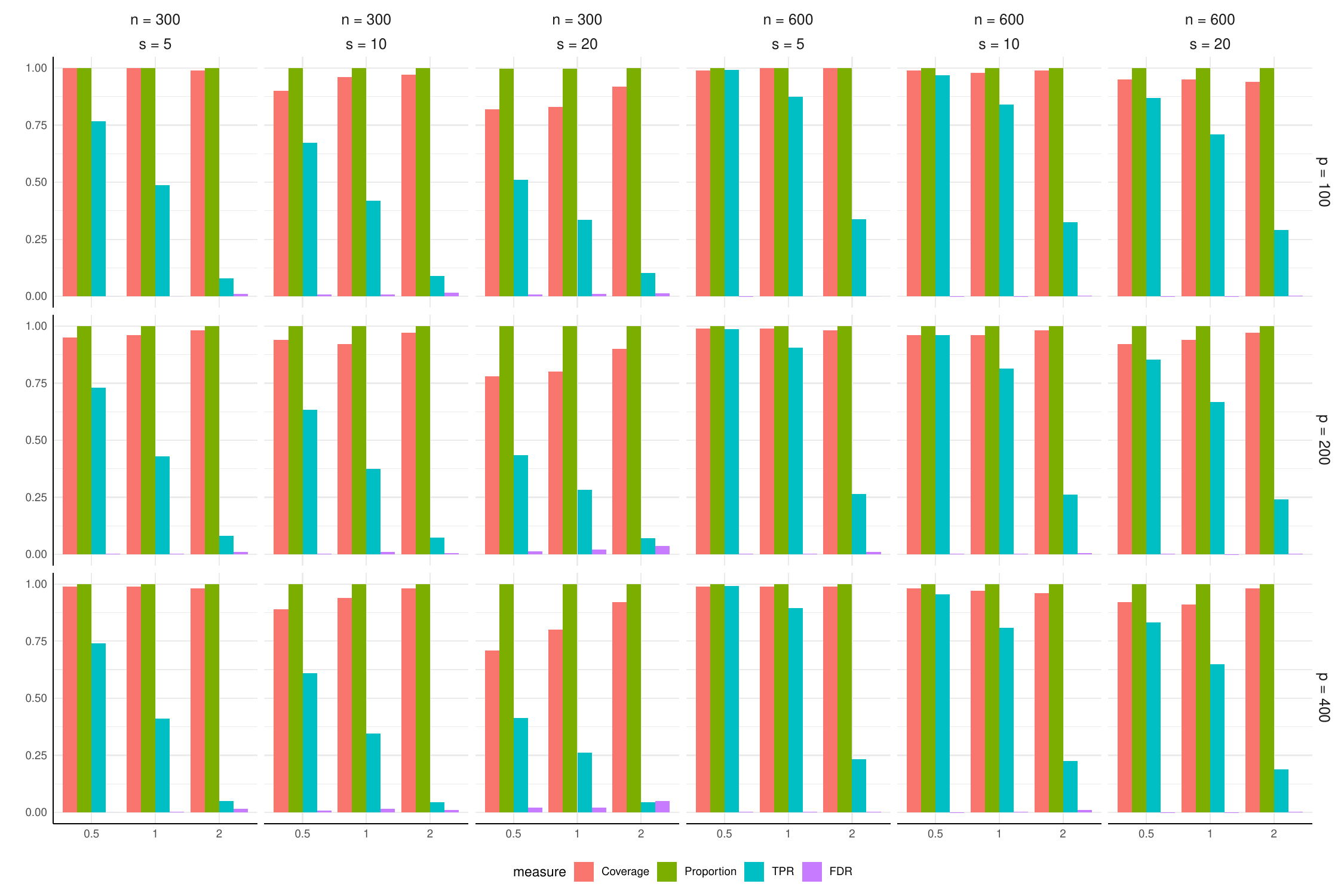}
\includegraphics[width = 1\textwidth]{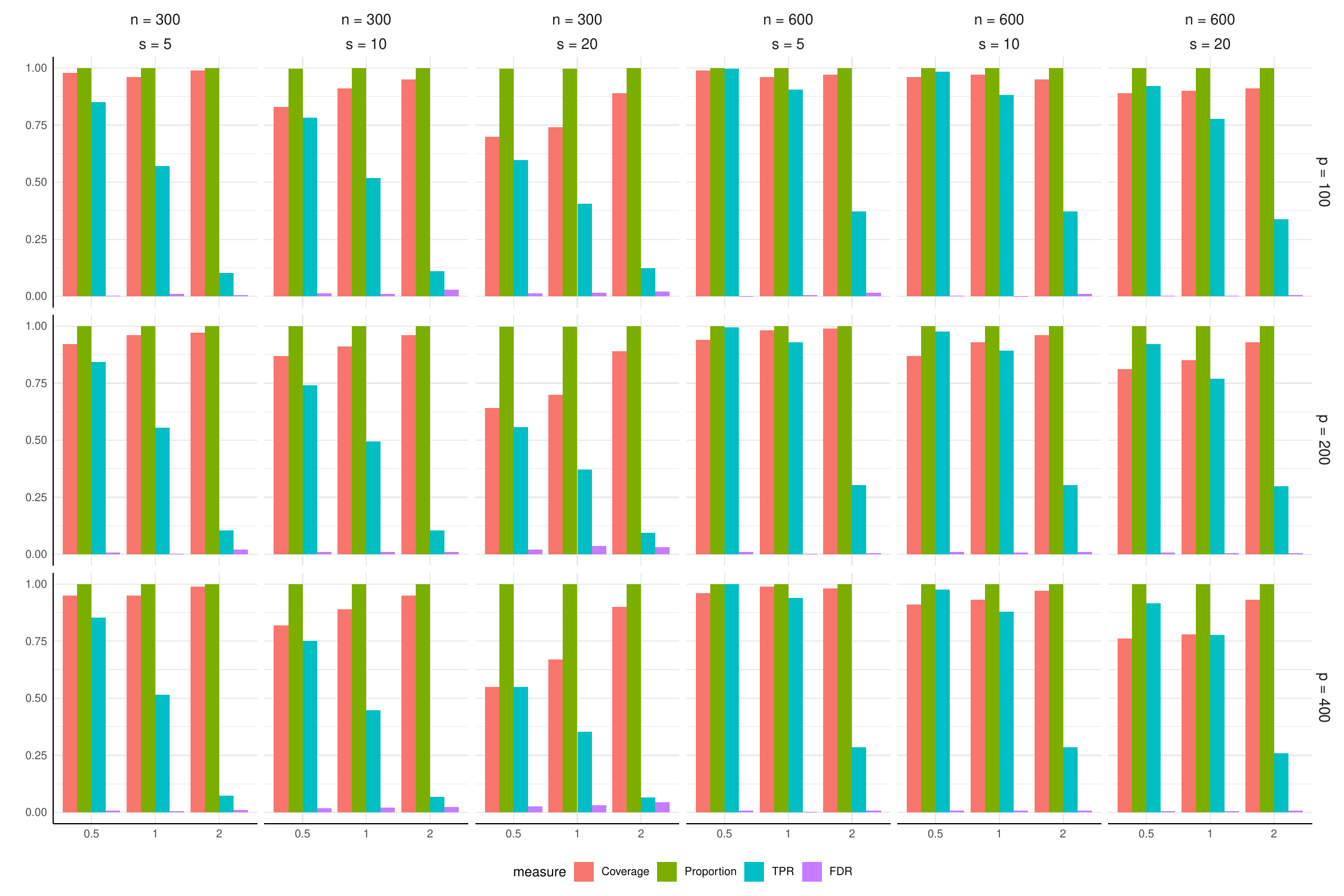}
\caption{Coverage, Proportion, TPR and FDR of simultaneous $99\%$-confidence intervals with (top) and without (bottom) bootstrapping against $\nu \in \{0.5, 1, 2\}$ ($x$-axis) when $\gamma = 0.6$, averaged over $100$ realisations.}
\label{fig:ci:cover:2:3}
\end{figure}

\begin{figure}[h!t!b!]
\centering
\includegraphics[width = 1\textwidth]{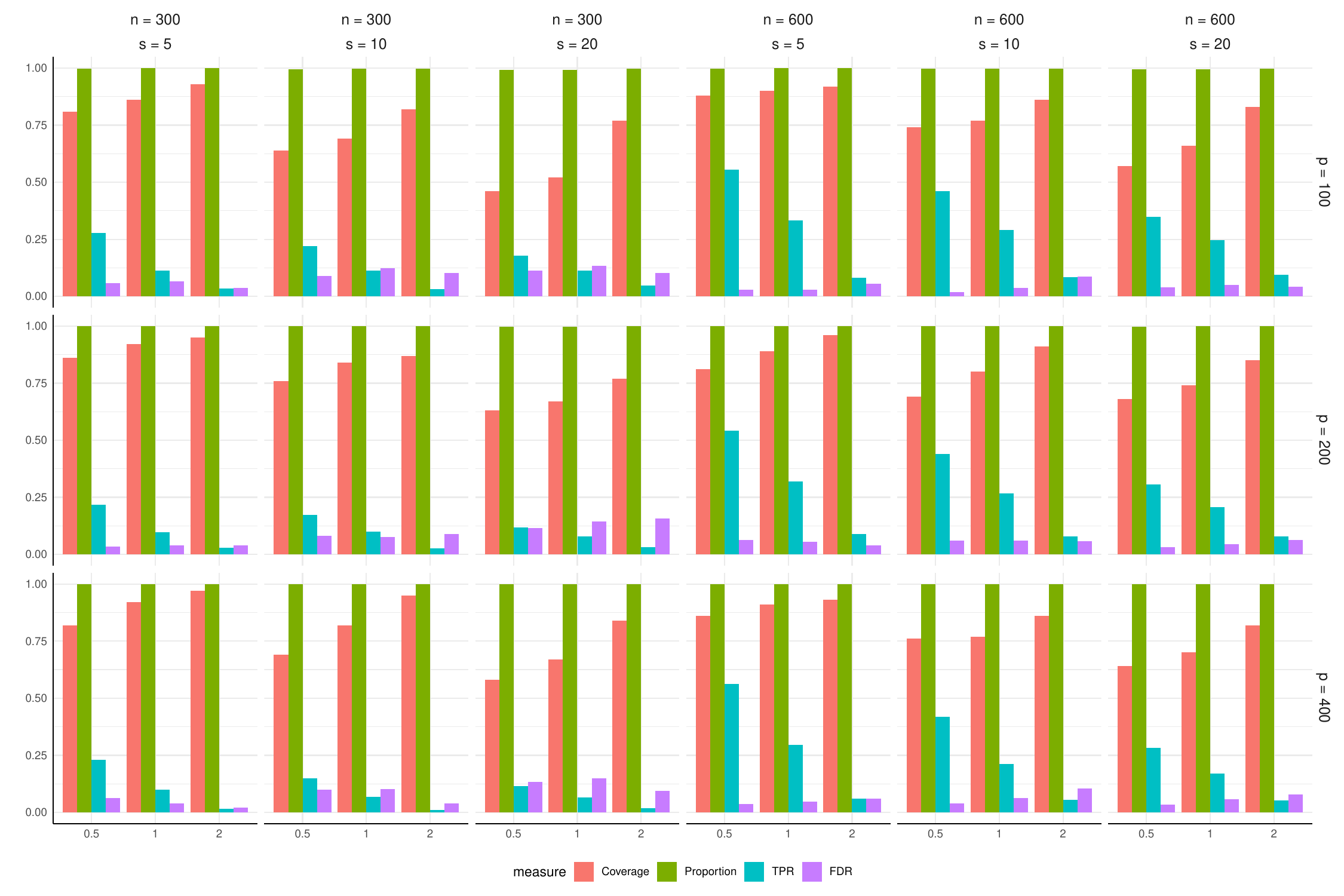}
\includegraphics[width = 1\textwidth]{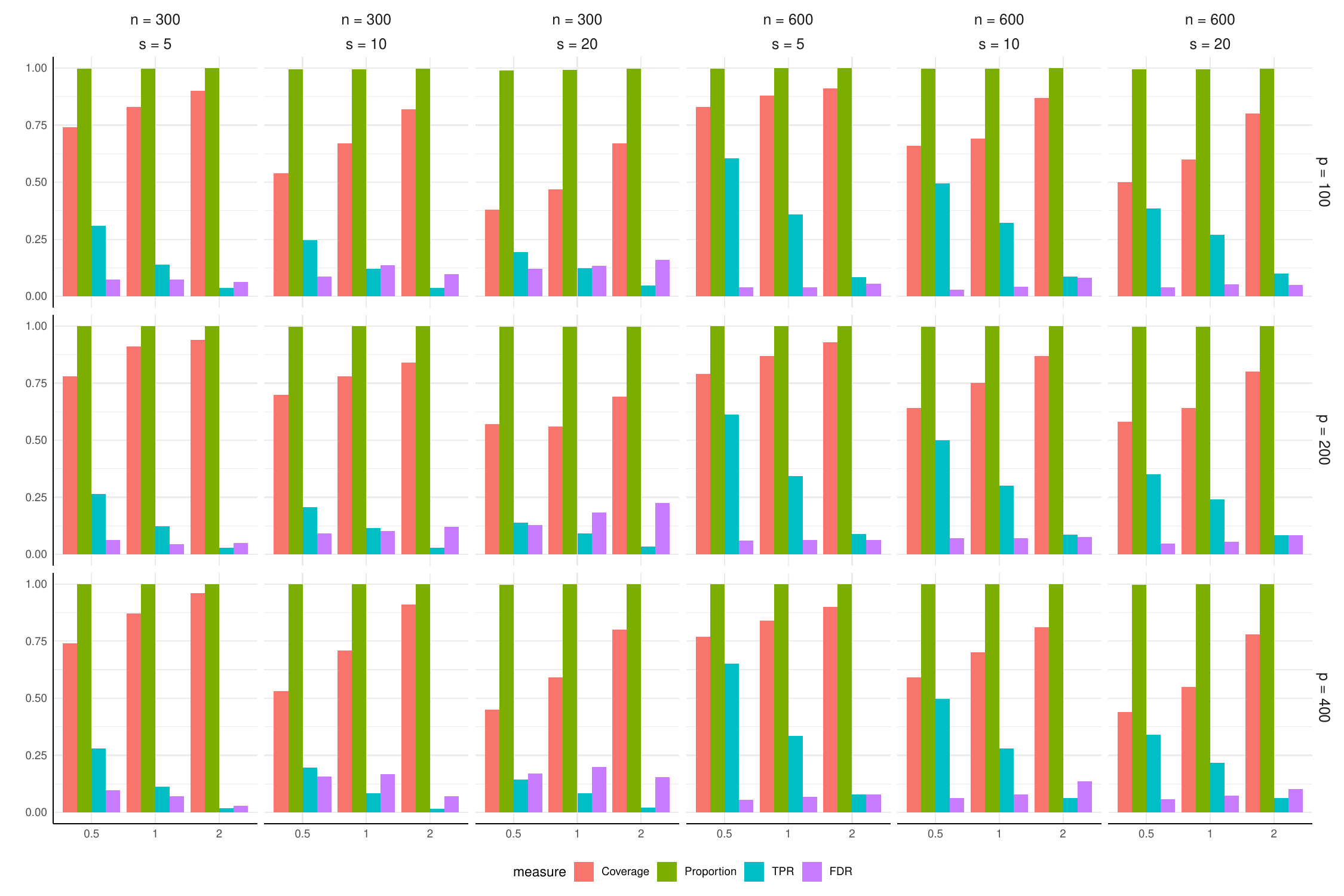}
\caption{Coverage, Proportion, TPR and FDR of simultaneous $90\%$-confidence intervals with (top) and without (bottom) bootstrapping against $\nu \in \{0.5, 1, 2\}$ ($x$-axis) when $\gamma = 0.9$, averaged over $100$ realisations.}
\label{fig:ci:cover:3:1}
\end{figure}

\begin{figure}[h!t!b!]
\centering
\includegraphics[width = 1\textwidth]{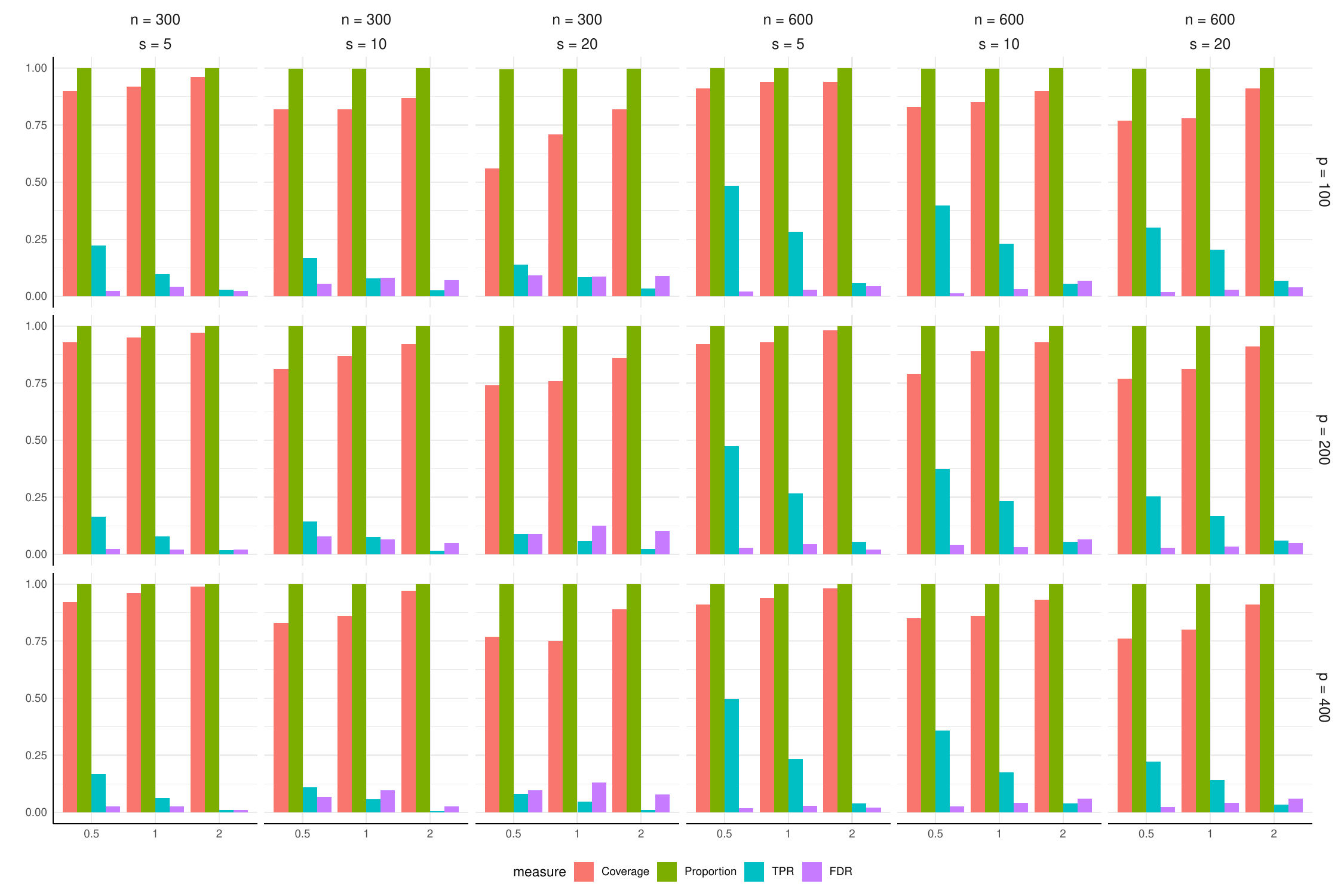}
\includegraphics[width = 1\textwidth]{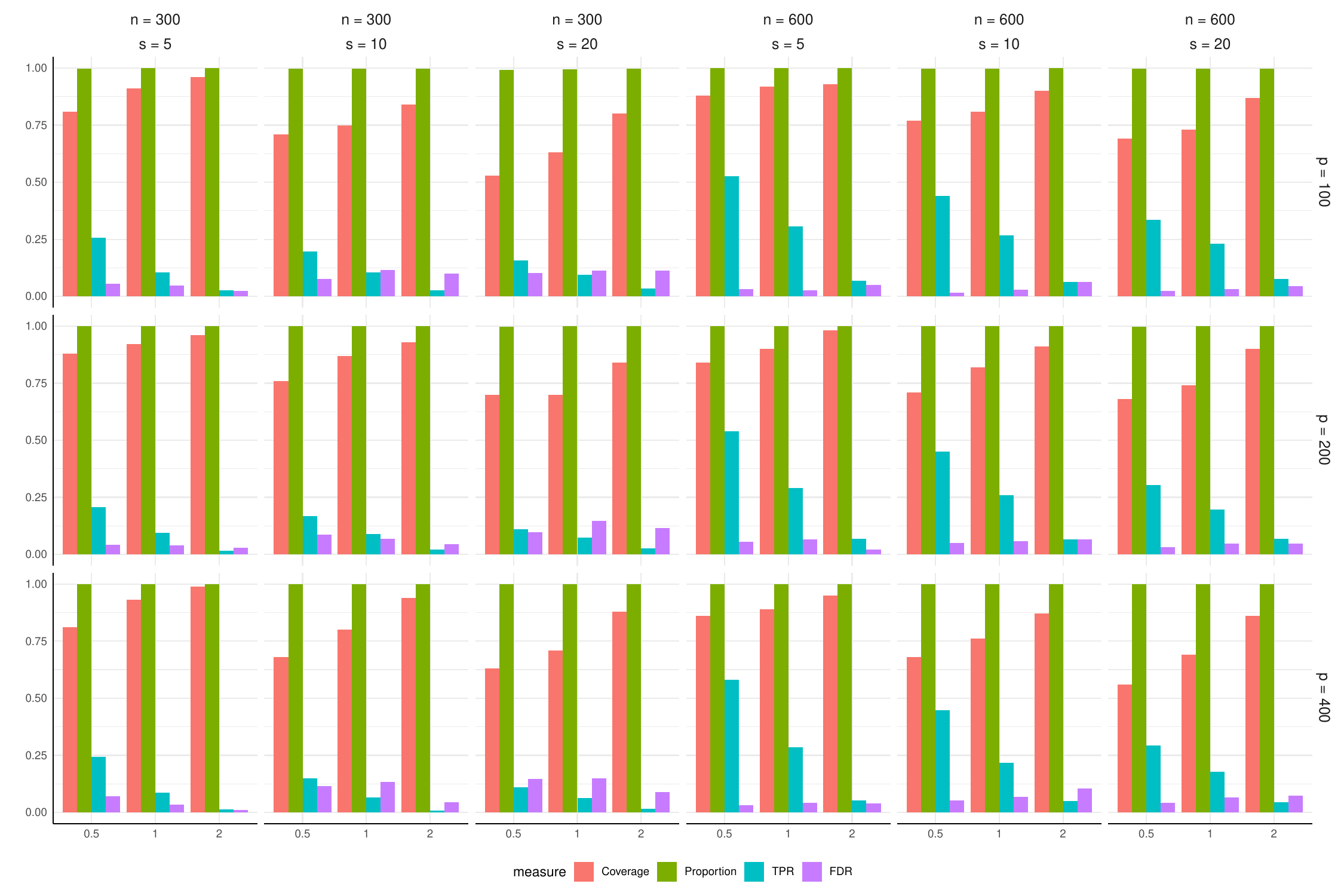}
\caption{Coverage, Proportion, TPR and FDR of simultaneous $95\%$-confidence intervals with (top) and without (bottom) bootstrapping against $\nu \in \{0.5, 1, 2\}$ ($x$-axis) when $\gamma = 0.9$, averaged over $100$ realisations.}
\label{fig:ci:cover:3:2}
\end{figure}

\begin{figure}[h!t!b!]
\centering
\includegraphics[width = 1\textwidth]{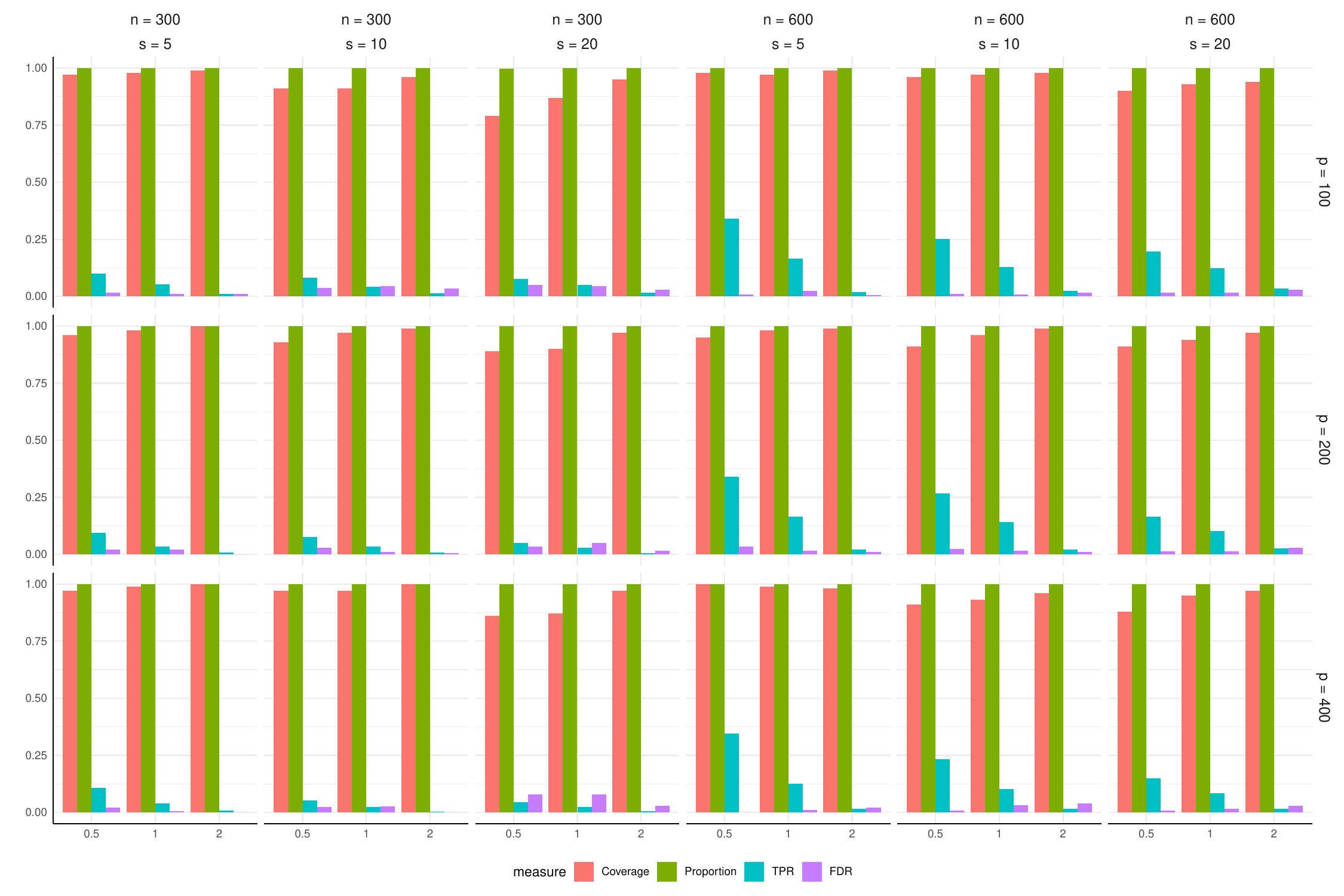}
\includegraphics[width = 1\textwidth]{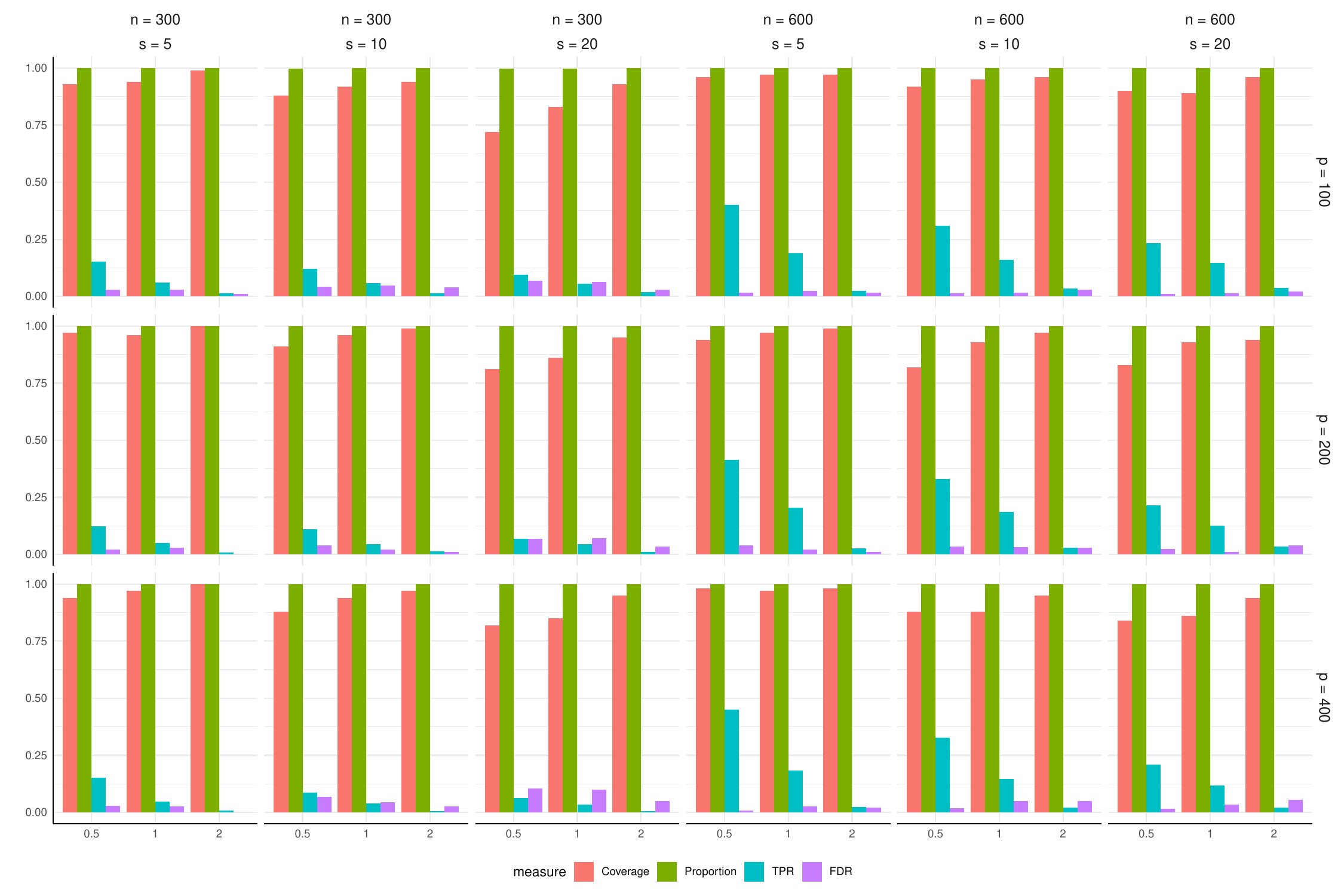}
\caption{Coverage, Proportion, TPR and FDR of simultaneous $99\%$-confidence intervals with (top) and without (bottom) bootstrapping against $\nu \in \{0.5, 1, 2\}$ ($x$-axis) when $\gamma = 0.9$, averaged over $100$ realisations.}
\label{fig:ci:cover:3:3}
\end{figure}

\begin{figure}[h!t!b!]
\centering
\includegraphics[width = 1\textwidth]{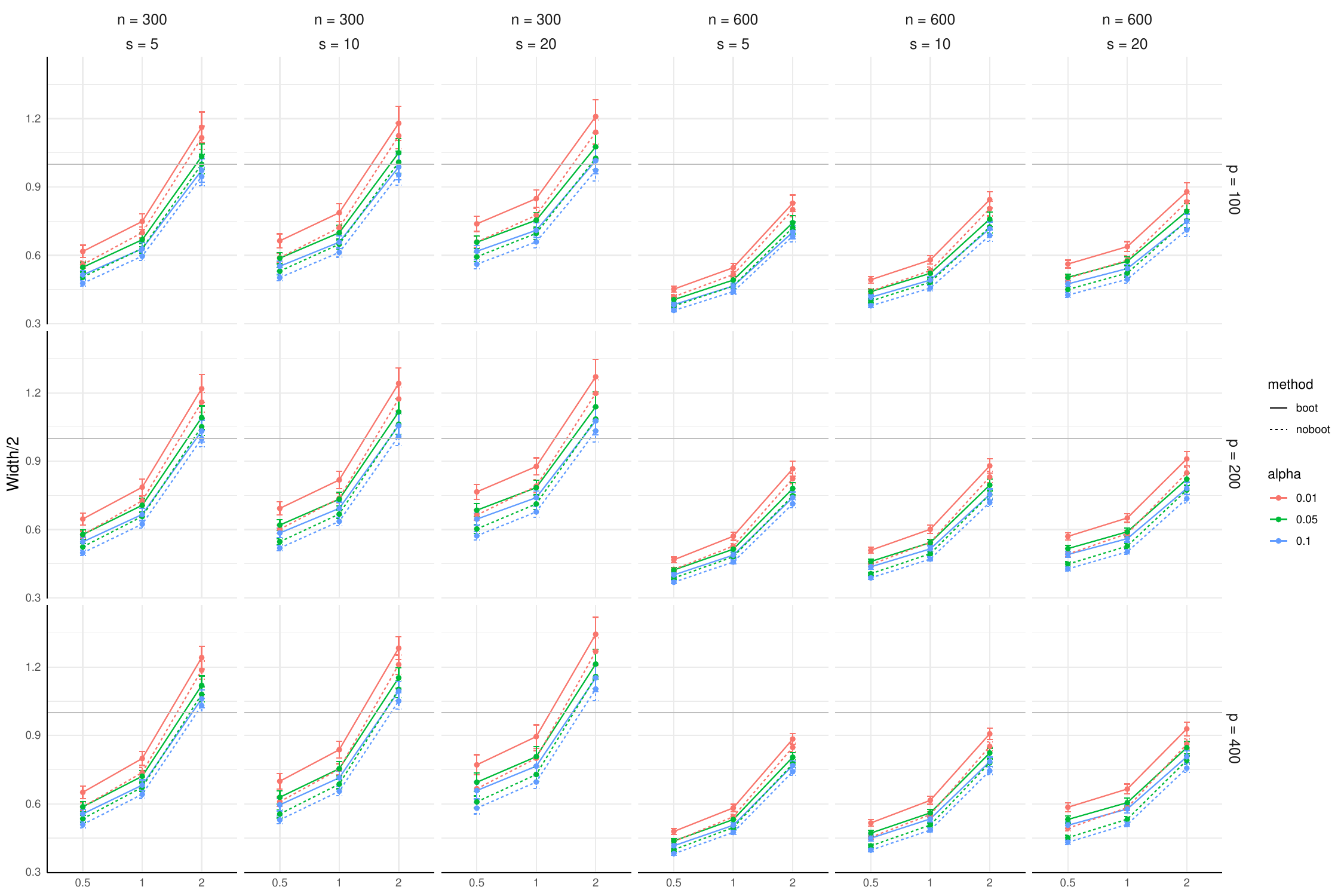}
\caption{Half-width of simultaneous $100(1 - \alpha)\%$-confidence intervals constructed with (solid) and without (dotted) bootstrapping against $\nu \in \{0.5, 1, 2\}$ ($x$-axis) for $\alpha \in \{0.1, 0.05, 0.01\}$ when $\gamma = 0$,
averaged over $100$ realisations with the standard deviations denoted by the error bars. The horizontal line is at $y = 1$ (recall that $\vert \delta_{i1} \vert = 1$ for $i \in \mc S_1$).}
\label{fig:ci:width:one}
\end{figure}

\begin{figure}[h!t!b!]
\centering
\includegraphics[width = 1\textwidth]{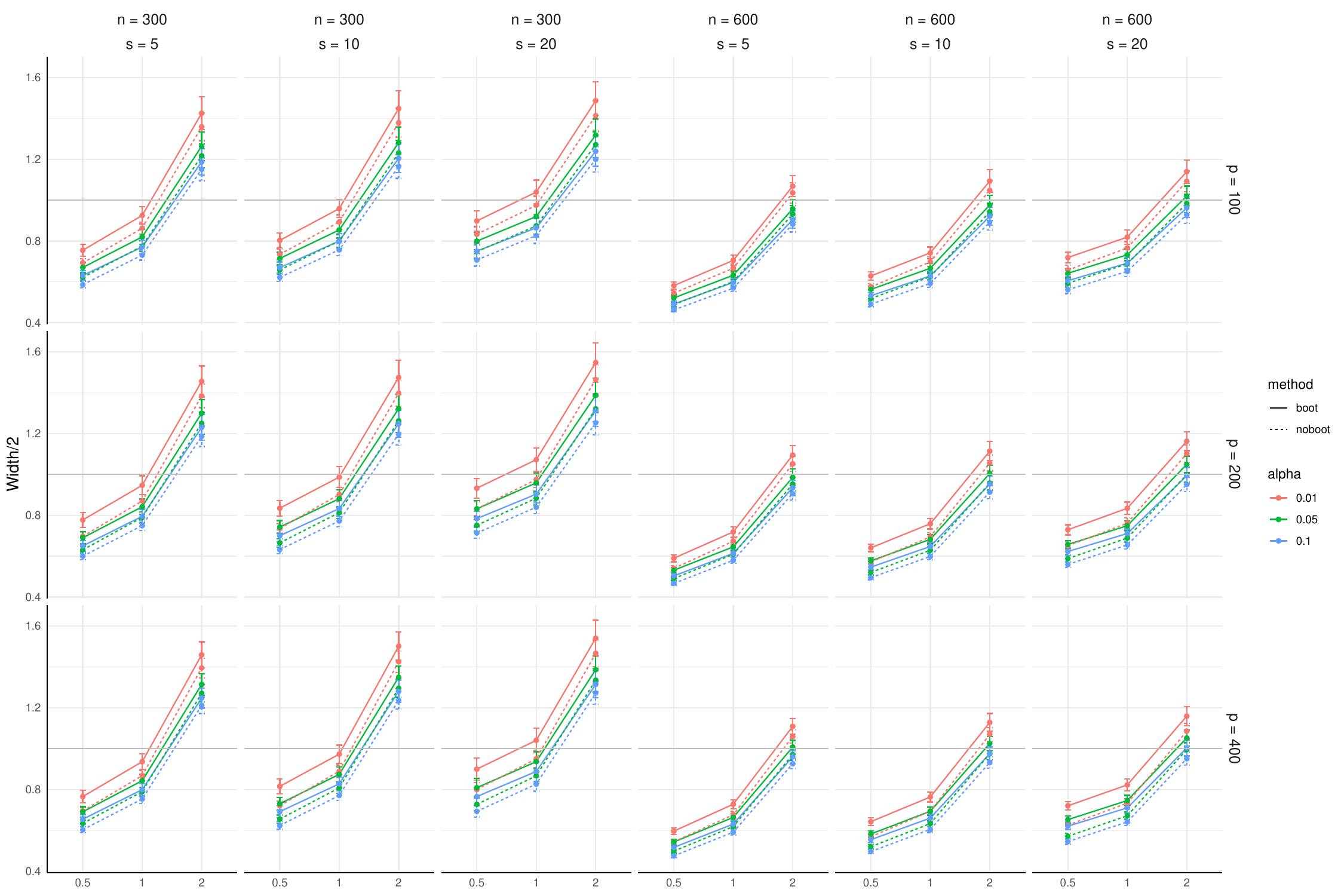}
\caption{Half-width of simultaneous $100(1 - \alpha)\%$-confidence intervals constructed with (solid) and without (dotted) bootstrapping against $\nu \in \{0.5, 1, 2\}$ ($x$-axis) for $\alpha \in \{0.1, 0.05, 0.01\}$ when $\gamma = 0.6$,
averaged over $100$ realisations with the standard deviations denoted by the error bars. The horizontal line is at $y = 1$ (recall that $\vert \delta_{i1} \vert = 1$ for $i \in \mc S_1$).}
\label{fig:ci:width:two}
\end{figure}

\begin{figure}[h!t!b!]
\centering
\includegraphics[width = 1\textwidth]{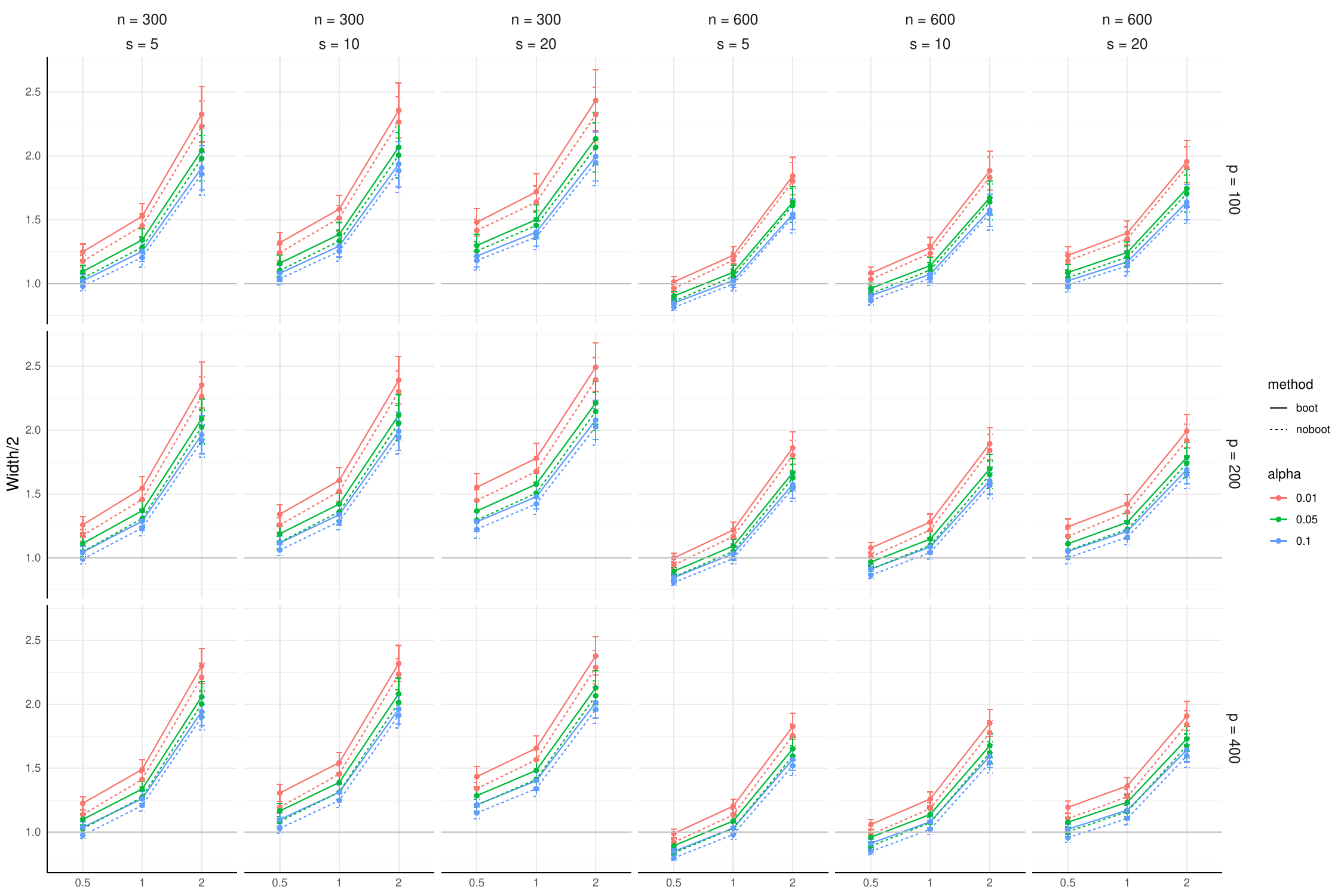}
\caption{Half-width of simultaneous $100(1 - \alpha)\%$-confidence intervals constructed with (solid) and without (dotted) bootstrapping against $\nu \in \{0.5, 1, 2\}$ ($x$-axis) for $\alpha \in \{0.1, 0.05, 0.01\}$ when $\gamma = 0.9$,
averaged over $100$ realisations with the standard deviations denoted by the error bars. The horizontal line is at $y = 1$ (recall that $\vert \delta_{i1} \vert = 1$ for $i \in \mc S_1$).}
\label{fig:ci:width:three}
\end{figure}

\clearpage

\subsubsection{Additional results under (M3)}

We apply McScan with the automatically chosen threshold to the data generated under the multiple change point scenario~(M3) described in Section~\ref{sec:sim:multi} with $p = 900$ and $q = 3$, followed by the simultaneous confidence interval construction for each change point.
Here, we adopt both approaches, the multiplier bootstrap procedure described in Appendix~\ref{sec:boot} and the one based on sampling directly from the distribution of $\vert \wh{\mbf V}_j \vert_\infty$ in Corollary~\ref{cor:thm:two}.
Since the success of the inference is conditional on the consistent estimation of the change points, we report the performance of the confidence intervals on the realisations where all three change points are detected only; this occurs over $80\%$ of realisations in most settings considered.
For comparison, we also include the `oracle' version of McScan where $q = 3$ is known.

Figures~\ref{fig:ci:m3:width} and~\ref{fig:ci:m3:cover} show that generally, with the increase of the sample size, coverage and other performance metrics defined in Section~\ref{sec:sim:ci} improve and the width of the confidence intervals decreases.
Also, the gap between the results attained by the automatic threshold and the oracle one closes with increasing sample size.
For this scenario, there is little difference between the two approaches to confidence interval construction, either in coverage performance or the width of the intervals.
Overall, the proposed inference step handles estimation error from the change point detection step well and produces confidence intervals of small width attaining good coverage.

\begin{figure}[h!t!b!]
	\centering
	\includegraphics[width = .9\textwidth]{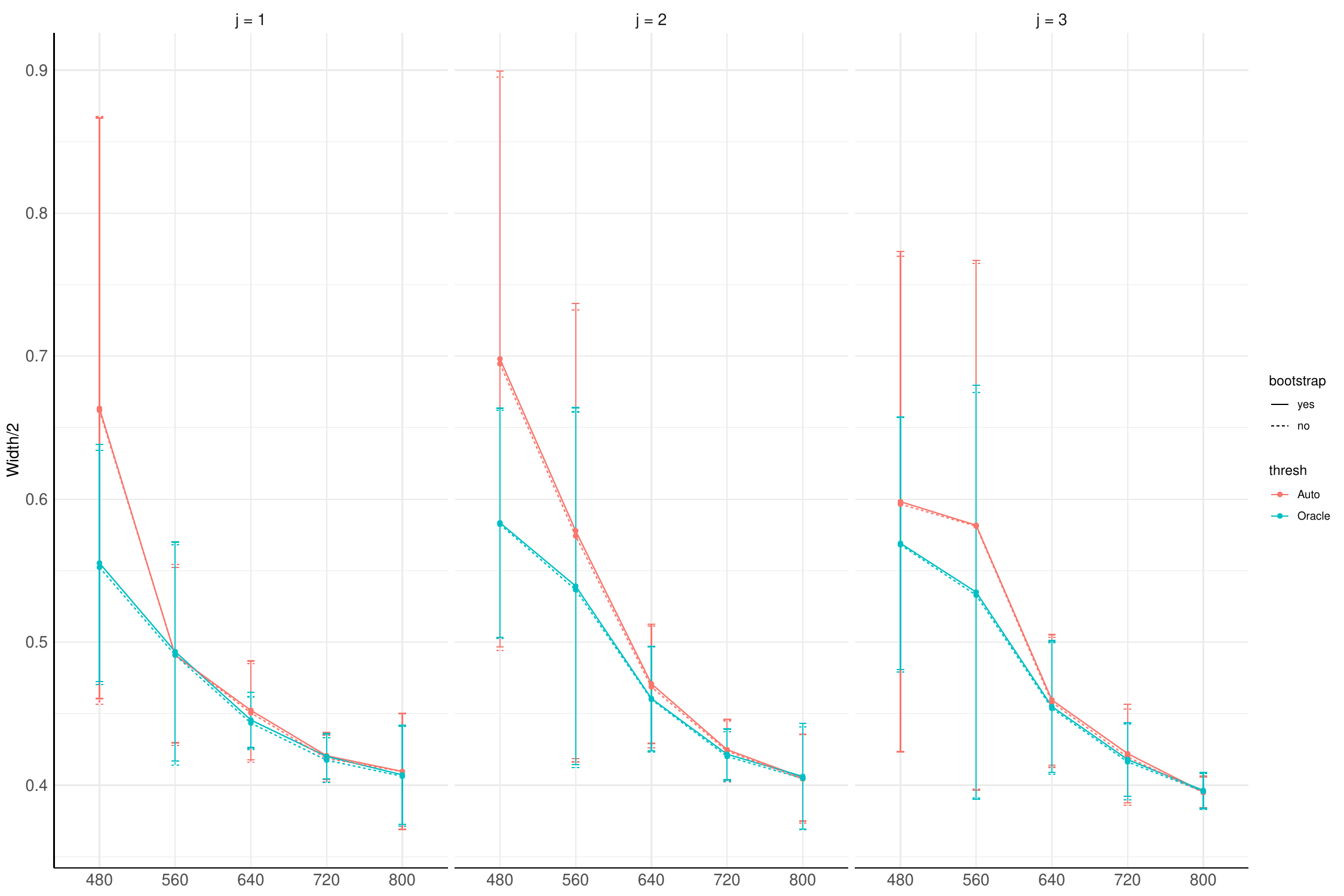}
	\caption{Half-width of simultaneous $95\%$-confidence intervals for $\cp_j, \, j = 1, 2, 3$ (left to right) with varying $n \in \{480, 560, 640, 720, 800\}$ ($x$-axis),
		averaged over $100$ realisations with the standard deviations denoted by the error bars.}
	\label{fig:ci:m3:width}
\end{figure}

\begin{figure}[h!t!b!]
	\centering
	\includegraphics[width = 1\textwidth]{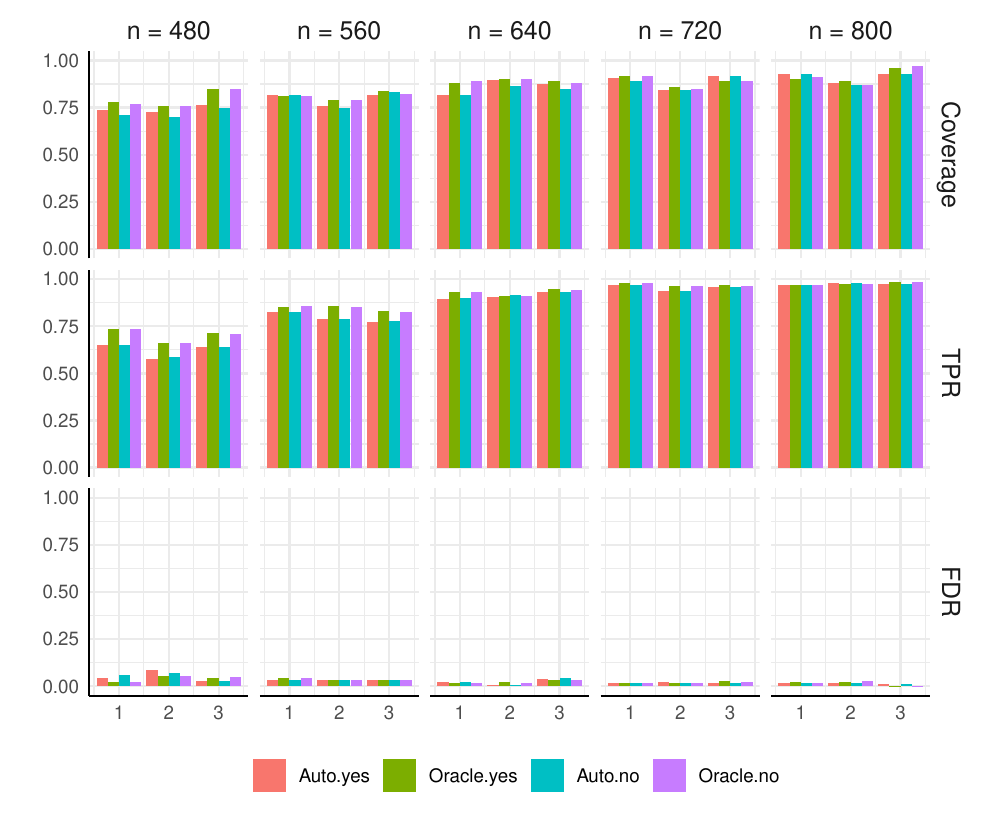}
	\caption{Coverage, TPR and FDR (top to bottom) of simultaneous $95\%$-confidence intervals for $\cp_j, \, j = 1, 2, 3$ ($x$-axis) with varying $n \in \{480, 560, 640, 720, 800\}$ (left to right), averaged over $100$ realisations. Confidence intervals are constructed with the automatic (`Auto') and oracle thresholds, with (`yes') and without ('no') multiplier bootstrap.}
	\label{fig:ci:m3:cover}
\end{figure}

\clearpage

\subsection{Additional results from FRED-MD analysis}
\label{sec:fredmd:extra}

We complement the application in Section~\ref{sec:data} by plotting the additional simultaneous confidence intervals associated with the differential parameters, see Figures~\ref{fig:fredmd:ci:boot:add}--\ref{fig:fredmd:ci:noboot}.

\begin{figure}[h!t!b!]
\centering
\includegraphics[width = .85\textwidth]{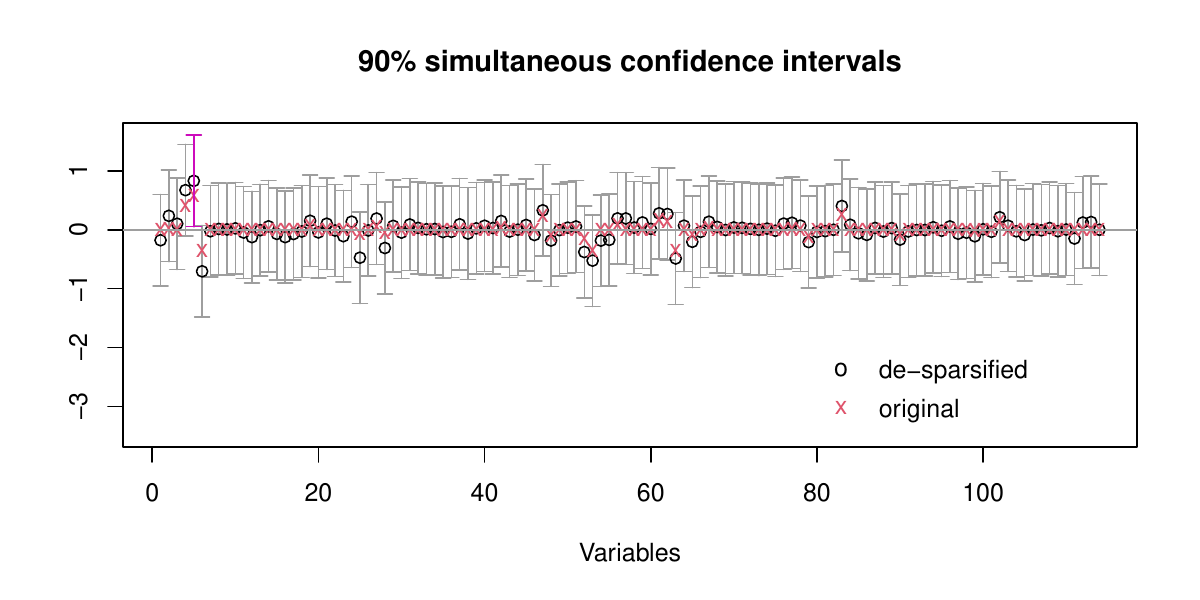} \\
\includegraphics[width = .85\textwidth]{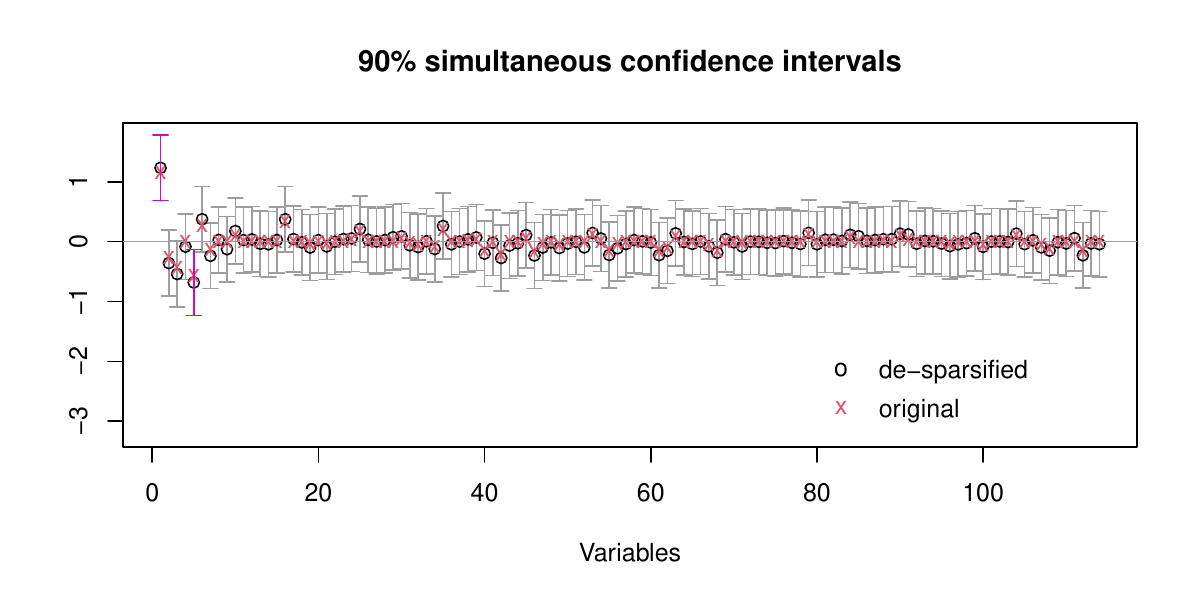}
\caption{Simultaneous $90\%$-confidence intervals obtained via multiplier bootstrap for the $p = 114$ differential parameter coefficients before and after $\wh\cp_1$ (top) and $\wh\cp_2$ (bottom), plotted alongside the original LOPE estimator in~\eqref{eq:lasso:est} and its de-sparsified version in~\eqref{eq:debiased}. The interval which does not contain $0$ is highlighted in magenta.}
\label{fig:fredmd:ci:boot:add}
\end{figure}

\begin{figure}[h!t!b!]
\centering
\includegraphics[width = .85\textwidth]{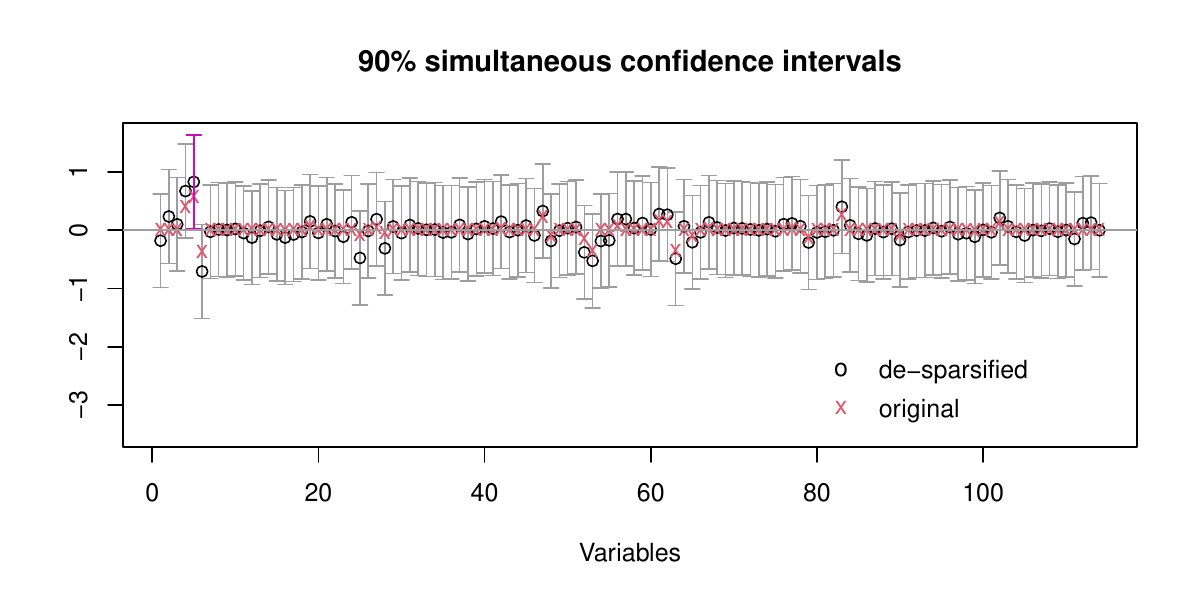} \\
\includegraphics[width = .85\textwidth]{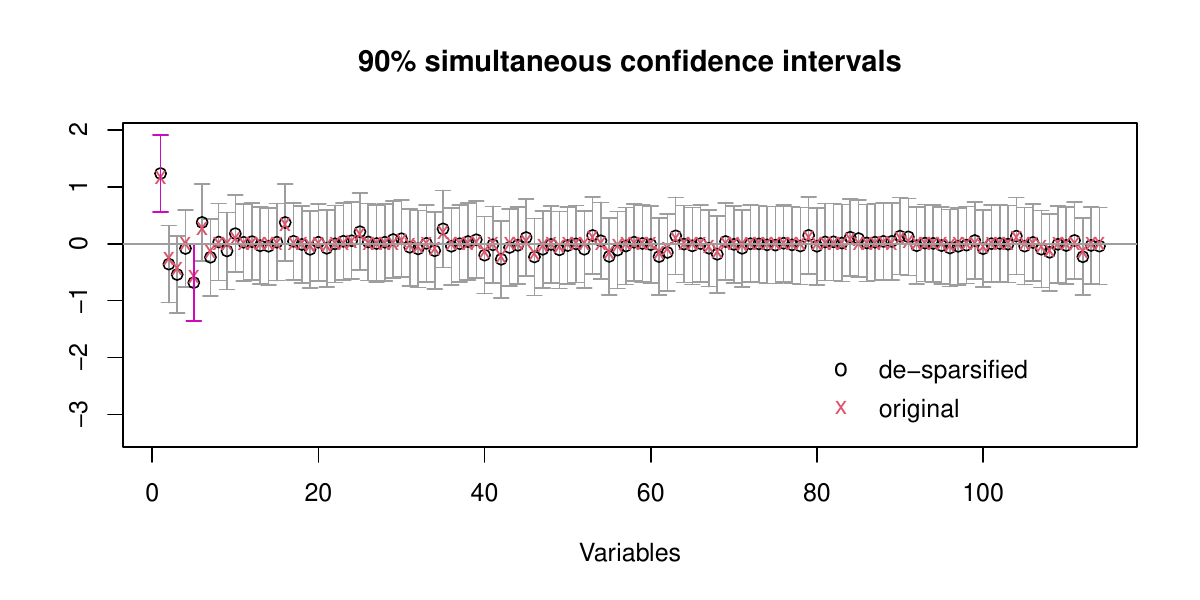} \\
\includegraphics[width = .85\textwidth]{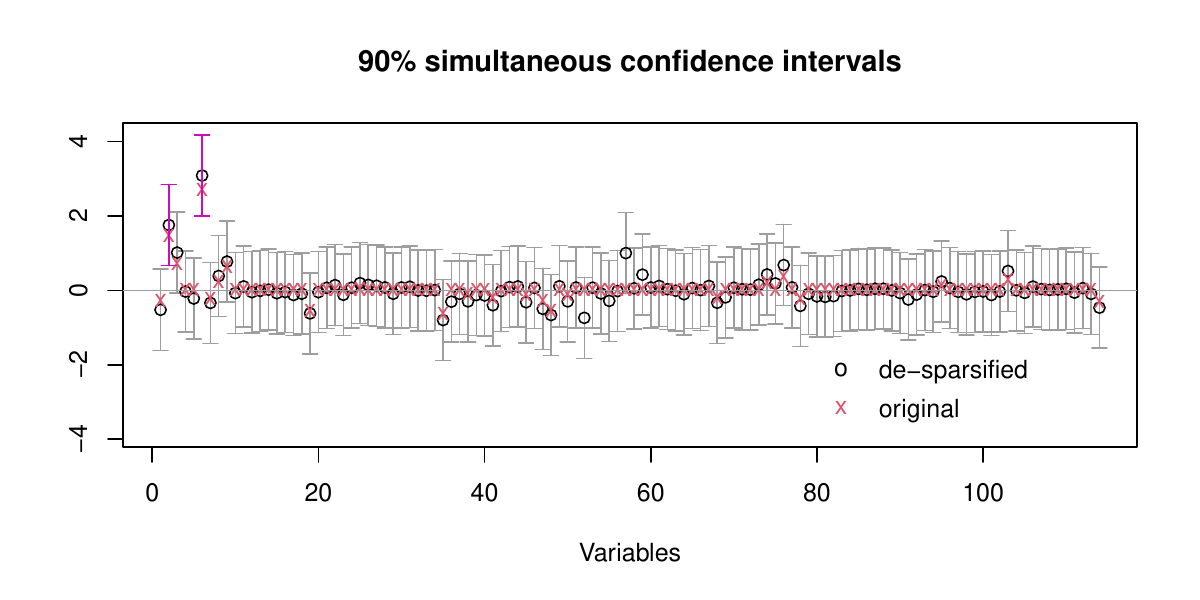}
\caption{Simultaneous $90\%$-confidence intervals obtained via directly sampling from the limit distribution, for the $p = 114$ differential parameter coefficients before and after $\wh\cp_1$ (top), $\wh\cp_2$ (middle) and $\wh\cp_3$ (bottom), plotted alongside the original LOPE estimator in~\eqref{eq:lasso:est} and its de-sparsified version in~\eqref{eq:debiased}. The interval which does not contain $0$ is highlighted in magenta.}
\label{fig:fredmd:ci:noboot}
\end{figure}

\end{document}